\title{Decomposing Finite Languages} 
\author{Daniel Alexander Spenner}{Technische Universität Dortmund, Germany}{daniel.spenner@tu-dortmund.de}{https://orcid.org/0009-0001-2784-5914}{}
\authorrunning{D. A. Spenner} 
\keywords{Deterministic finite automaton (DFA), Regular languages, Finite languages, Decomposition, Primality, Minimality} 
\tikzset{
	node distance=3cm,	
	->,	
	initial text=$ $, 
}
\newcommand{\setDel}{\hspace{2mm} | \hspace{2mm}} 
\newcommand{\natNum}{\mathbb{N}}
\newcommand{\natNumGeq}[1]{\mathbb{N}_{\geq #1}}
\newcommand{\lang}[1]{\mathcal{L}(#1)} 
\newcommand{\ind}[1]{\text{ind}(#1)} 
\newcommand{\size}[1]{|#1|} 
\newcommand{\maxOp}[1]{\max(#1)} 
\newcommand{\complementOp}[1]{\overline{#1}} 
\newcommand{\indexOp}[1]{\text{index}(#1)} 
\newcommand{\qdistOp}[1]{\text{q-dist}(#1)} 
\newcommand{\defHighlight}[1]{\textit{#1}} 
\newcommand{\problemFont}[1]{\textsc{#1}}
\newcommand{\complexityClassFont}[1]{\textsc{#1}}
\newcommand{\myUnderbar}[1]{\underline{#1}}
\newcommand{\primeDFA}[1]{\problemFont{#1Prime-DFA}}
\newcommand{\sPrimeDFA}[1]{\problemFont{#1S-Prime-DFA}}
\newcommand{\primeDFAfin}[1]{\problemFont{#1Prime-DFA}_\text{fin}}
\newcommand{\minimalDFA}[1]{\problemFont{#1Minimal-DFA}}
\newcommand{\emptyDFA}[1]{\problemFont{#1Empty-DFA}}
\newcommand{\emptyDFAfin}[1]{\problemFont{#1Empty-DFA}_\text{fin}}
\newcommand{\emptyDFAq}[1]{\problemFont{#1Empty-DFA}_\text{q+}}
\theoremstyle{definition}
\newtheorem{definition2}[theorem]{Definition}
\begin{document}

\maketitle

\begin{abstract}
The paper completely characterizes the primality of acyclic DFAs, where a DFA $\mathcal{A}$ is \defHighlight{prime} if there do not exist DFAs $\mathcal{A}_1,\dots,\mathcal{A}_t$ with $\lang{\mathcal{A}} = \bigcap_{i=1}^{t} \lang{\mathcal{A}_i}$ such that each $\mathcal{A}_i$ has strictly less states than the minimal DFA recognizing the same language as $\mathcal{A}$.
A regular language is prime if its minimal DFA is prime. Thus, this result also characterizes the primality of finite languages.

Further, the \complexityClassFont{NL}-completeness of the corresponding decision problem $\primeDFAfin{}$ is proven.
The paper also characterizes the primality of acyclic DFAs under two different notions of compositionality, union and union-intersection compositionality.

Additionally, the paper introduces the notion of \defHighlight{S-primality}, where a DFA $\mathcal{A}$ is S-prime if there do not exist DFAs $\mathcal{A}_1,\dots,\mathcal{A}_t$ with $\lang{\mathcal{A}} = \bigcap_{i=1}^{t} \lang{\mathcal{A}_i}$ such that each $\mathcal{A}_i$ has strictly less states than $\mathcal{A}$ itself.
It is proven that the problem of deciding S-primality for a given DFA is \complexityClassFont{NL}-hard. To do so, the \complexityClassFont{NL}-completeness of \minimalDFA{2}, the basic problem of deciding minimality for a DFA with at most two letters, is proven.
\end{abstract}

\section{Introduction}
\label{sec:introduction}
Under intersection compositionality a deterministic finite automaton (DFA) $\mathcal{A}$ is \defHighlight{composite} if there exist DFAs $\mathcal{A}_1,\dots,\mathcal{A}_t$ with $\lang{\mathcal{A}} = \bigcap_{i=1}^{t} \lang{\mathcal{A}_i}$ such that the size of each $\mathcal{A}_i$ is smaller than the index of $\mathcal{A}$. Otherwise, $\mathcal{A}$ is \defHighlight{prime} \cite{DBLP:journals/iandc/KupfermanM15}. 
The index of $\mathcal{A}$ is the size of the minimal DFA recognizing the same language as $\mathcal{A}$.
\primeDFA{} denotes the problem of deciding primality for a given DFA. $\primeDFAfin{}$ denotes the restriction of \primeDFA{} to DFAs recognizing a finite language.

Compositionality in general is a key concept in both practical and theoretical computer science \cite{DBLP:conf/compos/1997,DBLP:journals/pieee/Tripakis16}.
The intersection decomposition of finite automata can be motivated by LTL model checking as well as automaton identification. Both will be briefly discussed below.

The notion of intersection compositionality of finite automata was introduced in \cite{DBLP:journals/iandc/KupfermanM15},
while a limitation of this notion was already studied in \cite{DBLP:conf/sofsem/GaziR08}.
Surprisingly, \cite{DBLP:journals/iandc/KupfermanM15} found even the complexity of the basic problem \primeDFA{} to be open.
They proved that \primeDFA{} is in \complexityClassFont{ExpSpace} and is \complexityClassFont{NL}-hard. So far, this doubly exponential gap has not been closed.

Given the difficulties in tackling the general problem, it has proven fruitful to characterize the intersection compositionality of fragments of the regular languages \cite{DBLP:journals/iandc/KupfermanM15,DBLP:conf/mfcs/JeckerKM20,DBLP:conf/concur/JeckerM021}. Our study joins this line of research by completely characterizing the intersection compositionality of acyclic DFAs (ADFA) and thereby of finite languages. Further, we prove the \complexityClassFont{NL}-completeness of $\primeDFAfin{}$ and characterize the compositionality of finite languages under two different notions of compositionality suggested in \cite{DBLP:journals/iandc/KupfermanM15}, union and union-intersection compositionality.

Additionally, we present a proof of the \complexityClassFont{NL}-completeness of the basic problem \minimalDFA{2}, the problem of deciding minimality for a DFA with at most two letters. 
For arbitrary alphabets, the \complexityClassFont{NL}-hardness is a folklore result that seemingly has not been explicitly published but follows from the constructions in \cite{DBLP:journals/iandc/ChoH92}, while the \complexityClassFont{NL}-hardness of \minimalDFA{2} appears to be new \cite{FernauHolzerPersonalCommunication}. We use this result to establish complexity boundaries for \sPrimeDFA{}, a modification of \primeDFA{} using the size of the given DFA, not its index.

\subparagraph*{Related Work}
The notion of intersection compositionality was introduced in \cite{DBLP:journals/iandc/KupfermanM15}, where the aforementioned complexity boundaries were established. They already considered language fragments, analyzing safety DFAs and permutation DFAs. This line of research was followed up in \cite{DBLP:conf/mfcs/JeckerKM20,DBLP:conf/concur/JeckerM021}, which focused on unary DFAs and permutation DFAs, respectively.

The intersection decomposition of automata can be motivated by LTL model checking, where the validity of a specification, given as an LTL formula, is checked for a system. 
The automata-based approach entails translating the specification into a finite automaton \cite{DBLP:conf/lics/VardiW86}.
Since the LTL model checking problem is \complexityClassFont{PSpace}-complete in the size of the LTL formula \cite{DBLP:books/daglib/0020348}, it is desirable to decompose the formula into a conjunction of subformulas. This can also be understood as decomposing the finite automaton corresponding to the formula.

Another application of intersection decomposition arises in the field of automaton identification. The basic task here is, given a set of labeled words, to construct a finite automaton conforming to this set \cite{DBLP:journals/iandc/Gold78}. An interesting approach is to construct multiple automata instead of one, which can lead to smaller and more intuitive solutions \cite{DBLP:conf/fmcad/LaufferYVSS22}.

An alternative notion of compositionality uses concatenation. Here, a language $L$ is composite if there exist two non-trivial languages $L_1,L_2$ with $L = L_1 L_2$.
The concatenation primality problem for regular languages is \complexityClassFont{PSpace}-complete \cite{DBLP:conf/pods/MartensNS10}. 
The restriction to finite languages is known to be \complexityClassFont{NP}-hard \cite{DBLP:journals/corr/abs-1902-06253}, while the conjectured \complexityClassFont{NP}-completeness of this restriction remains open \cite{DBLP:conf/dlt/SalomaaY99,DBLP:journals/actaC/MateescuSY02,DBLP:journals/igpl/Wieczorek10}.

\subparagraph*{Contributions}
In \cref{sec:fl_characterization} we completely characterize the intersection compositionality of ADFAs and thereby of finite languages. 
We expand on this by proving the \complexityClassFont{NL}-completeness of $\primeDFAfin{}$ in \cref{sec:fl_complexity}, thus showing that finite languages are significantly easier to handle under intersection compositionality than under concatenation compositionality.
We characterize the union and union-intersection compositionality of finite languages in \cref{sec:fl_differentNotionsOfCompositionality}, where we also prove the existence of languages that are union-intersection composite but both union prime and intersection prime.

In \cref{sec:2DFAMinimalAndSPrimeDFA} we introduce the problem \sPrimeDFA{}, which is analogous to \primeDFA{} but uses the size for the definition of compositionality, not the index. We prove that \sPrimeDFA{} is in \complexityClassFont{ExpSpace} and is \complexityClassFont{NL}-hard. We also prove these boundaries for \primeDFA{2} and \sPrimeDFA{2}, the restrictions of the respective problems to DFAs with at most two letters.
To establish these boundaries we prove the \complexityClassFont{NL}-completeness of \minimalDFA{2}.

Detailed proofs of these results are provided in the appendix.
\section{Preliminaries}
\label{sec:preliminaries}
A \defHighlight{deterministic finite automaton} (DFA) is a $5$-tuple $\mathcal{A} = (Q,\Sigma,q_I,\delta,F)$, where $Q$ is a finite set of states, $\Sigma$ is a finite non-empty alphabet, $q_I \in Q$ is an initial state, $\delta: Q \times \Sigma \rightarrow Q$ is a transition function, and $F \subseteq Q$ is a set of accepting states. As usual, we extend $\delta$ to words: $\delta: Q \times \Sigma^* \rightarrow Q$ with $\delta(q,\varepsilon) = q$ and $\delta(q,\sigma_1\dots\sigma_n) = \delta(\delta(q,\sigma_1\dots\sigma_{n-1}),\sigma_n)$. For $q \in Q$, the DFA $\mathcal{A}^q$ is constructed out of $\mathcal{A}$ by setting $q$ as the initial state, thus $\mathcal{A}^q = (Q,\Sigma,q,\delta,F)$.

The \defHighlight{run} of $\mathcal{A}$ on a word $w = \sigma_1\dots\sigma_n$ starting in state $q$ is the sequence $q_0,\sigma_1,q_1,\dots,\sigma_n,q_n$ with $q_0 = q$ and $q_i = \delta(q_{i-1},\sigma_i)$ for each $i \in \{1,\dots,n\}$. The \defHighlight{initial run} of $\mathcal{A}$ on $w$ is the run of $\mathcal{A}$ on $w$ starting in $q_I$. 
The run of $\mathcal{A}$ on $w$ starting in $q$ is \defHighlight{accepting} if $q_n \in F$, otherwise it is \defHighlight{rejecting}.
The DFA $\mathcal{A}$ \defHighlight{accepts} $w$ if the initial run of $\mathcal{A}$ on $w$ is accepting. Otherwise, it \defHighlight{rejects} $w$.
The \defHighlight{language} $\lang{\mathcal{A}}$ of $\mathcal{A}$ is the set of words accepted by $\mathcal{A}$. We say that $\mathcal{A}$ \defHighlight{recognizes} $\lang{\mathcal{A}}$. A language is \defHighlight{regular} if there exists a DFA recognizing it. Since we only consider regular languages, we use the terms language and regular language interchangeably.

The \defHighlight{size} $\size{\mathcal{A}}$ of $\mathcal{A}$ is the number of states in $Q$. The DFA $\mathcal{A}$ is \defHighlight{minimal} if $\lang{\mathcal{A}} \neq \lang{\mathcal{B}}$ holds for every DFA $\mathcal{B}$ with $\size{\mathcal{B}} < \size{\mathcal{A}}$. It is well known that for every regular language $L$ there exists a canonical minimal DFA recognizing $L$. The \defHighlight{index} $\ind{L}$ of $L$ is the size of this canonical minimal DFA. The index of $\mathcal{A}$ is the index of the language recognized by $\mathcal{A}$, thus $\ind{\mathcal{A}} = \ind{\lang{\mathcal{A}}}$. Note that $\mathcal{A}$ is minimal iff $\size{\mathcal{A}} = \ind{\mathcal{A}}$.

We borrow a few terms from graph theory. Let $q_0,\sigma_1,q_1,\dots,\sigma_n,q_n$ be the run of $\mathcal{A}$ on $w = \sigma_1\dots\sigma_n$ starting in $q_0$. Then $q_0,\dots,q_n$ is a \defHighlight{path} in $\mathcal{A}$ from $q_0$ to $q_n$. The \defHighlight{length} of this path is $n$. Thus, for two states $q,q'$ there exists a path from $q$ to $q'$ in $\mathcal{A}$ of length $n$ iff there exists a $w \in \Sigma^n$ with $\delta(q,w) = q'$. The state $q'$ is \defHighlight{reachable from} $q$ if there exists a path from $q$ to $q'$. Otherwise, $q'$ is \defHighlight{unreachable from} $q$. 
Obviously, if $q'$ is reachable from $q$ then there exists a path from $q$ to $q'$ of a length strictly smaller than $\size{\mathcal{A}}$.
We say that $q'$ is \defHighlight{reachable} if it is reachable from $q_I$. Otherwise, it is \defHighlight{unreachable}. A \defHighlight{cycle} in $\mathcal{A}$ is a path $q_0,\dots,q_n$ in $\mathcal{A}$ where $q_0 = q_n$ and $n \in \natNumGeq{1}$. The DFA $\mathcal{A}$ is \defHighlight{acyclic} (ADFA) if every cycle in $\mathcal{A}$ begins in a rejecting sink. Clearly, a DFA recognizes a finite language iff its minimal DFA is acyclic.

We call a DFA $\mathcal{A} = (Q,\Sigma,q_I,\delta,F)$ \defHighlight{linear} if for every $q,q' \in Q$ with $q \neq q'$ either $q'$ is reachable from $q$ or $q$ is reachable from $q'$, but not both. Thus, in a linear DFA reachability induces a linear order over the states.
Obviously, every linear DFA has exactly one sink. Furthermore, a minimal ADFA $\mathcal{A}$ is linear iff $\size{\mathcal{A}} = n+2$, where $n$ is the length of the longest word in $\lang{\mathcal{A}}$.

Consider a word $w = \sigma_1 \dots \sigma_n \in \Sigma^n$. A word $wv$ with $v \in \Sigma^+$ is an \defHighlight{extension} of $w$. A word $\sigma_1\dots\sigma_i\sigma_{i+l}\dots\sigma_n$ with $i \in \{0,\dots,n-2\}, l \in \{2,\dots,n-i\}$ is a \defHighlight{compression} of $w$. An ADFA $\mathcal{A}$ has the \defHighlight{compression-extension-property} (CEP) if for every $w \in \lang{\mathcal{A}}$ with $|w|=n$, where $n$ is the length of the longest word in $\lang{\mathcal{A}}$, there exists a compression $w'$ of $w$ such that every extension of $w'$ is rejected by $\mathcal{A}$.

We introduce a type of DFA already inspected in \cite{DBLP:journals/iandc/KupfermanM15}.
A regular language $L \subseteq \Sigma^*$ is a \defHighlight{safety language} if $w \notin L$ implies $wy \notin L$ for every $y \in \Sigma^*$.
A DFA $\mathcal{A}$ is a \defHighlight{safety DFA} if $\lang{\mathcal{A}}$ is a safety language.
A regular language $L \subseteq \Sigma^*$ is a \defHighlight{co-safety language} if the complement language $\complementOp{L}$ of $L$ is a safety language. 
A DFA $\mathcal{A}$ is a \defHighlight{co-safety DFA} if $\lang{\mathcal{A}}$ is a co-safety language.
Clearly, every non-trivial minimal safety DFA has exactly one rejecting state, and this state is a sink. Conversely, every non-trivial minimal co-safety DFA has exactly one accepting state, and this state is a sink.

We introduce the notions intersection compositionality and primality of DFAs and languages, following the definitions in \cite{DBLP:journals/iandc/KupfermanM15}:
\begin{definition2}
	\label{def:compositionality}
	For $k \in \natNumGeq{1}$, a DFA $\mathcal{A}$ is \defHighlight{$k$-decomposable} if there exist DFAs $\mathcal{A}_1,\dots,\mathcal{A}_t$ with $\lang{\mathcal{A}} = \bigcap_{i=1}^{t} \lang{\mathcal{A}_i}$ and $\size{\mathcal{A}_i} \leq k$ for each $i \in \{1,\dots,t\}$, where $t \in \natNumGeq{1}$. We call such DFAs $\mathcal{A}_1,\dots,\mathcal{A}_t$ a \defHighlight{$k$-decomposition} of $\mathcal{A}$.
	We call $\mathcal{A}$ \defHighlight{composite} if $\mathcal{A}$ is $k$-decomposable for a $k < \ind{\mathcal{A}}$, that is, if it is $(\ind{\mathcal{A}}-1)$-decomposable. Otherwise, we call $\mathcal{A}$ \defHighlight{prime}.\lipicsEnd
\end{definition2}
We use compositionality or $\cap$-compositionality when referring to intersection compositionality.

When analyzing the compositionality of a given DFA $\mathcal{A}$, it is sufficient to consider minimal DFAs $\mathcal{B}$ strictly smaller than the minimal DFA of $\mathcal{A}$ with $\lang{\mathcal{A}} \subseteq \lang{\mathcal{B}}$. Thus, we define $\alpha(\mathcal{A}) = \{\mathcal{B} \setDel \mathcal{B} \text{ is a minimal DFA with } \ind{\mathcal{B}}<\ind{\mathcal{A}} \text{ and } \lang{\mathcal{A}} \subseteq \lang{\mathcal{B}}\}$. Obviously, the DFA $\mathcal{A}$ is composite iff $\lang{\mathcal{A}} = \bigcap_{\mathcal{B} \in \alpha(\mathcal{A})} \lang{\mathcal{B}}$. We call a word $w \in (\bigcap_{\mathcal{B} \in \alpha(\mathcal{A})} \lang{\mathcal{B}}) \setminus \lang{\mathcal{A}}$ a \defHighlight{primality witness} of $\mathcal{A}$. Clearly, the DFA $\mathcal{A}$ is composite iff $\mathcal{A}$ has no primality witness.

We extend the notions of $k$-decompositions, compositionality, primality and primality witnesses to regular languages by identifying a regular language with its minimal DFA.

We denote the problem of deciding primality for a given DFA with \primeDFA{}.
We denote the restriction of \primeDFA{} to DFAs recognizing a finite languages with $\primeDFAfin{}$.
\primeDFA{} is in \complexityClassFont{ExpSpace} and is \complexityClassFont{NL}-hard \cite{DBLP:journals/iandc/KupfermanM15}.

We denote the connectivity problem in directed graphs, which is \complexityClassFont{NL}-complete \cite{DBLP:books/daglib/0095988}, with \problemFont{STCON}. We denote the restriction of \problemFont{STCON} to graphs with a maximum outdegree of two with \problemFont{2STCON}. Clearly, \problemFont{2STCON} is \complexityClassFont{NL}-complete as well.
We denote the problem of deciding minimality for a given DFA with \minimalDFA{}.
For $k \in \natNumGeq{2}$, the problem \minimalDFA{k} is the restriction of \minimalDFA{} to DFAs with at most $k$ letters.
As mentioned in \cref{sec:introduction}, the \complexityClassFont{NL}-completeness of \minimalDFA{k} for $k \in \natNumGeq{3}$ is folklore, while the \complexityClassFont{NL}-hardness of \minimalDFA{2} appears to be open.
\section{Compositionality of Finite Languages}
\label{sec:fl_characterization}
We characterize the compositionality of ADFAs and thereby of finite languages by proving:
\begin{restatable}{theorem}{theFlCharacterization}
	\label{the:fl_characterization}
	Consider a minimal ADFA $\mathcal{A} = (Q,\Sigma,q_I,\delta,F)$ recognizing a non-empty language. Then $\mathcal{A}$ is prime iff $\mathcal{A}$ is linear and:
	\begin{romanenumerate}
		\item $\sigma^n \in \lang{\mathcal{A}}$ for some $\sigma \in \Sigma$, where $n \in \natNum$ is the length of the longest word in $\lang{\mathcal{A}}$, or
		\item $\mathcal{A}$ is a safety DFA and $\mathcal{A}$ does not have the CEP.\lipicsEnd
	\end{romanenumerate}
	\vspace{1.9ex}
\end{restatable}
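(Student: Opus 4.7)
The plan is to prove the biconditional by treating each direction separately, with two subcases corresponding to the two alternatives in the characterization. Throughout, I set $n$ to be the length of the longest word in $\lang{\mathcal{A}}$, so that linearity of $\mathcal{A}$ gives $\ind{\mathcal{A}} = n+2$ and every $\mathcal{B} \in \alpha(\mathcal{A})$ has at most $n+1$ states.

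For the \emph{if} direction, assume $\mathcal{A}$ is linear and satisfies (i) or (ii); I would exhibit a primality witness, i.e.\ a word $w \notin \lang{\mathcal{A}}$ with $w \in \lang{\mathcal{B}}$ for every $\mathcal{B} \in \alpha(\mathcal{A})$. In case (i), let $\sigma$ be a letter with $\sigma^n \in \lang{\mathcal{A}}$. For any $\mathcal{B} \in \alpha(\mathcal{A})$ the initial run of $\mathcal{B}$ on $\sigma^n$ is accepting, and by pigeonhole along the $\sigma$-orbit (which has tail plus cycle of total length at most $n+1$) the state reached at step $n$ already lies in a cycle of some length $c \le n+1$ through an accepting state. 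Setting $N = n + \text{lcm}(1,2,\dots,n+1)$ gives $c \mid N-n$, hence $\sigma^N \in \lang{\mathcal{B}}$ uniformly in $\mathcal{B}$, while $\sigma^N \notin \lang{\mathcal{A}}$ because $N > n$. In case (ii), the failure of CEP furnishes a max-length $w \in \lang{\mathcal{A}}$ such that every compression $w'$ of $w$ admits an extension $w'v \in \lang{\mathcal{A}}$. Since $\mathcal{B}$ has at most $n+1$ states but the initial run of $\mathcal{A}$ on $w$ visits $n+1$ distinct non-sink states, $\mathcal{B}$ must identify two of them, inducing a compression $w'$ of $w$; combining with the CEP-witnessing extension $v$ and the fact that $\mathcal{A}$ is safety (so rejection is permanent and $\mathcal{A}$ rejects every sufficiently long extension of $w'$) yields a concrete word accepted by $\mathcal{B}$ but not by $\mathcal{A}$.

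For the \emph{only if} direction, I would prove the contrapositive by constructing a $(\ind{\mathcal{A}}-1)$-decomposition whenever $\mathcal{A}$ is non-linear or linear but violates (i) and (ii). If $\mathcal{A}$ is non-linear, two mutually unreachable non-sink states can be exploited to build, for each maximal branch, a quotient DFA that merges states outside that branch into a sink; each such quotient has strictly fewer than $\ind{\mathcal{A}}$ states, and their intersection recovers $\lang{\mathcal{A}}$. If $\mathcal{A}$ is linear with $\sigma^n \notin \lang{\mathcal{A}}$ for every $\sigma \in \Sigma$ and either $\mathcal{A}$ is not safety or $\mathcal{A}$ has the CEP, I would split further: in the non-safety linear case, the existence of a rejecting state that is not a sink together with every max-length accepted word using at least two distinct letters allows constituent DFAs that each ``forget'' one position along the main path while still rejecting the non-accepted extensions; in the safety plus CEP case, for each max-length $w \in \lang{\mathcal{A}}$ I would take the CEP-compression $w'$ and build a co-safety constituent that forbids exactly the offending extensions, with CEP guaranteeing that the intersection is precisely $\lang{\mathcal{A}}$.

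The main obstacle will be the compositeness direction in the linear non-safety subcase: the constituents must simultaneously (a) intersect to exactly $\lang{\mathcal{A}}$ and (b) each stay strictly below $n+2$ states, which forces a delicate case analysis of the transition function of the linear ADFA and of how rejecting non-sink states interact with extensions. On the primality side, the most delicate point is case (ii), where the merging-induced compression $w'$ and the corresponding CEP-failing extension $w'v$ must be chosen \emph{uniformly} so as to yield a single word $w''$ that is accepted by every $\mathcal{B} \in \alpha(\mathcal{A})$ yet rejected by $\mathcal{A}$; making this independent of the particular way $\mathcal{B}$ shrinks $\mathcal{A}$ is the heart of the argument.
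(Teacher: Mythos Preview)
Your overall case split matches the paper's five-case Claim, and your primality argument for case (i) is essentially the paper's (it uses $\sigma^{n+(n+1)!}$ rather than $\sigma^{n+\mathrm{lcm}(1,\dots,n+1)}$, but the idea is identical). However, two of your cases have genuine gaps.

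\textbf{Primality, case (ii).} You correctly flag the uniformity problem but do not resolve it, and the paper's resolution is a concrete insight you are missing. First, you need that $\mathcal{B}$ can be taken to be a safety DFA (so it has a rejecting sink and hence only $n$ states available along the accepting run on $w$, forcing a merge); without this, $\mathcal{B}$ could have $n+1$ states and visit all of them on $w$ with no identification. Second, and more importantly, the paper shows that failure of the CEP implies the existence of a letter $\sigma \in \Sigma_{n-1,n}$ with $\sigma \notin \Sigma_{j,n+1}$ for every $j<n$. The single word $w\sigma$ is then the uniform primality witness: whatever compression $w' = \sigma_1\cdots\sigma_i\sigma_{j+1}\cdots\sigma_n$ the merge in $\mathcal{B}$ induces, CEP failure gives $\delta(q_0,w') \in \{q_0,\dots,q_{n-1}\}$, and the special property of $\sigma$ then guarantees $w'\sigma \in \lang{\mathcal{A}} \subseteq \lang{\mathcal{B}}$, hence $w\sigma \in \lang{\mathcal{B}}$. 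Your sketch, in which the extension $v$ depends on the compression and thus on $\mathcal{B}$, does not produce a single witness.

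\textbf{Compositeness, linear non-safety.} Your one-line plan (``forget one position along the main path'') substantially underestimates this case, which is where almost all of the paper's technical work lies. The difficulty is that no DFA with $\le n+1$ states rejects exactly the words of length $>n$, so you cannot simply cap the length. The paper instead layers several families of DFAs in $\alpha(\mathcal{A})$: automata $\mathcal{A}_0$, $\hat{\mathcal{A}}_d$, $\mathcal{A}_{\underline{i}}$ handling words that do not pass through $q_n$ at step $n$; automata $\mathcal{A}_{\sigma,i}$ (exploiting $\sigma^n \notin \lang{\mathcal{A}}$) that reject all words of length $\ge 2n-1$; subsequence automata $\mathcal{A}_w^!$; and finally, for each remaining extension of length between $n+1$ and $2n-2$, a bespoke DFA $\tilde{\mathcal{A}}_w$ that routes through the rejecting non-sink $q_d$ via one of four case-dependent constructions. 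Your proposal contains none of this machinery, and ``forgetting one position'' does not by itself reject the problematic extensions of maximum-length words.

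A minor point: for the non-linear case your quotient/branch idea is both vaguer and harder than necessary. The paper simply observes $\ind{\mathcal{A}} > n+2$, so the length-$n$ cutoff DFA $\mathcal{A}_{\le n}$ together with the singleton-complement DFAs $\overline{\mathcal{A}_w}$ for $w \notin L$, $|w|\le n$, already form an $(n+2)$-decomposition.
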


To prove \cref{the:fl_characterization} we will consider five cases in turn.

First, if the ADFA $\mathcal{A}$ is not linear we essentially have a surplus of states, allowing us to construct one DFA rejecting overlong words and one specific DFA for each of the remaining words also rejected by $\mathcal{A}$.
This approach fails with linear ADFAs.
Nevertheless, we will come back to the idea of excluding words longer than a threshold value and tailoring a DFA for each word shorter than the threshold value which has to be rejected as well.

Second, if $\mathcal{A}$ is linear and $\sigma^n \in \lang{\mathcal{A}}$ holds the DFAs in $\alpha(\mathcal{A})$ do not possess enough states to differentiate the words $\sigma^0,\dots,\sigma^n$ but have to accept $\sigma^n$, which implies cyclic behavior on the words in $\{\sigma\}^*$ from which primality follows.

Third, if there is no $\sigma \in \Sigma$ with $\sigma^n \in \lang{\mathcal{A}}$ and $\mathcal{A}$ is not a safety DFA we can return to the idea of excluding words longer than a threshold value. For each of the words left to reject, it is possible to construct a DFA similar to $\mathcal{A}$ but without the rejecting sink, which circles back to the rejecting non-sink.

Fourth, if there is no $\sigma \in \Sigma$ with $\sigma^n \in \lang{\mathcal{A}}$ and $\mathcal{A}$ has the CEP we can utilize DFAs similar to $\mathcal{A}$ possessing a rejecting sink, since the CEP allows us to skip over one state.

Fifth and finally, if $\mathcal{A}$ is linear and $\mathcal{A}$ is a safety DFA and does not have the CEP both of the above approaches fail. There is no state to circle back to, and for the word breaching the CEP skipping over states is not possible either, which implies primality.

Formalizing these five cases, we get:
\begin{restatable}{claim}{claFlCharacterization}
	\label{cla:fl_characterization}
	Consider a minimal ADFA $\mathcal{A} = (Q,\Sigma,q_I,\delta,F)$ recognizing a non-empty language. Let $n \in \natNum$ be the length of the longest word in $\lang{\mathcal{A}}$. The following assertions hold:
	\begin{alphaenumerate}
		\item $\mathcal{A}$ is composite if $\mathcal{A}$ is not linear.\label{cla_ass:fl_characterization_non-linear}
		\item $\mathcal{A}$ is prime if $\mathcal{A}$ is linear and $\sigma^n \in \lang{\mathcal{A}}$ holds for some $\sigma \in \Sigma$.\label{cla_ass:fl_characterization_linear+sigmaN}
		\item $\mathcal{A}$ is composite if there is no $\sigma \in \Sigma$ with $\sigma^n \in \lang{\mathcal{A}}$ and $\mathcal{A}$ is not a safety DFA.\label{cla_ass:fl_characterization_non-sigmaN+non-safety}
		\item $\mathcal{A}$ is composite if there is no $\sigma \in \Sigma$ with $\sigma^n \in \lang{\mathcal{A}}$ and $\mathcal{A}$ has the CEP.\label{cla_ass:fl_characterization_non-sigmaN+CEP}
		\item $\mathcal{A}$ is prime if $\mathcal{A}$ is linear and $\mathcal{A}$ is a safety DFA and $\mathcal{A}$ does not have the CEP.\label{cla_ass:fl_characterization_linear+safety+non-CEP}\lipicsEnd
	\end{alphaenumerate}
	\vspace{1.9ex}
\end{restatable}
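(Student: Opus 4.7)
The plan is to prove each of the five assertions (a)--(e) in turn. Assertions (a), (c), and (d) are existence statements requiring an explicit $(\ind{\mathcal{A}}-1)$-decomposition; assertions (b) and (e) are non-existence statements requiring a primality witness, i.e., a word in $(\bigcap_{\mathcal{B} \in \alpha(\mathcal{A})} \lang{\mathcal{B}}) \setminus \lang{\mathcal{A}}$.

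The three compositeness results share a common two-part template: one length-bound DFA excluding overlong words, plus one tailored DFA for each short word that $\mathcal{A}$ rejects. For (a), non-linearity gives $\ind{\mathcal{A}} \geq n+3$, so the straightforward realisation fits within the budget: use the length-bound DFA of size $n+2$ accepting $\Sigma^{\leq n}$, together with, for each $w \in \Sigma^{\leq n} \setminus \lang{\mathcal{A}}$, a word-exclusion DFA of size $|w|+2$ accepting $\Sigma^* \setminus \{w\}$. In (c) and (d), where $\mathcal{A}$ may be assumed linear thanks to (a) and the budget tightens to $n+1$, one state must be saved in each component. In (c), the failure of safety provides a rejecting non-sink state; using this as a redirection target for transitions that would otherwise lead to the rejecting sink, one constructs, for each short rejected word, a DFA of size $n+1$ that accepts $\lang{\mathcal{A}}$ while rejecting the chosen word via a cycle through the rejecting non-sink. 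In (d), the CEP identifies, for each length-$n$ word $w \in \lang{\mathcal{A}}$, a position at which a single state of $\mathcal{A}$ can be bypassed via compression without introducing any spurious acceptance, again saving one state.

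Both primality results rest on pigeonhole-style arguments on the run of an arbitrary $\mathcal{B} \in \alpha(\mathcal{A})$. For (b), $\mathcal{B}$ has fewer than $n+2$ states, so its run on $\sigma, \sigma^2, \ldots, \sigma^n$ must revisit a state, forcing a cycle on the letter $\sigma$; since $\sigma^n$ is accepted by $\mathcal{B}$, so is $\sigma^m$ for every $m > n$ congruent to $n$ modulo the cycle length, and a single $m$ common to all $\mathcal{B} \in \alpha(\mathcal{A})$ (via the least common multiple of cycle lengths) yields a primality witness. For (e), the failure of CEP supplies a $w \in \lang{\mathcal{A}}$ of length $n$ every compression of which has an extension in $\lang{\mathcal{A}}$; combining this with a pigeonhole-style analysis of the $\mathcal{B}$-run on $w$ and with the rigidity imposed by safety and linearity, one produces either a compression $w'$ of $w$ with $\delta_\mathcal{B}(q_I, w') = \delta_\mathcal{B}(q_I, w)$ --- so that an extension $w'v \in \lang{\mathcal{A}}$ forces $\mathcal{B}$ to accept the overlong $wv \notin \lang{\mathcal{A}}$ --- or a cycle in $\mathcal{B}$ that can be iterated to yield an overlong word in $\lang{\mathcal{B}} \setminus \lang{\mathcal{A}}$.

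I expect the main obstacle to be cases (c) and (d): the state budget is exactly one below the size of $\mathcal{A}$, so the tailored DFAs must simultaneously preserve acceptance of every word in $\lang{\mathcal{A}}$ and reject exactly the intended short word. Verifying that the bypass or redirection does not inadvertently accept any rejected word or reject any accepted word demands a careful case analysis on the transition structure of the linear minimal ADFA, together with an argument that the length-bound and tailored factors jointly cover every word outside $\lang{\mathcal{A}}$.
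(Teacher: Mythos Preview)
Your plans for (a) and (b) are correct and essentially match the paper. For (e) your pigeonhole idea is right, but a primality witness must be a \emph{single} word, so the extension $v$ you append cannot depend on the compression $w'$ (which in turn depends on $\mathcal{B}$). The paper resolves this by first extracting, from the failure of the CEP, a letter $\sigma \in \Sigma_{n-1,n}$ with $\sigma \notin \Sigma_{j,n+1}$ for every $j<n$; this single $\sigma$ works as $v$ uniformly, and $w\sigma$ is the witness for all $\mathcal{B}\in\alpha(\mathcal{A})$ simultaneously.

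The genuine gap is in (c), and partly in (d). Your two-part template breaks at the very first component once $\ind{\mathcal{A}}=n+2$: any DFA that contains $\lang{\mathcal{A}}$ yet rejects every word of length greater than $n$ recognises a finite language containing a length-$n$ word, hence already has index at least $n+2$; no ``one state'' can be saved in a global length-bound DFA. The paper therefore abandons the length-bound idea entirely. For (c) it assembles a family of size-$(n{+}1)$ DFAs $\mathcal{A}_{\sigma,i}$ and $\mathcal{A}^!_v$ whose intersection still contains $\lang{\mathcal{A}}$ but excludes every word of length exceeding $2n-2$; the finitely many remaining extensions (lengths $n{+}1$ through $2n{-}2$) are then eliminated one by one via the circle-back-to-$q_d$ construction you sketch, and even that last step requires a four-way case analysis in which the hypothesis $\sigma^n\notin\lang{\mathcal{A}}$ is used essentially. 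For (d) the CEP-bypass DFAs $\mathcal{A}_{i,l}$ themselves reject all extensions of each length-$n$ word in $\lang{\mathcal{A}}$, so together with auxiliary DFAs handling words that reach the sink without passing through $q_n$ at step $n$, they cover $\Sigma^*\setminus\lang{\mathcal{A}}$ with no separate length-bound component.
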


Formalizing the intuition given above for (\ref{cla_ass:fl_characterization_non-linear}) and (\ref{cla_ass:fl_characterization_linear+sigmaN}) is not too complex. Assertions (\ref{cla_ass:fl_characterization_non-sigmaN+non-safety})-(\ref{cla_ass:fl_characterization_linear+safety+non-CEP}) prove to be much harder. Thus, we commence by discussing (\ref{cla_ass:fl_characterization_non-sigmaN+non-safety}) in \cref{subsec:fl_linearNonSafetyDFAs} and (\ref{cla_ass:fl_characterization_non-sigmaN+CEP}) and (\ref{cla_ass:fl_characterization_linear+safety+non-CEP}) in \cref{subsec:fl_linearSafetyDFAs}. Henceforth, we consider a minimal ADFA $\mathcal{A} = (Q,\Sigma,q_I,\delta,F)$ recognizing the non-empty language $L$ with $\sigma^n \notin L$ for each $\sigma \in \Sigma$, where $n \in \natNum$ is the length of the longest word in $L$. W.l.o.g. we assume $Q = \{q_0,\dots,q_{n+1}\}$ with $q_j$ being reachable from $q_i$ for all $i < j$, which implies $q_I = q_0$ and $q_n \in F$ with $q_{n+1}$ being the rejecting sink. Finally, we define $\Sigma_{i,j} = \{\sigma \in \Sigma \setDel \delta(q_i,\sigma) = q_j\}$.

\subsection{Linear non-safety ADFAs}
\label{subsec:fl_linearNonSafetyDFAs}
We consider \cref{cla:fl_characterization} (\ref{cla_ass:fl_characterization_non-sigmaN+non-safety}). Therefore, we assume that $\mathcal{A}$ is not a safety DFA, which implies $\{q_n\} \subseteq F \subset Q \setminus \{q_{n+1}\}$. Let $d \in \{0,\dots,n-1\}$ with $q_d \notin F$.

We show the compositionality of $\mathcal{A}$ by specifying an $(n+1)$-decomposition of $\mathcal{A}$. First, we construct DFAs rejecting words not in $L$ that are not extensions of words $u \in L, |u| = n$. Afterwards, we turn to such extensions, whose handling poses the main difficulty. Here, we first construct DFAs rejecting such extensions that are longer than a certain threshold value. For the remaining extensions we employ the idea of circling back to $q_d$.

We begin by considering words not in $L$ which are not extensions of words $u \in L, |u|=n$. We introduce three DFA types handling these words.

First, let $\mathcal{A}_0$ be the DFA constructed out of $\mathcal{A}$ by removing $q_n$, redirecting every transition $q \rightarrow q_n$ to $q_0$, and including $q_0$ into the acceptance set. Clearly, $\mathcal{A}_0 \in \alpha(\mathcal{A})$ and $\mathcal{A}_0$ rejects every $w \notin L$ on which $\mathcal{A}$ enters the rejecting sink prematurely, that is, without entering $q_n$.

Second, let $\hat{\mathcal{A}}_d$ be the DFA constructed out of $\mathcal{A}$ by removing $q_{n+1}$, redirecting every transition $q_i \rightarrow q_{n+1}$ with $i < n$ to $q_n$ and every transition $q_n \rightarrow q_{n+1}$ to $q_d$. Clearly, $\hat{\mathcal{A}}_d \in \alpha(\mathcal{A})$ and $\hat{\mathcal{A}}_d$ rejects every $w \notin L$ on which $\mathcal{A}$ does not enter the rejecting sink.

Third, we construct DFAs rejecting extensions of words $w \in L, |w| < n$ with $\delta(q_0,w)=q_n$. Let $I = \{0,\dots,n\}$. For each $m \in \{1,\dots,n-1\}$ let $I_m = \{(i_0,\dots,i_m) \in I^{m+1} \setDel 0 = i_0 < \dots < i_m = n\}$. For each $\myUnderbar{i} \in I_m$ define $\mathcal{A}_{\myUnderbar{i}}$ as in \cref{subfig:fl_A_myUnderbariA_m<n-1,subfig:fl_A_myUnderbariA_m=n-1}. It is easy to confirm that each $\mathcal{A}_{\myUnderbar{i}}$ is in $\alpha(\mathcal{A})$ and rejects extensions of words on which $\mathcal{A}$ visits the states $q_{i_0},\dots,q_{i_m}$.
\begin{figure}[t]
	\begin{subfigure}[t]{0.75\textwidth}
		\centering
		\begin{tikzpicture}[node distance=2.25cm]
		\footnotesize
		\node[state, initial, accepting] 					(q0) 	{$q_0$};
		\node[state, right of=q0, accepting] 				(qi1) 	{$q_{i_1}$};
		\node[state, right of=qi1, draw=none]				(empty-node) 	{};
		\node[state, right of=empty-node, accepting] 		(qim-1) {$q_{i_{m-1}}$};
		\node[state, right of=qim-1, accepting] 			(qn) {$q_n$};
		\node[state, right of=qn] 							(qn+1) {$q_{n+1}$};
		\node[state, below of=q0, accepting] 				(q+) 	{$q_+$};
		
		\draw	(q0)	edge[above]						node{$\Sigma_{0,i_1}$}								(qi1);
		\draw	(q0)	edge[left]						node{$\complementOp{\Sigma_{0,i_1}}$}								(q+);
		\draw	(qi1)	edge[dashed]					node{}								(qim-1);
		\draw	(qi1)	edge[right]						node{$\complementOp{\Sigma_{i_1,i_2}}$}								(q+);
		\draw	(qim-1)	edge[above]						node{$\Sigma_{i_{m-1},n}$}								(qn);
		\draw	(qim-1)	edge[below]						node{$\complementOp{\Sigma_{i_{m-1},n}}$}								(q+);
		\draw	(qn)	edge[above]						node{$\Sigma$}								(qn+1);
		\draw	(qn+1)	edge[loop right]				node{$\Sigma$}						(qn+1);
		\draw	(q+)	edge[loop right]				node{$\Sigma$}						(q+);
		\end{tikzpicture}
		\caption{$\mathcal{A}_{\myUnderbar{i}}$ if $m < n-1$.}
		\label{subfig:fl_A_myUnderbariA_m<n-1}
	\end{subfigure}\hfill
	\begin{subfigure}[t]{0.75\textwidth}
		\centering
		\begin{tikzpicture}[node distance=2cm]
		\footnotesize
		\node[state, initial, accepting] 					(q0) 	{$q_0$};
		\node[state, right of=q0, draw=none]				(empty-node1) 	{};
		\node[state, right of=empty-node1, accepting] 		(qj-1) {$q_{j-1}$};
		\node[state, right of=qj-1, accepting] 				(qj+1) {$q_{j+1}$};
		\node[state, right of=qj+1, draw=none]				(empty-node2) 	{};
		\node[state, right of=empty-node2, accepting] 		(qn) {$q_n$};
		\node[state, right of=qn] 							(qn+1) {$q_{n+1}$};
		
		\draw	(q0)	edge[dashed]						node{}								(qj-1);
		\draw	(q0)	edge[loop above]					node{$\complementOp{\Sigma_{0,1}}$}	(q0);
		\draw	(qj-1)	edge[below, bend right]							node{$\Sigma_{j-1,j+1}$}			(qj+1);
		\draw	(qj-1)	edge[loop above]					node{$\complementOp{\Sigma_{j-1,j+1}}$}	(qj-1);
		\draw	(qj+1)	edge[dashed]						node{}								(qn);
		\draw	(qj+1)	edge[loop above]					node{$\complementOp{\Sigma_{j+1,j+2}}$}	(qj-1);
		\draw	(qn)	edge[above]							node{$\Sigma$}						(qn+1);
		\draw	(qn+1)	edge[loop above]					node{$\Sigma$}						(qn+1);
		\end{tikzpicture}
		\caption{$\mathcal{A}_{\myUnderbar{i}}$ if $m = n-1$, where $\myUnderbar{i} = (0,\dots,j-1,j+1,\dots,n)$.}
		\label{subfig:fl_A_myUnderbariA_m=n-1}
	\end{subfigure}\hfill
	\begin{subfigure}[t]{0.75\textwidth}
		\centering
		\begin{tikzpicture}[node distance=1.725cm]
		\footnotesize
		\node[state, initial, accepting] 					(q0) 	{$q_0$};
		\node[state, right of=q0, accepting] 				(q1) 	{$q_1$};
		\node[state, right of=q1, draw=none]				(empty-node1) 	{};
		\node[state, right of=empty-node1, accepting] 		(qi-1) 	{$q_{i-1}$};
		\node[state, right of=qi-1, accepting] 				(qi) 	{$q_{i}$};
		\node[state, right of=qi, draw=none]				(empty-node2) 	{};
		\node[state, right of=empty-node2, accepting]		(qn-1) 	{$q_{n-1}$};
		\node[state, right of=qn-1] 						(qn) 	{$q_n$};
		
		\draw	(q0)	edge[above]							node{$\Sigma$}					(q1);
		\draw	(q1)	edge[above,dashed]					node{$\Sigma$}					(qi-1);
		\draw	(qi-1)	edge[loop above]					node{$\Sigma\setminus\{\sigma\}$}(qi-1);
		\draw	(qi-1)	edge[above]							node{$\sigma$}					(qi);
		\draw	(qi)	edge[above,dashed]					node{$\Sigma$}					(qn-1);
		\draw	(qn-1)	edge[above]							node{$\Sigma$}					(qn);
		\draw	(qn)	edge[loop above]					node{$\Sigma$}					(qn);
		\end{tikzpicture}
		\caption{$\mathcal{A}_{\sigma,i}$.}
		\label{subfig:fl_A_sigmai}
	\end{subfigure}
	\caption{DFA $\mathcal{A}_{\myUnderbar{i}}$ for $\myUnderbar{i} \in I_m$ with $m \in \{1,\dots,n-1\}$ and DFA $\mathcal{A}_{\sigma,i}$ for $\sigma \in \Sigma, i \in \{1,\dots,n\}$.}
	\label{fig:fl_A_myUnderbariA_sigmai}
\end{figure}

\cref{lem:fl_A_0A_dA_myUnderbari} formalizes the results concerning $\mathcal{A}_0$, $\hat{\mathcal{A}}_d$ and $\mathcal{A}_{\myUnderbar{i}}$:
\begin{restatable}{lemma}{lemFlANULLAdAMyUnderbari}
	\label{lem:fl_A_0A_dA_myUnderbari}
	The following assertions hold:
	\begin{romanenumerate}
		\item $\mathcal{A}_0,\hat{\mathcal{A}}_d,\mathcal{A}_{\myUnderbar{i}} \in \alpha(\mathcal{A})$, where $\myUnderbar{i} \in \bigcup_{m =1}^{n-1} I_m$.
		\item Consider a word $w \notin L$, where $w$ is not an extension of a word $u \in L, |u|=n$. Then $w \notin \lang{\mathcal{A}_0} \cap \lang{\hat{\mathcal{A}}_d} \cap \bigcap_{m=1}^{n-1}\bigcap_{\myUnderbar{i} \in I_m}\lang{\mathcal{A}_{\myUnderbar{i}}}$ holds.\lipicsEnd
	\end{romanenumerate}
	\vspace{1.9ex}
\end{restatable}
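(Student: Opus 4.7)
The plan is to exploit the linear structure of $\mathcal{A}$: the non-sink states $q_0, \ldots, q_n$ lie in strict linear order under reachability, $q_{n+1}$ is the unique sink, and every outgoing transition of $q_n$ lands in $q_{n+1}$. For (i), each of $\mathcal{A}_0$, $\hat{\mathcal{A}}_d$ and $\mathcal{A}_{\myUnderbar{i}}$ has at most $n+1 < n+2 = \ind{\mathcal{A}}$ states, so the minimal DFA equivalent to any of them has index below $\ind{\mathcal{A}}$. The containment $L \subseteq \lang{\cdot}$ I verify separately. For $\mathcal{A}_0$: a run of $\mathcal{A}$ on $w \in L$ either ends at $q_n$, in which case the redirection fires on the last transition and sends the $\mathcal{A}_0$-run to the newly accepting $q_0$, or it stays in $\{q_0, \ldots, q_{n-1}\}$ and ends at some $q_i \in F$, in which case no redirection fires and the $\mathcal{A}_0$-run agrees with the $\mathcal{A}$-run. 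For $\hat{\mathcal{A}}_d$: since $q_{n+1}$ is a sink, a run of $\mathcal{A}$ ending in $F$ never visits $q_{n+1}$, so no redirected transition is triggered and the $\hat{\mathcal{A}}_d$-run ends in the same accepting state. For $\mathcal{A}_{\myUnderbar{i}}$: the only rejecting state is the sink $q_{n+1}$, and by the structure in \cref{subfig:fl_A_myUnderbariA_m<n-1,subfig:fl_A_myUnderbariA_m=n-1} it is reached only by runs that first visit $q_0, q_{i_1}, \ldots, q_{i_m} = q_n$ in this exact order and then read one more letter; such inputs are proper extensions of length-$m$ words that $\mathcal{A}$ drives to $q_n$, and hence they do not lie in $L$.

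For (ii), the plan is a case analysis on the $\mathcal{A}$-run of $w \notin L$. If the run never enters $q_{n+1}$, then the $\hat{\mathcal{A}}_d$-run on $w$ coincides with the $\mathcal{A}$-run and ends in the same non-accepting state, so $\hat{\mathcal{A}}_d$ rejects $w$. If the run reaches $q_{n+1}$ without passing through $q_n$, then no transition in it targets $q_n$, so the $\mathcal{A}_0$-run on $w$ equals the $\mathcal{A}$-run and ends at the sink $q_{n+1}$, which is rejecting in $\mathcal{A}_0$. Otherwise the run enters $q_n$ and then $q_{n+1}$, giving $w = uv$ with $\delta(q_0, u) = q_n$ and $v \neq \varepsilon$. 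Linearity and acyclicity imply $|u| \leq n$, and the hypothesis that $w$ is not an extension of a length-$n$ word in $L$ forces $|u| = m \leq n - 1$. Taking $\myUnderbar{i}$ to be the index sequence of states visited by $u$ gives $\myUnderbar{i} \in I_m$ with $1 \leq m \leq n - 1$, and in $\mathcal{A}_{\myUnderbar{i}}$ the word $u$ drives the run to $q_n$ while the first letter of $v$ sends it into $q_{n+1}$, so $w$ is rejected.

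The main obstacle is the verification of $L \subseteq \lang{\mathcal{A}_{\myUnderbar{i}}}$, particularly for $m = n - 1$, whose diagram introduces self-loops around the skipped state and provides no dedicated accepting escape state $q_+$. Here the careful point is that any word in $L$ whose $\mathcal{A}$-run does not traverse the exact sequence encoded by $\myUnderbar{i}$ cannot be sent into $q_{n+1}$ in $\mathcal{A}_{\myUnderbar{i}}$, since every path to $q_{n+1}$ must first reach $q_n$ and every path to $q_n$ in $\mathcal{A}_{\myUnderbar{i}}$ forces the input to walk $\mathcal{A}$ along $q_0, q_{i_1}, \ldots, q_{i_m}$ in that precise order.
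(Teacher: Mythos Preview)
Your overall plan matches the paper's proof: the same three-way split in part~(ii) on whether the $\mathcal{A}$-run reaches $q_{n+1}$, and if so whether it passes through $q_n$; and the same size-and-containment verification for part~(i). The treatments of $\mathcal{A}_0$ and $\hat{\mathcal{A}}_d$ are correct and essentially identical to the paper's.

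There is, however, a genuine gap in your handling of $\mathcal{A}_{\myUnderbar{i}}$ for $m = n-1$. Your justification that ``every path to $q_n$ in $\mathcal{A}_{\myUnderbar{i}}$ forces the input to walk $\mathcal{A}$ along $q_0, q_{i_1}, \ldots, q_{i_m}$ in that precise order'' is false: the self-loops you yourself flag mean an $\mathcal{A}_{\myUnderbar{i}}$-run to $q_n$ may consume letters $\sigma \notin \Sigma_{k,k+1}$ while sitting at $q_k$, and on such a letter the $\mathcal{A}$-run leaves $q_k$ for some $q_{k'}$ with $k' \neq k+1$. So the $\mathcal{A}$-run need not follow the $\myUnderbar{i}$-sequence at all, and your implication breaks. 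Likewise, the earlier sentence ``such inputs are proper extensions of length-$m$ words that $\mathcal{A}$ drives to $q_n$'' is only valid for $m < n-1$, where there are no self-loops before $q_n$.

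The missing ingredient is the length argument the paper uses: any $\mathcal{A}_{\myUnderbar{i}}$-run (for $m=n-1$) reaching $q_{n+1}$ must make $n-1$ advancing transitions to reach $q_n$ and one more to reach $q_{n+1}$, hence the input has length at least $n$. If the length exceeds $n$ it is trivially outside $L$. If the length equals $n$, then \emph{no} self-loop was taken, and only then does your claim hold: the first $n-1$ letters lie in $\Sigma_{i_0,i_1}, \ldots, \Sigma_{i_{n-2},n}$ respectively, so in $\mathcal{A}$ they drive $q_0$ to $q_n$, and the final letter sends $\mathcal{A}$ to $q_{n+1}$, giving $w \notin L$. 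Inserting this length step repairs the argument completely.
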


Next, we turn to the extensions of words $u \in L, |u| = n$. We begin by constructing DFAs that taken together reject every word strictly longer than $n + (n-2)$. Then we turn to the remaining extensions one by one, of which only a finite number are left to reject.

Let $\sigma \in \Sigma$. Since $\sigma^n \notin L$, there exists a value $i \in \{1,\dots,n\}$ with $\sigma \notin \Sigma_{i-1,i}$. Define $\mathcal{A}_{\sigma,i}$ as in \cref{subfig:fl_A_sigmai}. First, note that $\mathcal{A}_{\sigma,i} \in \alpha(\mathcal{A})$ because a word rejected by $\mathcal{A}_{\sigma,i}$ is strictly longer $n$ or is of length $n$ with letter $\sigma$ at position $i$.
Next, consider a word $w = \sigma_1\dots\sigma_m \in \Sigma^m$ such that $\sigma_j = \sigma$ for a $j \in \{1,\dots,m\}$ with $j \geq i$ and $m \geq j+(n-i)$.
After reading the prefix $\sigma_1\dots\sigma_{j-1}$ the DFA $\mathcal{A}_{\sigma,i}$ is at least in state $q_{i-1}$. Thus, after reading $\sigma_1\dots\sigma_{j}$ it is at least in state $q_i$ and will reject after reading $n-i$ more letters. Since $m \geq j + (n-i)$, we have $w \notin \lang{\mathcal{A}_{\sigma,i}}$.
\cref{lem:fl_A_sigmai} formalizes this result:\begin{restatable}{lemma}{lemFlASigmai}
	\label{lem:fl_A_sigmai}
	Let $\sigma \in \Sigma$ and $i \in \{1,\dots,n\}$ with $\sigma \notin \Sigma_{i-1,i}$. The following assertions hold:
	\begin{romanenumerate}
		\item $\mathcal{A}_{\sigma,i} \in \alpha(\mathcal{A})$.
		\item Let $m \in \natNum$. Let $w \in \sigma_1 \dots \sigma_m \in \Sigma^m$ such that $\sigma_j = \sigma$ for a $j \in \{1,\dots,m\}$ with $j \geq i$ and $m \geq j+(n-i)$. Then $w$ is rejected by $\mathcal{A}_{\sigma,i}$.\lipicsEnd
	\end{romanenumerate}
	\vspace{1.9ex}
\end{restatable}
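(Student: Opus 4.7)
My plan is to handle the two assertions separately. Since $\mathcal{A}_{\sigma,i}$ has $n+1$ states and $\ind{\mathcal{A}}=n+2$, assertion (i) reduces to establishing minimality of $\mathcal{A}_{\sigma,i}$ together with $\lang{\mathcal{A}}\subseteq\lang{\mathcal{A}_{\sigma,i}}$; assertion (ii) is already sketched informally just before the lemma and only needs to be formalized.

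For the language inclusion in (i), I would first describe the rejected words of $\mathcal{A}_{\sigma,i}$ explicitly: a word $w=\sigma_1\dots\sigma_m$ is rejected iff there exists some $j\in\{i,\dots,m\}$ with $\sigma_j=\sigma$ and $m\geq j+(n-i)$. This drops out of the three-part structure of the construction (unconditional progress $q_0\to\dots\to q_{i-1}$ on the first $i-1$ letters; waiting at $q_{i-1}$ on non-$\sigma$ letters; unconditional progress $q_i\to\dots\to q_n$ afterwards). In particular every rejected word has length at least $n$, so any $w\in L$ with $|w|<n$ is accepted automatically. For $w\in L$ with $|w|=n$ the linearity of $\mathcal{A}$ forces the run to traverse $q_0,q_1,\dots,q_n$ strictly forwards, so $\sigma_i\in\Sigma_{i-1,i}$ and hence $\sigma_i\neq\sigma$ by hypothesis; since the condition $j\geq i$ together with $n\geq j+(n-i)$ pins $j$ to $i$, no admissible index remains. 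This yields $\lang{\mathcal{A}}\subseteq\lang{\mathcal{A}_{\sigma,i}}$.

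Minimality of $\mathcal{A}_{\sigma,i}$ I would dispatch by noting that all states are reachable from $q_0$ directly from the construction, that $q_n$ is the unique rejecting state (so separated from every other state by $\varepsilon$), and that the distance $d(q_j):=n-j$, defined as the length of the shortest word driving the run from $q_j$ into $q_n$, is strictly decreasing in $j$. A word of appropriate length then separates any pair of accepting states, since the one with larger index is already pushed into $q_n$ while the other still sits in an accepting state.

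For (ii), I would simply trace the run of $\mathcal{A}_{\sigma,i}$ on the given $w$: after the first $j-1\geq i-1$ letters the run sits in a state $q_k$ with $k\geq i-1$, because no transition of $\mathcal{A}_{\sigma,i}$ ever decreases the state index; reading $\sigma_j=\sigma$ moves it to a state of index at least $i$; and the remaining $m-j\geq n-i$ letters push the index up by at least $n-i$ further, so the run ends in the rejecting sink $q_n$. The single step worth flagging is the monotonicity $k\geq i-1$, but this is immediate from inspecting the transitions, so I do not expect a real obstacle in either part.
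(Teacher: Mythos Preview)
Your proposal is correct and follows essentially the same route as the paper: for the language inclusion in (i) you argue, as the paper does, that any rejected word has length at least $n$, and that a length-$n$ word in $L$ must have $\sigma_i\in\Sigma_{i-1,i}$ and hence $\sigma_i\neq\sigma$; for (ii) your trace of the run (index $\geq i-1$ after $j-1$ letters, index $\geq i$ after reading $\sigma_j=\sigma$, then $n-i$ further steps suffice) is exactly the paper's argument.

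The one difference is that you additionally prove minimality of $\mathcal{A}_{\sigma,i}$, whereas the paper's proof does not: there (and in the analogous lemmas for $\mathcal{A}_0$, $\hat{\mathcal{A}}_d$, $\mathcal{A}_{\underline{i}}$, $\mathcal{A}_w^!$) the paper only verifies $\size{\mathcal{A}_{\sigma,i}}<\ind{\mathcal{A}}$ and $\lang{\mathcal{A}}\subseteq\lang{\mathcal{A}_{\sigma,i}}$, treating membership in $\alpha(\mathcal{A})$ somewhat loosely. Since these DFAs are only ever used to build an $(\ind{\mathcal{A}}-1)$-decomposition, minimality is not actually needed for the downstream argument; your extra step is correct but not strictly required for how the lemma is applied.
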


Now consider a word $w = \sigma_1\dots\sigma_m \in \Sigma^m$ with $m \geq n + (n-1)$ and $\sigma_1\dots\sigma_n \in L$. Note that \cref{lem:fl_A_sigmai} implies $w \notin \lang{\mathcal{A}_{\sigma_n,i}}$ where $i \in \{1,\dots,n\}$ with $\sigma_n \notin \Sigma_{i-1,i}$.
With this limitation of length, we only need DFAs to reject the extensions of words $u \in L, |u|=n$ with a maximum length of $n+(n-2)$. Consider such an extension $w = \sigma_1 \dots \sigma_m \in \Sigma^m$. That is, $n+1 \leq m \leq n+(n-2)$ and $\sigma_1\dots\sigma_n \in L$. This implies $\sigma_i \in \Sigma_{i-1,i}$ for each $i \in \{1,\dots,n\}$ but provides no information about the $\sigma_i$ with $i \in \{n+1,\dots,m\}$. Therefore, we construct DFAs rejecting every such extension not confirming to a certain structure. This structure will be key to the further DFA constructions.

For a word $w \in \Sigma^*$, let $\mathcal{A}_w^!$ be the DFA rejecting exactly the words containing $w$ as a subsequence. Clearly, the following holds:
\begin{restatable}{lemma}{lemFlAWEXCL}
	\label{lem:fl_A_w^!}
	Let $w \notin L, |w|=n$. Then $\mathcal{A}_w^! \in \alpha(\mathcal{A})$ holds.\lipicsEnd
	\vspace{1.9ex}
\end{restatable}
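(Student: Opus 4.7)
\medskip

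\noindent\textbf{Proof proposal for \cref{lem:fl_A_w^!}.} Membership in $\alpha(\mathcal{A})$ requires three things of $\mathcal{A}_w^!$: minimality, index strictly below $\ind{\mathcal{A}} = n+2$, and $\lang{\mathcal{A}} \subseteq \lang{\mathcal{A}_w^!}$. The plan is to address these in turn, starting from an explicit construction of $\mathcal{A}_w^!$.

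\medskip

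\noindent\emph{Construction and minimality.} Write $w = w_1 \dots w_n$ and define $\mathcal{A}_w^! = (P,\Sigma,p_0,\eta,P \setminus \{p_n\})$ with $P = \{p_0,\dots,p_n\}$, where $\eta(p_i,w_{i+1}) = p_{i+1}$ and $\eta(p_i,\sigma) = p_i$ for $\sigma \neq w_{i+1}$ for each $i < n$, and with $p_n$ a rejecting sink. By a straightforward induction on $|v|$, $\eta(p_0,v) = p_i$ for the largest $i$ such that $w_1 \dots w_i$ is a subsequence of $v$. Thus $\mathcal{A}_w^!$ accepts exactly the words not containing $w$ as a subsequence. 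To see that $\mathcal{A}_w^!$ is minimal, take any $0 \leq i < j \leq n$ and consider the suffix $s := w_{j+1} \dots w_n$ of length $n-j$. Since $w_1 \dots w_j$ is a subsequence of the word reaching $p_j$, appending $s$ exhibits $w$ as a subsequence, so $\eta(p_j,s) = p_n$ is rejecting. On the other hand, any word reaching $p_i$ has matched only $w_1 \dots w_i$, so to exhibit $w$ as a subsequence after appending $s$ one would need to find $w_{i+1}\dots w_n$ (length $n-i$) as a subsequence of $s$ (length $n-j < n-i$), which is impossible; hence $\eta(p_i,s)$ is accepting. Thus all states of $\mathcal{A}_w^!$ are pairwise distinguishable, so $\ind{\mathcal{A}_w^!} = n+1$.

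\medskip

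\noindent\emph{Index comparison.} Since $\mathcal{A}$ is a minimal ADFA with state set $Q = \{q_0,\dots,q_{n+1}\}$, one has $\ind{\mathcal{A}} = n+2 > n+1 = \ind{\mathcal{A}_w^!}$.

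\medskip

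\noindent\emph{Language inclusion.} Suppose $u \in L$ contains $w$ as a subsequence. Since $|w|=n$, this forces $|u| \geq n$. By the choice of $n$ as the length of the longest word in $L$, $|u|=n$, and a length-$n$ word containing a length-$n$ word as subsequence must equal it, so $u = w$. But $w \notin L$ by hypothesis, contradicting $u \in L$. Hence every $u \in L$ avoids $w$ as a subsequence and is accepted by $\mathcal{A}_w^!$, giving $\lang{\mathcal{A}} \subseteq \lang{\mathcal{A}_w^!}$.

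\medskip

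Combining these three observations yields $\mathcal{A}_w^! \in \alpha(\mathcal{A})$. The only step requiring genuine care is the minimality argument; the distinguishing-suffix $w_{j+1}\dots w_n$ is the key idea and the length comparison $n-j < n-i$ is what makes it work. The other two parts are immediate from the construction and the definition of $n$.
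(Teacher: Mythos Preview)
Your proof is correct and follows essentially the same approach as the paper for the size bound and the language inclusion (both argue that a word rejected by $\mathcal{A}_w^!$ has length at least $n$, and if exactly $n$ must equal $w$). The one difference is that you additionally prove minimality of $\mathcal{A}_w^!$ via the distinguishing suffix $w_{j+1}\dots w_n$, whereas the paper's proof simply records $\size{\mathcal{A}_w^!}=n+1<\ind{\mathcal{A}}$ and language inclusion without verifying minimality; your treatment is thus more complete with respect to the formal definition of $\alpha(\mathcal{A})$, while the paper implicitly relies on passing to the minimal DFA.
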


With the DFAs $\mathcal{A}_w^!$ for every $w \notin L, |w|=n$ in hand, we only have to consider extensions of words $u \in L, |u|=n$ with a maximum length of $n+(n-2)$ for which every subsequence of length $n$ is in $L$. 

Let $w = \sigma_1\dots\sigma_m$ be an extension satisfying these conditions. We construct a DFA $\tilde{\mathcal{A}}_w \in \alpha(\mathcal{A})$ rejecting $w$. We utilize the rejecting state $q_d$ and define $\tilde{\mathcal{A}}_w = (\tilde{Q}_w,\Sigma,q_0,\tilde{\delta}_w,\tilde{F}_w)$ with $\tilde{Q} = \{q_0,\dots,q_n\}$, $\tilde{F}_w = \tilde{Q}_w \setminus \{q_d\}$ and $\tilde{\delta}_w(q_0,w) = q_d$.
Further, we have $\tilde{\delta}_w(q_0,v) = q_d$ for a $v \in \Sigma^*$ only if $\delta(q_0,v) \in \{q_d,q_{n+1}\}$, ensuring $\tilde{\mathcal{A}}_w \in \alpha(\mathcal{A})$.
In order to utilize $q_d$ in this manner, the DFA $\tilde{\mathcal{A}}_w$ simulates the behavior of $\mathcal{A}$ for the states $q_0,\dots,q_{d-1}$. 
The task then is to select the transitions of states $q_{d},\dots,q_n$.

If $|\sigma_{d+1}\dots\sigma_m|_{\sigma_m} \leq n-d$ the DFA $\tilde{\mathcal{A}}_w$ can simply advance for occurrences of $\sigma_m$ and the first $n-d-|\sigma_{d+1}\dots\sigma_{m-1}|_{\sigma_m}$ occurrences of letters unequal to $\sigma_m$.
Thus, we only have to consider the case $|\sigma_{d+1}\dots\sigma_m|_{\sigma_m} > n-d$.

If $\sigma_{n+1} \neq \sigma_m$ the DFA $\tilde{\mathcal{A}}_w$ can advance for each letter in $\Sigma$, ensuring $\tilde{\delta}_w(q_d,\sigma_{d+1}\dots\sigma_n) = q_n$. Further, we can define $\tilde{\delta}_w(q_n,\sigma_{n+1}) = q_{n-[(m-1)-(n+2)+1]}$ and $\tilde{\delta}_w(q_n,\sigma_m) = q_d$. Note that $|\sigma_{n+2}\dots\sigma_{m-1}| = (m-1)-(n+2)+1$. Since every subsequence of $w$ of length $n$ is in $L$, we have $\tilde{\delta}_w(q_{n-[(m-1)-(n+2)+1]},\sigma_{n+2}\dots\sigma_{m-1}) = q_n$.

The case $\sigma_{n+1} = \sigma_m$ is more complex and needs a further case distinction, but the idea used above of circling back after reading an appropriate prefix can be employed again.

\cref{lem:fl_tildeA_w} summarizes these ideas:
\begin{restatable}{lemma}{lemFlTildeAW}
	\label{lem:fl_tildeA_w}
	Let $w \in \Sigma^*$ with $|w| > n$ such that $w \in \lang{\mathcal{A}_v^!}$ for each $v \notin L, |v|=n$ and $w \in \bigcap_{\sigma \in \Sigma} \lang{\mathcal{A}_{\sigma,i_\sigma}}$, where for each $\sigma \in \Sigma$ it is $i_\sigma = \maxOp{\{i \in \{1,\dots,n\} \setDel \sigma \notin \Sigma_{i-1,i}\}}$.
	Then there exists a DFA $\tilde{\mathcal{A}}_w \in \alpha(\mathcal{A})$ rejecting $w$.\lipicsEnd
	\vspace{1.9ex}
\end{restatable}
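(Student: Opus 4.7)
The construction of $\tilde{\mathcal{A}}_w = (\tilde{Q}_w, \Sigma, q_0, \tilde{\delta}_w, \tilde{F}_w)$ follows the setup sketched immediately before the lemma: set $\tilde{Q}_w = \{q_0, \dots, q_n\}$, $\tilde{F}_w = \tilde{Q}_w \setminus \{q_d\}$, and let $\tilde{\delta}_w$ agree with $\delta$ on the states $q_0, \dots, q_{d-1}$. The design freedom then lies in the transitions out of $q_d, \dots, q_n$, with two requirements: (G1) $\tilde{\delta}_w(q_0, w) = q_d$, so $w$ is rejected by $\tilde{\mathcal{A}}_w$; and (G2) $\tilde{\delta}_w(q_0, v) = q_d$ only if $\delta(q_0, v) \in \{q_d, q_{n+1}\}$, which forces $v \notin L$ and thus $L \subseteq \lang{\tilde{\mathcal{A}}_w}$. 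Since $|\tilde{Q}_w| = n + 1 < n + 2 = \ind{\mathcal{A}}$, once (G1) and (G2) are met the minimal DFA of $\lang{\tilde{\mathcal{A}}_w}$ lies in $\alpha(\mathcal{A})$.

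The plan is to distinguish cases using $k := |\sigma_{d+1} \dots \sigma_m|_{\sigma_m}$ and the relation between $\sigma_{n+1}$ and $\sigma_m$, as the preceding text indicates. If $k \leq n - d$, define $\tilde{\delta}_w$ to advance from $q_i$ to $q_{i+1}$ on every $\sigma_m$ and on the first $n - d - (k - 1)$ non-$\sigma_m$ letters read from $q_d$ onwards, self-looping on non-$\sigma_m$ thereafter, and to cycle $q_n$ back to $q_d$ on $\sigma_m$; a short count shows that reading $\sigma_{d+1} \dots \sigma_m$ then induces exactly $n - d + 1$ advances, landing at $q_d$. If $k > n - d$ and $\sigma_{n+1} \neq \sigma_m$, let every letter advance from $q_d$ up to $q_n$, set $\tilde{\delta}_w(q_n, \sigma_{n+1}) = q_{n - r}$ with $r = m - n - 2$, and $\tilde{\delta}_w(q_n, \sigma_m) = q_d$; the hypothesis that every length-$n$ contiguous subsequence of $w$ lies in $L$ guarantees that reading $\sigma_{n+2} \dots \sigma_{m-1}$ from $q_{n - r}$ returns to $q_n$ along genuine $\mathcal{A}$-transitions, and the closing $\sigma_m$ reaches $q_d$.

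The main obstacle is the remaining case $k > n - d$ with $\sigma_{n+1} = \sigma_m$: the letter $\sigma_m$ has to serve simultaneously as the circle-back signal at $q_n$ (in the role of $\sigma_{n+1}$) and as the terminal reset to $q_d$ (in the role of the closing letter), which is impossible with a single transition. The plan is to further split on whether some position $j \in \{n+2, \dots, m\}$ carries $\sigma_j \neq \sigma_m$. If such a $j$ exists, use $\sigma_j$ as the reset trigger and handle the intervening runs of $\sigma_m$ via a small cycle among $q_{n - s}, \dots, q_n$ whose length $s$ is tuned to the count of consecutive $\sigma_m$'s, invoking the $\mathcal{A}_{\sigma_m, i_{\sigma_m}}$-constraint to bound this count. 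If no such $j$ exists (so $w$ ends in a long run of $\sigma_m$), the same constraint bounds $m$ itself, leaving finitely many configurations that can be dispatched directly.

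The chief technical burden is the verification of (G2) in Case 3: for each subcase one must trace runs of $\tilde{\mathcal{A}}_w$ on arbitrary inputs and show that any $v$ reaching $q_d$ satisfies $\delta(q_0, v) \in \{q_d, q_{n+1}\}$. Because the newly introduced transitions on $q_d, \dots, q_n$ create cycles absent from the acyclic $\mathcal{A}$, this requires carefully aligning each cycle traversal in $\tilde{\mathcal{A}}_w$ with a subsequence of $v$ that is forced into the rejecting sink of $\mathcal{A}$ via the length-$n$ subsequence-in-$L$ property. The analogous containment checks in Cases 1 and 2 should then follow by essentially the same bookkeeping but without the additional case split.
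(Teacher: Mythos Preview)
Your Cases 1 and 2 track the paper's constructions closely and are essentially right, modulo one slip: the hypothesis $w\in\lang{\mathcal{A}_v^!}$ for all $v\notin L$ with $|v|=n$ says that every length-$n$ \emph{subsequence} of $w$ (not merely every contiguous block) lies in $L$. Your Case~2 argument actually needs the non-contiguous version: to see that reading $\sigma_{n+2}\dots\sigma_{m-1}$ from $q_{n-r}$ reaches $q_n$, one applies the subsequence property to $\sigma_1\dots\sigma_{n-r}\sigma_{n+2}\dots\sigma_{m-1}$.

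Case 3 is where your plan diverges from the paper and where there is a real gap. The paper does \emph{not} split on whether some $j\in\{n+2,\dots,m\}$ carries $\sigma_j\neq\sigma_m$. Instead it writes $w=\sigma_1\dots\sigma_d\,u\,\sigma_m^{x}$ with $u$ not ending in $\sigma_m$, and splits on whether $|u|_{\sigma_m}<n-d-b$ or $|u|_{\sigma_m}\ge n-d-b$ (where $b\in\{0,1\}$ records whether $\sigma_{d+1}\neq\sigma_m$). In the first subcase the DFA advances from $q_{d+1}$ onward \emph{only on $\sigma_m$}; the bound on $|u|_{\sigma_m}$ guarantees $q_n$ is reached inside the trailing $\sigma_m^x$, and one circle-back from $q_n$ to a state $\le q_d$ suffices. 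In the second subcase the bound $|u|_{\sigma_m}\ge n-d-b$ forces $m-x\ge n+2$, so the last non-$\sigma_m$ letter $\sigma_{m-x}$ sits beyond position $n+1$; the DFA then simulates $\mathcal{A}$ on all of $q_0,\dots,q_{n-1}$ and uses \emph{two distinct} targets from $q_n$: on $\sigma_m$ it goes to $q_{n-[(m-x-1)-(n+2)+1]}$, and on $\sigma_{m-x}$ it goes to $q_{d-x}$, with the $\mathcal{A}_{\sigma_m,i_{\sigma_m}}$-constraint ensuring $\sigma_m\in\Sigma_{j-1,j}$ for $j\in\{d-x+1,\dots,d\}$. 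Your ``cycle among $q_{n-s},\dots,q_n$ tuned to the count of consecutive $\sigma_m$'s'' does not cope with several $\sigma_m$-runs of different lengths inside $\sigma_{n+1}\dots\sigma_m$, and ``dispatched directly'' for the all-$\sigma_m$ tail is not a construction; that subcase still needs an explicit DFA, and the paper's first subcase is exactly what handles it.

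Finally, you have the burden inverted. Verifying (G2) is not the hard part: in every construction the invariant ``$\tilde{\mathcal{A}}_w$'s state index never exceeds $\mathcal{A}$'s (with $q_{n+1}$ treated as $\infty$)'' is immediate, so $\tilde{\delta}_w(q_0,v)=q_d$ forces $\delta(q_0,v)\in\{q_d,q_{n+1}\}$ without any appeal to the subsequence property. The subsequence-in-$L$ hypothesis is used for (G1), to steer the run on the specific word $w$ through the circle-back states and land at $q_d$.
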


\cref{lem:fl_A_0A_dA_myUnderbari,lem:fl_A_sigmai,lem:fl_A_w^!,lem:fl_tildeA_w} imply \cref{cla:fl_characterization} (\ref{cla_ass:fl_characterization_non-sigmaN+non-safety}). To be more precise, we have $\lang{\mathcal{A}} = \lang{\mathcal{A}_0} \cap \lang{\hat{\mathcal{A}}_d} \cap \bigcap_{m=1}^{n-1}\bigcap_{\myUnderbar{i} \in I_m} \lang{\mathcal{A}_{\myUnderbar{i}}} \cap \bigcap_{\sigma \in \Sigma} \lang{\mathcal{A}_{\sigma,i_{\sigma}}} \cap \bigcap_{w \in X^!} \lang{\mathcal{A}_w^!} \cap \bigcap_{w \in \tilde{X}} \lang{\tilde{\mathcal{A}}_w}$, where $X^! = \{w \in \Sigma^n \setDel w \notin L\}$ and $\tilde{X}$ is the set of all extensions $w$ of words $u \in L,|u|=n$ with $|w| \leq n+(n-2)$ for which every subsequence of length $n$ is in $L$. This proves the compositionality of $\mathcal{A}$ and thereby \cref{cla:fl_characterization} (\ref{cla_ass:fl_characterization_non-sigmaN+non-safety}).

\subsection{Linear safety ADFAs}
\label{subsec:fl_linearSafetyDFAs}
Next, we consider \cref{cla:fl_characterization} (\ref{cla_ass:fl_characterization_non-sigmaN+CEP}) and (\ref{cla_ass:fl_characterization_linear+safety+non-CEP}). For (\ref{cla_ass:fl_characterization_non-sigmaN+CEP}) we argue that $\mathcal{A}$ is composite if it has the CEP, even if $\mathcal{A}$ is a safety DFA, which makes circling back impossible. For (\ref{cla_ass:fl_characterization_linear+safety+non-CEP}) we argue that $\mathcal{A}$ is prime if it is a safety DFA and it does not have the CEP.

First, we consider (\ref{cla_ass:fl_characterization_non-sigmaN+CEP}). We assume that $\mathcal{A}$ has the CEP and argue that this implies compositionality.
Note that we can reuse the DFAs $\mathcal{A}_0$ and $\mathcal{A}_{\myUnderbar{i}}$, while $\hat{\mathcal{A}}_d$ is not needed. This again leaves the task of rejecting the extensions of words $w \in L, |w|=n$. But, since for every such word $w = \sigma_1\dots\sigma_n$ there now exist $i \in \{0,\dots,n-2\},l \in \{2,\dots,n-i\}$ such that $\delta(q_0,\sigma_1\dots\sigma_i\sigma_{i+l}\dots\sigma_n) \in \{q_n,q_{n+1}\}$, we can construct a DFA $\mathcal{A}_{i,l} \in \alpha(\mathcal{A})$ rejecting every extension of $w$.

The DFA $\mathcal{A}_{i,l}$ possesses states $q_0,\dots,q_{i+l-2},q_{i+l},\dots,q_{n+1}$. It simulates the behavior of $\mathcal{A}$ for states $q_0,\dots,q_{i-1}$, redirecting transitions $q_j \rightarrow q_{i+l-1}$ to $q_i$. From $q_i$ it directly advances to $q_{i+l}$ if a letter in $\bigcup_{j=i+l}^{n+1} \Sigma_{i,j}$ is read, otherwise it advances to $q_{i+1}$. The states $q_i,\dots,q_{i+l-2}$ form a loop. For states $q_{i+l},\dots,q_n$, every transition leads to the direct successor state. The state $q_{n+1}$ is a rejecting sink.

It is shown in the appendix that every extension of $w$ is rejected by $\mathcal{A}_{i,l}$, where $i$ is the largest possible value belonging to $w$, and that $\mathcal{A}_{i,l} \in \alpha(\mathcal{A})$. Thus, $\lang{\mathcal{A}} = \lang{\mathcal{A}_0} \cap \bigcap_{m=1}^{n-1}\bigcap_{\myUnderbar{i} \in I_m} \lang{\mathcal{A}_{\myUnderbar{i}}} \cap \bigcap_{i=0}^{n-2}\bigcap_{l=2}^{n-i} \mathcal{A}_{i,l}$ holds, proving the compositionality of $\mathcal{A}$ and thus (\ref{cla_ass:fl_characterization_non-sigmaN+CEP}).

Next, we consider (\ref{cla_ass:fl_characterization_linear+safety+non-CEP}) and assume that $\mathcal{A}$ is a safety DFA and does not have the CEP.
Thus, there is a $w = \sigma_1\dots\sigma_n$ such that $\delta(q_0,\sigma_1\dots\sigma_i\sigma_{i+l}\dots\sigma_n) \notin \{q_n,q_{n+1}\}$ holds for every $i \in \{0,\dots,n-2\},l \in \{2,\dots,n-i\}$.
This implies the existence of a letter $\sigma \in \Sigma_{n-1,n}$ with $\sigma \notin \Sigma_{j,n+1}$ for every $j \in \{0,\dots,n-1\}$.
We show in the appendix that $w\sigma$ is a primality witness of $\mathcal{A}$, thus proving the primality of $\mathcal{A}$ and thereby (\ref{cla_ass:fl_characterization_linear+safety+non-CEP}).

This completes our discussion of \cref{cla:fl_characterization} (\ref{cla_ass:fl_characterization_non-linear})-(\ref{cla_ass:fl_characterization_linear+safety+non-CEP}). Since they imply \cref{the:fl_characterization}, we have characterized the compositionality of ADFAs and thereby of finite languages.
\section{Complexity of $\primeDFAfin{}$}
\label{sec:fl_complexity}
After characterizing the compositionality of ADFAs and thereby of finite languages in \cref{sec:fl_characterization}, we now analyze the complexity of $\primeDFAfin{}$. We argue:
\begin{restatable}{theorem}{theFlPrimeDFAFinComplexity}
	\label{the:fl_primeDFAFinComplexity}
	The problem $\primeDFAfin{}$ is \complexityClassFont{NL}-complete. The \complexityClassFont{NL}-completeness holds true even when restricting $\primeDFAfin{}$ to DFAs with at most two letters.\lipicsEnd
	\vspace{1.9ex}
\end{restatable}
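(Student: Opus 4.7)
I would handle the upper bound and the lower bound separately. For \complexityClassFont{NL}-membership I would build on the characterization of \cref{the:fl_characterization}; for \complexityClassFont{NL}-hardness I would reduce from \problemFont{2STCON}, producing a two-letter DFA and thereby covering both $\primeDFAfin{}$ and its binary-alphabet restriction in one construction.

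\textbf{Upper bound.} Given an input DFA $\mathcal{A}$, the task is to decide primality of the minimal DFA of $\lang{\mathcal{A}}$. Two auxiliary facts make this feasible in \complexityClassFont{NL} without ever materializing the minimal DFA: equivalence of two states of $\mathcal{A}$ is an \complexityClassFont{NL}-property (non-equivalence has a short distinguishing-word witness, and \complexityClassFont{NL} $=$ \complexityClassFont{co-NL} by Immerman--Szelepcsényi), and reachability and acyclicity are standard \complexityClassFont{NL}-checks. With these tools I would verify the clauses of \cref{the:fl_characterization} in turn: that $\lang{\mathcal{A}}$ is finite (no live state of $\mathcal{A}$ lies on a cycle); that the minimal ADFA is linear (whenever $\delta(q_I,u)$ and $\delta(q_I,v)$ are both live and $|u|=|v|$, they are Nerode-equivalent, a universally quantified \complexityClassFont{NL}-check); that $\sigma^n \in \lang{\mathcal{A}}$ for some $\sigma \in \Sigma$, where $n$ is the length of the longest accepted word (guess $\sigma$, walk the $\sigma$-chain from $q_I$, certify the endpoint is accepting, and certify that no accepted word is strictly longer); that $\lang{\mathcal{A}}$ is a safety language (equivalently, every live state can reach an accepting state); and whether $\mathcal{A}$ has the CEP (quantify universally over longest accepted words $w$ and existentially over a compression $w'$ whose every extension is dead). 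Combining these verdicts exactly as prescribed by \cref{the:fl_characterization} decides primality in \complexityClassFont{NL}.

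\textbf{Lower bound.} I would reduce \problemFont{2STCON} to $\primeDFAfin{2}$. Given a directed graph $G$ of maximum outdegree two with source $s$ and target $t$, the reduction would build an acyclic DFA over $\{a,b\}$ that encodes the (at most two) outgoing edges of each vertex by the two letters and arranges the vertices into depth layers so that the resulting minimal DFA is almost linear. The construction would be engineered so that an $s$-to-$t$ path flips exactly one clause of \cref{the:fl_characterization}---for instance by making some longest accepted word admit a compression whose extensions are all dead and thus forcing the CEP (composite regime)---while absence of such a path leaves the DFA in the prime regime of \cref{the:fl_characterization}. Since each transition of the output is computable from the adjacency list by local inspection, the reduction is logspace-uniform and establishes \complexityClassFont{NL}-hardness of both $\primeDFAfin{}$ and $\primeDFAfin{2}$.

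\textbf{Main obstacle.} The hardness reduction is the delicate step. The gadget must simultaneously be acyclic, have its minimal DFA faithfully controlled by the graph, respect the two-letter constraint, and route the entire primality-versus-compositeness distinction through a single clause of \cref{the:fl_characterization} that is flipped precisely by $s$-to-$t$ reachability. Choosing which clause to flip (linearity, $\sigma^n$-membership, safety, or CEP) so that the intended flip survives minimization, and packing the branching structure of $G$ into just two letters without creating spurious equivalences, is where the real work lies.
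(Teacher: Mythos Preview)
Your upper-bound plan is the same as the paper's: invoke \cref{the:fl_characterization} and verify each clause in \complexityClassFont{NL} using Immerman--Szelepcs\'enyi. One concrete bug: your proposed linearity test (``same-length words reaching live states are Nerode-equivalent'') is not equivalent to linearity of the minimal ADFA. In a linear ADFA a single letter may jump several states ahead, so words of equal length can land in distinct live states; e.g.\ over $\{a,b\}$ take $q_0\xrightarrow{a}q_1$, $q_0\xrightarrow{b}q_2$, $q_1\xrightarrow{a}q_2$, all other transitions to the sink, $F=\{q_2\}$. This ADFA is linear yet $a$ and $b$ reach inequivalent live states. The paper instead computes $\ind{\mathcal{A}}$ by counting removable states (unreachable or equivalent to an earlier state) and then tests linearity as ``some word of length $\ind{\mathcal{A}}-2$ is accepted''; you could equally well check directly that among any two inequivalent reachable states one is reachable from the other.

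For hardness the paper takes a much simpler route than the CEP-flipping gadget you sketch. It first observes that \problemFont{2STCONDAG} (reachability in acyclic graphs) is still \complexityClassFont{NL}-complete, then reduces it to the emptiness problem for two-letter ADFAs, and finally reduces emptiness to $\primeDFAfin{2}$: given an ADFA $\mathcal{A}$, append behind every accepting state a short chain $p_0\xrightarrow{a}p_1\xrightarrow{b}p_2$ with $p_2$ the new sole accepting maximum and $p_0,p_1$ rejecting. If $\lang{\mathcal{A}}=\emptyset$ the new DFA also recognizes $\emptyset$ and is trivially prime; if $\lang{\mathcal{A}}\neq\emptyset$ the new DFA is not a safety DFA (the rejecting $p_0$ can reach the accepting $p_2$) and every longest word ends in $ab$, so no $\sigma^n$ is accepted, and \cref{the:fl_characterization} yields compositeness. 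This avoids all the difficulties you flag: there is no need to control the CEP, no need to keep the minimal DFA linear across the reduction, and the acyclicity issue you would face starting from general \problemFont{2STCON} disappears because the reduction goes through \problemFont{2STCONDAG}.
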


We begin by arguing that $\primeDFAfin{}$ is in \complexityClassFont{NL}, providing an \complexityClassFont{NL}-algorithm for $\primeDFAfin{}$ with \cref{alg:fl_primeDFAFinNLalgorithm}.
The algorithm accepts in line 1 if the given DFA $\mathcal{A}$ recognizes the empty language.
Then lines 2-18 ensure that the minimal DFA belonging to $\mathcal{A}$ is linear.
Lines 19-22 ensure that $\mathcal{A}$ is accepted if a letter $\sigma \in \Sigma$ with $\sigma^n \in L$ exists or else that $\mathcal{A}$ is rejected if it is not a safety DFA. 
Finally, in lines 23-29 the CEP is checked for $\mathcal{A}$.
\begin{algorithm}[t]
	\caption{\complexityClassFont{NL}-algorithm for $\primeDFAfin{}$.}
	\label{alg:fl_primeDFAFinNLalgorithm}
	\begin{algorithmic}[1]
		\REQUIRE DFA $\mathcal{A} = (Q,\Sigma,q_0,\delta,F)$ with $Q = \{q_0,\dots,q_m\}$ recognizing a finite language $L$.
		\ENSURE The DFA $\mathcal{A}$ is prime.
		\STATE Accept if $L = \emptyset$.
		\STATE $c \leftarrow 0$
		\FORALL {$i \in \{0,\dots,m\}$}
			\IF {$q_i$ is unreachable}
				\STATE $c \leftarrow c + 1$
			\ELSE
				\STATE $j \leftarrow 0$, $b \leftarrow \textit{true}$
				\WHILE {$j \leq i-1$ \AND $b$}
					\IF {$q_j$ is reachable and $\lang{\mathcal{A}^{q_i}} = \lang{\mathcal{A}^{q_j}}$}
						\STATE $c \leftarrow c + 1$
						\STATE $b \leftarrow \textit{false}$
					\ENDIF
					\STATE $j \leftarrow j + 1$
				\ENDWHILE
			\ENDIF
		\ENDFOR
		\STATE $n \leftarrow (m+1)-c-2$
		\STATE Choose nondeterministically a word $w \in \Sigma^n$. Reject if $w \notin L$.
		\STATE Choose nondeterministically a letter $\sigma \in \Sigma$. Accept if $\sigma^n \in L$.
		\FORALL {$i \in \{0,\dots,m\}$ where $q_i$ is not unreachable}
			\STATE Reject if $q_i \notin F$ and $\lang{\mathcal{A}^{q_i}} \neq \emptyset$.
		\ENDFOR
		\FORALL {$x \in \{1,\dots,n\}$}
			\STATE Choose nondeterministically a word $w = \sigma_1\dots\sigma_n \in \Sigma^n$. Reject if $w \notin L$.
			\FORALL {$i \in \{0,\dots,n-2\}, l \in \{2,\dots,n-i\}$ with $i+l=x$}
				\STATE Choose nondeterministically a word $w' = \sigma_1'\dots\sigma_n' \in \Sigma^n$ with $\sigma_{i+l}' = \sigma_x$ and a word $v \in \Sigma^+$. Reject if $w' \notin L$ or if $\sigma_1'\dots\sigma_i'\sigma_{i+l}'\dots\sigma_n'v \notin L$.
			\ENDFOR
		\ENDFOR
		\STATE Accept.
	\end{algorithmic}
\end{algorithm}

The \complexityClassFont{NL}-hardness of $\primeDFAfin{}$ can be proven by \complexityClassFont{L}-reducing \problemFont{STCONDAG} to $\primeDFAfin{}$, where \problemFont{STCONDAG} is the restriction of \problemFont{STCON} to acyclic graphs. The \complexityClassFont{L}-reduction is similar to the \complexityClassFont{L}-reduction of \problemFont{STCON} to the emptiness problem for DFAs.
\section{Finite Languages under Different Notions of Compositionality}
\label{sec:fl_differentNotionsOfCompositionality}
So far, we have only considered $\cap$-compositionality. Now we will define two further notions of compositionality and characterize the compositionality of finite languages for these notions.
\begin{definition2}
	For $k \in \natNumGeq{1}$, a DFA $\mathcal{A}$ is \defHighlight{$k$-$\cup$-decomposable} (\defHighlight{$k$-DNF-decomposable}) if there exist DFAs $\mathcal{A}_1,\dots,\mathcal{A}_t$ ($\mathcal{A}_{1,1},\dots,\mathcal{A}_{1,t_1},\dots,\mathcal{A}_{s,1},\dots,\mathcal{A}_{s,t_s}$) with $\lang{\mathcal{A}} = \bigcup_{i=1}^{t} \lang{\mathcal{A}_i}$ ($\lang{\mathcal{A}} = \bigcup_{i=1}^{s}\bigcap_{j=1}^{t_i} \lang{\mathcal{A}_{i,j}}$) and $\size{\mathcal{A}_i} < k$ for every $i$ ($\size{\mathcal{A}_{i,j}} < k$ for every pair $i,j$). The further concepts introduced in \cref{def:compositionality} are defined analogously.\lipicsEnd
\end{definition2}

In \cite{DBLP:journals/iandc/KupfermanM15}, it is correctly remarked that many results for $\cap$-compositionality can be trivially transferred to $\cup$-compositionality. For example, the complexity boundaries for \problemFont{Prime-DFA} established in \cite{DBLP:journals/iandc/KupfermanM15} also hold for $\cup$-compositionality. This does not hold true for results concerning language fragments that are not closed under complement. In particular, the complement language of a finite language is not finite, but co-finite. Thus, characterizing the $\cup$-compositionality of finite languages is equivalent to characterizing $\cap$-compositionality of co-finite languages.

Also in \cite{DBLP:journals/iandc/KupfermanM15}, the notion of compositionality allowing both union and intersection is suggested.
Note that DNF-compositionality enforces a structure similar to a disjunctive normal from, but is as strong as unrestricted union-intersection compositionality.
It is correctly remarked in \cite{DBLP:journals/iandc/KupfermanM15} that union-intersection compositionality - and thus, DNF-compositionality - is strictly stronger than $\cap$-compositionality. Obviously, it is also strictly stronger than $\cup$-compositionality. It is less obvious whether languages exist that are DNF-composite, but are neither $\cap$- nor $\cup$-composite. 
We will see that there are finite languages witnessing this.

The following result characterizes the $\cup$- and DNF-compositionality of finite languages:
\begin{restatable}{theorem}{theFlCupDNFCharacterization}
	\label{the:fl_cupDNFCharacterization}
	Consider a minimal ADFA $\mathcal{A} = (Q,\Sigma,q_I,\delta,F)$ recognizing a non-empty language. Let $n \in \natNum$ be the length of the longest word in $\lang{\mathcal{A}}$. The following assertions hold:
	\begin{romanenumerate}
		\item $\mathcal{A}$ is $\cup$-prime iff $\mathcal{A}$ is linear. \label{lem_ass:fl_cupDNFCharacterization_cup}
		\item $\mathcal{A}$ is DNF-prime iff $\mathcal{A}$ is linear and there exists a $\sigma \in \Sigma$ with $\sigma^n \in \lang{\mathcal{A}}$.\label{lem_ass:fl_cupDNFCharacterization_DNF}\lipicsEnd
	\end{romanenumerate}
	\vspace{1.9ex}
\end{restatable}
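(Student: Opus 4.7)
To prove Theorem \ref{the:fl_cupDNFCharacterization} I treat parts (i) and (ii) separately, relying throughout on the elementary index bound that a finite language whose longest word has length $k$ has minimal DFA of size at least $k+2$, together with Theorem \ref{the:fl_characterization} and its proof via Claim \ref{cla:fl_characterization}.

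For (i), the direction ``linear $\Rightarrow$ $\cup$-prime'' is by contradiction: if $\lang{\mathcal{A}} = \bigcup_i \lang{\mathcal{A}_i}$ with every $\size{\mathcal{A}_i} \leq n+1$, then each $\lang{\mathcal{A}_i} \subseteq \lang{\mathcal{A}}$ is finite of index at most $n+1$, so its longest word has length at most $n-1$; thus the union sits inside $\Sigma^{\leq n-1}$ and misses every length-$n$ word of $\lang{\mathcal{A}}$. Conversely, when $\mathcal{A}$ is not linear we have $\ind{\mathcal{A}} \geq n+3$, and the singleton decomposition $\lang{\mathcal{A}} = \bigcup_{w \in \lang{\mathcal{A}}} \{w\}$ witnesses $\cup$-compositionality, since each $\{w\}$ has minimal DFA of size $\size{w}+2 \leq n+2 < \ind{\mathcal{A}}$.

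For the ``$\Leftarrow$'' direction of (ii) (linear and $\sigma^n \in \lang{\mathcal{A}}$ imply DNF-prime), suppose by contradiction a DNF-decomposition with every $\size{\mathcal{A}_{i,j}} \leq n+1$. The word $\sigma^n$ lies in some conjunction $\bigcap_j \lang{\mathcal{A}_{i_0,j}}$, so each DFA $\mathcal{A}_{i_0,j}$ has at most $n+1$ states and accepts a length-$n$ word; the index bound forces $\lang{\mathcal{A}_{i_0,j}}$ to be infinite. The run of $\sigma^n$ visits $n+1$ state positions, so either two coincide (when $\size{\mathcal{A}_{i_0,j}} \leq n$) or all $n+1$ states are visited once and the next $\sigma$-transition must loop back to a previously visited state. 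In either situation $\mathcal{A}_{i_0,j}$ has a $\sigma$-cycle of some period $p_j \geq 1$ through the accepting state reached after $\sigma^n$, whence $\{\sigma^{n+tp_j} : t \in \natNum\} \subseteq \lang{\mathcal{A}_{i_0,j}}$. Intersecting over $j$ forces $\sigma^{n+t\cdot\mathrm{lcm}(p_j)} \in \lang{\mathcal{A}}$ for every $t \in \natNum$, contradicting the maximum length $n$ as soon as $t \geq 1$.

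For the ``$\Rightarrow$'' direction of (ii), if $\mathcal{A}$ is not linear the singleton $\cup$-decomposition from (i) is already a DNF-decomposition. If $\mathcal{A}$ is linear but $\sigma^n \notin \lang{\mathcal{A}}$ for every $\sigma$, I split $\lang{\mathcal{A}} = L_{=n} \cup L_{<n}$ into its length-$n$ stratum and its strictly shorter words. The shorter part decomposes as $L_{<n} = \bigcup_{w \in L_{<n}} \{w\}$ with each singleton DFA of size $\leq n+1$. For the length-$n$ stratum, linearity of $\mathcal{A}$ forces $L_{=n} = \Sigma_{0,1}\Sigma_{1,2}\cdots\Sigma_{n-1,n}$, whose minimal DFA is itself linear of exactly $n+2$ states (one on-track class per level plus a sink, pairwise distinguished by the different lengths of their continuations); but $L_{=n}$ is not a safety language (since $\varepsilon \notin L_{=n} \neq \emptyset$) and inherits $\sigma^n \notin L_{=n}$ from $\mathcal{A}$, so Claim \ref{cla:fl_characterization}(\ref{cla_ass:fl_characterization_non-sigmaN+non-safety}) yields a $\cap$-decomposition of $L_{=n}$ with DFAs of size $\leq n+1$. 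Combining gives the DNF-decomposition $\lang{\mathcal{A}} = (\bigcap_j M_j) \cup \bigcup_{w \in L_{<n}} \{w\}$ with every atom of size $\leq n+1 < \ind{\mathcal{A}}$. The main obstacle lies exactly in this subcase: languages that are linear, safety, lack the CEP, and avoid $\sigma^n$ are both $\cap$-prime (by Theorem \ref{the:fl_characterization}) and $\cup$-prime (by (i)), so they witness the strict strength of DNF-compositionality and admit no purely intersective or purely disjunctive decomposition; the crucial observation is that restricting to the single stratum $\Sigma^n$ destroys safety, unlocking the main theorem's $\cap$-decomposition machinery on $L_{=n}$ even though it is unavailable on $\lang{\mathcal{A}}$ itself.
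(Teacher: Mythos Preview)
Your proof is correct. Parts (i) and the ``$\Leftarrow$'' direction of (ii) match the paper's argument essentially verbatim (the paper uses $(n+1)!$ where you use $\mathrm{lcm}(p_j)$, which is the same idea).

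The one genuinely different step is the ``$\Rightarrow$'' direction of (ii) in the subcase where $\mathcal{A}$ is linear and no $\sigma^n \in \lang{\mathcal{A}}$. The paper handles each length-$n$ word $w$ individually: since $w$ contains at least two distinct letters, it builds two small DFAs $\mathcal{A}_w^1$ recognizing $\{w\}^*$ (size $n+1$) and $\mathcal{A}_w^2$ recognizing $\{u : |u|_\sigma = |w|_\sigma\}$ for some letter $\sigma$ occurring in $w$ (size $|w|_\sigma + 2 \leq n+1$), and observes $\lang{\mathcal{A}_w^1}\cap\lang{\mathcal{A}_w^2}=\{w\}$. This is elementary and self-contained. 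Your route instead passes to the whole stratum $L_{=n}=\Sigma_{0,1}\cdots\Sigma_{n-1,n}$, verifies that its minimal DFA is linear of index $n+2$, non-safety, and free of $\sigma^n$, and then invokes Claim~\ref{cla:fl_characterization}(\ref{cla_ass:fl_characterization_non-sigmaN+non-safety}) to get a single $\cap$-decomposition of $L_{=n}$ into size-$(n{+}1)$ DFAs. Your argument is valid and economical in that it reuses existing results, but it imports the rather heavy machinery behind Claim~\ref{cla:fl_characterization}(\ref{cla_ass:fl_characterization_non-sigmaN+non-safety}) (the DFAs $\mathcal{A}_0$, $\hat{\mathcal{A}}_d$, $\mathcal{A}_{\underline{i}}$, $\mathcal{A}_{\sigma,i}$, $\mathcal{A}_w^!$, $\tilde{\mathcal{A}}_w$); the paper's two-DFA-per-word trick is lighter and independent of that development.
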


These conditions are similar to the conditions in \cref{the:fl_characterization}, but much simpler. Let $\mathcal{A}$ and $n$ be as required. It is easy to show $\cup$- and DNF-compositionality if $\mathcal{A}$ is not linear.

The proof of $\cup$-primality if $\mathcal{A}$ is linear relies on the observation that every minimal DFA $\mathcal{B}$ with $\lang{\mathcal{B}} \subseteq \lang{\mathcal{A}}$ and $\ind{\mathcal{B}} < \ind{\mathcal{A}}$ has to have a rejecting sink. From this follows that no such DFA $\mathcal{B}$ can accept a word $w \in \lang{\mathcal{A}},|w|=n$. Thus, $\mathcal{A}$ is $\cup$-prime.

If $\mathcal{A}$ is linear and there exists no $\sigma \in \Sigma$ with $\sigma^n \in \lang{\mathcal{A}}$ the DNF-compositionality of $\mathcal{A}$ follows from \cite[Example 3.2]{DBLP:journals/iandc/KupfermanM15}. 
On the other hand, if $\mathcal{A}$ is linear and there exists a $\sigma \in \Sigma$ with $\sigma^n \in \lang{\mathcal{A}}$ DNF-primality can be shown by adapting the proof of \cref{cla:fl_characterization} (\ref{cla_ass:fl_characterization_linear+sigmaN}).

As mentioned, \cref{the:fl_characterization,the:fl_cupDNFCharacterization} immediately imply:
\begin{restatable}{theorem}{theFlCapCupVsDNF}
	\label{the:fl_capCupVsDNF}
	There exists a finite language that is DNF-composite but $\cap$- and $\cup$-prime.\lipicsEnd
	\vspace{1.9ex}
\end{restatable}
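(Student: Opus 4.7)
The plan is to apply \cref{the:fl_characterization,the:fl_cupDNFCharacterization} in tandem. A minimal ADFA $\mathcal{A}$ witnesses the theorem iff it is $\cap$-prime and $\cup$-prime but DNF-composite. Combining the characterizations, such an $\mathcal{A}$ must be linear (to be $\cup$-prime), must not accept any $\sigma^n$ (to be DNF-composite, which simultaneously rules out case (i) of \cref{the:fl_characterization}), and must therefore satisfy case (ii) of \cref{the:fl_characterization}, i.e., be a safety DFA without the CEP. Conversely, any ADFA that is linear, is a safety DFA, rejects every $\sigma^n$, and fails the CEP witnesses the theorem. The proof thus reduces to exhibiting one such ADFA.

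I would give a compact two-letter example. Take $\Sigma = \{a,b\}$, $Q = \{q_0, \dots, q_5\}$, $F = Q \setminus \{q_5\}$, and define $\delta$ by $\delta(q_0,a) = q_1$, $\delta(q_0,b) = q_2$, $\delta(q_1,a) = \delta(q_1,b) = q_2$, $\delta(q_2,a) = q_5$, $\delta(q_2,b) = q_3$, $\delta(q_3,a) = \delta(q_3,b) = q_4$, and $\delta(q_4,\cdot) = \delta(q_5,\cdot) = q_5$. Routine checks show that $\mathcal{A}$ is a minimal linear ADFA with $n = 4$, that $\lang{\mathcal{A}}$ is prefix-closed (so $\mathcal{A}$ is a safety DFA), and that neither $aaaa$ nor $bbbb$ is accepted.

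The key verification is the failure of the CEP at $w = aabb$. Its compressions are $abb$, $bb$, $b$, $ab$, and $aab$; each is led by $\delta(q_0, \cdot)$ to either $q_2$ or $q_3$, never to $q_4 = q_n$ or to the sink $q_5 = q_{n+1}$. In a linear safety ADFA, $q_n$ and $q_{n+1}$ are precisely the states from which every extension is rejected, so no compression of $w$ satisfies the CEP condition. The main obstacle in the construction is balancing all four requirements simultaneously: the jump $\delta(q_0, b) = q_2$ is what allows the length-one compression $\sigma_4 = b$ to still land at a live intermediate state, while the branching from $q_2$ and the absence of a single letter along every edge of the main path prevents any $\sigma^n$ from sneaking into $\lang{\mathcal{A}}$ and keeps the language prefix-closed.
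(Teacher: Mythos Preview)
Your proof is correct and follows essentially the same strategy as the paper: derive from \cref{the:fl_characterization,the:fl_cupDNFCharacterization} the exact conditions a witnessing ADFA must satisfy (linear, safety, no $\sigma^n$ accepted, CEP failing), then exhibit and verify a concrete instance. The paper's witness uses a three-letter alphabet with $n=3$ and five states, whereas your two-letter example with $n=4$ is equally valid and additionally demonstrates that the phenomenon already occurs over a binary alphabet.
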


To summarize, \cref{the:fl_characterization,the:fl_cupDNFCharacterization} characterize the $\cap$-, $\cup$- and DNF-compositionality of ADFAs and thus of finite languages. Obviously, this characterizes the $\cap$-, $\cup$- and DNF-compositionality of co-finite languages as well. The results further imply the existence of languages that are DNF-composite but $\cap$- and $\cup$-prime.
\section{\minimalDFA{2} and \sPrimeDFA{}}
\label{sec:2DFAMinimalAndSPrimeDFA}
We defined compositionality using the index of the given DFA. Thus, the compositionality of a DFA $\mathcal{A}$ is a characteristic of $\lang{\mathcal{A}}$. Slightly changing the definition, using the size instead of the index, turns compositionality of $\mathcal{A}$ into a characteristic of $\mathcal{A}$ itself. It is interesting to analyze the effects of this change, which results in the notion of S-compositionality.

Many results known for compositionality hold for S-compositionality as well. The characterization of finite languages in \cref{sec:fl_characterization} and other results concerning language fragments \cite{DBLP:journals/iandc/KupfermanM15,DBLP:conf/mfcs/JeckerKM20,DBLP:conf/concur/JeckerM021} are valid with only minor technical modifications. In fact, \cite{DBLP:conf/mfcs/JeckerKM20,DBLP:conf/concur/JeckerM021} already implicitly used S-compositionality instead of compositionality without discussing the differences.
The upper complexity boundary of \primeDFA{} holds for \sPrimeDFA{} as well. But the known lower boundary, the \complexityClassFont{NL}-hardness of \primeDFA{}, cannot simply be adapted for \sPrimeDFA{}. The lower boundary for \sPrimeDFA{} is connected to \minimalDFA{}, since non-minimal DFAs are trivially S-composite. Note that \primeDFA{} is connected to the emptiness problem for DFAs in a similar manner \cite{DBLP:journals/iandc/KupfermanM15}.

We begin by discussing \minimalDFA{}, proving the \complexityClassFont{NL}-hardness of \minimalDFA{2}. Then we formally introduce S-compositionality and prove the \complexityClassFont{NL}-hardness of the restriction \sPrimeDFA{2} and thereby of \sPrimeDFA{} as well. We also prove the \complexityClassFont{NL}-hardness of the restriction \primeDFA{2}, so far only known for the unrestricted problem \primeDFA{}.

\subsection{\complexityClassFont{NL}-hardness of \minimalDFA{2}}
\label{subsec:2DFAMinimal}
As mentioned, the \complexityClassFont{NL}-hardness and thus \complexityClassFont{NL}-completeness of \minimalDFA{k} for $k \in \natNumGeq{3}$ is folklore, while the \complexityClassFont{NL}-hardness of \minimalDFA{2} appears to be open. We prove:
\begin{restatable}{theorem}{theTwoDFAMinimalNLComplete}
	\label{the:2DFAMinimalNLComplete}
	The problem \minimalDFA{2} is \complexityClassFont{NL}-hard and thus \complexityClassFont{NL}-complete.\lipicsEnd
	\vspace{1.9ex}
\end{restatable}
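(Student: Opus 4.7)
The plan is to establish \complexityClassFont{NL}-hardness via a logspace reduction from \problemFont{2STCON}, whose outdegree bound of two mirrors the two-letter alphabet constraint of \minimalDFA{2}. Given an instance $(G,s,t)$ of \problemFont{2STCON}, I would encode each vertex as a state of a DFA $\mathcal{A}$ over $\Sigma=\{a,b\}$, labelling the (at most two) outgoing edges of each vertex with $a$ and $b$, padding missing transitions to a rejecting sink $q_\bot$, and choosing $t$ as the unique accepting state. Under this encoding, $\lang{\mathcal{A}^v}$ is exactly the set of words tracing a path from $v$ to $t$ in $G$, so $\lang{\mathcal{A}^s}=\emptyset=\lang{\mathcal{A}^{q_\bot}}$ iff $t$ is not reachable from $s$, in which case $s$ and $q_\bot$ are Nerode-equivalent and $\mathcal{A}$ is non-minimal.

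The converse direction is the hard part: when $t$ is reachable from $s$, the constructed DFA must actually be minimal, meaning every pair of distinct reachable states must be distinguishable over the two-letter alphabet. With three or more letters, this can be arranged by tagging each vertex with a fresh letter, which is how the folklore argument for \minimalDFA{k} with $k\geq 3$ proceeds; with only two letters the tagging has to be encoded in binary. I would therefore attach to each vertex $v$ a small logspace-constructible gadget $T_v$ of size $\bigOOp{\log\size{V}}$ that emits a unique $\{a,b\}$-signature before handing control back to the edges of $G$. The gadgets are threaded so that the original reachability of $t$ from $s$ is preserved, and the signatures are chosen to act as distinguishing witnesses between distinct vertex-states.

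The main obstacle will be verifying that this threading does not accidentally equate two distinct states: the signatures must remain distinguishing even after composition with every suffix continuation through the rest of the automaton. I would handle this by an induction on a reverse-BFS order from $t$, picking the signature word of $v$ to be strictly shorter than any word leading to $t$ from a state potentially confusable with $v$, so that during the read of the signature no other state can imitate $v$'s local acceptance pattern. A small case analysis is then needed to confirm that states \emph{internal} to the gadgets cannot coincide either with one another or with vertex-states. The reduction itself is in logspace because each gadget can be produced on the fly from the binary index of $v$, and the matching \complexityClassFont{NL} upper bound is obtained by nondeterministically guessing a pair of distinct reachable states together with a polynomial-length distinguishing word and applying Immerman--Szelepcsényi to flip to a minimality check; combined, these yield the claimed \complexityClassFont{NL}-completeness of \minimalDFA{2}.
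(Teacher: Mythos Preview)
Your easy direction (non-reachability $\Rightarrow$ non-minimality via $s$ and $q_\bot$ being equivalent) is fine, but the converse has two structural gaps that the signature-gadget sketch does not close.

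First, minimality requires every state to be reachable from the initial state, not just that reachable states be pairwise inequivalent. With $s$ as the initial state, any vertex of $G$ not reachable from $s$ becomes an unreachable state of $\mathcal{A}$, so $\mathcal{A}$ is non-minimal even when $t$ \emph{is} reachable from $s$. Your gadgets are attached locally at each vertex and do nothing about this.

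Second, and more fundamentally: since $t$ is the sole accepting state and you stipulate that the gadgets preserve the original reachability of $t$, any vertex $u$ from which $t$ is unreachable in $G$ still satisfies $\lang{\mathcal{A}^u}=\emptyset$ after its gadget is attached. Two such vertices $u,u'$ (and $q_\bot$) are then Nerode-equivalent regardless of what ``signature'' you emit --- no local gadget can separate states whose residual languages are both empty. Your reverse-BFS induction from $t$ cannot even start for such vertices. So whenever $G$ contains two distinct vertices that cannot reach $t$, your DFA is non-minimal irrespective of whether $s$ reaches $t$, and the reduction breaks.

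The paper resolves both issues by a global mechanism rather than per-vertex tags. It introduces a chain of auxiliary $p$-states $p_0,\dots,p_{n-1}$ with $p_0$ as the new initial state, so that every original vertex $i$ becomes reachable via the word $0^i1$. It further adds a cycle of $q$-states, entered from each vertex $i$ at a distinct position $q_i$, whose only exit is $q_0\xrightarrow{1}0$. Thus from every state one can return to vertex $0$; consequently, if $n{-}1$ is reachable from $0$ in $G$ then it is reachable from \emph{every} state of $\mathcal{A}$, and no residual language is empty. Pairwise distinguishability then comes not from signatures but from different vertices landing at different positions of the $q$-cycle, and a case analysis separates the full state set. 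Conversely, if $n{-}1$ is unreachable from $0$ in $G$, vertex $0$ and all the $q$-states have empty residual language and collapse together, so $\mathcal{A}$ is non-minimal.
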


The \complexityClassFont{NL}-hardness of \minimalDFA{3} can be proven by \complexityClassFont{L}-reducing \problemFont{2STCON} to \minimalDFA{3}. This known reduction uses an additional letter and cannot be used to prove the \complexityClassFont{NL}-hardness of \minimalDFA{2}. We give an \complexityClassFont{L}-reduction of \problemFont{2STCON} not using an additional letter, proving the \complexityClassFont{NL}-hardness and thus the \complexityClassFont{NL}-completeness of \minimalDFA{2}.

Let $(G,s,t)$ be an input for \problemFont{2STCON}. That is, $G = (V,E)$ is a graph with a maximum outdegree of two and $s,t \in V$ are nodes of $G$. We construct a DFA $\mathcal{A} = (Q,\Sigma,q_I,\delta,F)$ with $\Sigma = \{0,1\}$, which is minimal iff there exists a path in $G$ from $s$ to $t$. If $s=t$ such a path exists trivially and we can construct the minimal DFA for the empty language. Thus, we only have to consider the case $s \neq t$. W.l.o.g. we assume $V = \{0,\dots,n-1\}$ and $s=0,t=n-1$.

Let $\mathcal{A}' = (Q',\Sigma,0,\delta',F')$ be the DFA constructed out of $G$ in the usual manner, that is, by turning nodes into states, edges into transitions, setting the state $0$ as the initial state and $n-1$ as the only accepting state. For $\mathcal{A}$, we introduce the new states $p_0,\dots,p_{n-1}$, called $p$-states, the new states $q_0,\dots,q_{n-1}$ and $q_0'$, called $q$-states, and for each $i \in Q'$ the states $i_0',i_1',i_0,i_1$. We call the states $i,i_0',i_1',i_0,i_1$ for $i \in Q'$ $v$-states. We say that states $p_i,q_i,i,i_0',i_1',i_0,i_1$ for an $i \in Q'$ are located on the same layer. \cref{fig:2STCONto2DFAMinimalReduction} specifies the DFA $\mathcal{A}$ constructed for the \complexityClassFont{L}-reduction. We now discuss the key ideas of this construction.
\begin{figure}[t]
	\centering
	\begin{tikzpicture}[node distance=1.5cm]
	\tiny
	\node[state, initial right]				(p0) 	{$p_0$};
	\node[state, above left of=p0]			(01')	{$0_1'$};
	\node[state, below left of=01'] 		(0) 	{$0$};
	\node[state, above left of=0]			(00')	{$0_0'$};
	\node[state, left of=00'] 		(q0) 	{$q_0$};
	\node[state, left of=q0] 				(q0') 	{$q_0'$};
	\node[state, above of=00']				(00)	{$0_0$};
	\node[state, above of=01']				(01)	{$0_1$};
	
	\node[state, above right of=01]			(p1) 	{$p_1$};
	\node[state, above left of=p1]			(11')	{$1_1'$};
	\node[state, below left of=11'] 		(1) 	{$1$};
	\node[state, above left of=1]			(10')	{$1_0'$};
	\node[state, left of=10'] 		(q1) 	{$q_1$};
	\node[state, above of=10']				(10)	{$1_0$};
	\node[state, above of=11']				(11)	{$1_1$};
	
	\node[state, draw=none, above right of=11]	(empty-node) 	{};
	
	\node[state, above of=empty-node]		(pn-1) 	{$p_{n-1}$};
	\node[state, above left of=pn-1]		(n-11')	{${n-1}_1'$};
	\node[state, below left of=n-11', accepting](n-1) 	{$n-1$};
	\node[state, above left of=n-1]			(n-10')	{${n-1}_0'$};
	\node[state, left of=n-10'] 		(qn-1) 	{$q_{n-1}$};
	\node[state, above of=n-10']			(n-10)	{${n-1}_0$};
	\node[state, above of=n-11']			(n-11)	{${n-1}_1$};
	
	\draw	(p0)	edge[right]				node{$0$}	(p1);
	\draw	(p1)	edge[right,dashed]		node{$0$}	(pn-1);
	\draw	(pn-1)	edge[loop right]		node{$0$}	(pn-1);
	
	\draw	(p0)	edge[above]				node{$1$}	(0);
	\draw	(p1)	edge[above]				node{$1$}	(1);
	\draw	(pn-1)	edge[above]				node{$1$}	(n-1);
	
	\draw	(q0)	edge[above]				node{$0$}	(q0');
	\draw	(q0')	edge[left]				node{$0$}	(q1);
	\draw	(q1)	edge[left,dashed]		node{$0$}	(qn-1);
	\draw	(qn-1)	edge[left,bend right=45]node{$0$}	(q0);
	
	\draw	(q0)	edge[below]				node{$1$}	(0);
	\draw	(q0')	edge[loop left]			node{$1$}	(q0');
	\draw	(q1)	edge[loop left]			node{$1$}	(q1);
	\draw	(qn-1)	edge[loop left]			node{$1$}	(qn-1);
	
	\draw	(0)		edge[right]					node{$0$}	(00');
	\draw	(0)		edge[left]					node{$1$}	(01');
	\draw	(00')	edge[loop right]			node{$0$}	(00');
	\draw	(01')	edge[loop right]			node{$0$}	(01');
	\draw	(00')	edge[right,pos=0.25]		node{$1$}	(00);
	\draw	(01')	edge[left,pos=0.25]			node{$1$}	(01);
	\draw	(00)	edge[above,pos=0.25]		node{$1$}	(q0);
	\draw	(01)	edge[above,pos=0.25]		node{$0$}	(q0);
	
	\draw	(1)		edge[right]					node{$0$}	(10');
	\draw	(1)		edge[left]					node{$1$}	(11');
	\draw	(10')	edge[loop right]			node{$0$}	(10');
	\draw	(11')	edge[loop right]			node{$0$}	(11');
	\draw	(10')	edge[right,pos=0.25]		node{$1$}	(10);
	\draw	(11')	edge[left,pos=0.25]			node{$1$}	(11);
	\draw	(10)	edge[above,pos=0.25]		node{$1$}	(q1);
	\draw	(11)	edge[above,pos=0.25]		node{$0$}	(q1);
	
	\draw	(n-1)	edge[right]					node{$0$}	(n-10');
	\draw	(n-1)	edge[left]					node{$1$}	(n-11');
	\draw	(n-10')	edge[loop right]			node{$0$}	(n-10');
	\draw	(n-11')	edge[loop right]			node{$0$}	(n-11');
	\draw	(n-10')	edge[right,pos=0.25]		node{$1$}	(n-10);
	\draw	(n-11')	edge[left,pos=0.25]			node{$1$}	(n-11);
	\draw	(n-10)	edge[below,pos=0.125]		node{$1$}	(qn-1);
	\draw	(n-11)	edge[above,pos=0.25]		node{$0$}	(qn-1);
	
	\draw	(00)	edge[left,pos=0.125,dashed,shorten >= 1.125cm]	node{$0$}		(10');
	\draw	(01)	edge[right,pos=0.125,dashed,shorten >= 1.125cm]	node{$1$}		(11');
	\draw	(10)	edge[left,pos=0.075,dashed,shorten >= 2.425cm]	node{$0$}		(n-10');
	\draw	(11)	edge[right,pos=0.075,dashed,shorten >= 2.425cm]	node{$1$}		(n-11');
	\draw	(n-10)	edge[bend right=60,left,pos=0.25,dashed,shorten >= 0.28cm]	node{$0$}		(n-10');
	\draw	(n-11)	edge[bend left=60,right,pos=0.25,dashed,shorten >= 0.28cm]	node{$1$}		(n-11');
	\end{tikzpicture}
	\caption{DFA $\mathcal{A}$ constructed for the \complexityClassFont{L}-reduction of \problemFont{2STCON} to \minimalDFA{2}. The $j$-transitions exiting states of the form $i_j$ are only indicated.}
	\label{fig:2STCONto2DFAMinimalReduction}
\end{figure}
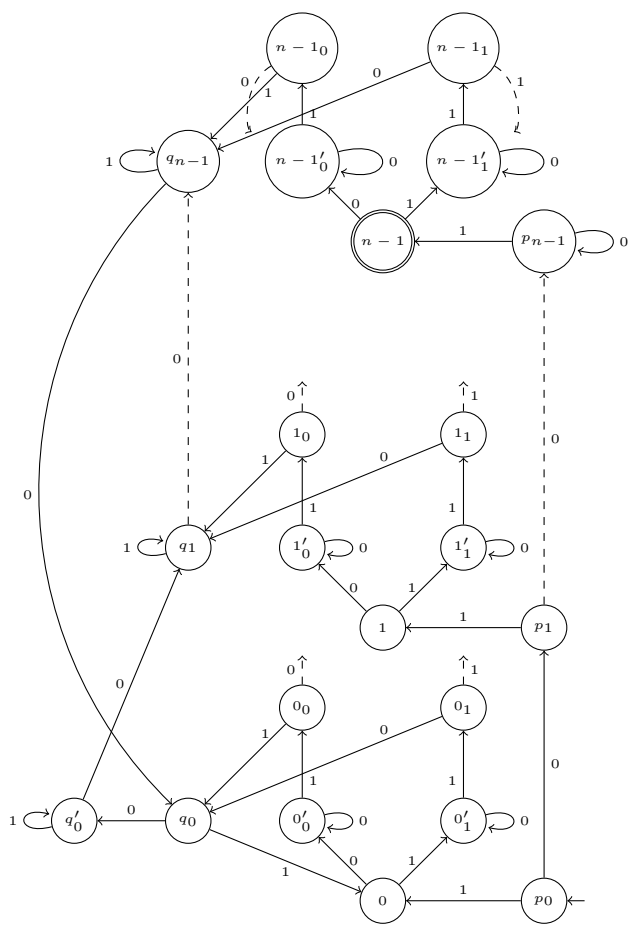

First, note that the idea of the $p$- and $q$-states is similar to the known \complexityClassFont{L}-reduction of \problemFont{2STCON} to \minimalDFA{3}. The $p$-states are used to access every state in $Q$, thus avoiding unreachable states. The $q$-states are used to allow the return to $0$ from every state.

Second, we cannot use an additional letter to switch from $p_i$ to $i$ to $q_i$. 
Thus, letter $1$ is used to leave the $p$-states and to exit $q_0$ to state $0$.
Letter $0$ is used to advance to the next layer in both the $p$- and $q$-states. To allow switching from the $v$-states to the $q$-states, we introduce for each $i \in Q'$ a component consisting of $i$ and the two branches $i_0',i_0$ and $i_1',i_1$.
The states $i_0',i_1'$ are waiting states used to prove the non-equivalence of $q$- and $v$-states.
The states $i_0,i_1$ implement on the one hand the original transitions in $\mathcal{A}'$, that is, $\delta(i_j,j) = \delta'(i,j)$, and on the other hand the transitions into the $q$-states, that is, $\delta(i_j,1-j) = q_i$.

Third, an extra $q$-state $q_0'$ is introduced, which is only directly accessible from $q_0$. Without $q_0'$ the situation $\delta(1_1,1) = 0 = \delta(q_0,1)$ and $\delta(1_1,0) = q_1 = \delta(q_0,0)$ would be possible, immediately implying the non-minimality of $\mathcal{A}$. The introduction of $q_0'$ solves this problem.

Note that there is a path from $0$ to $n-1$ in $\mathcal{A}$ iff there is such a path in $G$. Using this it follows that $\mathcal{A}$ is minimal iff there exists a path from $0$ to $n-1$ in $G$. Since $\mathcal{A}$ can obviously be constructed in logarithmic space, the given construction is indeed an \complexityClassFont{L}-reduction of \problemFont{2STCON} to \minimalDFA{2}. Consequently, \minimalDFA{2} is \complexityClassFont{NL}-hard.

\subsection{Complexity of \sPrimeDFA{}}
\label{subsec:S-Prime-DFA}
We end our discussion by using the construction presented in \cref{subsec:2DFAMinimal} to establish complexity boundaries for \sPrimeDFA{}. First, we define the notion of S-compositionality.
\begin{definition2}
	A DFA $\mathcal{A}$ is \defHighlight{S-composite} if there is a $k \in \natNumGeq{1}, k < \size{\mathcal{A}}$ such that $\mathcal{A}$ is $k$-decomposable. Otherwise, $\mathcal{A}$ is \defHighlight{S-prime}.\lipicsEnd
\end{definition2}
We denote the problem of deciding S-primality for a given DFA with \sPrimeDFA{} and the restriction of \sPrimeDFA{} to DFAs with at most $k \in \natNumGeq{2}$ letters with \sPrimeDFA{k}.

Note that the proof used in \cite{DBLP:journals/iandc/KupfermanM15} to show that \primeDFA{} is in \complexityClassFont{ExpSpace} is applicable for \sPrimeDFA{} with only slight modifications. 
Next, note that the \complexityClassFont{L}-reduction of the emptiness problem for DFAs to \primeDFA{} used in \cite{DBLP:journals/iandc/KupfermanM15} to prove the \complexityClassFont{NL}-hardness of \primeDFA{} relies on the fact that every DFA recognizing the empty language is prime. Thus, it is not easily adaptable for \sPrimeDFA{}. Instead, the \complexityClassFont{NL}-hardness of \sPrimeDFA{2} is shown by using a reduction from \problemFont{2STCON}, which adapts the construction outlined in \cref{subsec:2DFAMinimal}. We get:
\begin{restatable}{theorem}{theSPrimeDFAComplexity}
	\label{the:SPrimeDFAComplexity}
	The problems \sPrimeDFA{} and \sPrimeDFA{k} for $k \in \natNumGeq{2}$ are in \complexityClassFont{ExpSpace} and they are \complexityClassFont{NL}-hard.\lipicsEnd
	\vspace{1.9ex}
\end{restatable}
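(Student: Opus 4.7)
The plan splits the theorem into the \complexityClassFont{ExpSpace} upper bound and the \complexityClassFont{NL} lower bound.

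For the \complexityClassFont{ExpSpace} upper bound, I would lift the algorithm of \cite{DBLP:journals/iandc/KupfermanM15} that decides \primeDFA{}. That algorithm nondeterministically guesses a candidate family $\mathcal{A}_1,\dots,\mathcal{A}_t$ of DFAs of size at most $\ind{\mathcal{A}}-1$ and verifies $\lang{\mathcal{A}} = \bigcap_i \lang{\mathcal{A}_i}$ by searching for a separating word over the product construction. Transferring it to \sPrimeDFA{} requires only replacing the size threshold $\ind{\mathcal{A}}-1$ by $\size{\mathcal{A}}-1$, which is immediately read from the input; the verification phase is untouched and respects the same space bound. The argument applies verbatim to \sPrimeDFA{k}, since restricting the alphabet can only shrink the guess space.

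For the lower bound, it suffices to prove \complexityClassFont{NL}-hardness of \sPrimeDFA{2}; this then propagates to \sPrimeDFA{k} for all $k \in \natNumGeq{2}$ and to \sPrimeDFA{}. The plan is a logspace reduction from \problemFont{2STCON} that adapts the construction of \cref{subsec:2DFAMinimal}. The driving observation is the following triviality: \emph{every non-minimal DFA is S-composite}, because its canonical minimal DFA $\mathcal{B}$ alone forms a $(\size{\mathcal{A}}-1)$-decomposition of $\mathcal{A}$, since $\size{\mathcal{B}} < \size{\mathcal{A}}$ and $\lang{\mathcal{B}} = \lang{\mathcal{A}}$. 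Hence, on \problemFont{2STCON} instances without an $s$-$t$ path, the construction of \cref{subsec:2DFAMinimal} already outputs a non-minimal DFA and thus an S-composite one, settling one direction of the reduction for free.

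The remaining and principal difficulty is the converse direction: if an $s$-$t$ path exists, the constructed DFA must be S-prime. Since $\size{\mathcal{A}} = \ind{\mathcal{A}}$ whenever $\mathcal{A}$ is minimal, S-primality and ordinary primality coincide in this case, so it suffices to exhibit a primality witness---a word $w \notin \lang{\mathcal{A}}$ accepted by every $\mathcal{B} \in \alpha(\mathcal{A})$. The bare DFA from \cref{subsec:2DFAMinimal} is not guaranteed to be prime, so I would augment it with a small fixed gadget, for instance a short acyclic branch attached to the existing structure, tailored so that every strictly smaller minimal DFA whose language contains $\lang{\mathcal{A}}$ is forced to accept a specific common word outside $\lang{\mathcal{A}}$. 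This is the main obstacle, as the argument must cover all possible $\mathcal{B} \in \alpha(\mathcal{A})$, not just isolated candidates. Checking that the gadget preserves minimality in the YES case, leaves non-minimality intact in the NO case, and keeps the construction logspace is then routine, since only a constant number of extra states and transitions are introduced.
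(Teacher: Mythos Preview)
Your treatment of the \complexityClassFont{ExpSpace} upper bound is correct and matches the paper: replace the threshold $\ind{\mathcal{A}}-1$ by $\size{\mathcal{A}}-1$ in the algorithm of \cite{DBLP:journals/iandc/KupfermanM15} and nothing else changes.

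For the lower bound, your overall scaffolding is also right: reduce from \problemFont{2STCON}, use that non-minimal DFAs are trivially S-composite for the NO case, and use that S-primality coincides with primality on minimal DFAs for the YES case. But the step you flag as ``the main obstacle'' is a genuine gap, and your proposed fix---attaching a short acyclic branch and arguing a primality witness directly---is both vague and not the paper's route. Exhibiting a single word accepted by \emph{every} $\mathcal{B} \in \alpha(\mathcal{A})$ is exactly what makes primality hard in general; an ad hoc gadget is unlikely to give you uniform control over all smaller DFAs, and you have not indicated what the gadget or the witness would be.

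The paper sidesteps this entirely by forcing the output DFA into a class that is \emph{known} to be prime. Concretely, it transforms the DFA $\mathcal{A}$ from \cref{subsec:2DFAMinimal} into a \emph{simple co-safety DFA} $\hat{\mathcal{A}}$: it doubles every state (a copy $\myUnderbar{x}$ for each $x \neq p_0$), adds a fresh accepting sink $z_+$, and wires the transitions so that (i) $z_+$ is the unique accepting state, and (ii) all remaining states form a single strongly connected component. By \cite[Theorem~5.5]{DBLP:journals/iandc/KupfermanM15}, every simple co-safety DFA is prime, so $\hat{\mathcal{A}}$ is S-prime iff it is minimal. The YES case then reduces to a minimality argument for $\hat{\mathcal{A}}$ (lifted from the one for $\mathcal{A}$), and the NO case requires re-checking that two specific states of $\hat{\mathcal{A}}$ remain equivalent after the transformation---neither of which is entirely routine, since the state set has changed substantially. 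The key idea you are missing is thus not a local gadget but a global structural embedding into the co-safety fragment, which lets an existing primality theorem do the heavy lifting.
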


Further, we denote with \primeDFA{k} the restriction of \primeDFA{} to DFAs with at most $k \in \natNumGeq{2}$ letters and remark that the results presented in \cite{DBLP:journals/iandc/KupfermanM15} can be expanded to:
\begin{restatable}{theorem}{thePrimeDFAComplexity}
	\label{the:PrimeDFAComplexity}
	The problems \primeDFA{} and \primeDFA{k} for $k \in \natNumGeq{2}$ are in \complexityClassFont{ExpSpace} and they are \complexityClassFont{NL}-hard.\lipicsEnd
	\vspace{1.9ex}
\end{restatable}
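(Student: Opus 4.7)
The ExpSpace upper bounds are inherited directly: the algorithm of \cite{DBLP:journals/iandc/KupfermanM15} for \primeDFA{} works verbatim on every restriction \primeDFA{k}, since the restriction is purely syntactic on the input. The \complexityClassFont{NL}-hardness of the unrestricted \primeDFA{} is also proved in \cite{DBLP:journals/iandc/KupfermanM15}, via an \complexityClassFont{L}-reduction from the emptiness problem for DFAs, so the genuinely new content is \complexityClassFont{NL}-hardness of \primeDFA{k} for $k \in \natNumGeq{2}$. Since extra letters can always be routed to a rejecting sink without affecting primality, it suffices to establish this for $k=2$, and \primeDFA{k} for larger $k$ then follows by padding.

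My plan for the lower bound is an \complexityClassFont{L}-reduction from \problemFont{2STCON} to \primeDFA{2}, reusing the layered skeleton of \cref{fig:2STCONto2DFAMinimalReduction} from \cref{subsec:2DFAMinimal}. That construction already produces a two-letter DFA whose reachability and state-equivalence pattern track the existence of an $s$-to-$t$ path in $G$, and whose size and transition table are writable in logspace from $(G,s,t)$. Onto this skeleton I would graft a small two-letter prime gadget, chosen so that in the yes-case the minimal DFA of the resulting object satisfies one of the primality criteria of \cref{the:fl_characterization}---most conveniently a linear safety ADFA violating the CEP, as covered by \cref{cla_ass:fl_characterization_linear+safety+non-CEP}, since these provide small and explicit prime witnesses in the two-letter setting. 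In the no-case, the collapse of unreachable or equivalent states in the skeleton shrinks $\ind{\mathcal{A}}$ sufficiently that an explicit $(\ind{\mathcal{A}}-1)$-decomposition, assembled from the same families of DFAs used in the compositionality proofs of \cref{sec:fl_characterization}, becomes available and certifies compositeness.

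The main obstacle is the asymmetry between minimality and primality: a DFA can be minimal yet composite, so the minimality reduction of \cref{subsec:2DFAMinimal} cannot simply be copied. The delicate point is to choose and wire the prime gadget so that, in the yes-case, every potential $(\ind{\mathcal{A}}-1)$-decomposition is blocked, while in the no-case a concrete decomposition survives the collapse of the skeleton. Once such a gadget is fixed, logspace constructibility and the two-letter alphabet are preserved by the purely local, layer-wise definition of the construction, and the equivalence between path existence and primality is then established by a case analysis along the lines of the argument for \minimalDFA{2}, combined with the characterization of \cref{the:fl_characterization} to certify the yes-case and with the decomposition recipes of \cref{sec:fl_characterization} to certify the no-case.
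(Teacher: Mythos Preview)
Your handling of the upper bound and of the unrestricted \primeDFA{} is fine and matches the paper; the issue is entirely in your plan for the \complexityClassFont{NL}-hardness of \primeDFA{2}.

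There is a structural mismatch in the plan. The skeleton of \cref{fig:2STCONto2DFAMinimalReduction} is deliberately highly cyclic: the $q$-states form a cycle, the states $i_0',i_1'$ carry self-loops, and the correctness argument in \cref{subsec:2DFAMinimal} relies on every non-sink state being reachable from every other. Grafting any finite gadget onto this cannot turn the result into an ADFA, so \cref{the:fl_characterization} and \cref{cla:fl_characterization}~(\ref{cla_ass:fl_characterization_linear+safety+non-CEP}) simply do not apply to the automaton you would build. Likewise, the decomposition families of \cref{sec:fl_characterization} are tailored to linear ADFAs and give no leverage on a cyclic DFA of this shape. Your proposal therefore lacks a usable primality criterion for the yes-case and a usable decomposition recipe for the no-case; the sentence ``the delicate point is to choose and wire the prime gadget'' is where the actual content would have to live, and the tools you name cannot supply it.

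The paper's route avoids the finite-language machinery altogether. It reduces from the two-letter emptiness problem $\emptyDFAq{2}$ (DFAs over two letters with a single accepting sink), and attaches to that sink a fixed \emph{composite} gadget, namely the mod-$6$ counter $\mathcal{A}_6'$ over $\{0,1\}$. If the input language is empty, the accepting sink is unreachable, the whole DFA recognizes $\emptyset$, and is trivially prime. If the input language is non-empty, one first argues that the minimal DFA of the construction still contains the mod-$6$ counter intact, and then replaces $\mathcal{A}_6'$ by the mod-$2$ and mod-$3$ counters to obtain two strictly smaller DFAs whose intersection is the target language, certifying compositeness. No characterization of acyclic DFAs is needed, and the argument is short and self-contained.
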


This ends our discussion of the complexity of \sPrimeDFA{} and its restrictions, in which we have applied the construction outlined in \cref{subsec:2DFAMinimal} to prove \complexityClassFont{NL}-hardness.
\section{Discussion}
\label{sec:Discussion}
We studied the intersection compositionality, also denoted with $\cap$-compositionality, of regular languages. We added to the existing line of research focusing on fragments of the regular languages by analyzing the $\cap$-compositionality of ADFAs and thereby of finite languages. This research was in part motivated by existing results concerning the concatenation compositionality of finite languages.

We completely characterized the $\cap$-compositionality of ADFAs and thus finite languages. Using this characterization we proved the \complexityClassFont{NL}-completeness of $\primeDFAfin{}$. Thus, finite languages are significantly easier to handle under $\cap$-compositionality than under concatenation compositionality, where the respective primality problem for finite languages is \complexityClassFont{NP}-hard \cite{DBLP:journals/corr/abs-1902-06253}.

With notions of compositionality using union and both union and intersection already suggested in \cite{DBLP:journals/iandc/KupfermanM15}, we formally introduced the notions of $\cup$- and DNF-compositionality. We characterized the $\cup$- and DNF-compositionality of finite languages, which proved to be far simpler than the characterization of $\cap$-compositionality. These results also imply the characterization of the $\cap$-, $\cup$- and DNF-compositionality of co-finite languages.

This suggests that the key feature of finite languages regarding compositionality is not the finiteness of the languages per se, but rather the existence of only finitely many meaningfully different runs of the respective DFAs, a feature finite languages have in common not only with co-finite languages, but also with languages whose minimal DFAs allow for cycles in both accepting and rejecting sinks.
A logical next step would therefore be the characterization of the compositionality of these DFAs.

We also note that in our proofs we employed $\cap$-compositionality results concerning a different language fragment, namely co-safety DFAs, studied in \cite{DBLP:journals/iandc/KupfermanM15}. This suggests the possibility of employing the results concerning finite languages in future analyses and stresses the usefulness of working with language fragments. We provided one application of the results concerning finite languages by using them to prove the existence of a language that is DNF-composite but $\cap$- and $\cup$-prime.

Furthermore, we presented a proof of the \complexityClassFont{NL}-hardness and thereby \complexityClassFont{NL}-completeness of the basic problem \minimalDFA{2}. While the \complexityClassFont{NL}-hardness of \minimalDFA{k} for $k \in \natNumGeq{3}$ is folklore, this result appears to be new.

We utilized this result to establish the known complexity boundaries of \primeDFA{} for the here newly introduced problem \sPrimeDFA{}. We extended these results to the restrictions \primeDFA{k} and \sPrimeDFA{k} for $k \in \natNumGeq{2}$.

While it is interesting that a slight variation in the definition of $\cap$-compositionality, which does not touch the validity of most results, requires a whole new approach to establish the known lower complexity boundary, the big task of closing the doubly exponential complexity gap for \primeDFA{} still remains. And now, this gap exists for \sPrimeDFA{} as well.

Therefore, with the analysis of language fragments, further notions of compositionality, and the complexity gaps for \primeDFA{} and \sPrimeDFA{}, there is still need for further research.



\bibliography{spenner-paperES-literatur.bib}

\appendix
\section{Proofs for \texorpdfstring{\cref{sec:fl_characterization}}{Section \ref{sec:fl_characterization}}}
\label{sec:fl_proofs}
In this section we provide detailed proofs for the results presented in \cref{sec:fl_characterization}. To increase readability and avoid overlong proofs, we introduce a number of additional lemmas.

Our goal is to prove \cref{the:fl_characterization}, which reads:
\theFlCharacterization*
In order to prove \cref{the:fl_characterization} we will follow the structure outlined in the opening of \cref{sec:fl_characterization}. That is, we prove the following five assertions one after another:
\claFlCharacterization*
Note that (\ref{cla_ass:fl_characterization_non-linear})-(\ref{cla_ass:fl_characterization_linear+safety+non-CEP}) cover the entire set of minimal ADFAs recognizing a non-empty language. Therefore, proving (\ref{cla_ass:fl_characterization_non-linear})-(\ref{cla_ass:fl_characterization_linear+safety+non-CEP}) is sufficient to prove \cref{the:fl_characterization}. Further, note that a DFA recognizing the empty language is trivially prime. Therefore, \cref{the:fl_characterization} indeed characterizes the compositionality of ADFAs and thereby of finite languages.

From here on, let $\mathcal{A} = (Q,\Sigma,q_I,\delta,F)$ be a minimal ADFA recognizing the non-empty language $L$. Let $n$ be the length of the longest word in $L$.

\subsection{Proofs of \texorpdfstring{\cref{cla:fl_characterization}}{Claim \ref{cla:fl_characterization}} (\ref{cla_ass:fl_characterization_non-linear}) and (\ref{cla_ass:fl_characterization_linear+sigmaN})}
\label{subsec:fl_charac_aAndb}
We begin by presenting the rather simple proofs of \cref{cla:fl_characterization} (\ref{cla_ass:fl_characterization_non-linear}) and (\ref{cla_ass:fl_characterization_linear+sigmaN}).

\begin{proof}[Proof of \cref{cla:fl_characterization} (\ref{cla_ass:fl_characterization_non-linear})]
	Assume that $\mathcal{A}$ is not linear. As mentioned in \cref{sec:preliminaries}, this implies $\ind{\mathcal{A}} > n+2$.
	
	For a word $w \in \Sigma^*$, let $\mathcal{A}_w$ be the minimal DFA with $\lang{\mathcal{A}_w} = \{w\}$. For a value $m \in \natNum$, let $\mathcal{A}_{\leq m}$ be the minimal DFA with $\lang{\mathcal{A}_{\leq m}} = \{w \in \Sigma^* \setDel |w| \leq m\}$. The trivial constructions are pictured in \cref{fig:fl_A_wA_leqm}. Note that $\size{\mathcal{A}_w} = |w| + 2$ and $\size{\mathcal{A}_{\leq m}} = m + 2$.
	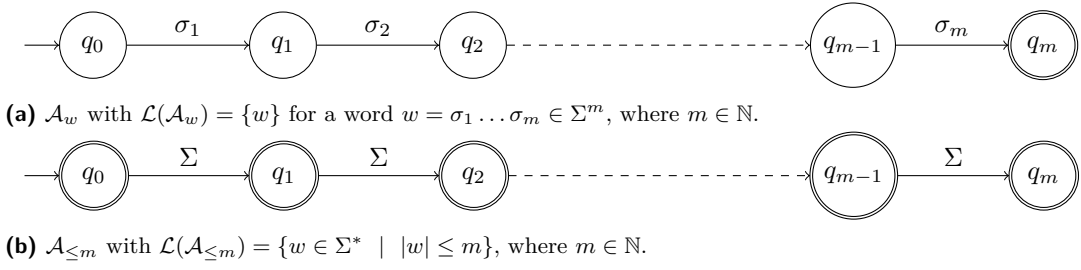
\begin{figure}[t]
		\begin{subfigure}[t]{0.75\textwidth}
			\centering
			\begin{tikzpicture}[node distance=2.5cm]
			\node[state, initial] 					(q0) 	{$q_0$};
			\node[state, right of=q0] 				(q1) 	{$q_1$};
			\node[state, right of=q1] 				(q2) 	{$q_2$};
			\node[state, draw=none, right of=q2]	(empty-node) 	{};
			\node[state, right of=empty-node] 		(qm-1) 	{$q_{m-1}$};
			\node[state, right of=qm-1, accepting] 	(qm) 	{$q_m$};
			
			\draw	(q0)	edge[above]				node{$\sigma_1$}	(q1);
			\draw	(q1)	edge[above]				node{$\sigma_2$}	(q2);
			\draw	(q2)	edge[dashed]			node{}				(qm-1);
			\draw	(qm-1)	edge[above]				node{$\sigma_m$}	(qm);
			\end{tikzpicture}
			\caption{$\mathcal{A}_{w}$ with $\lang{\mathcal{A}_w} = \{w\}$ for a word $w = \sigma_1\dots\sigma_m \in \Sigma^m$, where $m \in \natNum$.}
			\label{subfig:fl_A_w}
		\end{subfigure}\hfill
		\begin{subfigure}[t]{0.75\textwidth}
			\centering
			\begin{tikzpicture}[node distance=2.5cm]
			\node[state, accepting, initial] 					(q0) 	{$q_0$};
			\node[state, accepting, right of=q0] 				(q1) 	{$q_1$};
			\node[state, accepting, right of=q1] 				(q2) 	{$q_2$};
			\node[state, accepting, draw=none, right of=q2]		(empty-node) 	{};
			\node[state, accepting, right of=empty-node] 		(qm-1) 	{$q_{m-1}$};
			\node[state, accepting, right of=qm-1] 	(qm) 		{$q_m$};
			
			\draw	(q0)	edge[above]				node{$\Sigma$}	(q1);
			\draw	(q1)	edge[above]				node{$\Sigma$}	(q2);
			\draw	(q2)	edge[dashed]			node{}				(qm-1);
			\draw	(qm-1)	edge[above]				node{$\Sigma$}	(qm);
			\end{tikzpicture}
			\caption{$\mathcal{A}_{\leq m}$ with $\lang{\mathcal{A}_{\leq m}} = \{w \in \Sigma^* \setDel |w| \leq m\}$, where $m \in \natNum$.}
			\label{subfig:fl_A_leqm}
		\end{subfigure}
		\caption{DFAs $\mathcal{A}_{w}$ and $\mathcal{A}_{\leq m}$. Omitted transitions lead to a rejecting sink.}
		\label{fig:fl_A_wA_leqm}
	\end{figure}

	Define $L' = \{w \in \Sigma^* \setDel w \in \complementOp{\lang{\mathcal{A}}} \text{ and } |w| \leq n\}$. We proof: $\lang{\mathcal{A}} = \lang{\mathcal{A}_{\leq n}} \cap \bigcap_{w \in L'} \lang{\complementOp{\mathcal{A}_w}}$, where $\complementOp{\mathcal{A}_w}$ is the complement DFA of $\mathcal{A}_w$. Note that this immediately implies the compositionality of $\mathcal{A}$, since $\size{\mathcal{A}_{\leq n}} = n+2 < \ind{\mathcal{A}}$ and $\size{\complementOp{\mathcal{A}_w}} = |w| + 2 \leq n+2 < \ind{\mathcal{A}}$ for each $w \in L'$.
	
	We begin by proving $\lang{\mathcal{A}} \subseteq \lang{\mathcal{A}_{\leq n}} \cap \bigcap_{w \in L'} \lang{\complementOp{\mathcal{A}_w}}$. Let $w \in \lang{\mathcal{A}}$. Then $|w| \leq n$ holds and therefore we have $w \in \lang{\mathcal{A}_{\leq n}}$. Let $w' \in L'$. In particular, this implies $w' \notin \lang{\mathcal{A}}$ and thus $w' \neq w$. Therefore, $w \in \lang{\complementOp{\mathcal{A}_{w'}}}$. Taken together we get $w \in \lang{\mathcal{A}_{\leq n}} \cap \bigcap_{w \in L'} \lang{\complementOp{\mathcal{A}_w}}$. The first containment is shown.
	
	Next, we prove $\lang{\mathcal{A}_{\leq n}} \cap \bigcap_{w \in L'} \lang{\complementOp{\mathcal{A}_w}} \subseteq \lang{\mathcal{A}}$. Let $w \notin \lang{\mathcal{A}}$. If $|w| > n$ then $w \notin \lang{\mathcal{A}_{\leq n}}$. On the other hand, if $|w| \leq n$ we have $w \in L'$ and therefore $w \notin \bigcap_{w \in L'} \lang{\complementOp{\mathcal{A}_w}}$. Taken together we get $w \notin \lang{\mathcal{A}_{\leq n}} \cap \bigcap_{w \in L'} \lang{\complementOp{\mathcal{A}_w}}$. The second containment is shown and we are done.
\end{proof}

For the proof of \cref{cla:fl_characterization} (\ref{cla_ass:fl_characterization_linear+sigmaN}) we introduce an additional lemma, which we will reuse when proving \cref{the:fl_cupDNFCharacterization} (\ref{lem_ass:fl_cupDNFCharacterization_DNF}).
\begin{lemma}
	\label{lem:fl_sigmaN}
	Let $\mathcal{B} = (S,\Sigma,s_I,\eta,G)$ be a minimal DFA such that there exists an $m \in \natNum$ with $\ind{\mathcal{B}} < m+2$ and $\sigma^m \in \lang{\mathcal{B}}$ for a letter $\sigma \in \Sigma$. Then there exists an $l \in \{1,\dots,m+1\}$ such that $\eta(s_I,\sigma^{m+il}) = \eta(s_I,\sigma^m)$ for every $i \in \natNum$.\lipicsEnd
\end{lemma}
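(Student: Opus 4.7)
The plan is to reduce the statement to a pigeonhole on the orbit of $s_I$ under $\sigma$. Define $s_j := \eta(s_I,\sigma^j)$ for $j \in \natNum$. Since $\mathcal{B}$ is minimal, $\size{\mathcal{B}} = \ind{\mathcal{B}} < m+2$, so $\size{\mathcal{B}} \leq m+1$. Among the $m+2$ states $s_0,s_1,\dots,s_{m+1}$ the pigeonhole principle then produces indices $0 \leq a < b \leq m+1$ with $s_a = s_b$, and I would set $l := b - a \in \{1,\dots,m+1\}$.

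Next, I would propagate this single collision into the required periodicity. Determinism of $\eta$ gives $s_{a+k} = s_{b+k} = s_{a+l+k}$ for every $k \in \natNum$. Since $l \geq 1$ forces $a \leq b-1 \leq m$, I specialize to $k = m-a \geq 0$ and obtain $s_m = s_{m+l}$. A routine induction on $i$, again applying $\sigma^l$ on both sides, then yields $s_{m+il} = s_{m+(i+1)l}$ for every $i \in \natNum$ and hence $s_m = s_{m+il}$ for every $i$, which is precisely the claim.

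The argument has no real obstacle; the only subtlety is choosing a window of exactly $m+2$ initial $\sigma$-iterates, which simultaneously guarantees that the period $l$ fits into $\{1,\dots,m+1\}$ and that position $m$ is already past the tail preceding the cycle so that $s_m$ itself is a periodic point. Note that the hypothesis $\sigma^m \in \lang{\mathcal{B}}$ is not consumed by the proof; it merely records the situation in which the lemma will be invoked. The minimality of $\mathcal{B}$ is used solely to turn the index bound $\ind{\mathcal{B}} < m+2$ into the size bound $\size{\mathcal{B}} \leq m+1$ that drives the pigeonhole.
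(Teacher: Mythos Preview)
Your proof is correct and rests on the same pigeonhole idea as the paper's: both exploit that the $\sigma$-orbit $s_0,\dots,s_{m+1}$ of $s_I$ contains more points than $\mathcal{B}$ has states, extract a period $l$, and propagate it to position $m$ by determinism. Your presentation is in fact more streamlined than the paper's, which splits into two cases (no repetition among $s_0,\dots,s_m$ versus a repetition already there) and then reduces the second case to the first; you bypass this by taking the collision directly in the window $\{0,\dots,m+1\}$ and observing that $a\le m$ suffices to shift the periodicity to $s_m$.
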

\begin{proof}
	Let $\mathcal{B}$ and $m$ be as required. Let $s = \eta(s_I,\sigma^m)$.
	\begin{description}
		\item[Case 1: \normalfont{In the initial run of $\mathcal{B}$ on $\sigma^m$ no state is visited more than once.}] Since $\ind{\mathcal{B}} < m+2$, we have $\ind{\mathcal{B}} = m+1$ and in the initial run of $\mathcal{B}$ on $\sigma^m$ every state of $\mathcal{B}$ is visited exactly once. Then there exists a $j \in \{0,\dots,m\}$ such that $\eta(s_I,\sigma^m\sigma) = \eta(s_I,\sigma^j)$. Select $l = m + 1 - j$. Clearly, $1 \leq l \leq m+1$ holds. Additionally, we have $\eta(s,\sigma^l) = \eta(s,\sigma^{m+1-j}) = \eta(s,\sigma\sigma^{m-j}) = \eta(\eta(s_I,\sigma^m),\sigma\sigma^{m-j}) = \eta(\eta(s_I,\sigma^m\sigma),\sigma^{m-j}) = \eta(\eta(s_I,\sigma^j),\sigma^{m-j}) = \eta(s_I,\sigma^j\sigma^{m-j}) = \eta(s_I,\sigma^m) = s$.
		
		We now argue that the selected $l$ satisfies the requirement using induction. For $i = 0$, we have $\eta(s_I,\sigma^{m+il}) = \eta(s_I,\sigma^m) = s$. Now assume that $\eta(s_I,\sigma^{m+il}) = s$ holds for a particular $i \in \natNum$. Then we have $\eta(s_I,\sigma^{m+(i+1)l}) = \eta(s_I,\sigma^{m+il}\sigma^l) = \eta(\eta(s_I,\sigma^{m+il}),\sigma^l) = \eta(s,\sigma^l) = s$. We are done with Case 1.
		
		\item[Case 2: \normalfont{Else.}] Then there exist $s,t \in \{0,\dots,m\}$ with $s < t$ such that $\eta(s_I,\sigma^s) = \eta(s_I,\sigma^t)$. Then $\eta(s_I,\sigma^{s+(m-t)+1}) = \eta(s_I,\sigma^{t+(m-t)+1}) = \eta(s_I,\sigma^m\sigma)$ holds. Select $j = s + (m-t) + 1$. Note that $1 \leq j \leq m$.
		
		Thus, we have a $j \in \{1,\dots,m\} \subseteq \{0,\dots,m\}$ such that $\eta(s_I,\sigma^m\sigma) = \eta(s_I,\sigma^j)$. We can select $l = m+1-j$ and trace back Case 2 to Case 1. Therefore, we are done with Case 2.
	\end{description}
	With Cases 1 and 2 the proof is complete.
\end{proof}

With \cref{lem:fl_sigmaN} in hand, we can now turn to \cref{cla:fl_characterization} (\ref{cla_ass:fl_characterization_linear+sigmaN}).
\begin{proof}[Proof of \cref{cla:fl_characterization} (\ref{cla_ass:fl_characterization_linear+sigmaN})]
	Assume that $\mathcal{A}$ is linear and that there exists a letter $\sigma \in \Sigma$ with $\sigma^n \in L$. We show that $\sigma^{n+(n+1)!}$ is a primality witness of $\mathcal{A}$. Note that the linearity of $\mathcal{A}$ implies $\ind{\mathcal{A}} = n+2$.
	
	Let $\mathcal{B} = (S,\Sigma,s_I,\eta,G) \in \alpha(\mathcal{A})$. By definition, $\mathcal{B}$ is a minimal DFA with $\ind{\mathcal{B}} < \ind{\mathcal{A}} = n+2$ and $\lang{\mathcal{A}} \subseteq \lang{\mathcal{B}}$. In particular, we have $\sigma^n \in \lang{\mathcal{B}}$.
	
	With \cref{lem:fl_sigmaN} this implies the existence of an $l \in \{1,\dots,n+1\}$ such that $\eta(s_I,\sigma^{n+il}) = \eta(s_I,\sigma^n)$ for every $i \in \natNum$. In particular, this implies $\sigma^{n+(n+1)!} \in \lang{\mathcal{B}}$.
	
	Thus, the word $\sigma^{n+(n+1)!}$ is accepted by every DFA in $\alpha(\mathcal{A})$. With $\sigma^{n+(n+1)!} \notin \lang{\mathcal{A}}$ the word $\sigma^{n+(n+1)!}$ is a primality witness of $\mathcal{A}$. We are done.
\end{proof}

To summarize, we have proven \cref{cla:fl_characterization} (\ref{cla_ass:fl_characterization_non-linear}) and (\ref{cla_ass:fl_characterization_linear+sigmaN}). Additionally, with \cref{lem:fl_sigmaN} we have established a result that we can reuse in the proof of \cref{the:fl_cupDNFCharacterization} (\ref{lem_ass:fl_cupDNFCharacterization_DNF}).

For now, we have to prove the remaining \cref{cla:fl_characterization} (\ref{cla_ass:fl_characterization_non-sigmaN+non-safety})-(\ref{cla_ass:fl_characterization_linear+safety+non-CEP}). From here on, we assume that $\mathcal{A}$ is linear and that $\sigma^n \notin \lang{\mathcal{A}}$ holds for every $\sigma \in \Sigma$, as otherwise we have already covered the compositionality of $\mathcal{A}$ with \cref{cla:fl_characterization} (\ref{cla_ass:fl_characterization_non-linear}) and (\ref{cla_ass:fl_characterization_linear+sigmaN}).

As in \cref{sec:fl_characterization}, we assume w.l.o.g. $Q = \{q_0,\dots,q_{n+1}\}$ with $q_j$ being reachable from $q_i$ for all $i < j$, which implies $q_I = q_0$ and $q_n \in F$ with $q_{n+1}$ being the rejecting sink. Furthermore, we define $\Sigma_{i,j} = \{\sigma \in \Sigma \setDel \delta(q_i,\sigma) = q_j\}$.
Note that for every $\sigma \in \Sigma$ there exists an $i \in \{1,\dots,n\}$ with $\sigma \notin \Sigma_{i-1,i}$. The form of $\mathcal{A}$ is pictured in \cref{fig:fl_minLinDFA}.
\begin{figure}[t]
	\centering
	\begin{tikzpicture}[node distance=2cm]
	\footnotesize
	\node[state, initial, dash dot] 					(q0) 	{$q_0$};
	\node[state, right of=q0, dash dot] 				(q1) 	{$q_1$};
	\node[state, right of=q1, dash dot] 				(q2) 	{$q_2$};
	\node[state, right of=q2, draw=none]				(empty-node) 	{};
	\node[state, right of=empty-node, dash dot] 		(qn-1) 	{$q_{n-1}$};
	\node[state, right of=qn-1, accepting] 				(qn) 	{$q_n$};
	\node[state, right of=qn] 							(qn+1) 	{$q_{n+1}$};
	
	\draw	(q0)	edge[below]							node{$\Sigma_{0,1}$}	(q1);
	\draw	(q0)	edge[above, bend left=30]			node{$\Sigma_{0,2}$}	(q2);
	\draw	(q0)	edge[above, bend left=45]			node{$\Sigma_{0,n-1}$}	(qn-1);
	\draw	(q0)	edge[above, bend left=60]			node{$\Sigma_{0,n}$}	(qn);
	\draw	(q0)	edge[above, bend left=75]			node{$\Sigma_{0,n+1}$}	(qn+1);
	
	\draw	(q1)	edge[below]							node{$\Sigma_{1,2}$}	(q2);
	\draw	(q1)	edge[above, bend right=30]			node{$\Sigma_{1,n-1}$}	(qn-1);
	\draw	(q1)	edge[above, bend right=45]			node{$\Sigma_{1,n}$}	(qn);
	\draw	(q1)	edge[above, bend right=60]			node{$\Sigma_{1,n+1}$}	(qn+1);
	
	\draw	(q2)	edge[dashed]						node{}	(qn-1);
	\draw	(q2)	edge[above, bend left=30]			node{$\Sigma_{2,n-1}$}	(qn-1);
	\draw	(q2)	edge[above, bend left=45]			node{$\Sigma_{2,n}$}	(qn);
	\draw	(q2)	edge[above, bend left=60]			node{$\Sigma_{2,n+1}$}	(qn+1);
	
	\draw	(qn-1)	edge[below]							node{$\Sigma_{n-1,n}$}	(qn);
	\draw	(qn-1)	edge[above, bend right=50]			node{$\Sigma_{n-1,n+1}$}(qn+1);
	
	\draw	(qn)	edge[below]							node{$\Sigma_{n,n+1}$}	(qn+1);
	
	\draw	(qn+1)	edge[loop below]					node{$\Sigma$}			(qn+1);
	\end{tikzpicture}
	\caption{Minimal linear ADFA recognizing a non-empty language, with $n$ being the length of the longest accepted word.}
	\label{fig:fl_minLinDFA}
\end{figure}
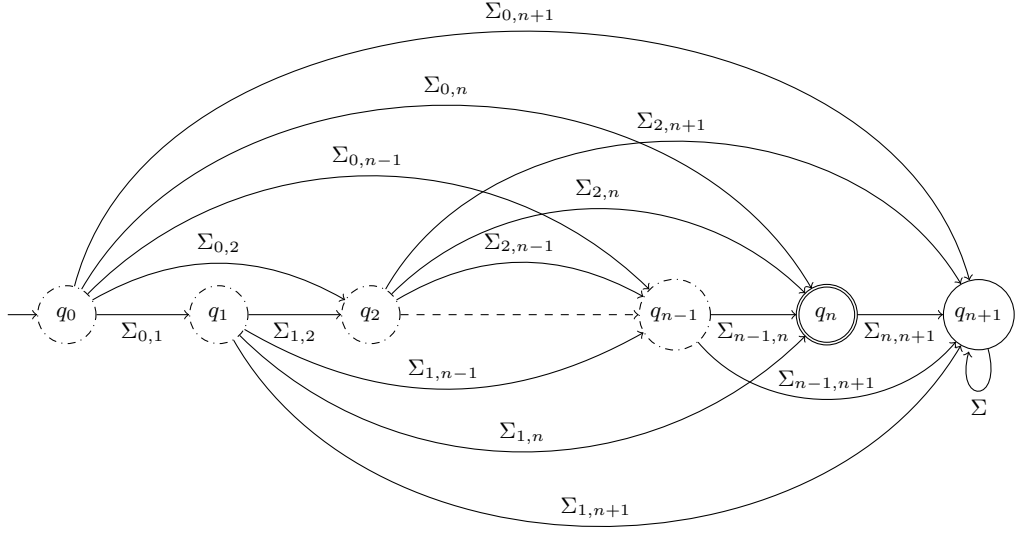

\subsection{Proof of \texorpdfstring{\cref{cla:fl_characterization}}{Claim \ref{cla:fl_characterization}} (\ref{cla_ass:fl_characterization_non-sigmaN+non-safety})}
\label{subsec:fl_charac_c}
We turn to the proof of \cref{cla:fl_characterization} (\ref{cla_ass:fl_characterization_non-sigmaN+non-safety}). We assume that $\mathcal{A}$ is not a safety DFA, which implies $\{q_n\} \subseteq F \subset Q \setminus \{q_{n+1}\}$. Let $d \in \{0,\dots,n-1\}$ with $q_d \notin F$.
We provide more details on the DFAs outlined in \cref{subsec:fl_linearNonSafetyDFAs}, which taken together witness the compositionality of $\mathcal{A}$.

We begin by proving \cref{lem:fl_A_0A_dA_myUnderbari,lem:fl_A_sigmai,lem:fl_A_w^!}. Afterwards, we will prove \cref{lem:fl_tildeA_w} by providing four additional lemmas for the four cases mentioned in \cref{subsec:fl_linearNonSafetyDFAs}.

Note \cref{fig:fl_automataBigPicture}, which is partly a repetition of \cref{fig:fl_A_myUnderbariA_sigmai} and which outlines the DFA constructions. Additionally, we now provide formal definitions for these DFAs.
\begin{description}
	\item[Definition of $\mathcal{A}_0$, outlined in \cref{subfig:fl_app_A_0}.]
	We define $\mathcal{A}_0 = (Q_0,\Sigma,q_0,\delta_0,F_0)$ where:
	\begin{align*}
		Q_0						&= Q \setminus \{q_n\}\\
		F_0						&= (F \cup \{q_0\}) \setminus \{q_n\}\\
		\delta_0(q,\sigma)		&= 
		\begin{cases}
			q_0					&\text{ if $\delta(q,\sigma) = q_n$}\\
			\delta(q,\sigma) 	&\text{ else, thus if $\delta(q,\sigma) \neq q_n$}
		\end{cases}.
	\end{align*}
	
	\item[Definition of $\hat{\mathcal{A}}_d$, outlined in \cref{subfig:fl_app_hatA_d}.]
	We define $\hat{\mathcal{A}} = (\hat{Q}_d,\Sigma,q_0,\hat{\delta}_d,\hat{F}_d)$ where:
	\begin{align*}
		\hat{Q}_d				&= Q \setminus \{q_{n+1}\}\\
		\hat{F}_d				&= F\\
		\hat{\delta}_d(q_i,\sigma)&=
		\begin{cases}
			q_d					&\text{ if $i = n$}\\
			q_n					&\text{ if $i < n$ and $\delta(q_i,\sigma) = q_{n+1}$}\\
								&\text{ (arbitrary definition possible)}\\
			\delta(q_i,\sigma)	&\text{ else, thus if $i < n$ and $\delta(q_i,\sigma) \neq q_{n+1}$}
		\end{cases}.
	\end{align*} 
	
	\item[Definition of $\mathcal{A}_{\myUnderbar{i}}$ for $m < n-1$, outlined in \cref{subfig:fl_app_A_myUnderbariA_m<n-1}.] Let $m < n-1$ and $\myUnderbar{i} = (i_0,\dots,i_m) \in I_m$. We define $\mathcal{A}_{\myUnderbar{i}} = (Q_{\myUnderbar{i}},\Sigma,q_0,\delta_{\myUnderbar{i}},F_{\myUnderbar{i}})$ where:
	\begin{align*}
		Q_{\myUnderbar{i}} 		&= \{q_{i_0},\dots,q_{i_m}\} \cup \{q_{n+1},q_+\}\\
		F_{\myUnderbar{i}}		&= Q_{\myUnderbar{i}} \setminus \{q_{n+1}\}\\
		\delta_{\myUnderbar{i}}(q,\sigma)&=
		\begin{cases}
			q_{i_{j+1}} 		&\text{ if $q = q_{i_j}$ for $j < m$ and $\sigma \in \Sigma_{i_j,i_{j+1}}$}\\
			q_+					&\text{ if $q = q_{i_j}$ for $j < m$ and $\sigma \notin \Sigma_{i_j,i_{j+1}}$}\\
			q_{n+1}				&\text{ if $q = q_{i_m}$}\\
			q					&\text{ else, thus if $q \in \{q_{n+1},q_+\}$}
		\end{cases}.
	\end{align*}
	
	\item[Definition of $\hat{\mathcal{A}}_d$ for $m = n-1$, outlined in \cref{subfig:fl_app_A_myUnderbariA_m=n-1}.]
	Let $m = n-1$ and $\myUnderbar{i} = (0,\dots,j-1,j+1,\dots,n) \in I_m$. We define $\mathcal{A}_{\myUnderbar{i}} = (Q_{\myUnderbar{i}},\Sigma,q_0,\delta_{\myUnderbar{i}},F_{\myUnderbar{i}})$ where:
	\begin{align*}
		Q_{\myUnderbar{i}}		&= \{q_0,\dots,q_{j-1},q_{j+1},\dots,q_{n+1}\}\\
		F_{\myUnderbar{i}}		&= Q_{\myUnderbar{i}} \setminus \{q_{n+1}\}\\
		\delta_{\myUnderbar{i}}(q_k,\sigma)=&
		\begin{cases}
			q_{k+1}				&\text{ if $k < n+1$ and $k \neq j-1$ and $\sigma \in \Sigma_{k,k+1}$}\\
			q_{k}				&\text{ if $k < n+1$ and $k \neq j-1$ and $\sigma \notin \Sigma_{k,k+1}$}\\
			q_{j+1}				&\text{ if $k = j-1$ and $\sigma \in \Sigma_{j-1,j+1}$}\\
			q_{j-1}				&\text{ if $k = j-1$ and $\sigma \notin \Sigma_{j-1,j+1}$}\\
			q_{n+1}				&\text{ else, thus if $k = n+1$}
		\end{cases}.
	\end{align*}
	
	\item[Definition of $\mathcal{A}_{\sigma,i}$, outlined in \cref{subfig:fl_app_A_sigmai}.]
	Let $\sigma \in \Sigma$ and $i \in \{1,\dots,n\}$. We define $\mathcal{A}_{\sigma,i} = (Q_{\sigma,i},\Sigma,q_0,\delta_{\sigma,i},F_{\sigma,i})$ where:
	\begin{align*}
		Q_{\sigma,i}			&= \{q_0,\dots,q_n\}\\
		F_{\sigma,i}			&= Q_{\sigma,i} \setminus \{q_n\}\\
		\delta(q_j,\sigma')		&=
		\begin{cases}
			q_i					&\text{ if $j = i-1$ and $\sigma' = \sigma$}\\
			q_{i-1}				&\text{ if $j = i-1$ and $\sigma' \neq \sigma$}\\
			q_{j+1}				&\text{ if $j \neq i-1$ and $j < n$}\\
			q_n					&\text{ else, thus if $j = n$}
		\end{cases}.
	\end{align*}
	
	\item[Definition of $\mathcal{A}_w^!$, outlined in \cref{subfig:fl_app_A_w^!}.]
	Let $w = \sigma_1\dots\sigma_m \in \Sigma^m$. We define $\mathcal{A}_w^! = (Q_w^!,\Sigma,q_0,\delta_w^!,F_w^!)$ where:
	\begin{align*}
		Q_w^!					&= \{q_0,\dots,q_m\}\\
		F_w^!					&= Q_w^! \setminus \{q_m\}\\
		\delta_w^!(q_i,\sigma)	&=
		\begin{cases}
			q_{i+1}				&\text{if $i < m$ and $\sigma = \sigma_{i+1}$}\\
			q_i					&\text{if $i < m$ and $\sigma \neq \sigma_{i+1}$}\\
			q_m					&\text{else, thus if $i = m$}
		\end{cases}.
	\end{align*}
\end{description}
\begin{figure}
	\begin{subfigure}[t]{0.75\textwidth}
		\centering
		\begin{tikzpicture}[node distance=2.5cm]
		\footnotesize
		\node[state, initial, accepting] 					(q0) 	{$q_0$};
		\node[state, right of=q0, dash dot] 				(q1) 	{$q_1$};
		\node[state, right of=q1, draw=none]				(empty-node) 	{};
		\node[state, right of=empty-node, dash dot] 		(qn-1) 	{$q_{n-1}$};
		\node[state, right of=qn-1] 						(qn+1) 	{$q_{n+1}$};
		
		\draw	(q0)	edge[above]							node{$\Sigma_{0,1}$}								(q1);
		\draw	(q0)	edge[bend left=45, above]				node{$\Sigma_{0,n-1}$}								(qn-1);
		\draw	(q0)	edge[bend left=60, above]				node{$\Sigma_{0,n+1}$}								(qn+1);
		\draw	(q0)	edge[loop above]					node{$\Sigma_{0,n}$}								(q0);
		\draw	(q1)	edge[dashed]						node{}												(qn-1);
		\draw	(q1)	edge[bend left=30, above]				node{$\Sigma_{1,n-1}$}								(qn-1);
		\draw	(q1)	edge[bend left=45, above]				node{$\Sigma_{1,n+1}$}								(qn+1);
		\draw	(q1)	edge[bend left=30, above]				node{$\Sigma_{1,n}$}								(q0);
		\draw	(qn-1)	edge[above]							node{$\Sigma_{n-1,n+1}$}							(qn+1);
		\draw	(qn-1)	edge[bend left=30, above]				node{$\Sigma_{n-1,n}$}								(q0);
		\draw	(qn+1)	edge[loop above]					node{$\Sigma$}										(qn+1);
		\end{tikzpicture}
		\caption{$\mathcal{A}_0$.}
		\label{subfig:fl_app_A_0}
	\end{subfigure}\hfill
	\begin{subfigure}[t]{0.75\textwidth}
		\centering
		\begin{tikzpicture}[node distance=2.5cm]
		\footnotesize
		\node[state, initial, dash dot] 					(q0) 	{$q_0$};
		\node[state, right of=q0, draw=none]				(empty-node1) 	{};
		\node[state, right of=empty-node1] 		(qd) 	{$q_d$};
		\node[state, right of=qd, draw=none]				(empty-node2) 	{};
		\node[state, right of=empty-node2, dash dot] 		(qn-1) 	{$q_{n-1}$};
		\node[state, right of=qn-1, accepting] 						(qn) 	{$q_n$};
		
		\draw	(q0)	edge[dashed]						node{}								(qd);
		\draw	(q0)	edge[bend left=45, above]			node{$\Sigma_{0,n},\Sigma_{0,n+1}$} (qn);
		\draw	(qd)	edge[dashed]						node{}								(qn-1);
		\draw	(qd)	edge[bend left=37.5, above]			node{$\Sigma_{d,n},\Sigma_{d,n+1}$}	(qn);
		\draw	(qn-1)	edge[bend right=30, below]			node{$\Sigma_{n-1,n},\Sigma_{n-1,n+1}$}(qn);
		\draw	(qn)	edge[bend right=22.5, above]			node{$\Sigma$}						(qd);
		\end{tikzpicture}
		\caption{$\hat{\mathcal{A}}_d$. Unaltered transitions are omitted.}
		\label{subfig:fl_app_hatA_d}
	\end{subfigure}\hfill
	\begin{subfigure}[t]{0.75\textwidth}
		\centering
		\begin{tikzpicture}[node distance=2.25cm]
		\footnotesize
		\node[state, initial, accepting] 					(q0) 	{$q_0$};
		\node[state, right of=q0, accepting] 				(qi1) 	{$q_{i_1}$};
		\node[state, right of=qi1, draw=none]				(empty-node) 	{};
		\node[state, right of=empty-node, accepting] 		(qim-1) {$q_{i_{m-1}}$};
		\node[state, right of=qim-1, accepting] 			(qn) {$q_n$};
		\node[state, right of=qn] 							(qn+1) {$q_{n+1}$};
		\node[state, below of=q0, accepting] 				(q+) 	{$q_+$};
		
		\draw	(q0)	edge[above]						node{$\Sigma_{0,i_1}$}								(qi1);
		\draw	(q0)	edge[left]						node{$\complementOp{\Sigma_{0,i_1}}$}								(q+);
		\draw	(qi1)	edge[dashed]					node{}								(qim-1);
		\draw	(qi1)	edge[right]						node{$\complementOp{\Sigma_{i_1,i_2}}$}								(q+);
		\draw	(qim-1)	edge[above]						node{$\Sigma_{i_{m-1},n}$}								(qn);
		\draw	(qim-1)	edge[below]						node{$\complementOp{\Sigma_{i_{m-1},n}}$}								(q+);
		\draw	(qn)	edge[above]						node{$\Sigma$}								(qn+1);
		\draw	(qn+1)	edge[loop right]				node{$\Sigma$}						(qn+1);
		\draw	(q+)	edge[loop right]				node{$\Sigma$}						(q+);
		\end{tikzpicture}
		\caption{$\mathcal{A}_{\myUnderbar{i}}$ if $m < n-1$.}
		\label{subfig:fl_app_A_myUnderbariA_m<n-1}
	\end{subfigure}\hfill
	\begin{subfigure}[t]{0.75\textwidth}
		\centering
		\begin{tikzpicture}[node distance=2cm]
		\footnotesize
		\node[state, initial, accepting] 					(q0) 	{$q_0$};
		\node[state, right of=q0, draw=none]				(empty-node1) 	{};
		\node[state, right of=empty-node1, accepting] 		(qj-1) {$q_{j-1}$};
		\node[state, right of=qj-1, accepting] 				(qj+1) {$q_{j+1}$};
		\node[state, right of=qj+1, draw=none]				(empty-node2) 	{};
		\node[state, right of=empty-node2, accepting] 		(qn) {$q_n$};
		\node[state, right of=qn] 							(qn+1) {$q_{n+1}$};
		
		\draw	(q0)	edge[dashed]						node{}								(qj-1);
		\draw	(q0)	edge[loop above]					node{$\complementOp{\Sigma_{0,1}}$}	(q0);
		\draw	(qj-1)	edge[below, bend right]							node{$\Sigma_{j-1,j+1}$}			(qj+1);
		\draw	(qj-1)	edge[loop above]					node{$\complementOp{\Sigma_{j-1,j+1}}$}	(qj-1);
		\draw	(qj+1)	edge[dashed]						node{}								(qn);
		\draw	(qj+1)	edge[loop above]					node{$\complementOp{\Sigma_{j+1,j+2}}$}	(qj-1);
		\draw	(qn)	edge[above]							node{$\Sigma$}						(qn+1);
		\draw	(qn+1)	edge[loop above]					node{$\Sigma$}						(qn+1);
		\end{tikzpicture}
		\caption{$\mathcal{A}_{\myUnderbar{i}}$ if $m = n-1$, where $\myUnderbar{i} = (0,\dots,j-1,j+1,\dots,n)$.}
		\label{subfig:fl_app_A_myUnderbariA_m=n-1}
	\end{subfigure}\hfill
	\begin{subfigure}[t]{0.75\textwidth}
		\centering
		\begin{tikzpicture}[node distance=1.725cm]
		\footnotesize
		\node[state, initial, accepting] 					(q0) 	{$q_0$};
		\node[state, right of=q0, accepting] 				(q1) 	{$q_1$};
		\node[state, right of=q1, draw=none]				(empty-node1) 	{};
		\node[state, right of=empty-node1, accepting] 		(qi-1) 	{$q_{i-1}$};
		\node[state, right of=qi-1, accepting] 				(qi) 	{$q_{i}$};
		\node[state, right of=qi, draw=none]				(empty-node2) 	{};
		\node[state, right of=empty-node2, accepting]		(qn-1) 	{$q_{n-1}$};
		\node[state, right of=qn-1] 						(qn) 	{$q_n$};
		
		\draw	(q0)	edge[above]							node{$\Sigma$}					(q1);
		\draw	(q1)	edge[above,dashed]					node{$\Sigma$}					(qi-1);
		\draw	(qi-1)	edge[loop above]					node{$\Sigma\setminus\{\sigma\}$}(qi-1);
		\draw	(qi-1)	edge[above]							node{$\sigma$}					(qi);
		\draw	(qi)	edge[above,dashed]					node{$\Sigma$}					(qn-1);
		\draw	(qn-1)	edge[above]							node{$\Sigma$}					(qn);
		\draw	(qn)	edge[loop above]					node{$\Sigma$}					(qn);
		\end{tikzpicture}
		\caption{$\mathcal{A}_{\sigma,i}$.}
		\label{subfig:fl_app_A_sigmai}
	\end{subfigure}\hfill
	\begin{subfigure}[t]{0.75\textwidth}
		\centering
		\begin{tikzpicture}[node distance=2.5cm]
		\footnotesize
		\node[state, initial, accepting] 					(q0) 	{$q_0$};
		\node[state, right of=q0, accepting] 				(q1) 	{$q_1$};
		\node[state, right of=q1, draw=none]				(empty-node) 	{};
		\node[state, right of=empty-node, accepting] 		(qm-2) 	{$q_{m-2}$};
		\node[state, right of=qm-2, accepting] 				(qm-1) 	{$q_{m-1}$};
		\node[state, right of=qm-1] 						(qm) 	{$q_m$};
		
		\draw	(q0)	edge[above]						node{$\sigma_1$}					(q1);
		\draw	(q0)	edge[loop above]				node{$\Sigma \setminus \{\sigma_1\}$}(q0);
		\draw	(q1)	edge[dashed]					node{}								(qm-2);
		\draw	(q1)	edge[loop above]				node{$\Sigma \setminus \{\sigma_2\}$}(q1);
		\draw	(qm-2)	edge[above]						node{$\sigma_{m-1}$}				(qm-1);
		\draw	(qm-2)	edge[loop above]				node{$\Sigma \setminus \{\sigma_{m-1}\}$}(qm-2);
		\draw	(qm-1)	edge[above]						node{$\sigma_{m}$}					(qm);
		\draw	(qm-1)	edge[loop above]				node{$\Sigma \setminus \{\sigma_m\}$}(qm-1);
		\draw	(qm)	edge[loop above]				node{$\Sigma$}						(qm);
		\end{tikzpicture}
		\caption{$\mathcal{A}_w^!$.}
		\label{subfig:fl_app_A_w^!}
	\end{subfigure}
	\caption{DFAs $\mathcal{A}_0$ and $\hat{\mathcal{A}}_d$. DFA $\mathcal{A}_{\myUnderbar{i}}$ for $\myUnderbar{i} \in I_m$ with $m \in \{1,\dots,n-1\}$. DFA $\mathcal{A}_{\sigma,i}$ for $\sigma \in \Sigma, i \in \{1,\dots,n\}$. DFA $\mathcal{A}_w^!$ for $w = \sigma_1\dots\sigma_m \in \Sigma^m$.}
	\label{fig:fl_automataBigPicture}
\end{figure}
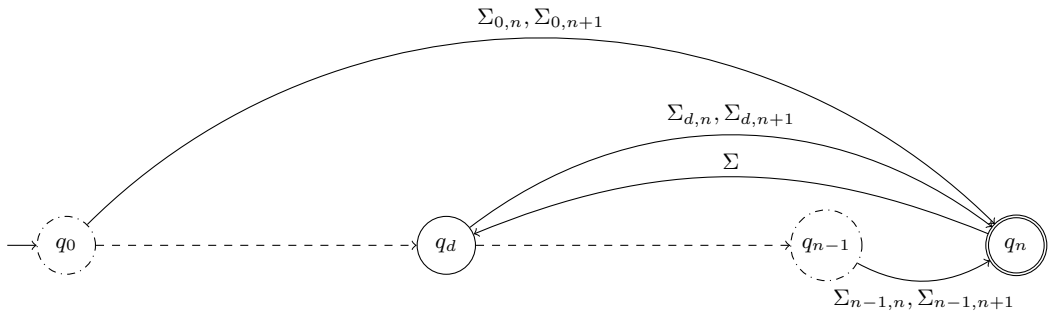
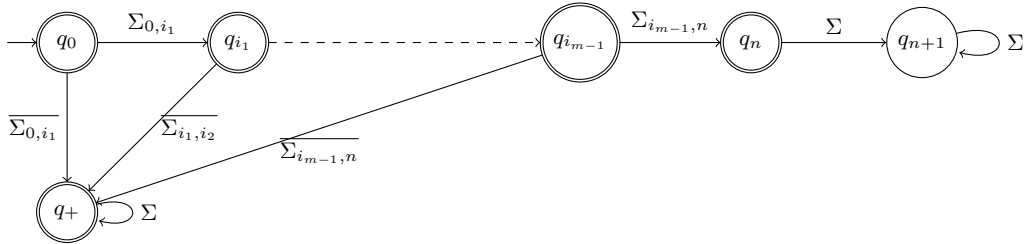
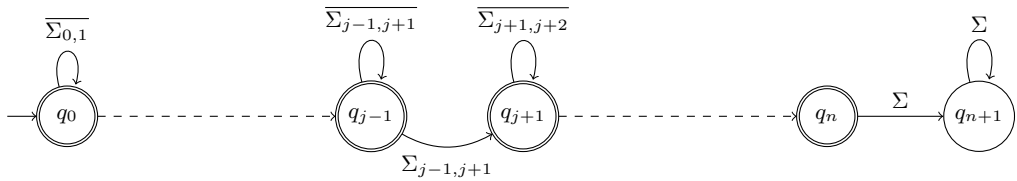
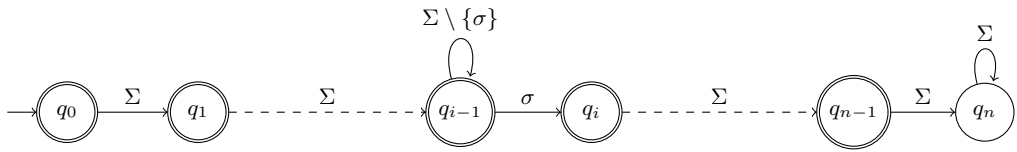
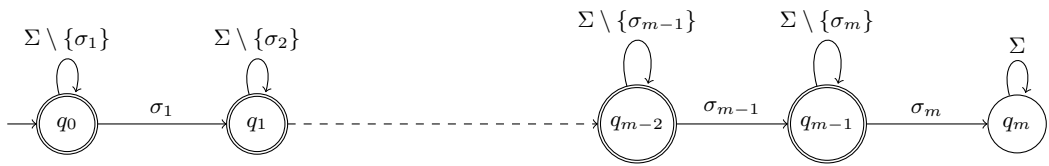

First, we consider \cref{lem:fl_A_0A_dA_myUnderbari}, which reads:
\lemFlANULLAdAMyUnderbari*
\begin{proof}[Proof of \cref{lem:fl_A_0A_dA_myUnderbari}]
	We begin by looking at (i).
	
	First, consider $\mathcal{A}_0 = (Q_0,\Sigma,q_0,\delta_0,F_0)$. As seen in \cref{subfig:fl_app_A_0}, we have $\size{\mathcal{A}_0} = n+1 < n+2 = \ind{\mathcal{A}}$. It remains to show that $\lang{\mathcal{A}} \subseteq \lang{\mathcal{A}_0}$. Let $w \in \lang{\mathcal{A}}$. Then there exists an $i \in \{0,\dots,n\}$ with $\delta(q_0,w) = q_i$ and $q_i \in F$. If $i < n$ we obviously have $\delta_0(q_0,w)=q_i$ and $q_i \in F_0$ as well. If $i = n$ then $\delta_0(q_0,w)=q_0 \in F_0$ holds. Therefore, we have $w \in \lang{\mathcal{A}_0}$. We have shown $\mathcal{A}_0 \in \alpha(\mathcal{A}_0)$.
	
	Second, consider $\hat{\mathcal{A}}_d = (\hat{Q}_d,\Sigma,q_0,\hat{\delta}_d,\hat{F}_d)$. As seen in \cref{subfig:fl_app_hatA_d}, we again have $\size{\hat{\mathcal{A}}_d} = n+1 < n+2 = \ind{\mathcal{A}}$. We need to show that $\lang{\mathcal{A}} \subseteq \lang{\hat{\mathcal{A}}_d}$. Let $w \in \lang{\mathcal{A}}$. Then there exists an $i \in \{0,\dots,n\}$ with $\delta(q_0,w) = q_i$ and $q_i \in F$. Obviously, we have $\hat{\delta}_d(q_0,w) = q_i \in \hat{F}_d$ and thus $w \in \lang{\hat{\mathcal{A}}_d}$. We have shown $\hat{\mathcal{A}}_d \in \alpha(\mathcal{A})$.
	
	Third, consider $\mathcal{A}_{\myUnderbar{i}} = (Q_{\myUnderbar{i}},\Sigma,q_0,\delta_{\myUnderbar{i}},F_{\myUnderbar{i}})$ for an $\myUnderbar{i} = (i_0,\dots,i_m) \in I_m$ where $m \in \{1,\dots,n-1\}$. 
	
	We begin by assuming $m < n-1$. As seen in \cref{subfig:fl_app_A_myUnderbariA_m<n-1}, we have $\size{\mathcal{A}_{\myUnderbar{i}}} = m+3 < (n-1)+3 = n+2 = \ind{\mathcal{A}}$. We need to show that $\lang{\mathcal{A}} \subseteq \lang{\mathcal{A}_{\myUnderbar{i}}}$. Let $w = \sigma_1\dots\sigma_l \in \Sigma^l$ with $w \notin \lang{\mathcal{A}_{\myUnderbar{i}}}$. Clearly, this implies $l > m$ with $\sigma_1 \in \Sigma_{i_0,i_1},\dots,\sigma_m \in \Sigma_{i_{m-1},i_m}$. Therefore, we have $\delta(q_0,\sigma_1\dots\sigma_m) = q_n$ and thus $\delta(q_0,w) = q_{n+1}$. Thus, we have $w \notin \lang{\mathcal{A}}$. We have shown $\mathcal{A}_{\myUnderbar{i}} \in \alpha(\mathcal{A})$ if $m < n-1$.
	
	Next, we assume $m = n-1$. As seen in \cref{subfig:fl_app_A_myUnderbariA_m=n-1}, we have $\size{\mathcal{A}_{\myUnderbar{i}}} = m+2 = (n-1)+2 = n+1 < n+2 = \ind{\mathcal{A}}$. Again, it remains to show that $\lang{\mathcal{A}} \subseteq \lang{\mathcal{A}_{\myUnderbar{i}}}$. Let $w = \sigma_1\dots\sigma_l \in \Sigma^l$ with $w \notin \lang{\mathcal{A}_{\myUnderbar{i}}}$. Clearly, this implies $l \geq n$. If $l > n$ it trivially follows that $w \notin \lang{\mathcal{A}}$. If $l = n$ we have $\sigma_1 \in \Sigma_{0,1},\dots,\sigma_{j-1} \in \Sigma_{j-2,j-1}$ and $\sigma_j \in \Sigma_{j-1,j+1}$ and $\sigma_{j+1} \in \Sigma_{j+1,j+2},\dots,\sigma_{n} \in \Sigma_{n,n+1}$, where $j$ is the one value in $\{0,\dots,n\} \setminus \{i_0,\dots,i_{n-1}\}$. Therefore, we have $\delta(q_0,w) = q_{n+1}$ and thus $w \notin \lang{\mathcal{A}}$. We have shown $\mathcal{A}_{\myUnderbar{i}} \in \alpha(\mathcal{A})$ if $m = n-1$.
	
	Together we have shown $\mathcal{A}_{\myUnderbar{i}} \in \alpha(\mathcal{A})$. This completes the proof of (i).
	
	We now turn to (ii). Consider a word $w = \sigma_1 \dots \sigma_l \in \Sigma^l$ with $w \notin L$ that is not an extension of a word $u \in L, |u|=n$. That is, we do not have $l>n$ with $\delta(q_0,\sigma_1\dots\sigma_n) = q_n$. We need to prove that $w \notin \lang{\mathcal{A}_0} \cap \lang{\hat{\mathcal{A}}_d} \cap \bigcap_{m=1}^{n-1}\bigcap_{\myUnderbar{i} \in I_m}\lang{\mathcal{A}_{\myUnderbar{i}}}$.
	
	If $w \notin \lang{\mathcal{A}_0} \cap \lang{\hat{\mathcal{A}}_d}$ we are done. Therefore, we assume $w \in \lang{\mathcal{A}_0} \cap \lang{\hat{\mathcal{A}}_d}$. Note that $w \in \lang{\hat{\mathcal{A}}_d}$ implies $\delta(q_0,w) = q_{n+1}$. Further, note that $w \in \lang{\mathcal{A}_0}$ implies that in the run of $\mathcal{A}$ on $w$ the sink $q_{n+1}$ is entered from $q_n$. That is, there exists an $m \in \natNum, m < l$ such that $\delta(q_0,\sigma_1\dots\sigma_m) = q_n$. Since by definition $\delta(q_0,\sigma_1\dots\sigma_n) \neq q_n$ holds, this implies $m < n$. This immediately implies the existence of an $\myUnderbar{i} = \{i_0,\dots,i_m\} \in I_m$ with $\sigma_1 \in \Sigma_{i_0,i_1},\dots,\sigma_m \in \Sigma_{i_{m-1},i_m}$. Thus, we have $\delta_{\myUnderbar{i}}(q_0,\sigma_1\dots\sigma_m) = q_n$ and $\delta_{\myUnderbar{i}}(q_0,w) = q_{n+1} \notin F_{\myUnderbar{i}}$ and therefore $w \notin \lang{\mathcal{A}_{\myUnderbar{i}}}$.
	
	We have shown $w \notin \lang{\mathcal{A}_0} \cap \lang{\hat{\mathcal{A}}_d} \cap \bigcap_{m=1}^{n-1}\bigcap_{\myUnderbar{i} \in I_m}\lang{\mathcal{A}_{\myUnderbar{i}}}$, which completes the proof of (ii). We are done with the proof of \cref{lem:fl_A_0A_dA_myUnderbari}.
\end{proof}

Next, we consider \cref{lem:fl_A_sigmai}, which reads:
\lemFlASigmai*
\begin{proof}[Proof of \cref{lem:fl_A_sigmai}]
	Let $\sigma \in \Sigma$ and $i \in \{1,\dots,n\}$ with $\sigma \notin \Sigma_{i-1,i}$. We consider the DFA $\mathcal{A}_{\sigma,i} = (Q_{\sigma,i},\Sigma,q_0,\delta_{\sigma,i},F_{\sigma,i})$.
	
	First, we consider (i) and prove $\mathcal{A}_{\sigma,i} \in \alpha(\mathcal{A})$. As seen in \cref{subfig:fl_app_A_sigmai}, we have $\size{\mathcal{A}_{\sigma,i}} = n+1 < n+2 = \ind{\mathcal{A}}$. It remains to show that $\lang{\mathcal{A}} \subseteq \lang{\mathcal{A}_{\sigma,i}}$. Let $w=\sigma_1\dots\sigma_m \in \Sigma^m$ with $w \notin \lang{\mathcal{A}_{\sigma,i}}$. Note that this implies $|w| \geq n$. If $|w| > n$ then $w \notin \lang{\mathcal{A}}$ holds trivially. If $|w| = n$ then for every $j \in \{0,\dots,n\}$ we have $\delta_{\sigma,i}(q_0,\sigma_1\dots\sigma_j) = q_j$. In particular, this implies $\sigma_i = \sigma$. Therefore, we have a word $w = \sigma_1\dots\sigma_n$ with $\sigma_i \notin \Sigma_{i-1,i}$, which implies $w \notin \lang{\mathcal{A}}$. We have shown $\mathcal{A}_{\sigma,i} \in \alpha(\mathcal{A})$.
	
	Second, we consider (ii). Let $m \in \natNum$ and let $w = \sigma_1\dots\sigma_m \in \Sigma^m$ such that there exists a $j \in \{1,\dots,m\}$ with $j \geq i$, $\sigma_j = \sigma$ and $m \geq j + (n-i)$. We need to show $w \notin \lang{\mathcal{A}_{\sigma,i}}$.
	
	Note that $\delta_{\sigma,i}(q_0,\sigma_1\dots\sigma_{i-1}) = q_{i-1}$. Clearly, this implies $\delta_{\sigma,i}(q_0,\sigma_1\dots\sigma_j) = q_k$ for a $k \in \{0,\dots,n\}, k \geq i$. Thus, the DFA $\mathcal{A}_{\sigma,i}$ will enter the rejecting sink $q_n$ after reading $n-k$ additional letters. Note that with $m \geq j+(n-i)$ and $k \geq i$ we have $|\sigma_{j+1}\dots\sigma_m| = m-j \geq n-i \geq n-k$. Thus, we have $w \notin \lang{\mathcal{A}_{\sigma,i}}$. We are done with (ii), which completes the proof of \cref{lem:fl_A_sigmai}.
\end{proof}

Finally, we complete our first step of the proof of \cref{cla:fl_characterization} (\ref{cla_ass:fl_characterization_non-sigmaN+non-safety}) by proving \cref{lem:fl_A_w^!}, which reads:
\lemFlAWEXCL*
\begin{proof}[Proof of \cref{lem:fl_A_w^!}]
	Let $w = \sigma_1\dots\sigma_n \in \Sigma^n$ with $w \notin L$. We consider the DFA $\mathcal{A}_w^! = (Q_w^!,\Sigma,q_0,\delta_w^!,F_w^!)$.
	
	As seen in \cref{subfig:fl_app_A_w^!}, we have $\size{\mathcal{A}_w^!} = |w|+1 = n+1 < n+2 = \ind{\mathcal{A}}$. It remains to show that $\lang{\mathcal{A}} \subseteq \lang{\mathcal{A}_w^!}$. Let $w'=\sigma_1'\dots\sigma_m' \in \Sigma^m$ with $w' \notin \lang{\mathcal{A}_w^!}$. Obviously, this implies $|w'| \geq n$. If $|w'| > n$ then $w' \notin \lang{\mathcal{A}}$ holds trivially. If $|w'| = n$ we have $\delta_w^!(q_0,\sigma_1'\dots\sigma_i') = q_i$ for every $i \in \{0,\dots,n\}$, which implies $w' = w$. Since $w \notin \lang{\mathcal{A}}$ holds by definition, we have $w' \notin \lang{\mathcal{A}}$. We are done.
\end{proof}

So far, we have proven \cref{lem:fl_A_0A_dA_myUnderbari,lem:fl_A_sigmai,lem:fl_A_w^!}. Next, we turn to \cref{lem:fl_tildeA_w}, which reads:
\lemFlTildeAW*
We will provide four additional lemmas implementing the four cases mentioned in \cref{subsec:fl_linearNonSafetyDFAs}. The lemmas get increasingly technical, but the main idea in all of them is to design for a given word $w$ with $n < |w| \leq n+(n-2)$, which is an extension of a word $u \in L, |u|=n$, a DFA $\tilde{\mathcal{A}}_w \in \alpha(\mathcal{A})$ rejecting $w$. The DFA $\tilde{\mathcal{A}}_w$ simulates the behavior of $\mathcal{A}$ for states $q_0,\dots,q_{d-1}$ and the initial run of $\tilde{\mathcal{A}}_w$ on $w$ ends in $q_d$, which is the only rejecting state of $\tilde{\mathcal{A}}_w$.

We begin with:
\begin{lemma}
	\label{lem:fl_A_w'}
	Let $m \in \natNum$ with $n < m \leq n+(n-2)$. Let $w = \sigma_1\dots\sigma_m \in \Sigma^m$ with $\sigma_1\dots\sigma_n \in L$ and $|\sigma_{d+1}\dots\sigma_m|_{\sigma_m} \leq n-d$. 
	
	Let $l = |\sigma_{d+1}\dots\sigma_{m-1}|_{\sigma_m}$. Note that $l = |\sigma_{d+1}\dots\sigma_{m-1}|_{\sigma_m} < |\sigma_{d+1}\dots\sigma_m|_{\sigma_m} \leq n-d$.
	
	Let $S$ be the set of values $i \in \{d+1,\dots,m-1\}$ with $\sigma_i = \sigma_m$, that is $S = \{i \in \{d+1,\dots,m-1\} \setDel \sigma_i = \sigma_m\}$. Let $T$ be the set of the $n-d-l$ smallest values $i \in \{d+1,\dots,m-1\}$ with $\sigma_i \neq \sigma_m$, that is $T = \{i \in \{d+1,\dots,m-1\} \setDel |\{j \in \{d+1,\dots,m-1\} \setDel \sigma_j \neq \sigma_m \wedge j \leq i\}| \leq n-d-l\}$. Note that $|S| = l < n-d$ and $|T| = n-d-l > 0$ and $S \cap T = \emptyset$.
	
	Let $I = \{i_1,\dots,i_{n-d}\} = S \cup T$ with $d+1 \leq i_1 < \dots < i_{n-d} \leq m-1$. Note that $i_1 = d+1$ and $|\sigma_{d+1}\dots\sigma_{i_{n-d}}|_{\sigma_m} = l = |\sigma_{d+1}\dots\sigma_{m-1}|_{\sigma_m}$ and $|\sigma_{i_{n-d}+1}\dots\sigma_{m}|_{\sigma_m} = 1 = |\sigma_{m}|_{\sigma_m}$.
	
	We define $\mathcal{A}_w' = (Q_w',\Sigma,q_0,\delta_w',F_w')$ as follows:
	\begin{align*}
		Q_w' &= \{q_0,\dots,q_n\} \\
		F_w' &= Q_w' \setminus \{q_d\}\\
		\delta_w'(q_j,\sigma) &= \begin{cases}
			\delta(q_j,\sigma) 	&\text{ if $j < d$ and $\delta(q_j,\sigma) \neq q_{n+1}$}\\
			q_j					&\text{ if $j < d$ and $\delta(q_j,\sigma) = q_{n+1}$}\\ 
								&\text{ (arbitrary definition possible)}\\
			q_{d+1}				&\text{ if $j=d$}\\
			q_{j+1}				&\text{ if $d < j \leq n-1$ and $\sigma = \sigma_{i_{j+1-d}}$}\\
			q_j					&\text{ if $d < j \leq n-1$ and $\sigma \neq \sigma_{i_{j+1-d}}$}\\
			q_d					&\text{ if $j=n$ and $\sigma = \sigma_m$}\\
			q_n					&\text{ else, thus if $j=n$ and $\sigma \neq \sigma_m$}
		\end{cases}.
	\end{align*}

	Then the following assertions hold:
	\begin{romanenumerate}
		\item $\mathcal{A}_w' \in \alpha(\mathcal{A})$.
		\item $w \notin \lang{\mathcal{A}_w'}$.
	\end{romanenumerate}
	Making (ii) more specific, the following holds:
	\begin{align*}
		\delta_w'(q_0,\sigma_1 \dots \sigma_d) &= q_d \\
		\delta_w'(q_0,\sigma_1 \dots \sigma_{d+1}) &= q_{d+1} \\
		\delta_w'(q_0,\sigma_1 \dots \sigma_{i_{n-d}}) &= q_n \\
		\delta_w'(q_0,\sigma_1 \dots \sigma_{m'}) &= q_n \text{ for each $m' \in \{i_{n-d},\dots,m-1\}$} \\
		\delta_w'(q_0,\sigma_1 \dots \sigma_m) &= q_d.
	\end{align*}\lipicsEnd
\end{lemma}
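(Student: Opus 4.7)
My plan is to separate (i) and (ii). For (i), the size bound $\size{\mathcal{A}_w'} = n+1 < n+2 = \ind{\mathcal{A}}$ is immediate from the definition, and I would prove the containment $\lang{\mathcal{A}} \subseteq \lang{\mathcal{A}_w'}$ by contrapositive: I show that $\delta_w'(q_0, v) = q_d$ for $v \in \Sigma^*$ implies $\delta(q_0, v) \in \{q_d, q_{n+1}\}$, which forces $v \notin \lang{\mathcal{A}}$ since $q_d \notin F$ and $q_{n+1}$ is the rejecting sink.

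The core of (i) is a parallel simulation of the runs of $\mathcal{A}_w'$ and $\mathcal{A}$. While $\mathcal{A}_w'$ stays in the prefix region $\{q_0, \ldots, q_{d-1}\}$, a straightforward induction on prefix length yields the invariant $\delta(q_0, v_{\leq j}) \in \{\delta_w'(q_0, v_{\leq j}), q_{n+1}\}$, using that every transition of $\mathcal{A}_w'$ from a state $q_k$ with $k < d$ either mimics $\mathcal{A}$ or replaces a sink-bound transition by a self-loop. Consequently, the moment $\mathcal{A}_w'$ first enters the loop region $\{q_d, \ldots, q_n\}$ at some state $q_{k^*}$, the $\mathcal{A}$-run is at $q_{k^*}$ or $q_{n+1}$. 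After that, $\mathcal{A}_w'$ cannot leave the loop region, and any return to $q_d$ requires completing the cycle $q_d \to q_{d+1} \to \ldots \to q_n \to q_d$, whose length is at least $n-d+1$; since $\mathcal{A}$ is linear and acyclic, any $n-d+1$ consecutive transitions starting from a loop-region state drive its run into $q_{n+1}$. Therefore, at every visit of $\mathcal{A}_w'$ to $q_d$, the $\mathcal{A}$-run is in $\{q_d, q_{n+1}\}$, which gives (i).

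For (ii) I would trace the initial run of $\mathcal{A}_w'$ on $w$ step by step. Reading $\sigma_1 \ldots \sigma_d$ coincides with the $\mathcal{A}$-run (which avoids the sink because $\sigma_1 \ldots \sigma_n \in L$) and ends at $q_d$; the unconditional transition from $q_d$ then takes the run to $q_{d+1}$ after $\sigma_{d+1} = \sigma_{i_1}$; and an induction on $k$ shows that the run is at $q_{d+k}$ after reading $\sigma_1 \ldots \sigma_{i_k}$ for every $k \in \{1, \ldots, n-d\}$, landing at $q_n$ at position $i_{n-d}$. Between positions $i_{n-d}+1$ and $m-1$ the run self-loops at $q_n$ because $|\sigma_{i_{n-d}+1}\ldots\sigma_m|_{\sigma_m} = 1$ forces each of these letters to differ from $\sigma_m$, and the final letter $\sigma_m$ triggers the edge $q_n \to q_d$.

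The main technical hurdle in (ii) is the inductive step, which rests on a gap claim: no position strictly between $i_k$ and $i_{k+1}$ carries the letter $\sigma_{i_{k+1}}$. I would split this along the case distinction $i_{k+1} \in S$ versus $i_{k+1} \in T$: if $i_{k+1} \in S$ then $\sigma_{i_{k+1}} = \sigma_m$ while every intermediate position $i \notin I$ satisfies $\sigma_i \neq \sigma_m$ by $S \subseteq I$; if $i_{k+1} \in T$ then any earlier non-$\sigma_m$ position would already belong to the prefix $T \subseteq I$, so the interval is in fact empty. Laying out these two cases explicitly is what keeps the bookkeeping readable.
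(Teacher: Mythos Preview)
Your plan is correct and follows the paper's approach closely: both prove (i) by showing that $\delta_w'(q_0,v)=q_d$ forces $\delta(q_0,v)\in\{q_d,q_{n+1}\}$ via the prefix-simulation plus loop-region structure, and (ii) by tracing the run on $w$ through the designed advancement pattern, though your explicit gap claim for the inductive step is more detailed than the paper's one-line assertion that $\mathcal{A}_w'$ ``advances for every occurrence of $\sigma_m$ and the first $n-d-l$ occurrences of letters other than $\sigma_m$''. One minor point in your (i): the cycle-length bound $n-d+1$ handles \emph{returns} to $q_d$, but the \emph{first} arrival at $q_d$ when the loop region is entered at some $q_{k^*}$ with $k^*>d$ only takes $\geq n-k^*+1$ steps --- this is still sufficient, since $\mathcal{A}$ starting from $q_{k^*}$ reaches $q_{n+1}$ within that many steps by linearity, so the conclusion stands.
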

\begin{proof}
	Let $m \in \natNum$ and $w = \sigma_1\dots\sigma_m \in \Sigma^m$ be as required. Let $l,S,T,I=\{i_1,\dots,i_{n-d}\}$ be as defined above.
	
	We consider (i). We have $\size{\mathcal{A}_w'} = n+1 < n+2 = \ind{\mathcal{A}}$. It remains to show that $\lang{\mathcal{A}} \subseteq \lang{\mathcal{A}_w'}$. Let $w' \in \Sigma^*$ with $w' \notin \lang{\mathcal{A}_w'}$. Since $q_d$ is the only rejecting state of $\mathcal{A}_w'$, we have $\delta_w'(q_0,w') = q_d$. Note that $\mathcal{A}_w'$ simulates the behavior of $\mathcal{A}$ for the states $q_0,\dots,q_{d-1}$ and advances exactly one state at a time for the states $q_d,\dots,q_n$.
	This clearly implies that either $\delta(q_0,w') = q_d$ or $\delta(q_0,w') = q_{n+1}$ holds. Therefore, have $w' \notin \lang{\mathcal{A}}$. We are done with (i).
	
	Next, we consider (ii). Since $\sigma_1\dots\sigma_n \in \lang{\mathcal{A}}$, we have $\delta_w'(q_0,\sigma_1\dots\sigma_d) = q_d$. From here on, the transitions are chosen in such a way that $\mathcal{A}_w'$ advances for every occurrence of $\sigma_m$ and the first $n-d-l$ occurrences of letters other than $\sigma_m$. This guarantees $\delta_w'(q_0,\sigma_1 \dots \sigma_{i_{n-d}}) = q_n$. Since $|\sigma_{i_{n-d}+1}\dots\sigma_{m}|_{\sigma_m} = 1 = |\sigma_{m}|_{\sigma_m}$, the state $q_n$ is only left when reading the final letter of $w$. That is, $\delta_w'(q_0,\sigma_1 \dots \sigma_{m-1}) = q_n$ and $\delta_w'(q_0,\sigma_1 \dots \sigma_m) = q_d$. Thus, we have $w \notin \lang{\mathcal{A}_w'}$. We are done with (ii). The proof of \cref{lem:fl_A_w'} is complete.
\end{proof}

Note that the construction of \cref{lem:fl_A_w'} critically hinges on the property $|\sigma_{d+1}\dots\sigma_m|_{\sigma_m} \leq n-d$, which allows to advance for every occurrence of $\sigma_m$ in $\sigma_{d+1}\dots\sigma_{m-1}$. If $|\sigma_{d+1}\dots\sigma_m|_{\sigma_m} > n-d$ such a simple construction is not possible because state $q_n$ would be reached with more than one $\sigma_m$ left to read. We will handle words with $|\sigma_{d+1}\dots\sigma_m|_{\sigma_m} > n-d$ with different construction, starting with:
\begin{lemma}
	\label{lem:fl_A_w''}
	Let $m \in \natNum$ with $n < m \leq n+(n-2)$. Let $w = \sigma_1\dots\sigma_m \in \Sigma^m$ such that the following conditions hold:
	\begin{enumerate}
		\item $|\sigma_{d+1}\dots\sigma_{m}|_{\sigma_m} > n-d$.
		\item $\forall v \in \Sigma^n. v \notin \lang{\mathcal{A}} \Rightarrow w \in \lang{\mathcal{A}_v^!}$. Note that this implies that every subsequence of $w$ of length $n$ is in $\lang{\mathcal{A}}$.
		\item $\sigma_{n+1} \neq \sigma_m$. Note that this implies $m \geq n+2$. Thus, we have $n+2 \leq m \leq n+(n-2)$. This further implies $n \geq 4$.
	\end{enumerate}

	We define $\mathcal{A}_w'' = (Q_w'',\Sigma,q_0,\delta_w'',F_w'')$ as follows:
	\begin{align*}
		Q_w'' &= \{q_0,\dots,q_n\} \\
		F_w'' &= Q_w'' \setminus \{q_d\} \\
		\delta_w''(q_i,\sigma) &= \begin{cases}
			\delta(q_i,\sigma) 		&\text{ if $i < n$ and $\delta(q_i,\sigma) \neq q_{n+1}$}\\
			q_i						&\text{ if $i < n$ and $\delta(q_i,\sigma) = q_{n+1}$}\\
									&\text{ (arbitrary definition possible)}\\
			q_{n-[(m-1)-(n+2)+1]}	&\text{ if $i = n$ and $\sigma = \sigma_{n+1}$}\\
			q_d						&\text{ if $i = n$ and $\sigma = \sigma_m$} \\
			q_n						&\text{ else, thus if $i = n$ and $\sigma \notin \{\sigma_{n+1},\sigma_m\}$}\\
									&\text{ (arbitrary definition possible)}
		\end{cases}.
	\end{align*}
	Note that $|\sigma_{n+2}\dots\sigma_{m-1}| = (m-1)-(n+2)+1$ holds.
	
	Then the following assertions hold:
	\begin{romanenumerate}
		\item $\mathcal{A}_w'' \in \alpha(\mathcal{A})$.
		\item $w \notin \lang{\mathcal{A}_w''}$.
	\end{romanenumerate}
	Making (ii) more specific, the following holds:
	\begin{align*}
		\delta_w''(q_0,\sigma_1 \dots \sigma_n) &= q_n \\
		\delta_w''(q_0,\sigma_1 \dots \sigma_{n+1}) &= q_{n-[(m-1)-(n+2)+1]}\\
		\delta_w''(q_0,\sigma_1 \dots \sigma_{m'}) &= q_{n-[(m-1)-(n+2)+1] + (m'-(n+1))} \text{ for each $m' \in \{n+2,\dots,m-1\}$} \\
		\delta_w''(q_0,\sigma_1 \dots \sigma_m) &= q_d.
	\end{align*}
	\lipicsEnd
\end{lemma}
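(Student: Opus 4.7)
The plan is to prove (i) and (ii) in sequence, with the bulk of the work being the verification of the explicit run description given in (ii).

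For (i), note that $\size{\mathcal{A}_w''} = n+1 < n+2 = \ind{\mathcal{A}}$, so only $\lang{\mathcal{A}} \subseteq \lang{\mathcal{A}_w''}$ needs verification. I would do the contrapositive: take $w' \in \Sigma^*$ with $\delta_w''(q_0,w') = q_d$ (the unique rejecting state) and show $w' \notin L$. The run of $\mathcal{A}_w''$ on $w'$ agrees with the run of $\mathcal{A}$ on $w'$ for as long as it stays in $\{q_0,\dots,q_{n-1}\}$ and no letter takes the current state to $q_{n+1}$ in $\mathcal{A}$. If the two runs diverge due to such a letter, $\mathcal{A}$ is in $q_{n+1}$ and stays there, so $w' \notin L$. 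If the runs first diverge because $\mathcal{A}_w''$'s run enters $q_n$ and then leaves it (via $\sigma_{n+1}$ or $\sigma_m$), then at that intermediate point $\mathcal{A}$ is also at $q_n$, and since every letter from $q_n$ in $\mathcal{A}$ leads to the sink $q_{n+1}$, again $w' \notin L$. If neither happens the two runs coincide and both end in $q_d$, which is rejecting in $\mathcal{A}$. In all cases $w' \notin L$.

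For (ii), I would verify the four displayed equations in order. Since $\sigma_1\dots\sigma_n \in L$ (as $w$ extends a word of length $n$ in $L$), the initial run of $\mathcal{A}$ on $\sigma_1\dots\sigma_n$ ends in $q_n$ without ever visiting $q_{n+1}$; hence $\mathcal{A}_w''$ performs the same run and reaches $q_n$. Applying $\sigma_{n+1}$ then fires the first $q_n$-rule (legitimately, since $\sigma_{n+1} \neq \sigma_m$) and sends the state to $q_{n-k}$, where I abbreviate $k = (m-1)-(n+2)+1$. The main obstacle is the third equation: showing that reading $\sigma_{n+2}\dots\sigma_{m-1}$ from $q_{n-k}$ advances the state by one at each step, ending at $q_n$. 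For this I use assumption 2: the word $\sigma_1\dots\sigma_{n-k}\sigma_{n+2}\dots\sigma_{m-1}$ has length $n$ and is a subsequence of $w$, so it lies in $L$. Linearity of $\mathcal{A}$ gives $\delta(q_0,\sigma_1\dots\sigma_{n-k}) = q_{n-k}$, and therefore $\delta(q_{n-k},\sigma_{n+2}\dots\sigma_{m-1}) = q_n$, with the run visiting exactly $q_{n-k},q_{n-k+1},\dots,q_n$ and in particular never $q_{n+1}$. Since $\mathcal{A}_w''$ behaves exactly like $\mathcal{A}$ on the states $q_0,\dots,q_{n-1}$ as long as no $q_{n+1}$-transition would be taken, $\mathcal{A}_w''$ performs the same step-by-step advance. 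Finally, reading $\sigma_m$ from $q_n$ triggers the second $q_n$-rule, sending the state to $q_d$, so $w$ is rejected.

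Overall the proof is essentially bookkeeping; the one genuine point is that assumption 2 is used precisely to guarantee that the bridge $\sigma_{n+2}\dots\sigma_{m-1}$ of length $k$ propagates the state cleanly from $q_{n-k}$ to $q_n$. Assumption 1 ensures this construction is meaningful (that many copies of $\sigma_m$ force us into this indirect route rather than the simpler one in \cref{lem:fl_A_w'}), and assumption 3 ensures the $\sigma_{n+1}$-rule is well-defined and distinct from the $\sigma_m$-rule at $q_n$.
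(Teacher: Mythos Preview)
Your proposal is correct and follows essentially the same route as the paper: for (i) the size bound plus the observation that $\delta_w''(q_0,w') = q_d$ forces $\delta(q_0,w') \in \{q_d,q_{n+1}\}$, and for (ii) the key subsequence $\sigma_1\dots\sigma_{n-k}\sigma_{n+2}\dots\sigma_{m-1}$ of length $n$ lying in $L$ by condition~2. The only item the paper adds that you omit is an explicit well-definedness check that $n-k = 2n+2-m$ lies in $\{0,\dots,n\}$ (in fact in $\{4,\dots,n\}$ using $n+2 \leq m \leq 2n-2$); you should include this one-line verification.
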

\begin{proof}
	Let $m \in \natNum$ and $w = \sigma_1\dots\sigma_m \in \Sigma^m$ be as required.
	
	We begin by showing that $\mathcal{A}_w''$ is well-defined. That is, we show that $n-[(m-1)-(n+2)+1] \in \{0,\dots,n\}$. Note that $n-[(m-1)-(n+2)+1] = 2n+2-m$. Since $n+2 \leq m \leq n+(n-2)$, we have $n-[(m-1)-(n+2)+1] = 2n+2-m \geq 2n+2-(n+(n-2)) = 4$ and $n-[(m-1)-(n+2)+1] = 2n+2-m \leq 2n+2-(n+2) = n$. Thus, we have $4 \leq n-[(m-1)-(n+2)+1] \leq n$. Since $n \geq 4$, this implies that $\mathcal{A}_w''$ is well-defined.
	
	Next, we turn to (i). We argue that $\mathcal{A}_w'' \in \alpha(\mathcal{A})$. The argumentation here is similar to \cref{lem:fl_A_w'} (i). We have $\size{\mathcal{A}_w''} = n+1 < n+2 = \ind{\mathcal{A}}$. Additionally, for a word $w' \in \Sigma^*$ we have $\delta_w''(q_0,w') = q_d$ only if $\delta(q_0,w')=q_d$ or $\delta(q_0,w')=q_{n+1}$, which implies $\lang{\mathcal{A}}\subseteq\lang{\mathcal{A}_w''}$. Thus, we have $\mathcal{A}_w'' \in \alpha(\mathcal{A})$.
	
	Finally, we turn to (ii).
	
	The crucial observation is that, since every subsequence of $w$ of length $n$ is in $\lang{\mathcal{A}}$, the subsequences $\sigma_1\dots\sigma_n$ and $\sigma_1\dots\sigma_{n-[(m-1)-(n+2)+1]}\sigma_{n+2}\dots\sigma_{m-1}$ are in $\lang{\mathcal{A}}$.
	
	This implies $\sigma_w''(q_0,\sigma_1\dots\sigma_n) = q_n$ and $\delta_w''(q_{n-[(m-1)-(n+2)+1]},\sigma_{n+2}\dots\sigma_{m-1}) = q_n$, which implies $\delta_w''(q_0,w) = q_d$. We are done with (ii). The proof of \cref{lem:fl_A_w''} is complete.
\end{proof}

Note that the construction of \cref{lem:fl_A_w''} critically hinges on the property $\sigma_{n+1} \neq \sigma_m$. This allows to circle back to different states when reaching $q_n$ after reading $\sigma_1\dots\sigma_n$ and $\sigma_1\dots\sigma_{m-1}$. If $\sigma_{n+1} = \sigma_m$ this construction is not possible. Therefore, we still have to handle words $w = \sigma_1\dots\sigma_m \in \Sigma_m$ with $n < m \leq n+(n-2)$, $|\sigma_{d+1}\dots\sigma_{m}|_{\sigma_m} > n-d$ and $\sigma_{n+1} = \sigma_m$ for which every subsequence of length $n$ is in $\lang{\mathcal{A}}$. As mentioned in \cref{subsec:fl_linearNonSafetyDFAs}, we will differentiate between two more cases, thus introducing two further lemmas.

We begin with:
\begin{lemma}
	\label{lem:fl_A_w'''}
	Let $m \in \natNum$ with $n < m \leq n+(n-2)$. Let $w = \sigma_1\dots\sigma_m \in \Sigma^m$ such that the following conditions hold:
	\begin{enumerate}
		\item $|\sigma_{d+1}\dots\sigma_{m}|_{\sigma_m} > n-d$.
		\item $\forall v \in \Sigma^n. v \notin \lang{\mathcal{A}} \Rightarrow w \in \lang{\mathcal{A}_v^!}$. Note that this implies that every subsequence of $w$ of length $n$ is in $\lang{\mathcal{A}}$.
		\item $\sigma_{n+1} = \sigma_m$.
		\item Let $x \in \natNumGeq{1}$ and $u \in \Sigma^*$ with $u \neq v\sigma_m$ for each $v \in \Sigma^*$ such that $w = \sigma_1\dots\sigma_du\sigma_m^x$. Let $b \in \{0,1\}$ with $b = 1$ if $\sigma_{d+1} \neq \sigma_m$ and $b = 0$ otherwise. Then $|u|_{\sigma_m} = |\sigma_{d+1}\dots\sigma_{m-x}|_{\sigma_m} < n-d-b$ holds.
	\end{enumerate}
	Let $l = |\sigma_{d+1}\dots\sigma_m|_{\sigma_m}$.
	
	We define $\mathcal{A}_w''' = (Q_w''',\Sigma,q_0,\delta_w''',F_w''')$ as follows:
	\begin{align*}
		Q_w''' &= \{q_0,\dots,q_n\} \\
		F_w''' &= Q_w''' \setminus \{q_d\} \\
		\delta_w'''(q_i,\sigma) &= \begin{cases}
			\delta(q_i,\sigma)	&\text{ if $i < d$ and $\delta(q_i,\sigma) \neq q_{n+1}$}\\
			q_i					&\text{ if $i < d$ and $\delta(q_i,\sigma) = q_{n+1}$}\\ 
								&\text{ (arbitrary definition possible)}\\
			q_{d+1}				&\text{ if $i = d$}\\
			q_{i+1}				&\text{ if $d < i < n-1$ and $\sigma = \sigma_m$}\\
			q_i					&\text{ if $d < i < n-1$ and $\sigma \neq \sigma_m$}\\
			q_{d-[x-(n-(d+b+(l-x))+1)]} &\text{ if $i = n$ and $\sigma = \sigma_m$}\\
			q_n					&\text{ else, thus if $i = n$ and $\sigma \neq \sigma_m$}
		\end{cases}.
	\end{align*}

	Then the following assertions hold:
	\begin{romanenumerate}
		\item $\mathcal{A}_w''' \in \alpha(\mathcal{A})$.
		\item $w \notin \lang{\mathcal{A}_w'''}$.
	\end{romanenumerate}
	Making (ii) more specific, the following holds:
	\begin{align*}
		\delta_w'''(q_0,\sigma_1 \dots \sigma_d) &= q_d \\
		\delta_w'''(q_0,\sigma_1 \dots \sigma_{d+1}) &= q_{d+1} \\
		\delta_w'''(q_0,\sigma_1 \dots \sigma_{m-x}) &= q_{d+b+(l-x)} \\
		\delta_w'''(q_0,\sigma_1 \dots \sigma_{m-x} \sigma_m^{n-(d+b+(l-x))}) &= q_n \\
		\delta_w'''(q_0,\sigma_1 \dots \sigma_{m-x} \sigma_m^{n-(d+b+(l-x))} \sigma_m) &= q_{d-[x-(n-(d+b+(l-x))+1)]}\\
		\delta_w'''(q_0,\sigma_1 \dots \sigma_m) &= q_d.
	\end{align*}
	\lipicsEnd
\end{lemma}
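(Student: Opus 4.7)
The construction generalizes those of Lemmas \ref{lem:fl_A_w'} and \ref{lem:fl_A_w''}: $q_d$ is again the unique rejecting state of $\mathcal{A}_w'''$, which simulates $\mathcal{A}$ on states $q_0,\dots,q_{d-1}$ (redirecting $q_{n+1}$-transitions to self-loops), unconditionally advances $q_d \to q_{d+1}$, and forms a $\sigma_m$-ladder on $q_{d+1},\dots,q_{n-1}$ (advance on $\sigma_m$, self-loop otherwise). The new ingredient is the $\sigma_m$-transition out of $q_n$, which jumps back to $q_{n+1-b-l}$ in the simulation region; a direct calculation shows the index $d-[x-(n-(d+b+(l-x))+1)]$ equals $n+1-b-l$, and it is calibrated so that the $r = d+b+l-n-1$ trailing copies of $\sigma_m$ in $w$ traverse the $\sigma_m$-chain $q_{n+1-b-l} \to q_{n+2-b-l} \to \dots \to q_d$ of $\mathcal{A}$. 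A preparatory observation is that $l \leq n-1$: otherwise $w$ would contain $\sigma_m^n$ as a length-$n$ subsequence, which by condition~(2) would force $\sigma_m^n \in \lang{\mathcal{A}}$, contradicting the standing assumption $\sigma^n \notin \lang{\mathcal{A}}$ for every $\sigma \in \Sigma$. Together with condition~(1), this ensures $q_{n+1-b-l} \in \{q_1,\dots,q_d\}$.

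For part (i), the size bound $\size{\mathcal{A}_w'''} = n+1 < n+2 = \ind{\mathcal{A}}$ is immediate, so it suffices to show the contrapositive of $\lang{\mathcal{A}} \subseteq \lang{\mathcal{A}_w'''}$: any word $v$ with $\delta_w'''(q_0, v) = q_d$ satisfies $\delta(q_0, v) \in \{q_d, q_{n+1}\}$, hence $v \notin \lang{\mathcal{A}}$. There are two ways to enter $q_d$ in $\mathcal{A}_w'''$. Direct entry from the simulation region means either $\mathcal{A}$ also lands in $q_d$, or some redirected transition was taken and $\mathcal{A}$ is already in $q_{n+1}$. Entry via the $\sigma_m$-jump from $q_n$ (possible only when $r=0$) requires first reaching $q_n$ in $\mathcal{A}_w'''$, and a short trace shows that $\mathcal{A}$ has by then already been pushed into the absorbing sink $q_{n+1}$; any further transitions of $\mathcal{A}_w'''$ leave $\mathcal{A}$ stuck at $q_{n+1}$.

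Part (ii) is a stepwise trace matching the five intermediate equalities in the statement. The step $\sigma_1\dots\sigma_d \mapsto q_d$ uses condition~(2) applied to the length-$n$ subsequence $\sigma_1\dots\sigma_n \in \lang{\mathcal{A}}$; the unconditional step $q_d \to q_{d+1}$ handles $\sigma_{d+1}$; and counting $\sigma_m$'s in $\sigma_{d+2}\dots\sigma_{m-x}$, which by the definition of $b$ equals $(l-x)-(1-b)$, yields $q_{d+b+(l-x)}$. Condition~(4) bounds this index by $n-1$, so the subsequent $n-(d+b+(l-x))$ copies of $\sigma_m$ reach $q_n$, and one more $\sigma_m$ triggers the jump to $q_{n+1-b-l}$. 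The main obstacle is the final phase: showing that the remaining $r$ copies of $\sigma_m$ advance the simulation from $q_{n+1-b-l}$ to $q_d$, which reduces to the equalities $\delta(q_{n+1-b-l+i},\sigma_m) = q_{n+2-b-l+i}$ in $\mathcal{A}$ for $i=0,\dots,r-1$. Each is extracted from condition~(2) by exhibiting an appropriate length-$n$ subsequence of $w$ whose accepting run in $\mathcal{A}$ must traverse the required $\sigma_m$-edge. The consistency of this position-counting — ensuring the needed subsequences exist and that the chosen omissions really produce accepting words — is the technical heart of the proof, and is precisely what condition~(4) is designed to guarantee.
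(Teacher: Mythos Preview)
Your proposal is correct and follows the paper's approach; the only imprecision is your final sentence, where you attribute the final phase to condition~(4). In fact, condition~(4) is used only where you first cite it (to keep $d+b+(l-x) < n$ so that $q_n$ is not reached while reading $u$), while the final-phase transitions $\sigma_m \in \Sigma_{j-1,j}$ for $j \in \{n+2-b-l,\dots,d\}$ follow from condition~(2) alone via the single subsequence $\sigma_1\dots\sigma_{n-l}\sigma_m^l \in \lang{\mathcal{A}}$, which yields $\sigma_m \in \Sigma_{j-1,j}$ for every $j \geq n-l+1$.
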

\begin{proof}
	Let $m \in \natNum$ and $w = \sigma_1\dots\sigma_m \in \Sigma^m$ be as required. Let $l,u,x,b$ be as defined above.
	
	Again we begin by showing that $\mathcal{A}_w'''$ is well defined. That is, we show that $d-[x-(n-(d+b+(l-x))+1)] \in \{0,\dots,n\}$. Note that $d-[x-(n-(d+b+(l-x))+1)] = n-l-b+1$.
	
	On the one hand, since $w \in \lang{\mathcal{A}_{\sigma_m^n}^!}$ holds, we have $l = |\sigma_{d+1}\dots\sigma_m|_{\sigma_m} \leq |w|_{\sigma_m} \leq n-1$. We further have $b \leq 1$. Therefore, we have $d-[x-(n-(d+b+(l-x))+1)] = n-l-b+1 \geq n-(n-1)-(1)+1 = 1$. On the other hand, we have $l > n-d$ and $b \geq 0$. Therefore, we have $d-[x-(n-(d+b+(l-x))+1)] = n-l-b+1 \leq n-(n-d+1)-(0)+1 = d < n$. Taken together we have $1 \leq d-[x-(n-(d+b+(l-x))+1)] \leq d < n$.
	
	First, note that per definition $n < n+(n-2)$ holds, which implies $2 < n$. Second, note that per definition $|\sigma_{d+1}\dots\sigma_m|_{\sigma_m} > n-d$ holds and, again, that we have $|\sigma_{d+1}\dots\sigma_m|_{\sigma_m} \leq n-1$. This implies $n-1 > n-d$ and thus $d > 1$.
	
	Therefore, we have $1 \leq d-[x-(n-(d+b+(l-x))+1)] \leq d < n$ and $1 < d$. The DFA $\mathcal{A}_w'''$ is well-defined.
	
	Next, we turn to (i). Again, we refer to \cref{lem:fl_A_w'}. We have $\size{\mathcal{A}_w'''} = n+1 < n+2 = \ind{\mathcal{A}}$ and for each $w' \in \Sigma^*$ we have $\delta_w'''(q_0,w') = q_d$ only if $\delta(q_0,w') = q_d$ or $\delta(q_0,w') = q_{n+1}$, which implies $w' \notin \lang{\mathcal{A}}$. Therefore, we have $\mathcal{A}_w''' \in \alpha(\mathcal{A})$.
	
	Finally, we turn to (ii). 
	
	Clearly, we have $\delta_w'''(q_0,\sigma_1\dots\sigma_d) = q_d$ and $\delta_w'''(q_0,\sigma_1\dots\sigma_{d+1}) = q_{d+1}$. Note that the first occurrence of a $\sigma_m$ in the subword $\sigma_{d+1}\dots\sigma_m$ is read in $q_{d+b}$. Note further that with $l > n-d$ there exist enough occurrences of $\sigma_m$ in $\sigma_{d+1}\dots\sigma_m$ to reach $q_n$ and afterwards $q_{d-[x-(n-(d+b+(l-x))+1)]}$. Also, by definition we have $l-x = |u|_{\sigma_m} < n-d-b$ and therefore $d+b+(l-x) < n$. This implies that $q_n$ is reached only after subword $u$ is read completely. Thus, we have $\delta_w'''(q_0,\sigma_1\dots\sigma_du) = q_{d+b+(l-x)}$. To then reach $q_n$, the subword $\sigma_m^{n-(d+b+(l-x))}$ has to be read. Therefore, we have $\delta_w'''(q_0,\sigma_1\dots\sigma_du\sigma_m^{n-(d+b+(l-x))}) = q_n$ and $\delta_w'''(q_0,\sigma_1\dots\sigma_du\sigma_m^{n-(d+b+(l-x))+1}) = q_{d-[x-(n-(d+b+(l-x))+1)]}$.
	
	After this, only the suffix $\sigma_m^{x-(n-(d+b+(l-x))+1)}$ remains, while the DFA is in the state $q_{d-[x-(n-(d+b+(l-x))+1)]}$. 
	
	Now we only need to show that $\delta_w'''(q_{d-[x-(n-(d+b+(l-x))+1)]},\sigma_m^{x-(n-(d+b+(l-x))+1)}) = q_d$ holds. It is sufficient to proof the following: $\forall i \in \{d-[x-(n-(d+b+(l-x))+1)]+1,\dots,d\}.\sigma_m \in \Sigma_{i-1,i}$.
	
	Note that $\sigma_1\dots\sigma_d\sigma_m^l$ with $l > n-d$ is a subsequence of $w$. This implies that $v = \sigma_1\dots\sigma_{n-l}\sigma_m^l$ is a subsequence of $w$. Note that $|v| = n$. Therefore, we have $v \in \lang{\mathcal{A}}$. This implies: $\forall i \in \{n-l+1,\dots,n\}.\sigma_m \in \Sigma_{i-1,i}$. 
	
	Now note that $b \leq 1$ implies $d-[x-(n-(d+b+(l-x))+1)]+1 = n-l-b+1+1 \geq n-l+1$. Therefore, we have: $\forall i \in \{d-[x-(n-(d+b+(l-x))+1)]+1,\dots,d\}.\sigma_m \in \Sigma_{i-1,i}$.
	
	As argued above, this implies $\delta_w'''(q_{d-[x-(n-(d+b+(l-x))+1)]},\sigma_m^{x-(n-(d+b+(l-x))+1)}) = q_d$ and therefore $\delta_w'''(q_0,w) = q_d$. We are done with (ii). The proof of \cref{lem:fl_A_w'''} is complete.
\end{proof}

Finally, we introduce:
\begin{lemma}
	\label{lem:fl_A_w''''}
	Let $m \in \natNum$ with $n < m \leq n+(n-2)$. Let $w = \sigma_1\dots\sigma_m \in \Sigma^m$ such that the following conditions hold:
	\begin{enumerate}
		\item $|\sigma_{d+1}\dots\sigma_{m}|_{\sigma_m} > n-d$.
		\item $\forall v \in \Sigma^n. v \notin \lang{\mathcal{A}} \Rightarrow w \in \lang{\mathcal{A}_v^!}$. Note that this implies that every subsequence of $w$ of length $n$ is in $\lang{\mathcal{A}}$.
		\item $\sigma_{n+1} = \sigma_m$.
		\item Let $x \in \natNumGeq{1}$ and $u \in \Sigma^*$ with $u \neq v\sigma_m$ for each $v \in \Sigma^*$ such that $w = \sigma_1\dots\sigma_du\sigma_m^x$. Let $b \in \{0,1\}$ with $b = 1$ if $\sigma_{d+1} \neq \sigma_m$ and $b = 0$ otherwise. Then $|u|_{\sigma_m} = |\sigma_{d+1}\dots\sigma_{m-x}|_{\sigma_m} \geq n-d-b$ holds.
		\item Let $i \in \{1,\dots,n\}$ be the maximal value with $\sigma_m \notin \Sigma_{i-1,i}$. Then $w \notin \lang{\mathcal{A}_{\sigma_m,i}}$ holds. 
	\end{enumerate}

	We define $\mathcal{A}_w'''' = (Q_w'''',\Sigma,q_0,\delta_w'''',F_w'''')$ as follows:
	\begin{align*}
		Q_w'''' &= \{q_0,\dots,q_n\} \\
		F_w'''' &= Q_w'''' \setminus \{q_d\} \\
		\delta_w''''(q_i,\sigma) &= \begin{cases}
			\delta(q_i,\sigma) 			&\text{ if $i < n$ and $\delta(q_i,\sigma) \neq q_{n+1}$}\\
			q_i							&\text{ if $i < n$ and $\delta(q_i,\sigma) = q_{n+1}$}\\
										&\text{ (arbitrary definition possible)}\\
			q_{n-[(m-x-1) - (n+2) + 1]}	&\text{ if $i = n$ and $\sigma = \sigma_m$}\\
			q_{d-x}						&\text{ if $i = n$ and $\sigma = \sigma_{m-x}$}\\
			q_n							&\text{ else, thus if $i = n$ and $\sigma \notin \{\sigma_m,\sigma_{m-x}$\}}\\
										&\text{ (arbitrary definition possible)}
		\end{cases}.
	\end{align*}

	Then the following assertions hold:
	\begin{romanenumerate}
		\item $\mathcal{A}_w'''' \in \alpha(\mathcal{A})$.
		\item $w \notin \lang{\mathcal{A}_w''''}$.
	\end{romanenumerate}
	Making (ii) more specific, the following holds:
	\begin{align*}
		\delta_w''''(q_0,\sigma_1 \dots \sigma_n) 		&= q_n \\
		\delta_w''''(q_0,\sigma_1 \dots \sigma_{n+1}) 	&= q_{n-[(m-x-1) - (n+2) + 1]} \\
		\delta_w''''(q_0,\sigma_1 \dots \sigma_{m'}) 	&= q_{n-[(m-x-1) - (n+2) + 1] + (m'-(n+1))}\\ 
														&\text{ for each $m' \in \{n+2,\dots,m-x-1\}$} \\
		\delta_w''''(q_0,\sigma_1 \dots \sigma_{m-x}) 	&= q_{d-x} \\
		\delta_w''''(q_0,\sigma_1 \dots \sigma_m) 		&= q_d.
	\end{align*}
	\lipicsEnd
\end{lemma}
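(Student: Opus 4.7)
The plan is to follow the same two-step template as in \cref{lem:fl_A_w'}, \cref{lem:fl_A_w''}, and \cref{lem:fl_A_w'''}. First I would verify well-definedness by checking that the target states $n-[(m-x-1)-(n+2)+1]=2n+2-m+x$ and $d-x$ both lie in $\{0,\ldots,n\}$, which is a routine arithmetic calculation from the bounds $n+2\le m\le n+(n-2)$ and $x\ge 1$ together with conditions 1 and 4. For (i), the size bound $\size{\mathcal{A}_w''''}=n+1<n+2=\ind{\mathcal{A}}$ is immediate. The inclusion $\lang{\mathcal{A}}\subseteq\lang{\mathcal{A}_w''''}$ follows from the simulation invariant already used in the earlier lemmas: for states $q_i$ with $i<n$ the DFA $\mathcal{A}_w''''$ either copies $\delta$ or self-loops when $\delta$ would enter $q_{n+1}$, and once a run leaves $q_n$ via one of the two special transitions, the parallel run of $\mathcal{A}$ has already been trapped in $q_{n+1}$. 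Hence a terminal state of $q_d$ in $\mathcal{A}_w''''$ forces the $\mathcal{A}$-run to end in $q_d$ or $q_{n+1}$, both non-accepting.

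For (ii) I would trace the chain of displayed equations in order. The prefix $\sigma_1\ldots\sigma_n$ is a length-$n$ subsequence of $w$, hence in $\lang{\mathcal{A}}$ by condition 2, and linearity of $\mathcal{A}$ forces each letter to advance by exactly one, so the run reaches $q_n$. At $q_n$ the letter $\sigma_{n+1}=\sigma_m$ sends the run to $q_k$ with $k=2n+2-m+x$ by definition. The next segment $q_k,q_{k+1},\ldots,q_n$ while reading $\sigma_{n+2},\ldots,\sigma_{m-x-1}$ is justified by the key observation that $\sigma_1\ldots\sigma_k\sigma_{n+2}\ldots\sigma_{m-x-1}$ has length exactly $n$, is a subsequence of $w$, and so belongs to $\lang{\mathcal{A}}$ by condition 2, forcing each of its letters to advance $\mathcal{A}$ by one. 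At $q_n$ the letter $\sigma_{m-x}$, which is not $\sigma_m$ by the decomposition $w=\sigma_1\ldots\sigma_d u\sigma_m^x$ with $u$ not ending in $\sigma_m$, sends the run directly to $q_{d-x}$ by construction.

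The hard part, and the expected main obstacle, is the closing segment $\delta_w''''(q_{d-x},\sigma_m^x)=q_d$, which demands $\sigma_m\in\Sigma_{j,j+1}$ for every $j\in\{d-x,\ldots,d-1\}$. For each such $j$ I would build a length-$n$ subsequence of $w$ of the form $\sigma_1\ldots\sigma_j\sigma_m v_j$, where $v_j$ has length $n-j-1$ and is drawn from the suffix of $w$ lying strictly after a suitably positioned $\sigma_m$-occurrence; condition 2 then places this subsequence in $\lang{\mathcal{A}}$, and linearity gives $\sigma_m\in\Sigma_{j,j+1}$. Finding a usable $\sigma_m$-occurrence is the technical crux: for $j$ close to $d-1$ the trailing block $\sigma_m^x$ can be used directly, but for small $j$ (in particular $j=d-x$) the trailing block sits too far to the right to leave room for $v_j$, so one must appeal to condition 5 together with the lower bound $|u|_{\sigma_m}\ge n-d-b$ from condition 4 to locate an intermediate $\sigma_m$ in $u$ with enough tail of $w$ behind it. Once this is handled for every $j$, chaining all segments yields $\delta_w''''(q_0,w)=q_d$, proving (ii) and completing the lemma.
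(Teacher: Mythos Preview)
Your overall structure matches the paper: well-definedness, then (i) via the simulation invariant, then (ii) by tracing the run through the displayed chain of states. The first four displayed identities in (ii) are established exactly as you describe, using that $\sigma_1\cdots\sigma_n$ and $\sigma_1\cdots\sigma_k\sigma_{n+2}\cdots\sigma_{m-x-1}$ (with $k=2n+2-m+x$) are length-$n$ subsequences of $w$.

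Two points warrant correction.

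\textbf{Well-definedness is not routine.} The hypothesis gives only $m\geq n+1$, not $n+2\leq m$ as you write; what is actually needed is $m-x\geq n+2$, and this genuinely requires condition~4. The paper proves it by a short case split on whether $m-x=d$ or $d<m-x<n+2$, each case forcing $|u|_{\sigma_m}<n-d-b$ and thus contradicting condition~4. Likewise $d-x\geq 0$ needs the bound $l:=|\sigma_{d+1}\cdots\sigma_m|_{\sigma_m}\leq n-1$, which comes from condition~2 (otherwise $\sigma_m^n$ would be a subsequence of $w$), not from conditions~1 and~4 alone.

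\textbf{The closing segment is handled differently in the paper, and your per-$j$ plan has a real gap.} To obtain $\delta_w''''(q_{d-x},\sigma_m^x)=q_d$ the paper does \emph{not} build a separate subsequence for each $j\in\{d-x,\dots,d-1\}$. Instead it proves in one stroke that the maximal index $i$ with $\sigma_m\notin\Sigma_{i-1,i}$ satisfies $i\leq d-x$; maximality then gives $\sigma_m\in\Sigma_{j-1,j}$ for all $j\in\{d-x+1,\dots,d\}$ at once. The chain is: the length-$n$ subsequence $\sigma_1\cdots\sigma_{n-l}\sigma_m^{l}$ lies in $\lang{\mathcal{A}}$, so $\sigma_m\in\Sigma_{j-1,j}$ for $j\in\{n-l+1,\dots,n\}$, hence $i\leq n-l$ and in particular $d>i-1$; condition~5 (used as $w\in\lang{\mathcal{A}_{\sigma_m,i}}$) then forces the first position $t>d$ with $\sigma_t=\sigma_m$ to satisfy $|\sigma_t\cdots\sigma_m|\leq n-i$, and since $\sigma_{m-x}\neq\sigma_m$ one of those letters is not $\sigma_m$, yielding $l\leq n-i-1$; finally $x=l-|u|_{\sigma_m}\leq (n-i-1)-(n-d-b)\leq d-i$ by condition~4. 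Your proposed route instead requires, for each $j$, a $\sigma_m$ at some position $p$ with $j<p\leq m-n+j+1$, and your heuristic that ``for $j$ close to $d-1$ the trailing block can be used directly'' would need $m-x+1\leq m-n+d$, i.e.\ $x\geq n-d+1$, which is not guaranteed. Making your approach rigorous would essentially force you to rederive the bound $i\leq d-x$ anyway; the paper's global argument via $i$ sidesteps the per-$j$ search entirely.
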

\begin{proof}
	Let $m \in \natNum$ and $w = \sigma_1\dots\sigma_m \in \Sigma^m$ be as required. Let $u,x,b$ be as defined above. Define $l = |\sigma_{d+1}\dots\sigma_m|_{\sigma_m}$.
	
	Again, we show that $\mathcal{A}_w''''$ is well-defined. That is, we show that $n-[(m-x-1) - (n+2) + 1],d-x \in \{0,\dots,n\}$.
	
	We begin by showing $n-[(m-x-1) - (n+2) + 1] \in \{0,\dots,n\}$. Note that with $w = \sigma_1\dots\sigma_du\sigma_m^x$ we have $1 \leq x \leq m-d$, which implies $d \leq m-x \leq m-1$.
	
	To show that $n-[(m-x-1) - (n+2) + 1] \in \{0,\dots,n\}$ we prove $m-x \geq n+2$ by contradiction. Assume $m-x < n+2$.
	\begin{description}
		\item[Case 1: \normalfont{$m-x = d$.}] Then $w = \sigma_1\dots\sigma_d\sigma_m^x$ holds. Since $\sigma_{d+1} = \sigma_m$, we have $b = 0$. This implies $|u|_{\sigma_m} = 0 < 1 = n-(n-1) \leq n-d = n-d-b$, but $|u|_{\sigma_m} < n-d-b$ is a contradiction to the condition $|u|_{\sigma_m} \geq n-d-b$. We are done with Case 1.
		\item[Case 2: \normalfont{$m-x > d$.}] Then $d < m-x < n+2$ holds. Because of $\sigma_{m-x} \neq \sigma_m = \sigma_{n+1}$ we have $m-x \neq n+1$ and therefore $d < m-x < n+1$. This implies:
		\begin{align*}
			|u|_{\sigma_m}	&= |\sigma_{d+1}\dots\sigma_{m-x}|_{\sigma_m} \\
			&= (1-b) + |\sigma_{d+2}\dots\sigma_{m-x}|_{\sigma_m}\\
			&= (1-b) + |\sigma_{d+2}\dots\sigma_{m-x-1}|_{\sigma_m}\\
			&\leq (1-b) + |\sigma_{d+2}\dots\sigma_{m-x-1}|\\
			&= (1-b) + (m-x-1) - (d+2) + 1\\
			&= (m-x)-b-d-1\\
			&< (m-x)-b-d\\
			&\leq n-b-d.
		\end{align*}
		Again, $|u|_{\sigma_m} < n-d-b$ is a contradiction to the condition $|u|_{\sigma_m} \geq n-d-b$. We are done with Case 2.
	\end{description}
	With Cases 1 and 2 we have shown $m-x \geq n+2$ by contradiction. This means that the last occurrence of a letter unequal to $\sigma_m$ in $w$ is in the subword $\sigma_{n+2}\dots\sigma_{m-x}$.
	
	Now we return to our proof of $n-[(m-x-1) - (n+2) + 1] \in \{0,\dots,n\}$. Since $m-x \geq n+2$, we have $n-[(m-x-1) - (n+2) + 1] = 2n+2-(m-x) \leq 2n+2-(n+2) = n$. Additionally, since $m \leq n+(n-2)$, we have $n-[(m-x-1) - (n+2) + 1] = 2n+2-(m-x) \geq 2n+2-(n+(n-2))+x = x+4$. Taken together we have $x+4 \leq n-[(m-x-1) - (n+2) + 1] \leq n$.
	
	Note that $m-x \geq n+2$ and $m \leq n+(n-2)$, which implies $n+2+x \leq n+(n-2)$ and therefore $x+4 \leq n$. We have proven $n-[(m-x-1) - (n+2) + 1] \in \{0,\dots,n\}$.
	
	Now we proof $d-x \in \{0,\dots,n\}$. Note that $l = |\sigma_{d+1}\dots\sigma_m|_{\sigma_m} = |u\sigma_m^x|_{\sigma_m} = |u|_{\sigma_m}+x$ and therefore $x = l - |u|_{\sigma_m}$. Additionally, because of $w \in \lang{\mathcal{A}_{\sigma_m^n}^!}$ we have $l \leq n-1$. Finally, per requirement $|u|_{\sigma_m} \geq n-d-b$ holds. Taken together we have:
	\begin{align*}
		x 	&= l - |u|_{\sigma_m}\\
		&\leq  (n-1) - (n-d-b)\\
		&= d+b-1\\
		&\leq d+1-1\\
		&= d. 
	\end{align*}
	This implies $d-x \geq d-d = 0$.
	
	Further, because of $d \leq n-1$ and $x \geq 1$ we have $d-x \leq (n-1)-1 = n-2$.
	
	Taken together we have $0 \leq d-x \leq n-2$. Again, note that $n < m \leq n+(n-2)$ implies $2 < n$. We have proven $d-x \in \{0,\dots,n\}$.
	
	With $n-[(m-x-1) - (n+2) + 1],d-x \in \{0,\dots,n\}$ the DFA $\mathcal{A}_w''''$ is well-defined.
	
	Next, we consider (i). The proof of $\mathcal{A}_w'''' \in \alpha(\mathcal{A})$ is again similar to \cref{lem:fl_A_w'}, since $\size{\mathcal{A}_w''''} = n+1 < n+2 = \ind{\mathcal{A}}$ holds and for each $w' \in \Sigma^*$ we have $\delta_w''''(q_0,w') = q_d$ only if $\delta(q_0,w') = q_d$ or $\delta(q_0,w') = q_{n+1}$.
	
	Finally, we look at (ii). Obviously, we have $\delta_w''''(q_0,\sigma_1\dots\sigma_n) = q_n$ and $\delta_w''''(q_0,\sigma_1\dots\sigma_{n+1}) = q_{n-[(m-x-1) - (n+2) + 1]}$.
	
	Now note that $v = \sigma_1\dots\sigma_{n-[(m-x-1) - (n+2) + 1]}\sigma_{n+2}\dots\sigma_{m-x-1}$ is a subsequence of $w$ with $|v| = n$. This implies $v \in \lang{\mathcal{A}}$ and therefore $\delta_w''''(q_{n-[(m-x-1) - (n+2) + 1]},\sigma_{n+2}\dots\sigma_{m-x-1}) = q_n$. Therefore, we have $\delta_w''''(q_0,\sigma_1\dots\sigma_{m-x-1}) = q_n$ and $\delta_w''''(q_0,\sigma_1\dots\sigma_{m-x}) = q_{d-x}$.
	
	After reading $\sigma_1\dots\sigma_{m-x}$, only the suffix $\sigma_m^x$ is left to be read and we have to show that $\delta_w''''(q_{d-x},\sigma_m^x) = d$. It is sufficient to prove: $\forall j \in \{d-x+1,\dots,d\}. \sigma_m \in \Sigma_{j-1,j}$.
	
	Let $i \in \{1,\dots,n\}$ be the largest value with $\sigma_m \notin \Sigma_{i-1,i}$. We have to show that $i < d-x+1$. To do so we consider the DFA $\mathcal{A}_{\sigma_m,i}$. Per requirement $w \in \lang{\mathcal{A}_{\sigma_m,i}}$ holds.
	
	Again, we use that $\sigma_1\dots\sigma_d\sigma_m^l$ with $l > n-d$ is a subsequence of $w$, which implies that $v = \sigma_1\dots\sigma_{n-l}\sigma_m^l$ is a subsequence of $w$ with $|v| = n$. Therefore, we have $v \in \lang{\mathcal{A}}$. This implies: $\forall j \in \{n-l+1,\dots,n\}.\sigma_m \in \Sigma_{j-1,j}$. It follows that $i < n-l+1$ and thus $i-1 < n-l$.
	
	Now note that $l > n-d$ and $n-l > i-1$ implies $d > n-l > i-1$. This immediately implies $\delta_{\sigma_m,i}(q_0,\sigma_1\dots\sigma_d) = q_j$ for some $j \in \{0,\dots,n\}, j \geq i-1$.
	
	Next, let $t \in \{d+1,\dots,m\}$ be the smallest value such that $\sigma_t = \sigma_m$. Since $\sigma_{n+1} = \sigma_m$, we have $t \leq n+1$. Since $m-x \geq n+2$, we additionally have $t < m-x$.
	
	Since $\delta_{\sigma_m,i}(q_0,\sigma_1\dots\sigma_d) = q_j$ for some $j \in \{0,\dots,n\}, j \geq i-1$ and $w \in \lang{\mathcal{A}_{\sigma_m,i}}$, after the first occurrence of $\sigma_m$ in $\sigma_{d+1}\dots\sigma_m$ there can only be $(n-1)-(j+1) \leq (n-1)-i$ additional letters. That is: $|\sigma_{t}\dots\sigma_{m}| \leq (n-1) - j \leq (n-1) - (i-1) = n-i$.
	
	This implies:
	\begin{align*}
		l	&= |\sigma_{d+1}\dots\sigma_m|_{\sigma_m}\\
		&= |\sigma_{t}\dots\sigma_m|_{\sigma_m}\\
		&= |\sigma_{t}\dots\sigma_{m-x-1}\sigma_{m-x}\sigma_{m-x+1}\dots\sigma_m|_{\sigma_m}\\
		&= |\sigma_{t}\dots\sigma_{m-x-1}\sigma_{m-x+1}\dots\sigma_m|_{\sigma_m}\\
		&\leq |\sigma_{t}\dots\sigma_{m-x-1}\sigma_{m-x+1}\dots\sigma_m|\\
		&= |\sigma_t \dots \sigma_m|-1\\
		&\leq (n-i)-1.
	\end{align*}

	With $x = l - |u|_{\sigma_m}$ and $l \leq n-i-1$ and $|u|_{\sigma_m} \geq n-d-b$ we get:
	\begin{align*}
		x	&= l - |u|_{\sigma_m}\\
		&\leq (n-i-1) - (n-d-b)\\
		&= d+b-i-1\\
		&\leq d+1-i-1\\
		&= d-i.
	\end{align*}
	Thus, we have $1 \leq x \leq d-i$. Note that with $i-1 < n-l < d$, which implies $i \leq d-1$, we indeed have $1 \leq d-i$.
	
	In conclusion, we have $x \leq d-i$ and therefore $i \leq d-x < d-x+1$, which implies: $\forall j \in \{d-x+1,\dots,d\}.\sigma_m \in \Sigma_{j-1,j}$. This implies $\delta_w''''(q_{d-x},\sigma_m^x) = q_d$ and thus $\delta_w''''(q_0,w) = q_d$. We are done with (ii). The proof of \cref{lem:fl_A_w''''} is complete.
\end{proof}

Note that the case distinction in \cref{lem:fl_A_w''',lem:fl_A_w''''} is found in the fourth condition. That is, $|u|_{\sigma_m} < n-d-b$ and $|u|_{\sigma_m} \geq n-d-b$. The fifth condition in \cref{lem:fl_A_w''''} uses the fact that our decomposition contains the DFAs of the form $\mathcal{A}_{\sigma,i}$. This allows us to restrict the construction of DFAs $\tilde{\mathcal{A}}_w$ in general to words $w$ which are not rejected by the DFAs of the form $\mathcal{A}_{\sigma,i}$. This can be seen in \cref{lem:fl_tildeA_w}, which explicitly states this condition. But since we only need this condition in \cref{lem:fl_A_w''''}, while for \cref{lem:fl_A_w',lem:fl_A_w'',lem:fl_A_w'''} it is enough to require $|w| \leq n+(n-2)$, we only state this requirement in \cref{lem:fl_A_w''''}.

Note further that we indeed need the two lemmas, \cref{lem:fl_A_w''',lem:fl_A_w''''}. 

The construction in \cref{lem:fl_A_w'''} does not work for words with $|u|_{\sigma_m} \geq n-d-b$, since after arriving in $q_n$ and circling back to $q_{d-[x-(n-(d+b+(l-x))+1)]}$, the remaining suffix is not necessarily of the form $\sigma_m^k$ for $k \in \natNum$. This is problematic, since we cannot guarantee that for the remaining letters $\sigma \neq \sigma_m$ the DFA $\mathcal{A}_w'''$ advances exactly one state. Additionally, we cannot even be certain how many letters are left to be read.

The construction in \cref{lem:fl_A_w''''} does not work for words with $|u|_{\sigma_m} < n-d-b$, since after arriving in $q_n$ for the second time and circling back to $q_{d-x}$, we cannot use the subsequence argument to ensure that $\sigma_m \in \Sigma_{j-1,j}$ for each $j \in \{d-x+1,d\}$. In other words, it is possible that not enough occurences of the letter $\sigma_m$ were read before $\mathcal{A}_w''''$ circles back to $q_{d-x}$.

Therefore, we indeed need the case distinction captured by \cref{lem:fl_A_w''',lem:fl_A_w''''}.

With \cref{lem:fl_A_w',lem:fl_A_w'',lem:fl_A_w''',lem:fl_A_w''''} in hand, the proof of \cref{lem:fl_tildeA_w} is obvious:
\begin{proof}[Proof of \cref{lem:fl_tildeA_w}]
	Let $w \in \Sigma^*$ be as required. That is, $|w| > n$ with $w \in \lang{\mathcal{A}_v^!}$ for each $v \notin L,|v|=n$ and $w \in \bigcap_{\sigma \in \Sigma} \lang{\mathcal{A}_{\sigma,i_\sigma}}$. Note that this implies $n < |w| \leq n+(n-2)$.
	
	Then the word $w$ satisfies the conditions of one lemma out of \cref{lem:fl_A_w',lem:fl_A_w'',lem:fl_A_w''',lem:fl_A_w''''}. Select the respective DFA as $\tilde{\mathcal{A}}_w$. Then DFA $\tilde{\mathcal{A}}_w$ witnesses the validity of \cref{lem:fl_tildeA_w}. We are done.
\end{proof}

We have proven \cref{lem:fl_A_0A_dA_myUnderbari,lem:fl_A_sigmai,lem:fl_A_w^!,lem:fl_tildeA_w}. This leads to a simple proof of \cref{cla:fl_characterization} (\ref{cla_ass:fl_characterization_non-sigmaN+non-safety}):
\begin{proof}[Proof of \cref{cla:fl_characterization} (\ref{cla_ass:fl_characterization_non-sigmaN+non-safety})]
	We have to show that $\mathcal{A}$ is composite.
	
	We start by defining $X',X'',X''',X''''$ as the sets of words for which the conditions of the respective lemma out of \cref{lem:fl_A_0A_dA_myUnderbari,lem:fl_A_sigmai,lem:fl_A_w^!,lem:fl_tildeA_w} hold. 
	We further define $X^! = \{w \in \Sigma^n \setDel w \notin L\}$. 
	Finally, for each $\sigma \in \Sigma$ we define $i_\sigma = \maxOp{\{i \in \{1,\dots,n\} \setDel \sigma \notin \Sigma_{i-1,i}\}}$.
	
	We prove:
	\begin{align*}
		L 	&= \lang{\mathcal{A}_0}
		\cap \lang{\hat{\mathcal{A}}_d}\\
		&\cap \bigcap_{m=1}^{n-1} \bigcap_{\myUnderbar{i} \in I_m} \lang{\mathcal{A}_{\myUnderbar{i}}}
		\cap \bigcap_{w \in X^!} \lang{\mathcal{A}_w^!}
		\cap \bigcap_{\sigma\in\Sigma} \lang{\mathcal{A}_{\sigma,i_\sigma}}\\
		&\cap \bigcap_{w \in X'} \lang{\mathcal{A}_w'}
		\cap \bigcap_{w \in X''} \lang{\mathcal{A}_w''}
		\cap \bigcap_{w \in X'''} \lang{\mathcal{A}_w'''}
		\cap \bigcap_{w \in X''''} \lang{\mathcal{A}_w''''}.
	\end{align*}
	We denote the language created by the decomposition on the right hand side with $L_\cap$.

	Note that with \cref{lem:fl_A_0A_dA_myUnderbari,lem:fl_A_sigmai,lem:fl_A_w^!,lem:fl_tildeA_w} each of the DFAs used for the decomposition is in $\alpha$($\mathcal{A}$). Therefore, they are sufficiently small and $L \subseteq L_\cap$ holds. It remains to show that $L_\cap \subseteq L$.
	
	Let $w \in \Sigma^*$ with $w \notin L$. First, assume that:
	\begin{align*}
		w &\notin \lang{\mathcal{A}_0}
		\cap \lang{\hat{\mathcal{A}}_d}\\
		&\cap \bigcap_{m=1}^{n-1} \bigcap_{\myUnderbar{i} \in I_m} \lang{\mathcal{A}_{\myUnderbar{i}}}
		\cap \bigcap_{w \in X^!} \lang{\mathcal{A}_w^!}
		\cap \bigcap_{\sigma\in\Sigma} \lang{\mathcal{A}_{\sigma,i_\sigma}}.
	\end{align*}
	Then we are done immediately. 
	
	Thus, we assume that $w$ is not rejected by these DFAs. This clearly implies $w \in X' \cup X'' \cup X''' \cup X''''$. Therefore, we have:
	\begin{align*}
		w \notin \bigcap_{w \in X'} \lang{\mathcal{A}_w'}
		\cap \bigcap_{w \in X''} \lang{\mathcal{A}_w''}
		\cap \bigcap_{w \in X'''} \lang{\mathcal{A}_w'''}
		\cap \bigcap_{w \in X''''} \lang{\mathcal{A}_w''''}.
	\end{align*}

	This implies $w \notin L_\cap$. We have shown $L_\cap \subseteq L$. The proof of \cref{cla:fl_characterization} (\ref{cla_ass:fl_characterization_non-sigmaN+non-safety}) is complete.
\end{proof}

To summarize, we have introduced four additional lemmas, \cref{lem:fl_A_w',lem:fl_A_w'',lem:fl_A_w''',lem:fl_A_w''''}, and using these have proven \cref{lem:fl_A_0A_dA_myUnderbari,lem:fl_A_sigmai,lem:fl_A_w^!,lem:fl_tildeA_w}. After this, the proof of \cref{cla:fl_characterization} (\ref{cla_ass:fl_characterization_non-sigmaN+non-safety}) was obvious.

\subsection{Proof of \texorpdfstring{\cref{cla:fl_characterization}}{Claim \ref{cla:fl_characterization}} (\ref{cla_ass:fl_characterization_non-sigmaN+CEP})}
\label{subsec:fl_charac_d}
Next, we consider \cref{cla:fl_characterization} (\ref{cla_ass:fl_characterization_non-sigmaN+CEP}). We still assume that $\mathcal{A}$ is of the form described at the end of \cref{subsec:fl_charac_aAndb}. That is, $\mathcal{A}$ is of the form displayed in \cref{fig:fl_minLinDFA}. 

We prove that $\mathcal{A}$ is composite if it has the CEP. That is, if for every $w \in \lang{\mathcal{A}}$ with $|w| = n$ there exists a compression $w'$ of $w$ such that every extension of $w'$ is rejected by $\mathcal{A}$. Note that this is equivalent to: For every $w = \sigma_1\dots\sigma_n \in \lang{\mathcal{A}}$ there exist $i \in \{0,\dots,n-2\}, l \in \{2,\dots,n-i\}$ such that $\delta(q_0,\sigma_1\dots\sigma_i\sigma_{i+l}\dots\sigma_n) \in \{q_n,q_{n+1}\}$.

We prove the compositionality of $\mathcal{A}$ by specifying a DFA $\mathcal{A}_{i,l}$ for each possible pair $i \in \{0,\dots,n-2\}, l \in \{2,\dots,n-i\}$. These DFAs will reject the extensions of words $w \in L, |w|=n$. In addition to these DFAs we will use the DFAs $\mathcal{A}_0,\hat{\mathcal{A}}_d,\mathcal{A}_{\myUnderbar{i}}$ which we have discussed in \cref{subsec:fl_charac_c}.
\begin{figure}[t]
	\centering
	\begin{tikzpicture}[node distance=1.875cm]
	\tiny
	\node[state, initial, accepting] 					(q0) 	{$q_0$};
	\node[state, right of=q0, accepting]				(q1)	{$q_1$};
	\node[state, right of=q1, draw=none]				(empty-node1) 	{};
	\node[state, right of=empty-node1, accepting] 		(qi-1) 	{$q_{i-1}$};
	\node[state, right of=qi-1, accepting] 				(qi) 	{$q_{i}$};
	\node[state, right of=qi, accepting] 				(qi+1) 	{$q_{i+1}$};
	\node[state, right of=qi+1, draw=none]				(empty-node2) 	{};
	\node[state, right of=empty-node2, accepting] 		(qi+l-2){$q_{i+l-2}$};
	\node[state, below of=qi, accepting] 				(qi+l){$q_{i+l}$};
	\node[state, right of=qi+l, accepting] 				(qi+l+1){$q_{i+l+1}$};
	\node[state, right of=qi+l+1, draw=none]			(empty-node3) 	{};
	\node[state, right of=empty-node3, accepting] 		(qn-1){$q_{n-1}$};
	\node[state, below of=qi+l, accepting] 				(qn){$q_{n}$};
	\node[state, below of=q0, draw=none]				(empty-node4) 	{};
	\node[state, below of=empty-node4] 					(qn+1){$q_{n+1}$};
	
	\draw	(q0)	edge[above]							node{$\Sigma_{0,1}$}								(q1);
	\draw	(q0)	edge[above, bend left=30]			node{$\Sigma_{0,i-1}$}								(qi-1);
	\draw	(q0)	edge[above, bend left=45]			node{$\Sigma_{0,i},\Sigma_{0,i+l-1}$}								(qi);
	\draw	(q0)	edge[above, bend left=60]			node{$\Sigma_{0,i+1}$}								(qi+1);
	\draw	(q0)	edge[above, bend left=75]			node{$\Sigma_{0,i+l-2}$}							(qi+l-2);
	\draw	(q0)	edge[below, bend right=15]			node{$\Sigma_{0,i+l}$}							(qi+l);
	\draw	(q0)	edge[below]							node{$\Sigma_{0,i+l+1}$}							(qi+l+1);
	\draw	(q0)	edge[above]							node{$\Sigma_{0,n-1}$}							(qn-1);
	\draw	(q0)	edge[below, bend right=15]			node{$\Sigma_{0,n}$}							(qn);
	\draw	(q0)	edge[right]							node{$\Sigma_{0,n+1}$}							(qn+1);
	\draw	(q1)	edge[dashed]						node{}								(qi-1);
	\draw	(qi-1)	edge[above]							node{$\Sigma_{i-1,i}$}								(qi);
	\draw	(qi)	edge[above]							node{$\complementOp{\Sigma'}$}	(qi+1);
	\draw	(qi)	edge[right]							node{$\Sigma'$}	(qi+l);
	\draw	(qi+1)	edge[above, dashed]					node{$\Sigma$}								(qi+l-2);
	\draw	(qi+l-2)edge[above, bend right=45]			node{$\Sigma$}							(qi);
	\draw	(qi+l)	edge[below]							node{$\Sigma$}							(qi+l+1);
	\draw	(qi+l+1)edge[below, dashed]					node{$\Sigma$}								(qn-1);
	\draw	(qn-1)	edge[below]							node{$\Sigma$}							(qn);
	\draw	(qn)	edge[above]							node{$\Sigma$}							(qn+1);
	\end{tikzpicture}
	\caption{DFA $\mathcal{A}_{i,l}$ for $i \in \{0,\dots,n-2\}, l \in \{2,\dots,n-i\}$. The transitions exiting $q_0$ are given. For $j \in \{1,\dots,i-1\}$, the transitions are omitted for readability purposes. It is $\Sigma' = \bigcup_{j=i+l}^{n+1}\Sigma_{i,j}$.}
	\label{fig:A_il}
\end{figure}
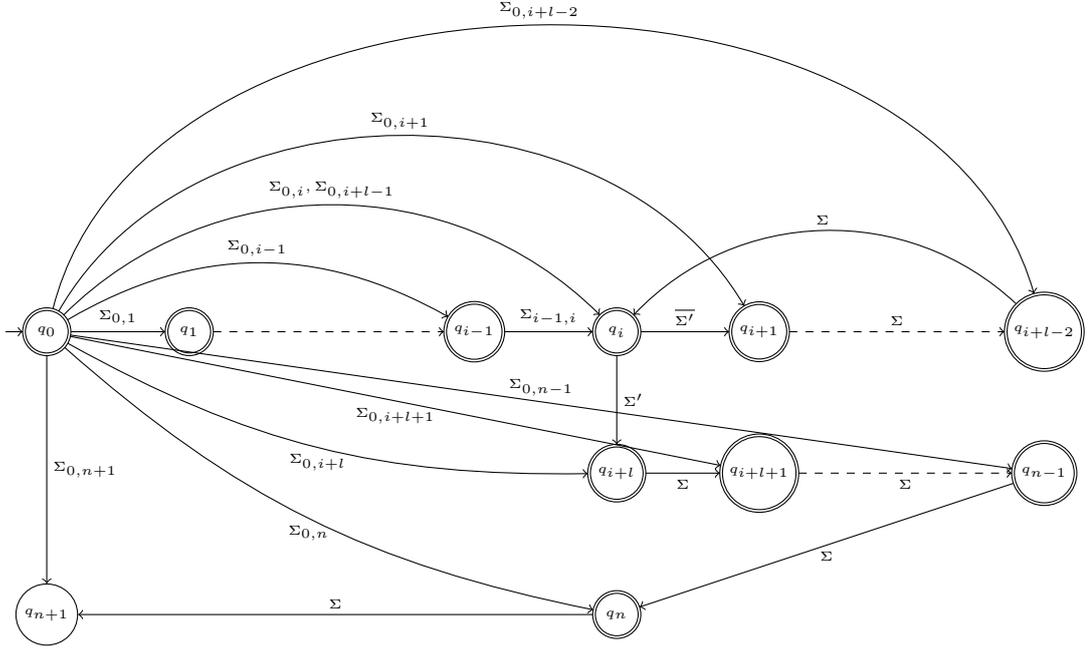

The construction of DFA $\mathcal{A}_{i,l}$ is specified in \cref{fig:A_il}. Additionally, we provide a formal definition. Let $i \in \{0,\dots,n-2\}, l \in \{2,\dots,n-i\}$ and let $\Sigma' = \bigcup_{j=i+l}^{n+1}\Sigma_{i,j}$. We define $\mathcal{A}_{i,l} = (Q_{i,l},\Sigma,q_0,\delta_{i,l},F_{i,l})$ where:
\begin{align*}
	Q_{i,l}			&= \{q_0,\dots,q_{i+l-2},q_{i+l},\dots,q_{n+1}\}\\
	F_{i,l}			&= Q_{i,l} \setminus \{q_{n+1}\}\\
	\delta_{i,l}(q_j,\sigma)&=
	\begin{cases}
		\delta(q_j,\sigma)		&\text{ if $j < i$ and $\delta(q_j,\sigma) \neq q_{i+l-1}$}\\
		q_i						&\text{ if $j < i$ and $\delta(q_j, \sigma) = q_{i+l-1}$}\\
		q_{i+1}					&\text{ if $j = i$ and $\sigma \notin \Sigma'$}\\
		q_{i+l}					&\text{ if $j = i$ and $\sigma \in \Sigma'$}\\
		q_{j+1}					&\text{ if $i < j < n+1$ and $j \neq i+l-2$}\\
		q_i						&\text{ if $j = i+l-2$}\\
		q_{n+1}					&\text{else, thus if $j = n+1$}
	\end{cases}.
\end{align*}

The following lemma states properties of $\mathcal{A}_{i,l}$:
\begin{lemma}
	\label{lem:fl_A_il}
	Let $i \in \{0,\dots,n-2\}$ and $l \in \{2,\dots,n-i\}$. Consider the DFA $\mathcal{A}_{i,l}$. The following assertions hold:
	\begin{romanenumerate}
		\item $\mathcal{A}_{i,l} \in \alpha(\mathcal{A})$.
		\item Let $w = \sigma_1\dots\sigma_n \in \Sigma^n$ with $\delta(q_0,\sigma_1\dots\sigma_i) = q_i$ and $\sigma_{i+l} \in \Sigma'$. The DFA $\mathcal{A}_{i,l}$ then rejects every extension of $w$. That is, $wv \notin \lang{\mathcal{A}_{i,l}}$ for each $v \in \Sigma^+$.\lipicsEnd
	\end{romanenumerate}
\end{lemma}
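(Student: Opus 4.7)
The size bound $|Q_{i,l}| = n+1 < n+2 = \ind{\mathcal{A}}$ is immediate, since the only state of $\mathcal{A}$ removed in the construction of $\mathcal{A}_{i,l}$ is $q_{i+l-1}$. The substantive part of (i) is the inclusion $\lang{\mathcal{A}} \subseteq \lang{\mathcal{A}_{i,l}}$, which I plan to prove by contrapositive: any word $w$ with $\delta_{i,l}(q_0,w) = q_{n+1}$ is also rejected by $\mathcal{A}$. The argument rests on two structural observations. First, a simple simulation invariant: as long as the run of $\mathcal{A}_{i,l}$ on a prefix of $w$ stays inside $\{q_0,\dots,q_{i-1}\}$, it agrees step-for-step with the run of $\mathcal{A}$, because the only altered transitions on these states are the redirects to $q_i$, which already leave the set. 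Second, a tail-rigidity fact: once the $\mathcal{A}_{i,l}$-run enters $\{q_{i+l},\dots,q_{n+1}\}$ at some step $t^*$ in state $q_{k^*}$, the continuation is forced along $q_{k^*}\to q_{k^*+1}\to\cdots\to q_{n+1}$, so reaching $q_{n+1}$ requires $|w|-t^* \geq n+1-k^*$.

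Since only $q_i$ among the loop states transitions into $\{q_{i+l},\dots,q_{n+1}\}$, the predecessor $s_{t^*-1}$ either lies in $\{q_0,\dots,q_{i-1}\}$ or equals $q_i$. In the first case the two runs coincide through step $t^*-1$; the transition at step $t^*$ is preserved because $k^* \geq i+l \neq i+l-1$ excludes the redirect, so $\mathcal{A}$ is also at $q_{k^*}$ at step $t^*$, and monotonic advance over the remaining $|w|-t^* \geq n+1-k^*$ letters gives an $\mathcal{A}$-index of at least $n+1$. I expect the case $s_{t^*-1} = q_i$, which forces $k^* = i+l$, to be the main obstacle.

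To handle it, I plan to trace back to the earliest step $t_3$ at which the $\mathcal{A}_{i,l}$-run entered the loop $\{q_i,\dots,q_{i+l-2}\}$, at some state $q_{i+r}$. Up to step $t_3-1$ the two runs still coincide, so at step $t_3$ the $\mathcal{A}$-state is $q_{i+r}$ (direct entry) or, when $r=0$, possibly $q_{i+l-1}$ (redirect). Whenever $r \geq 1$, the forced traversal from $q_{i+r}$ to $q_i$ contributes $t^*-1-t_3 \geq l-1-r$ intervening steps, and monotonicity of $\mathcal{A}$ yields $j_{t^*-1} \geq (i+r)+(l-1-r) = i+l-1$; in the loop-back variant of $r=0$ we have $t^*-1-t_3 \geq l-1$ and again $j_{t^*-1} \geq i+l-1$. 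Combining with $|w|-t^*+1 \geq n-i-l+2$ pushes the $\mathcal{A}$-index to at least $n+1$. The sole remaining subcase is the immediate exit $t^* = t_3+1$ with $r=0$: here $w_{t^*} \in \Sigma'$ sends $\mathcal{A}$ from its state $q_i$ or $q_{i+l-1}$ directly to some $q_{j_{t^*}}$ with $j_{t^*} \geq i+l$, and the remaining $|w|-t^* \geq n-i-l+1$ letters again force $\mathcal{A}$ into $q_{n+1}$, completing (i).

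For (ii) my plan is a direct simulation. The hypothesis $\delta(q_0,\sigma_1\dots\sigma_i) = q_i$ combined with linearity forces the $\mathcal{A}$-run on $\sigma_1\dots\sigma_i$ to pass through $q_0,q_1,\dots,q_i$, which the simulation invariant transfers to $\mathcal{A}_{i,l}$. From $q_i$ either $\sigma_{i+1} \in \Sigma'$ and $\mathcal{A}_{i,l}$ exits immediately to $q_{i+l}$ at step $i+1$, or the loop is traversed in $l-1$ steps back to $q_i$ at step $i+l-1$, whereupon $\sigma_{i+l} \in \Sigma'$ delivers $q_{i+l}$ at step $i+l$. In either subcase the forced tail drives the run through $q_{i+l+1},\dots,q_n$, reaching $q_n$ (or already $q_{n+1}$) by step $n$, so the first letter of any $v \in \Sigma^+$ pushes the run into $q_{n+1}$ and shows that every extension $wv$ is rejected by $\mathcal{A}_{i,l}$.
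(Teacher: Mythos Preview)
Your argument is correct in both parts, and for part~(ii) it is essentially the paper's argument with slightly more detail on why $\delta_{i,l}(q_0,\sigma_1\dots\sigma_i)=q_i$.

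For part~(i), however, you take a considerably longer route than the paper. The paper dispatches the inclusion $\lang{\mathcal{A}}\subseteq\lang{\mathcal{A}_{i,l}}$ with a single global invariant: for every prefix $u$, if $\delta(q_0,u)=q_s$ and $\delta_{i,l}(q_0,u)=q_t$ then $s\ge t$. Since $q_{n+1}$ is the only rejecting state of $\mathcal{A}_{i,l}$, this immediately gives that $\delta_{i,l}(q_0,w)=q_{n+1}$ forces $\delta(q_0,w)=q_{n+1}$. Your case analysis---splitting on where the $\mathcal{A}_{i,l}$-run first enters the tail, tracing back to the first entry into the loop, and separately handling immediate exit versus at least one full traversal---is in effect an unrolled proof of that same invariant, organized around the single target state $q_{n+1}$ rather than maintained step by step. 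All the ingredients you use (agreement on $\{q_0,\dots,q_{i-1}\}$, strict monotonic advance of $\mathcal{A}$, the loop taking $l-1$ steps to return to $q_i$) are exactly what one checks to verify $s\ge t$ inductively; you just package them as a forward trace. The paper's formulation is more economical and makes the structural reason (``$\mathcal{A}_{i,l}$ never runs ahead of $\mathcal{A}$'') transparent, whereas your version has the virtue of being fully explicit about the one delicate step---the moment $t=i$ with $\sigma\in\Sigma'$---which the paper's ``easy to see'' leaves to the reader.
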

\begin{proof}
	Let $i \in \{0,\dots,n-2\}, l \in \{2,\dots,n-i\}$. Consider the DFA $\mathcal{A}_{i,l} = (Q_{i,l},\Sigma,q_0,\delta_{i,l},F_{i,l})$.
	
	First, we turn to (i) and argue that $\mathcal{A}_{i,l} \in \alpha(\mathcal{A})$. Note that $\size{\mathcal{A}_{i,l}} = n+1 < n+2 = \ind{\mathcal{A}}$. Therefore, we only have to argue that $\lang{\mathcal{A}} \subseteq \lang{\mathcal{A}_{i,l}}$. But this is easy to see, since for every $w \in \Sigma^*$ and $s,t \in \{0,\dots,n+1\}, t \neq i+l-1$ such that $\delta(q_0,w) = q_s$ and $\delta_{i,l}(q_0,w) = q_t$, we have $s \geq t$. 
	Since this implies $\delta_{i,l}(q_0,w) = q_{n+1}$ only if $\delta(q_0,w) = q_{n+1}$ and since $q_{n+1}$ is the only rejecting state of $\mathcal{A}_{i,l}$, we have $\lang{\mathcal{A}} \subseteq \lang{\mathcal{A}_{i,l}}$.
	We have shown that $\mathcal{A}_{i,l} \in \alpha(\mathcal{A})$ and are done with (i).
	
	Second, we turn to (ii). Let $w = \sigma_1\dots\sigma_n \in \Sigma^n$ with $\delta(q_0,\sigma_1\dots\sigma_i) = q_i$ and $\sigma_{i+l} \in \Sigma'$. Then we have $\delta_{i,l}(q_0,\sigma_1\dots\sigma_{i+l}) = q_j$ for a $j \geq i+l$. To be more precise, we have $j = i+l$ if $\sigma_{i+1} \notin \Sigma'$, otherwise we have $j > i+l$. 
	
	Note that when in state $q_j$ the DFA $\mathcal{A}_{i,l}$ rejects after reading $(n+1)-j \leq (n+1)-(i+l)$ additional letters. Therefore, every extension of $w$ is rejected by $\mathcal{A}_{i,l}$. We are done with (ii). The proof of \cref{lem:fl_A_il} is complete.
\end{proof}

Note that \cref{lem:fl_A_il} (ii) critically hinges on $\sigma_{i+l} \in \Sigma'$. We now introduce a rather technical lemma, which allows us to use the DFAs $\mathcal{A}_{i,l}$ to reject the extensions of words $w \in L, |w|=n$.
\begin{lemma}
	\label{lem:fl_i+lCondition}
	Let $w = \sigma_1\dots\sigma_n \in \Sigma^n$ with $w \in \lang{\mathcal{A}}$ such that there exist $i \in \{0,\dots,n-2\},l \in \{2,\dots,n-i\}$ with $\delta(q_0,\sigma_1\dots\sigma_{i}\sigma_{i+l}\dots\sigma_n) \in \{q_n,q_{n+1}\}$. Let $i$ be the maximal value for which such an $l$ exists. Then there exists a $j \in \{i+l,\dots,n+1\}$ such that $\sigma_{i+l} \in \Sigma_{i,j}$. That is, $\sigma_{i+l} \in \Sigma'$ holds.\lipicsEnd
\end{lemma}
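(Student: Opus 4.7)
My plan is to prove the lemma by contradiction against the maximality of $i$, exploiting that any short jump out of $q_i$ on $\sigma_{i+l}$ can be rerouted through a higher-indexed prefix anchor of $w$.

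I would first establish the anchor identity $\delta(q_0,\sigma_1\dots\sigma_k) = q_k$ for every $k \in \{0,\dots,n\}$. Since $w \in \lang{\mathcal{A}}$ has the maximal length $n$ in a minimal linear ADFA, the initial run of $\mathcal{A}$ on $w$ cannot end in a non-sink state below $q_n$ (else $q_n$ is reachable from that state and some extension would produce a longer accepted word). Hence the run ends in $q_n$, and because it consists of $n+1$ strictly increasing state indices drawn from $\{0,\dots,n\}$, it must visit exactly $q_0,q_1,\dots,q_n$.

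Now set $q_j = \delta(q_i,\sigma_{i+l})$; the claim is $j \geq i+l$. The boundary case $l = n-i$ is immediate: then $q_j = \delta(q_0,\sigma_1\dots\sigma_i\sigma_n) \in \{q_n,q_{n+1}\}$, so $j \in \{n,n+1\} \subseteq \{i+l,\dots,n+1\}$. Otherwise assume $l \leq n-i-1$ and suppose for contradiction that $j < i+l$; acyclicity and linearity force $j > i$, so $i < j < i+l$. Factoring the hypothesis through $q_j$ gives $\delta(q_j,\sigma_{i+l+1}\dots\sigma_n) \in \{q_n,q_{n+1}\}$, and composing with $\delta(q_0,\sigma_1\dots\sigma_j) = q_j$ from the anchor identity yields $\delta(q_0,\sigma_1\dots\sigma_j\sigma_{i+l+1}\dots\sigma_n) \in \{q_n,q_{n+1}\}$. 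This is a valid compression of $w$ with first-index $i' = j > i$ and gap $l' = i+l+1-j$, where $l' \geq 2$ follows from $j \leq i+l-1$, $l' \leq n-i'$ reduces to the standing assumption $l \leq n-i-1$, and $i' = j \leq i+l-1 \leq n-2$. This contradicts the maximality of $i$.

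The main obstacle is spotting the realignment: replacing the short jump $q_i \to q_j$ inside the compression by the genuine length-$j$ prefix $\sigma_1\dots\sigma_j$ of $w$, which lands in $q_j$ automatically via the anchor identity. Once this reroute is identified, the remainder is routine index bookkeeping together with the trivial treatment of the edge case $l = n-i$.
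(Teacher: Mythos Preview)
Your proof is correct and shares the paper's core idea: argue by contradiction against the maximality of $i$, using the anchor identity $\delta(q_0,\sigma_1\dots\sigma_k)=q_k$ to realign the compressed run through a higher-indexed prefix. Where you differ is in execution. The paper, after assuming $\sigma_{i+l}\in\Sigma_{i,j}$ with $i<j<i+l$, tracks the compressed run $\sigma_1\dots\sigma_i\sigma_{i+l}\dots\sigma_{i+l+k}$ until the first index $k$ at which its state index catches up to the nominal position $i+l+k$, and then takes the state just before catch-up as the new $i'$. You instead take $i'=j$ immediately and set $l'=i+l+1-j$; the range checks $l'\ge 2$, $l'\le n-i'$, $i'\le n-2$ then fall out of $j\le i+l-1$ and your boundary split $l\le n-i-1$. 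This is genuinely more direct: it bypasses the catch-up argument entirely and reduces the bookkeeping. The paper's route gains nothing extra here; your version is simply a cleaner realization of the same contradiction.
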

\begin{proof}
	Before we start, we define $\indexOp{q_j} = j$ for each $q_j \in Q$.
	
	Let $w,i,l$ be as required.
	
	Note that there is a $k \in \{0,\dots,n-(i+l)\}$ such that $\indexOp{\delta(q_0,\sigma_1\dots\sigma_i\sigma_{i+l}\dots\sigma_{i+l+k})} \geq i+l+k$. This is obvious, since for $k = n-(i+l)$ we have:
	\begin{align*}
		&\indexOp{\delta(q_0,\sigma_1\dots\sigma_i\sigma_{i+l}\dots\sigma_{i+l+k})}\\
		= 		&\indexOp{\delta(q_0,\sigma_1\dots\sigma_i\sigma_{i+l}\dots\sigma_n)}\\
		\geq 	&n\\ 
		= 		&(i+l) + (n-(i+l))\\
		= 		&(i+l) + k.
	\end{align*}

	With this observation in hand, we can turn to the actual proof.	
	We employ a proof by contradiction and therefore assume $\sigma_{i,l} \in \Sigma_{i,j}$ for a $j \in \{i+1,\dots,i+l-1\}$.
	
	We briefly consider the case $i = n-2$. Then $\delta(q_0,\sigma_1\dots\sigma_i\sigma_{i+l}) = \delta(q_0,\sigma_1\dots\sigma_{n-2}\sigma_n) = q_{n-1}$ holds. This is a contradiction to $\delta(q_0,\sigma_1\dots\sigma_i\sigma_{i+l}) \in \{q_n,q_{n+1}\}$. Therefore, we can assume $i < n-2$.
	
	We will show that $\sigma_{i,l} \in \Sigma_{i,j}$ for a $j \in \{i+1,\dots,i+l-1\}$ and $i < n-2$ implies the existence of $i' \in \{0,\dots,n-2\},l' \in \{2,\dots,n-i'\}$ such that $\delta(q_0,\sigma_1\dots\sigma_{i'}\sigma_{i'+l'}\dots\sigma_n) \in \{q_n,q_{n+1}\}$ and $i' > i$, which contradicts the selection of $i$ as the largest possible value.
	
	Note that with $\sigma_{i+l} \in \Sigma_{i,j}$ we have $\indexOp{\delta(q_0,\sigma_1\dots\sigma_i\sigma_{i+l})} = \indexOp{\delta(q_i,\sigma_{i+l})} = \indexOp{q_j} = j < i+l$. Additionally, with the above observation there exists a $k \in \{0,\dots,n-(i+l)\},k>0$ such that $\indexOp{\delta(q_0,\sigma_1\dots\sigma_i\sigma_{i+l}\dots\sigma_{i+l+k})} \geq i+l+k$. Let $k$ be the minimal value for which this holds. Then we have:
	\begin{align*}
		&k \geq 1,\\
		&\indexOp{\delta(q_0,\sigma_1\dots\sigma_i\sigma_{i+l}\dots\sigma_{i+l+k-1})} < i+l+k-1,\\
		&\indexOp{\delta(q_0,\sigma_1\dots\sigma_i\sigma_{i+l}\dots\sigma_{i+l+k})} \geq i+l+k.
	\end{align*}

	We define $i' = \indexOp{\delta(q_0,\sigma_1\dots\sigma_i\sigma_{i+l}\dots\sigma_{i+l+k-1})}$. Note that $i' < i+l+k-1 < i+l+k \leq n$ and thus $i' \leq n-2$. Further, we have:
	\begin{align*}
		i' 		&= \indexOp{\delta(q_0,\sigma_1\dots\sigma_i\sigma_{i+l}\dots\sigma_{i+l+k-1})}\\
		&\geq \indexOp{\delta(q_0,\sigma_1\dots\sigma_i\sigma_{i+l}\dots\sigma_{i+l+1-1})}\\
		&= \indexOp{\delta(q_0,\sigma_1\dots\sigma_i\sigma_{i+l})}\\
		&= \indexOp{q_j}\\
		&= j\\
		&> i.
	\end{align*}
	Thus, we have $i' \in \{i+1,\dots,n-2\}$ and $i < n-2$. Therefore, we have $i' \in \{0,\dots,n-2\}$ with $i' > i$.
	
	Now note that:
	\begin{align*}
		&\delta(q_0,\sigma_1\dots\sigma_{i'}\sigma_{i+l+k}\dots\sigma_n)\\
		=		&\delta(q_{i'},\sigma_{i+l+k}\dots\sigma_n)\\
		=		&\delta(\delta(q_0,\sigma_1\dots\sigma_i\sigma_{i+l}\dots\sigma_{i+l+k-1}),\sigma_{i+l+k}\dots\sigma_n)\\
		=		&\delta(q_0,\sigma_1\dots\sigma_i\sigma_{i+l}\dots\sigma_{i+l+k-1}\sigma_{i+l+k}\dots\sigma_n)\\
		=		&\delta(q_0,\sigma_1\dots\sigma_i\sigma_{i+l}\dots\sigma_n) \in \{q_n,q_{n+1}\}.
	\end{align*}

	Let $l' = (i+l+k) - i'$. Then $\delta(q_0,\sigma_1\dots\sigma_{i'}\sigma_{i'+l'}\dots\sigma_n) \in \{q_n,q_{n+1}\}$ holds. Additionally, as explained above, we have $i' \in \{0,\dots,n-2\}$ with $i' > i$. Further, because of $i' < i+l+k-1$ we have $l' = (i+l+k)-i' > (i+l+k) - (i+l+k-1) = 1$. That is, $l' \geq 2$. Finally, because of $k \leq n-(i+l)$ we have $l'+i' = (i+l+k-i') + i' = i+l+k \leq i+l+(n-(i+l)) = n$. That is, $l'+i' \leq n$ and therefore $l' \leq n-i'$. Taken together we get $l' \in \{2,\dots,n-i'\}$.
	
	To summarize, we have $i' \in \{0,\dots,n-2\},l' \in \{2,\dots,n-i'\}$ with $i' > i$ such that  $\delta(q_0,\sigma_1\dots\sigma_{i'}\sigma_{i'+l'}\dots\sigma_n) \in \{q_n,q_{n+1}\}$. This is a contradiction to the selection of $i$ as the largest possible value.
	
	In conclusion, we have proven by contradiction that there exists a $j \in \{i+l,\dots,n+1\}$ such that $\sigma_{i+l} \in \Sigma_{i,j}$. That is, $\sigma_{i+l} \in \Sigma'$. The proof of \cref{lem:fl_i+lCondition} is complete.
\end{proof}

With \cref{lem:fl_A_il,lem:fl_i+lCondition} in hand, it is easy to prove \cref{cla:fl_characterization} (\ref{cla_ass:fl_characterization_non-sigmaN+CEP}).
\begin{proof}[Proof of \cref{cla:fl_characterization} (\ref{cla_ass:fl_characterization_non-sigmaN+CEP})]
	We assume that $\mathcal{A}$ has the CEP. We prove the compositionality of $\mathcal{A}$ by showing:
	\begin{align*}
		L &= \lang{\mathcal{A}_0}
		\cap \lang{\hat{\mathcal{A}}_d}\\
		&\cap \bigcap_{m=1}^{n-1} \bigcap_{\myUnderbar{i} \in I_m} \lang{\mathcal{A}_{\myUnderbar{i}}}
		\cap \bigcap_{i\in\{0,\dots,n-2\},l\in\{2,\dots,n-i\}} \lang{\mathcal{A}_{i,l}},
	\end{align*}
	where $d \in \{0,\dots,n\}$ can be arbitrarily selected. We denote the language created by the decomposition on the right hand side with $L_\cap$.
	
	The proof is similar to the proof of \cref{cla:fl_characterization} (\ref{cla_ass:fl_characterization_non-sigmaN+non-safety}).
	
	Note that with \cref{lem:fl_A_0A_dA_myUnderbari,lem:fl_A_il} each of the DFAs used for the decomposition is in $\alpha(\mathcal{A})$, which implies that they are sufficiently small and that $L \subseteq L_\cap$ holds. Therefore, we only need to show $L_\cap \subseteq L$.
	
	Let $w = \sigma_1\dots\sigma_m \in \Sigma^m$ with $w \notin L$. Similar to the proof of \cref{cla:fl_characterization} (\ref{cla_ass:fl_characterization_non-sigmaN+non-safety}) we begin by assuming:
	\begin{align*}
		w &\notin \lang{\mathcal{A}_0}
		\cap \lang{\hat{\mathcal{A}}_d}\\
		&\cap \bigcap_{m=1}^{n-1} \bigcap_{\myUnderbar{i} \in I_m} \lang{\mathcal{A}_{\myUnderbar{i}}}.
	\end{align*}
	Then we are done immediately.
	
	Thus, we assume that $w$ is not rejected by these DFAs. With \cref{lem:fl_A_0A_dA_myUnderbari} (ii) this implies that $w$ is an extension of a word $u \in L, |u| = n$. That is, $m > n$ and $\sigma_1\dots\sigma_n \in L$. Then per requirement there exist $i \in \{0,\dots,n-2\},l \in \{2,\dots,n-i\}$ such that $\delta(q_0,\sigma_1\dots\sigma_i\sigma_{i+l}\dots\sigma_n) \in \{q_n,q_{n+1}\}$. Let $i$ be the largest value for which such an $l$ exists. With \cref{lem:fl_i+lCondition} this implies $\sigma_{i+l} \in \Sigma_{i,j}$ for a $j \in \{i+l,\dots,n+1\}$. Then with \cref{lem:fl_A_il} we have $w \notin \lang{\mathcal{A}_{i,l}}$, since $\delta(q_0,\sigma_1\dots\sigma_i) = q_i$ and $\sigma_{i+l} \in \bigcup_{j=i+l}^{n+1}\Sigma_{i,j}$. Therefore, we have:
	\begin{align*}
		w \notin \bigcap_{i\in\{0,\dots,n-2\},l\in\{2,\dots,n-i\}} \lang{\mathcal{A}_{i,l}}.
	\end{align*}
	
	This implies $w \notin L_\cap$. Thus, we have $L_\cap \subseteq L$. The proof of \cref{cla:fl_characterization} (\ref{cla_ass:fl_characterization_non-sigmaN+CEP}) is complete.
\end{proof}

We have proven that $\mathcal{A}$ is composite if it has the CEP.
It is noteworthy that we have made no requirements regarding the accepting and rejecting states of $\mathcal{A}$, since the CEP implies compositionality regardless of these states. We have introduced a new type of DFA, $\mathcal{A}_{i,l}$, which rejects the extensions of words $w \in L, |w| = n$. Therefore, if $\mathcal{A}$ has the CEP then the more complicated construction for non-safety DFAs detailed in \cref{subsec:fl_charac_c} is not necessary.

\subsection{Proof of \texorpdfstring{\cref{cla:fl_characterization}}{Claim \ref{cla:fl_characterization}} (\ref{cla_ass:fl_characterization_linear+safety+non-CEP})}
\label{subsec:fl_charac_e}
Finally, we consider \cref{cla:fl_characterization} (\ref{cla_ass:fl_characterization_linear+safety+non-CEP}). Our goal is to show that $\mathcal{A}$ is prime if it is a safety DFA and does not have the CEP.

Therefore, we assume that $\mathcal{A}$ is a safety DFA. That is, $F = Q \setminus \{q_{n+1}\}$. We begin by proving primality of $\mathcal{A}$ if $\mathcal{A}$ does not have the CEP and if another condition is met as well. Then we will show that this condition is implied by $\mathcal{A}$ not having the CEP. Thus, the ADFA $\mathcal{A}$ is prime if it does not have the CEP.

First, we prove:
\begin{lemma}
	\label{lem:fl_strongNecessaryCondition}
	The ADFA $\mathcal{A}$ is prime, if:
	\begin{enumerate}
		\item $\neg (\Sigma_{n-1,n} \subseteq \bigcup_{j=0}^{n-1} \Sigma_{j,n+1})$, and
		\item it does not have the CEP.\lipicsEnd
	\end{enumerate}
\end{lemma}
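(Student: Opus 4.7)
The plan is to establish that $w\sigma$ is a primality witness of $\mathcal{A}$, where $w = \sigma_1\dots\sigma_n \in L$ is a word witnessing the failure of the CEP and $\sigma \in \Sigma_{n-1,n}$ is a letter satisfying $\sigma \notin \Sigma_{j,n+1}$ for every $j \in \{0,\dots,n-1\}$; both exist by the two hypotheses. Since $\delta(q_0,w) = q_n$ and $q_n$ only transitions to the rejecting sink $q_{n+1}$, we immediately get $w\sigma \notin L$. The real work lies in showing $w\sigma \in \lang{\mathcal{B}}$ for every $\mathcal{B} = (S,\Sigma,s_I,\eta,G) \in \alpha(\mathcal{A})$.

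Fix such a $\mathcal{B}$. Linearity of $\mathcal{A}$ gives $\ind{\mathcal{A}} = n+2$, so $\mathcal{B}$ has at most $n+1$ states. Define $s_i = \eta(s_I,\sigma_1\dots\sigma_i)$ for $i \in \{0,\dots,n\}$. Since $\mathcal{A}$ is a safety DFA, each prefix $\sigma_1\dots\sigma_i$ ends at $q_i \in F$ and hence lies in $L \subseteq \lang{\mathcal{B}}$, so $s_i \in G$ for every $i$. The proof then splits on whether the states $s_0,\dots,s_n$ are pairwise distinct.

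If they are pairwise distinct, then $\mathcal{B}$ has exactly $n+1$ states, all accepting, forcing $\lang{\mathcal{B}} = \Sigma^*$. The standing assumptions force $n \geq 2$, so this contradicts the minimality of $\mathcal{B}$. Hence there exists a collision $s_i = s_j$ with $0 \leq i < j \leq n$. Set $w'' = \sigma_1\dots\sigma_i\sigma_{j+1}\dots\sigma_n$; a short computation using $s_i = s_j$ yields $\eta(s_I,w''\sigma) = \eta(s_I,w\sigma)$, reducing the task to showing $w''\sigma \in L$. For $j < n$, the word $w''$ is a bona fide compression of $w$ in the CEP sense, so the failure of the CEP forces $\delta(q_0,w'') = q_k$ for some $k < n$; the defining property of $\sigma$ then yields $\delta(q_k,\sigma) \neq q_{n+1}$, so $w''\sigma \in L$ by safety. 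For $j = n$, the word $w''$ equals $\sigma_1\dots\sigma_i$ and ends at $q_i$ with $i < n$, and the same appeal to condition~1 at $q_i$ delivers $w''\sigma \in L$.

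In either subcase $w''\sigma \in L \subseteq \lang{\mathcal{B}}$, so $w\sigma \in \lang{\mathcal{B}}$, establishing that $w\sigma$ is a primality witness. The main obstacle is the boundary case $j = n$: here $w''$ is not a compression of $w$ in the strict CEP sense (compressions always retain the final letter $\sigma_n$), so the non-CEP hypothesis cannot be invoked at $w''$; this is precisely where the first condition of the lemma enters directly, applied at the state $q_i$ reached by the shortened prefix.
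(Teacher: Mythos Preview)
Your proof is correct and follows essentially the same approach as the paper: exhibit $w\sigma$ as a primality witness, find a collision $s_i = s_j$ in the run of $\mathcal{B}$ on $w$, and then use the non-CEP hypothesis for the case $j<n$ and condition~1 directly for the case $j=n$. The one minor difference is that the paper first reduces w.l.o.g.\ to $\mathcal{B}$ being a safety DFA (citing an external result) so that a rejecting sink forces the collision among only $n$ available states, whereas you obtain the collision more directly by noting that all $s_i$ are accepting and invoking minimality of $\mathcal{B}$ when they are pairwise distinct; your route is slightly more self-contained.
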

\begin{proof}
	Assume that the two conditions outlined in the lemma hold. 
	
	With the first condition there is a $\sigma \in \Sigma_{n-1,n}$ such that $\sigma \notin \bigcup_{j=0}^{n-1} \Sigma_{j,n+1}$. With the second condition there is a word $w = \sigma_1\dots\sigma_n \in \Sigma^n$ with $w \in \lang{\mathcal{A}}$ such that 
	$\delta(q_0,\sigma_1\dots\sigma_i\sigma_{i+l}\dots\sigma_n) \notin \{q_n,q_{n+1}\}$ holds for every $i \in \{0,\dots,n-2\}, l \in \{2,\dots,n-i\}$. We will show that $w\sigma$ is a primality witness of $\mathcal{A}$.
	
	Let $\mathcal{B} = (S,\Sigma,s_0,\eta,G) \in \alpha(\mathcal{A})$. It is easy to see that, since $\mathcal{A}$ is a safety DFA, we can assume w.l.o.g. that $\mathcal{B}$ is a safety DFA as well. See \cite{netser18decomposition} for more details. We will show $w\sigma \in \lang{\mathcal{B}}$.
	
	If $\lang{\mathcal{B}} = \Sigma^*$ then $w\sigma \in \lang{\mathcal{B}}$ trivially holds. Therefore, we assume $\lang{\mathcal{B}} \subset \Sigma^*$. Since $\mathcal{B}$ is a minimal safety DFA and has therefore only one rejecting state, which is a rejecting sink, the DFA $\mathcal{B}$ does not enter this sink in the run on $w$. This implies that in the run on $w$ the DFA $\mathcal{B}$ can pass only through $\size{\mathcal{B}}-1 \leq (n+1)-1 = n$ different states. Therefore, there are $i,j \in \{0,\dots,n\}$ with $i<j$ such that $\eta(s_0,\sigma_1\dots\sigma_i) = \eta(s_0,\sigma_1\dots\sigma_j)$.
	\begin{description}
		\item[Case 1: \normalfont{$j=n$.}] Then we have:
		\begin{align*}
			&\eta(s_0,w\sigma) \\
			=	&\eta(\eta(s_0,w),\sigma)\\
			=	&\eta(\eta(s_0,\sigma_1\dots\sigma_n),\sigma)\\
			=	&\eta(\eta(s_0,\sigma_1\dots\sigma_i),\sigma)\\
			=	&\eta(s_0,\sigma_1\dots\sigma_i\sigma).
		\end{align*}
		Since $w \in \lang{\mathcal{A}}$, we further have $\delta(q_0,\sigma_1\dots\sigma_i) = q_i$ for $i < j = n$. Additionally, we have per requirement $\sigma \notin \bigcup_{j=0}^{n-1} \Sigma_{j,n+1}$ and therefore in particular $\sigma \notin \Sigma_{i,n+1}$. Thus, we have $\delta(q_0,\sigma_1\dots\sigma_i\sigma) = \delta(q_i,\sigma) \neq q_{n+1}$ and therefore $\sigma_1\dots\sigma_i\sigma \in \lang{\mathcal{A}}$. Since $\eta(s_0,w\sigma) = \eta(s_0,\sigma_1\dots\sigma_i\sigma)$, this implies $w\sigma \in \lang{\mathcal{B}}$. We are done with Case 1.
		
		\item[Case 2: \normalfont{$j<n$.}] Then we have:
		\begin{align*}
			&\eta(s_0,w\sigma) \\
			=	&\eta(\eta(s_0,w),\sigma)\\
			=	&\eta(\eta(s_0,\sigma_1\dots\sigma_j),\sigma_{j+1}\dots\sigma_n\sigma)\\
			=	&\eta(\eta(s_0,\sigma_1\dots\sigma_i),\sigma_{j+1}\dots\sigma_n\sigma)\\
			=	&\eta(s_0,\sigma_1\dots\sigma_i\sigma_{j+1}\dots\sigma_n\sigma).
		\end{align*}
		Additionally, we have $0 \leq i < j < n$ and therefore $i \leq n-2$. Further, with $i < j < n$ we have $i+1 < j+1 < n+1$ and therefore $i+2 \leq j+1 \leq n$ which implies $2 \leq (j+1)-i \leq n-i$. Select $l = (j+1)-i$.
		Now we have $i \in \{0,\dots,n-2\}, l \in \{2,\dots,n-i\}$ with $\eta(s_0,w) = \eta(s_0,\sigma_1\dots\sigma_i\sigma_{i+l}\dots\sigma_n)$. 
		
		With the second condition $\delta(q_0,\sigma_1\dots\sigma_i\sigma_{i+l}\dots\sigma_n) = q_k$ holds for a $k \in \{0,\dots,n-1\}$. With condition one we then have $\sigma \notin \Sigma_{k,n+1}$ and therefore $\delta(q_0,\sigma_1\dots\sigma_i\sigma_{i+l}\dots\sigma_n\sigma) = \delta(q_k,\sigma) \neq q_{n+1}$. Thus, we have $\sigma_1\dots\sigma_i\sigma_{i+l}\dots\sigma_n\sigma \in \lang{\mathcal{A}}$. Since $\eta(s_0,w\sigma) = \eta(s_0,\sigma_1\dots\sigma_i\sigma_{i+l}\dots\sigma)$ holds, this implies $w\sigma \in \lang{\mathcal{B}}$. We are done with Case 2.
	\end{description}
	With Cases 1 and 2 we have shown $w\sigma \in \lang{\mathcal{B}}$. Therefore, every DFA in $\alpha(\mathcal{A})$ accepts $w\sigma$. This means that $w\sigma$ is a primality witness of $\mathcal{A}$, which implies the primality of $\mathcal{A}$. The proof of \cref{lem:fl_strongNecessaryCondition} is complete.
\end{proof}

We have established a sufficient condition for the primality of $\mathcal{A}$. Now we proof that the second condition, the ADFA $\mathcal{A}$ not having the CEP, implies the first condition. More precisely, we prove that already a weakened form of the second condition, the ADFA $\mathcal{A}$ not having a stronger property than the CEP, already implies the first condition.
\begin{lemma}
	\label{lem:fl_conditionImplication}
	The following assertion holds:
	\begin{align*}
		&\neg (\forall w = \sigma_1 \dots \sigma_n \in L. \exists i \in \{0,\dots, n-2\}. \exists l \in \{2,\dots,n-i\}. \\
		&\hspace{2cm} \delta(q_0,\sigma_1 \dots \sigma_i \sigma_{i+l} \dots \sigma_n) = q_{n+1})\\
		\Rightarrow
		&\neg (\Sigma_{n-1,n} \subseteq \bigcup_{j=0}^{n-1} \Sigma_{j,n+1})
	\end{align*}
	\lipicsEnd
\end{lemma}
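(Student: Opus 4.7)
The plan is to prove the contrapositive. That is, assume $\Sigma_{n-1,n} \subseteq \bigcup_{j=0}^{n-1}\Sigma_{j,n+1}$ and show that for every $w = \sigma_1\dots\sigma_n \in L$ there exist indices $i \in \{0,\dots,n-2\}$ and $l \in \{2,\dots,n-i\}$ such that the compressed word $\sigma_1\dots\sigma_i\sigma_{i+l}\dots\sigma_n$ drives $\mathcal{A}$ into the rejecting sink $q_{n+1}$.

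Fix $w = \sigma_1\dots\sigma_n \in L$. Since $\mathcal{A}$ is linear and $w$ is accepted, we have $\delta(q_0,\sigma_1\dots\sigma_k) = q_k$ for every $k \in \{0,\dots,n\}$; in particular $\sigma_n \in \Sigma_{n-1,n}$. The hypothesis then yields some $j \in \{0,\dots,n-1\}$ with $\sigma_n \in \Sigma_{j,n+1}$. The key observation is that determinism forces $j \neq n-1$: otherwise $\sigma_n$ would witness both $\delta(q_{n-1},\sigma_n)=q_n$ and $\delta(q_{n-1},\sigma_n)=q_{n+1}$, which is impossible. Hence $j \leq n-2$.

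With this $j$ in hand, I would set $i := j$ and $l := n - j$. Then $i \in \{0,\dots,n-2\}$ and $l \in \{2,\dots,n-i\}$ (using $j \leq n-2$ for $l \geq 2$ and $j \geq 0$ for $l \leq n - i$). The compressed word is $\sigma_1\dots\sigma_j\sigma_n$, and we compute
\begin{align*}
\delta(q_0,\sigma_1\dots\sigma_j\sigma_n) = \delta(\delta(q_0,\sigma_1\dots\sigma_j),\sigma_n) = \delta(q_j,\sigma_n) = q_{n+1},
\end{align*}
using $\sigma_n \in \Sigma_{j,n+1}$ in the last step. This is precisely the compression required, completing the contrapositive.

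There is no real obstacle here; the only subtle point is spotting the determinism argument that rules out $j = n-1$, which is what ensures $l \geq 2$ so that the chosen indices constitute a genuine compression rather than a vacuous one.
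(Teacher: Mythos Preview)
Your proof is correct and essentially identical to the paper's: both argue by contraposition, pick an arbitrary $w=\sigma_1\dots\sigma_n\in L$, use $\sigma_n\in\Sigma_{n-1,n}$ together with the inclusion hypothesis to obtain $\sigma_n\in\Sigma_{j,n+1}$ for some $j\le n-2$ (ruling out $j=n-1$ by determinism), and then set $i=j$, $l=n-j$ to get the required compression into $q_{n+1}$. The only cosmetic difference is that the paper calls the index $i$ where you call it $j$.
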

\begin{proof}
	We proof the contraposition. Therefore we assume that $\Sigma_{n-1,n} \subseteq \bigcup_{j=0}^{n-1} \Sigma_{j,n+1}$.
	Let $w = \sigma_1\dots\sigma_n \in \Sigma^n$ with $w \in \lang{\mathcal{A}}$. We need to show that there are $i \in \{0,\dots,n-2\}, l \in \{2,\dots,n-i\}$ such that $\delta(q_0,\sigma_1\dots\sigma_i\sigma_{i+l}\dots\sigma_n) = q_{n+1}$.
	
	Clearly, we have $\sigma_j \in \Sigma_{j-1,j}$ for each $j \in \{1,\dots,n\}$. In particular, we have $\sigma_n \in \Sigma_{n-1,n}$. Since $\Sigma_{n-1,n} \subseteq \bigcup_{j=0}^{n-1} \Sigma_{j,n+1}$, there then exists an $i \in \{0,\dots,n-1\}$ with $\sigma_n \in \Sigma_{i,n+1}$. It clearly holds that $i \neq n-1$ and therefore we have $i \in \{0,\dots,n-2\}$.
	
	Now we select $l = n-i$. Note that this implies $l = n-i \in \{2,\dots,n-i\}$. Then we have:
	\begin{align*}
		&\delta(q_0,\sigma_1\dots\sigma_i\sigma_{i+l}\dots\sigma_n)\\
		=	&\delta(q_0,\sigma_1\dots\sigma_i\sigma_n)\\
		=	&\delta(q_i,\sigma_n)\\
		=	&q_{n+1}.
	\end{align*}
	
	In conclusion, we have proven the existence of values $i \in \{0,\dots,n-2\}, l \in \{2,\dots,n-i\}$ such that $\delta(q_0,\sigma_1\dots\sigma_i\sigma_{i+l}\dots\sigma_n) = q_{n+1}$. Thus, we have proven the contraposition of the implication and therefore the implication itself. The proof of \cref{lem:fl_conditionImplication} is complete.
\end{proof}

With \cref{lem:fl_strongNecessaryCondition,lem:fl_conditionImplication} in hand, the proof of \cref{cla:fl_characterization} (\ref{cla_ass:fl_characterization_linear+safety+non-CEP}) is trivial:
\begin{proof}[Proof of \cref{cla:fl_characterization} (\ref{cla_ass:fl_characterization_linear+safety+non-CEP})]
	Assume that the safety DFA $\mathcal{A}$ does not have the CEP. With \cref{lem:fl_conditionImplication} this implies $\neg (\Sigma_{n-1,n} \subseteq \bigcup_{j=0}^{n-1} \Sigma_{j,n+1})$. Therefore, both conditions of \cref{lem:fl_strongNecessaryCondition} are satisfied and $\mathcal{A}$ is prime. We are done.
\end{proof}

We have proven \cref{cla:fl_characterization} (\ref{cla_ass:fl_characterization_linear+safety+non-CEP}). That is, we have proven the primality of $\mathcal{A}$ if $\mathcal{A}$ is a safety DFA and does not have the CEP. 

\subsection{Concluding remarks}
Our goal for \cref{sec:fl_proofs} was to prove \cref{the:fl_characterization}, thereby completely characterizing the compositionality of ADFAs and thus of finite languages. To do so, we set out to prove \cref{cla:fl_characterization} (\ref{cla_ass:fl_characterization_non-linear})-(\ref{cla_ass:fl_characterization_linear+safety+non-CEP}), which taken together imply \cref{the:fl_characterization}.

In \cref{subsec:fl_charac_aAndb,subsec:fl_charac_c,subsec:fl_charac_d,subsec:fl_charac_e} we have proven (\ref{cla_ass:fl_characterization_non-linear})-(\ref{cla_ass:fl_characterization_linear+safety+non-CEP}) one after the other. Note that while (\ref{cla_ass:fl_characterization_non-linear}) and (\ref{cla_ass:fl_characterization_linear+sigmaN}), which cover the cases of non-linear ADFAs and linear ADFAs with a $\sigma^n \in \lang{\mathcal{A}}$, were fairly simple to prove, the remaining (\ref{cla_ass:fl_characterization_non-sigmaN+non-safety})-(\ref{cla_ass:fl_characterization_linear+safety+non-CEP}) covering linear ADFAs with $\sigma^n \notin \lang{\mathcal{A}}$ for all $\sigma \in \Sigma$ required a lot more work. The difficulty arose from extensions of words $w \in \lang{\mathcal{A}}, |w|=n$.

First, we have seen that if such an ADFA is not a safety DFA then it is composite. This holds because one of the additional rejecting states can be used to construct DFAs rejecting the mentioned extensions. These DFAs do not need a rejecting sink and instead circle back from their last state $q_n$ to an earlier state after reading an appropriate prefix of the extension.

Second, we have seen that if the ADFA has the CEP then it is composite, regardless of its accepting and rejecting states. This holds because using the CEP we can construct DFAs rejecting the extensions. They essentially omit one state of the original ADFA and can thus employ a rejecting sink.

Finally, we have seen that if the ADFA is a safety DFA and does not have the CEP then the ADFA is prime. This holds because a safety DFA can be decomposed into safety DFAs. Therefore, the DFAs used in the decomposition have to employ a rejecting sink and are therefore, intuitively speaking, one state short to read words of length $n$. Thus, they are necessarily confused about at least two prefixes of a word of length $n$. With the CEP not holding, we have shown that this implies primality.

This concludes the proof of \cref{the:fl_characterization} and thereby the characterization of the compositionality of ADFAs and thus of finite languages. This also completes the proofs for \cref{sec:fl_characterization}.

\section{Proofs for \texorpdfstring{\cref{sec:fl_complexity}}{Section \ref{sec:fl_complexity}}}
We use the characterization of the compositionality of ADFAs to prove:
\theFlPrimeDFAFinComplexity*

We will start by showing that $\primeDFAfin{}$ is in \complexityClassFont{NL}, before proving the \complexityClassFont{NL}-hardness.
To prove that $\primeDFAfin{}$ is in \complexityClassFont{NL}, we argue:
\begin{lemma}
	\label{lem:fl_PrimeDFAFinInNL}
	\cref{alg:fl_primeDFAFinNLalgorithm} is an \complexityClassFont{NL}-algorithm for $\primeDFAfin{}$.\lipicsEnd
\end{lemma}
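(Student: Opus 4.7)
The plan is to establish both correctness and the \complexityClassFont{NL} space bound for \cref{alg:fl_primeDFAFinNLalgorithm}. Correctness will be proved block by block against \cref{the:fl_characterization}, and the space bound follows from a direct inspection of what the algorithm has to store.

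First I would handle the preparatory blocks. Line 1 disposes of the empty language, for which primality is trivial. Lines 2--18 count in $c$ the states of the input DFA that are either unreachable or equivalent to some reachable preceding state, so that $(m+1)-c = \ind{\mathcal{A}}$. Consequently, line 19 sets $n = \ind{\mathcal{A}} - 2$. Since for a minimal ADFA recognizing a non-empty finite language the length of the longest accepted word is at most $\ind{\mathcal{A}}-2$ with equality iff the minimal ADFA is linear, the nondeterministic guess in line 20 succeeds on some branch iff the minimal ADFA is linear and $n$ is the length of the longest word. Subsequent blocks may therefore assume linearity.

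Next, line 21 directly covers case (i) of \cref{the:fl_characterization}: some branch accepts iff $\sigma^n \in \lang{\mathcal{A}}$ for some $\sigma \in \Sigma$. In the branches that reach lines 22--24, the algorithm rejects whenever there is a reachable state $q_i \notin F$ with $\lang{\mathcal{A}^{q_i}} \neq \emptyset$, which by \cref{cla:fl_characterization}~(\ref{cla_ass:fl_characterization_non-sigmaN+non-safety}) is sound once case (i) has failed: on a prime branch $\mathcal{A}$ must be a safety DFA. Finally, lines 25--29 are designed to verify failure of the CEP. Here I would exploit that, since $w \in \lang{\mathcal{A}}$ of length $n$ gives $\delta(q_0,\sigma_1\dots\sigma_i) = q_i$, whether the compression $\sigma_1\dots\sigma_i\sigma_{i+l}\dots\sigma_n$ admits an extension in $\lang{\mathcal{A}}$ depends only on $i$ and on the tail letters $\sigma_x,\sigma_{x+1},\dots,\sigma_n$ where $x = i+l$. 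This localization per position is what lets the algorithm decompose the $\exists w\,\forall (i,l)$ structure of the negation of the CEP into an outer loop over $x$ in which only the single letter $\sigma_x$ must be transmitted to the inner loop.

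The main obstacle will be the equivalence between the nested nondeterministic test of lines 25--29 and "$\mathcal{A}$ fails the CEP". The forward direction is immediate from a NOT-CEP witness $w$ by taking the same $w$ as every inner guess $w'$. For the converse I would assemble the per-$x$ letters $\sigma_x$ produced by successful runs into a single word $w = \sigma_1\dots\sigma_n$; the constraints $\sigma_x \in \Sigma_{x-1,x}$ forced by $w \in \lang{\mathcal{A}}$ in each outer guess ensure $w \in \lang{\mathcal{A}}$, and the per-$(i,l)$ accepted extensions then have to be lifted to extensions of the compressions of this single $w$, exploiting once more the linear safety structure to rule out that a different choice of tail would leave a given $(i,l)$-compression stuck in $\{q_n,q_{n+1}\}$. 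For space usage, each counter ($c$, $i$, $j$, $l$, $x$), each simulated DFA state, and the single stored letter $\sigma_x$ need only $\bigOOp{\log\size{\mathcal{A}}}$ bits; guessed words are consumed one letter at a time while the relevant DFA simulations are updated on the fly, so no word is ever stored in full. The auxiliary queries "$q_i$ is reachable", "$\lang{\mathcal{A}^{q_i}} \neq \emptyset$" and "$\lang{\mathcal{A}^{q_i}} = \lang{\mathcal{A}^{q_j}}$" reduce to reachability in $\mathcal{A}$ or in a product DFA and are standard \complexityClassFont{NL} subroutines, completing the \complexityClassFont{NL} bound.
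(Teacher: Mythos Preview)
Your plan is correct and tracks the paper's proof block by block, including the key idea for the CEP converse of assembling a single witness word from the per-$x$ letters $\sigma_x$. The one step you leave vague—lifting the per-$(i,l)$ successes to the assembled word despite the differing tails—is exactly where the paper does the real work: it first observes that a successful inner check forces $\delta(q_i,\sigma_x)=q_j$ with $j<i+l$ (since linearity would otherwise push any tail of length $n-(i+l)$ to $q_n$ or beyond, making every choice of $w'$ fail), and then invokes \cref{lem:fl_i+lCondition} to conclude that the assembled word has no compression landing in $\{q_n,q_{n+1}\}$; you should make this dependence on \cref{lem:fl_i+lCondition} explicit rather than gesture at ``the linear safety structure''. (Also, your line references are shifted by two from line~17 onward.)
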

\begin{proof}
	We begin by arguing that \cref{alg:fl_primeDFAFinNLalgorithm} indeed decides $\primeDFAfin{}$. Afterwards, we argue that \cref{alg:fl_primeDFAFinNLalgorithm} can be implemented in logarithmic space.
	
	First, note that it can obviously be decided in \complexityClassFont{NL} whether a state $q$ is reachable from a state $p$ in a given DFA. This further implies that it can be decided in \complexityClassFont{NL} whether a state $q$ is reachable in a given DFA and whether a given DFA recognizes a non-empty language. Second, it can obviously be decided in \complexityClassFont{NL} whether $\lang{\mathcal{A}^{q}} \neq \lang{\mathcal{A}^{p}}$ for a given DFA $\mathcal{A}$ and two states $p,q$.
	
	Now note that with the well-known Immerman-Szelepcsényi theorem $\complexityClassFont{NL} = \complexityClassFont{co-NL}$ holds \cite{DBLP:journals/siamcomp/Immerman88}. Therefore, it can also be decided in \complexityClassFont{NL} wether a state $q$ is unreachable from a state $p$ in a given DFA, whether a state $q$ is unreachable in a given DFA, and whether a given DFA recognizes the empty language. Further, it can be decided in \complexityClassFont{NL} whether $\lang{\mathcal{A}^{q}} = \lang{\mathcal{A}^{p}}$ for a given DFA $\mathcal{A}$ and two states $p,q$.
	
	With these observations in hand, we argue that \cref{alg:fl_primeDFAFinNLalgorithm} decides $\primeDFAfin{}$.
	
	Let $\mathcal{A} = (Q,\Sigma,q_0,\delta,F)$ with $Q = \{q_0,\dots,q_m\}$ be a DFA recognizing a finite language $L$. With $\hat{\mathcal{A}} = (\hat{Q},\Sigma,q_0,\hat{\delta},\hat{F})$ we denote the minimal DFA of $\mathcal{A}$. Note that $\hat{\mathcal{A}}$ is an ADFA. With $\Sigma_{i,j}$ we denote the usual subsets of $\Sigma$ in the ADFA $\hat{\mathcal{A}}$.
	
	We begin by making a couple of observations about the behavior of the algorithm.
	
	First, note that the algorithm accepts in line 1 if $L = \emptyset$. Otherwise, it resumes.
	
	Second, we consider the values $c$ and $n$ calculated in lines 2-18. 
	We argue that, if the algorithm is not to reject in line 18, then after line 16 the variable $c$ has to store the number of states of the given DFA $\mathcal{A}$ which can be removed because they are unreachable or can be merged with a state with a smaller subscript. 
	Thus, we argue that $c$ is the number of the removable states of $\mathcal{A}$, meaning $\ind{\mathcal{A}} = |\mathcal{A}| - c$. This implies $n = (m+1)-c-2 = |\mathcal{A}|-c-2 = \ind{\mathcal{A}}-2$.
	We begin by inspecting lines 17-18. 
	In line 17 the value $n$ is calculated depending on the value $c$. In line 18 the algorithm rejects if no word $w \in \Sigma^n$ with $w \in L$ exists.
	Note that the length of the longest word in $L$ is $\ind{\mathcal{A}}-2$ if $\hat{\mathcal{A}}$ is linear and is strictly smaller than $\ind{\mathcal{A}}-2$ otherwise. Therefore, to avoid being forced to reject in line 18 the algorithm has to achieve $n \leq \ind{\mathcal{A}}-2$. Since in line 17 the algorithm defines $n = (m+1)-c-2 = \size{\mathcal{A}}-c-2$, to avoid rejection in line 18 it is necessary that $\size{\mathcal{A}}-c \leq \ind{\mathcal{A}}$. Thus, $c$ has to be at least the number of removable states of $\mathcal{A}$ to avoid rejection in line 18.
	Consider the calculation of $c$ in lines 2-16.
	Note that $c$ is incremented in line 5 only if the current state $q_i$ is unreachable and in line 10 only if $q_i$ is reachable and a reachable state $q_j$ with $j < i$ is found that is equivalent to $q_i$. 
	Also, note that $c$ is incremented at most once for each $q_i$. 
	Therefore, $c$ is smaller or equal the number of removable states of $\mathcal{A}$, meaning $\size{\mathcal{A}} - c \geq \ind{\mathcal{A}}$, with equality being achieved only if the algorithm increments $c$ for each removable $q_i$.
	Thus, to avoid rejection in line 18 the algorithm has to increment $c$ for each removable $q_i$, so that $c$ is exactly the number of removable states of $\mathcal{A}$. This then implies $n = (m+1)-c-2 = \ind{\mathcal{A}} - 2$.
	
	Third, note that if the algorithm reaches line 18, that is, if $L \neq \emptyset$, then it rejects in line 18 if there exists no word $w \in \Sigma^n$ with $w \in L$. Otherwise, it resumes. 
	Since we just argued that $n = \ind{\mathcal{A}}-2$, this means that the algorithm rejects if $\hat{\mathcal{A}}$ is not linear. Otherwise, it resumes. 
	
	Fourth, note that if the algorithm reaches line 19, that is, if $L \neq \emptyset$ and $\hat{\mathcal{A}}$ is linear, then it accepts if there is a word $\sigma^n \in L$. Otherwise, it resumes.
	
	Fifth, note that if the algorithm reaches line 20 then in lines 20-22 it checks whether $\mathcal{A}$ is a safety DFA. It rejects if $\mathcal{A}$ is not a safety DFA. Otherwise, it resumes. Note here that in order to check whether $\mathcal{A}$ is a safety DFA it is sufficient to ensure that for each reachable state $q_i$ it holds that $q_i \notin F \Rightarrow \lang{\mathcal{A}^{q_i}} = \emptyset$, which implies that each reachable rejecting state can be replaced by a rejecting sink. Further, note that deciding whether $q_i$ is unreachable and whether $\lang{\mathcal{A}^{q_i}} \neq \emptyset$ can both be done in \complexityClassFont{NL}.
	
	We briefly summarize our observations so far. The algorithm terminates before reaching line 23 iff:
	\begin{itemize}
		\item $L = \emptyset$, in which case it accepts, or
		\item $L \neq \emptyset$ and $\hat{\mathcal{A}}$ is not linear, in which case it rejects, or
		\item $L \neq \emptyset$ and $\hat{\mathcal{A}}$ is linear and there exists a $\sigma \in \Sigma$ with $\sigma^n \in \Sigma$, in which case it accepts, or
		\item $L \neq \emptyset$ and $\hat{\mathcal{A}}$ is linear and there exists no $\sigma \in \Sigma$ with $\sigma^n \in \Sigma$ and $\mathcal{A}$ is not a safety DFA, in which case it rejects.
	\end{itemize}
	Therefore, line 23 is reached iff the following holds: $L \neq \emptyset$ and $\hat{\mathcal{A}}$ is linear and there exists no $\sigma \in \Sigma$ with $\sigma^n \in \Sigma$ and $\mathcal{A}$ is a safety DFA. 
	In this case the DFA $\mathcal{A}$ is prime iff the minimal DFA $\hat{\mathcal{A}}$ does not have the CEP.
	We argue that this is checked in lines 23-28. 
	To be more precise, we argue that the algorithm rejects in line 26 iff $\hat{\mathcal{A}}$ has the CEP.
	The somewhat strange fashion in which the condition is checked, with two separate selections of words in lines 24 and 26, is motivated by the need to achieve an \complexityClassFont{NL}-algorithm. We will inspect this later.
	
	Assume that $\hat{\mathcal{A}}$ does not have the CEP. We show that this implies that the algorithm does not reject.
	Note that a word $w = \sigma_1\dots\sigma_n$ can be selected in line 24, which witnesses that $\hat{\mathcal{A}}$ does not have the CEP. 
	This selection can be done anew for each $x$. 
	Then the same word can be chosen as $w'$ in line 26. Again, this selection can be done anew for each pair $i,l$. With this selection the state $\delta(q_0,\sigma_1\dots\sigma_i\sigma_{i+l}\dots\sigma_n)$ is neither a rejecting state equivalent to a rejecting sink nor an accepting state from which only rejecting states are reachable. Therefore, a word $v \in \Sigma^+$ with $\delta(q_0,\sigma_1\dots\sigma_i\sigma_{i+l}\dots\sigma_nv) \in F$ can be selected in line 26. That is, $\sigma_1\dots\sigma_i\sigma_{i+l}\dots\sigma_nv \in L$. Thus, the algorithm does not reject for any combination of values $x$ and $i,l$.
	
	Now assume that the algorithm does not reject in line 26. We show that this implies that $\hat{\mathcal{A}}$ does not have the CEP.
	For each $x \in \{1,\dots,n\}$ a word $w_x = \sigma_{1,x}\dots\sigma_{n,x}$ can be chosen in line 24 so that the algorithm does not reject. Define $w = \sigma_{1,1}\dots\sigma_{n,n}$. We argue that this $w$ breaches the CEP.
	
	Since $w_x \in L$ holds for each $x$, we have $\sigma_{x,x} \in \Sigma_{x-1,x}$ for each $x$. This implies $w \in L$.
	
	Now let $i \in \{0,\dots,n-2\}, l \in \{2,\dots,n-i\}$. Let $x = i+l$. Since the algorithm does not reject in line 26, we have $\sigma_{x,x} \in \Sigma_{i,j}$ for a $j < i+l$. Expressing this more formally, we have: $\forall x \in \{1,\dots,n\}, i \in \{0,\dots,n-2\}, l \in \{2,\dots,n-i\}. x = i+l \Rightarrow \exists j \in \{1,\dots,n+1\}. i< j < i+l \wedge \sigma_{x,x} \in \Sigma_{i,j}$. Getting rid of the variable $x$, this is clearly equivalent to: $\forall i \in \{0,\dots,n-2\}, l \in \{2,\dots,n-i\}. \exists j \in \{1,\dots,n+1\}. i < j < i+l \wedge \sigma_{i+l,i+l} \in \Sigma_{i,j}$.
	
	Note that with \cref{lem:fl_i+lCondition} this implies that there are no $i \in \{0,\dots,n-2\}, l \in \{2,\dots,n-i\}$ such that $\hat{\delta}(q_0,\sigma_{1,1}\dots\sigma_{i,i}\sigma_{i+l,i+l}\dots\sigma_{n,n}) \in \{q_n,q_{n+1}\}$. Therefore, the word $w$ breaches the CEP.
	
	Thus we have shown that the algorithm rejects in line 26 iff $\hat{\mathcal{A}}$ has the CEP.
	
	So far, we have made observations about the behavior of the algorithm. Now we argue how the correctness of the algorithm arises from our observations.
	
	Let $\mathcal{A}$ be prime. Then with \cref{the:fl_characterization} we have:
	\begin{bracketenumerate}
		\item $L = \emptyset$, or
		\item $L \neq \emptyset$ and $\hat{\mathcal{A}}$ is linear and there is a $\sigma \in \Sigma$ with $\sigma^n \in L$, or
		\item $L \neq \emptyset$ and $\hat{\mathcal{A}}$ is linear and there is no $\sigma \in \Sigma$ with $\sigma^n \in L$ and $\mathcal{A}$ is a safety DFA and $\hat{\mathcal{A}}$ does not have the CEP.
	\end{bracketenumerate}
	If (1) holds then the algorithm accepts in line 1. If (2) holds then the algorithm does not reject in line 18 and accepts in line 19. If (3) holds then the algorithm does not reject in lines 18, 21 or 26 and accepts in line 29.
	
	Now let $\mathcal{A}$ be composite. Then with \cref{the:fl_characterization} we have:
	\begin{bracketenumerate}
		\item $L \neq \emptyset$ and $\hat{\mathcal{A}}$ is not linear, or
		\item $L \neq \emptyset$ and $\hat{\mathcal{A}}$ is linear and there is no $\sigma \in \Sigma$ with $\sigma^n \in L$ and $\mathcal{A}$ is not a safety DFA, or
		\item $L \neq \emptyset$ and $\hat{\mathcal{A}}$ is linear and there is no $\sigma \in \Sigma$ with $\sigma^n \in L$ and $\mathcal{A}$ is a safety DFA and $\hat{\mathcal{A}}$ has the CEP.
	\end{bracketenumerate}
	If (1) holds then the algorithm does not accept in line 1 and rejects in line 18. If (2) holds then the algorithm does not accept in lines 1 or 19 and rejects in line 21. If (3) holds then the algorithm does not accept in lines 1 or 19 and rejects in line 26.
	
	Therefore, the algorithm indeed decides $\primeDFAfin{}$.
	
	Now we have to argue that the algorithm can be implemented in logarithmic space.
	
	We have already argued that the conditions in lines 1-17 can be decided in \complexityClassFont{NL}. In line 18 the word $w$ does not have to be stored completely. Instead, the algorithm can nondeterministically select one letter after the other, holding only one letter, a counter and the current state in memory. Therefore, line 18 only needs logarithmic space. Since in line 19 the algorithm only needs to nondeterministically select a letter and store it and can then proceed analogously to line 18, that is, holding a counter and the current state in memory, line 19 only needs logarithmic memory as well.
	Again, we have already argued that the conditions in lines 20-22 can be decided in \complexityClassFont{NL}.
	
	This leaves us with lines 23-29.
	The algorithm can store the value $x$. It can then nondeterministically select a word $w \in \Sigma^n$ and check $w \in L$ analogously to line 18. While doing so, it can store the letter $\sigma_x$.
	The algorithm can then store the values $i,l$. It can nondeterministically select a word $w' \in \Sigma^n$ with $\sigma_{i+l}' = \sigma_x$ and check $w' \in L$ analogously to line 18. While doing so, it can store the state reached after reading the prefix $\sigma_1'\dots\sigma_i'$. It can then start a second simulation of a run, beginning in state $q_i'$, once the suffix $\sigma_{i+l}'\dots\sigma_n'$ is reached. Since all this can be done with a constant number of counters, the algorithm only needs logarithmic space here as well.
	
	Thus, the algorithm only needs logarithmic space.
	
	In conclusion, \cref{alg:fl_primeDFAFinNLalgorithm} nondeterministically decides $\primeDFAfin{}$ in logarithmic space. Therefore, \cref{alg:fl_primeDFAFinNLalgorithm} is an \complexityClassFont{NL}-algorithm for $\primeDFAfin{}$. We are done.
\end{proof}

We have proven that $\primeDFAfin{}$ is in \complexityClassFont{NL}. 
Next, we prove that $\primeDFAfin{}$ is \complexityClassFont{NL}-hard. In fact, we prove that $\primeDFAfin{2}$ is \complexityClassFont{NL}-hard, where $\primeDFAfin{2}$ denotes the restriction of $\primeDFAfin{}$ to DFAs with at most two letters. Formally, we prove:
\begin{lemma}
	\label{lem:fl_PrimeDFAFinNLHard}
	The problem $\primeDFAfin{2}$ is \complexityClassFont{NL}-hard.\lipicsEnd
\end{lemma}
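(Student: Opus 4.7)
The plan is to establish the \complexityClassFont{NL}-hardness of $\primeDFAfin{2}$ by an \complexityClassFont{L}-reduction from \problemFont{STCONDAG}, the restriction of \problemFont{STCON} to acyclic directed graphs. Since \problemFont{STCON} is \complexityClassFont{NL}-complete and reduces in logspace to \problemFont{STCONDAG} by the classical layering trick (replacing each vertex $v$ with $|V|$ time-stamped copies and only permitting edges between consecutive layers), \problemFont{STCONDAG} is itself \complexityClassFont{NL}-complete. It moreover suffices to work with graphs of maximum outdegree two, the acyclic analogue of \problemFont{2STCON}, which remains \complexityClassFont{NL}-hard for the same reasons. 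This outdegree restriction lets us use the two letters of $\Sigma = \{0,1\}$ to distinguish the (at most) two outgoing edges at each vertex.

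The construction of the DFA $\mathcal{A}$ follows the spirit of the standard reduction of \problemFont{STCON} to DFA emptiness: topologically sort $G$ so that $s$ comes first and $t$ comes last, turn each vertex into a state, declare $s$ the initial state and $t$ the sole accepting state, and translate each outgoing edge at a vertex into a transition labelled $0$ or $1$, sending missing transitions to a rejecting sink. Because $G$ is acyclic the resulting DFA recognizes a finite language (the bit-string encodings of $s$-$t$ paths). To make primality of $\mathcal{A}$ align with $s$-$t$ reachability rather than with mere non-emptiness, I would enrich this bare construction with auxiliary states and transitions that force the minimal ADFA of $\mathcal{A}$ to fall unambiguously into a single case of \cref{the:fl_characterization}, depending on whether $s$ reaches $t$. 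A convenient target is the linear safety ADFA case: by padding with states before $s$ and after $t$ and by making every non-sink reachable state accepting, one can arrange that the minimal ADFA of $\mathcal{A}$ is always linear and a safety DFA, so that by \cref{the:fl_characterization} primality is governed entirely by the failure of the CEP. The local graph transitions can then be arranged so that a breach of the CEP exists exactly when a witness path from $s$ to $t$ is available.

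The main obstacle is exactly this last alignment. A naive encoding collapses the language to $\emptyset$ when $s$ does not reach $t$, and since the empty language is trivially prime, the reduction would go the wrong way. Breaking this symmetry requires enough padding that the non-reachability case produces a composite minimal linear safety ADFA satisfying the CEP, while the reachability case produces a prime one breaching the CEP (or, dually, places the output into case (i) of \cref{the:fl_characterization} via a witness of the form $\sigma^n \in L$ exactly when $s$ reaches $t$). Once this combinatorial bookkeeping is carried out, correctness becomes a direct appeal to \cref{the:fl_characterization}, and it is routine to verify that the output depends only on local information about $G$ and thus can be produced in logspace. By the Immerman-Szelepcsényi theorem ($\complexityClassFont{NL} = \complexityClassFont{co-NL}$) it is equivalent, and in practice simpler, to design the reduction so that primality corresponds to non-reachability, reducing from the complement of \problemFont{STCONDAG}.
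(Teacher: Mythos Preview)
Your high-level strategy---reduce from an acyclic graph reachability problem and invoke \cref{the:fl_characterization}---matches the paper, and your observation that primality should correspond to \emph{non}-reachability (via $\complexityClassFont{NL}=\complexityClassFont{co-NL}$, or equivalently via an emptiness problem) is exactly right. But the proposal stops short of an actual construction: you acknowledge that ``the main obstacle is exactly this last alignment'' and defer the ``combinatorial bookkeeping,'' yet that bookkeeping is the entire content of the reduction. As written, nothing guarantees that your padded DFA has a \emph{linear} minimal ADFA, nor that the CEP succeeds or fails in lockstep with reachability. Controlling the CEP is the most delicate case of \cref{the:fl_characterization}; tying it to an arbitrary DAG reachability instance in logspace is not obviously feasible, and you give no mechanism for it.

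The paper avoids this difficulty entirely by targeting a much easier case of the characterization. It first reduces \problemFont{2STCONDAG} to the emptiness problem for two-letter finite-language DFAs (the standard construction), and then reduces emptiness to $\primeDFAfin{2}$ by a four-state gadget: behind every accepting state of the input DFA it appends states $p_0 \xrightarrow{a} p_1 \xrightarrow{b} p_2$ (with all other transitions into a sink $p_-$), keeps the original accepting states and adds $p_2$ as accepting. If the input language is empty, so is the output---trivially prime. If the input language is nonempty, every longest accepted word ends in $ab$, so no $\sigma^n$ is accepted; and $p_0,p_1$ are reachable rejecting non-sinks, so the DFA is not a safety DFA. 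By \cref{the:fl_characterization} (specifically \cref{cla:fl_characterization}(\ref{cla_ass:fl_characterization_non-sigmaN+non-safety})) the output is composite. This sidesteps both linearity and the CEP: the non-safety clause does all the work, and the gadget is trivially logspace-constructible. If you want to salvage your approach, aim for this case rather than the safety/CEP case.
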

\begin{proof}
	We introduce a number of problems, which we will use in the \complexityClassFont{NL}-hardness proof. We do this locally, since we will not use these problems anywhere else.
	
	In \cref{sec:preliminaries} we have introduced the problem \problemFont{STCON}, which is \complexityClassFont{NL}-complete \cite{DBLP:books/daglib/0095988}. We now introduce a restriction of \problemFont{STCON}. With \problemFont{STCONDAG} we denote the restriction of STCON to directed acyclic graphs. With \problemFont{2STCONDAG} we denote the restriction of \problemFont{STCONDAG} to graphs with a maximum outdegree of two.
	
	With \emptyDFA{} we denote the emptiness problem for DFAs, that is, the problem of deciding emptiness for the language recognized by a given DFA. It is known that \emptyDFA{} is \complexityClassFont{NL}-complete \cite{DBLP:journals/jcss/Jones75}. With $\emptyDFAfin{}$ we denote the restriction of \emptyDFA{} to DFAs recognizing finite languages. With $\emptyDFAfin{2}$ we denote the restriction of $\emptyDFAfin{}$ to DFAs with at most two letters.
	
	First, we will argue that \problemFont{STCONDAG} is \complexityClassFont{NL}-complete. Clearly, this implies the \complexityClassFont{NL}-completeness of \problemFont{2STCONDAG}.
	Second, we will argue that $\emptyDFAfin{2}$ is \complexityClassFont{NL}-complete as well by \complexityClassFont{L}-reducing \problemFont{2STCONDAG} to $\emptyDFAfin{2}$. 
	Third and finally, we will argue that $\primeDFAfin{2}$ is \complexityClassFont{NL}-hard by \complexityClassFont{L}-reducing $\emptyDFAfin{2}$ to $\primeDFAfin{2}$.
	
	We begin by considering \problemFont{STCONDAG}. Since \problemFont{STCON} is \complexityClassFont{NL}-complete, the restriction \problemFont{STCONDAG} is in \complexityClassFont{NL} as well. We only have to show \complexityClassFont{NL}-hardness. 
	We will provide a sketch of how the \complexityClassFont{NL}-hardness proof of \problemFont{STCON} can be adapted for \problemFont{STCONDAG}.
	
	The general idea of the \complexityClassFont{NL}-hardness proof of \problemFont{STCON} is to turn the \complexityClassFont{NL}-Turing maschine of the given problem in \complexityClassFont{NL} into a graph. The configurations of the Turing maschine translate to the nodes of the graph. The connections between configurations translate to the edges of the graph. 
	
	Note that we can adapt any given \complexityClassFont{NL}-Turing maschine by introducing a configuration counter, which simply counts the number of calculation steps of the original \complexityClassFont{NL}-Turing maschine. Since the original \complexityClassFont{NL}-Turing maschine can only go through polynomially many configurations before terminating, this counter can be implemented using logarithmic space. The adapted Turing maschine therefore is an \complexityClassFont{NL}-Turing maschine as well. Thus, we can use this adapted \complexityClassFont{NL}-Turing maschine and translate it into a graph.
	
	Now note that turning this adapted \complexityClassFont{NL}-Turing maschine into a graph clearly results in a directed acyclic graph. This immediately implies the \complexityClassFont{NL}-hardness of \problemFont{STCONDAG}.
	
	We have argued that \problemFont{STCONDAG} is in \complexityClassFont{NL} and is \complexityClassFont{NL}-hard. Thus, it is \complexityClassFont{NL}-complete.
	
	Since \problemFont{STCONDAG} is in \complexityClassFont{NL}, the restriction \problemFont{2STCONDAG} is in \complexityClassFont{NL} as well. Additionally, it is easy to \complexityClassFont{L}-reduce \problemFont{STCONDAG} to \problemFont{2STCONDAG}, which implies the \complexityClassFont{NL}-hardness of \problemFont{2STCONDAG}. Therefore, the restriction \problemFont{2STCONDAG} is \complexityClassFont{NL}-complete.
	
	Next, we will argue that $\emptyDFAfin{2}$ is \complexityClassFont{NL}-complete. Note that, since \emptyDFA{} is in \complexityClassFont{NL}, the restriction $\emptyDFAfin{2}$ is in \complexityClassFont{NL} as well. We only have to show \complexityClassFont{NL}-hardness. To do this, we \complexityClassFont{L}-reduce \problemFont{2STCONDAG} to $\emptyDFAfin{2}$ in practically the same manner as \problemFont{STCON} is \complexityClassFont{L}-reduced to \emptyDFA{}.
	
	We can turn any given directed acyclic graph into a DFA using the usual construction, translating nodes to states and edges to transitions. The starting node translates to the initial state of the DFA. The target node translates to the only accepting state. Additionally, we can introduce a rejecting sink and add transitions into this sink for any nodes without sufficiently many edges. 
	
	Note that this construction results in an ADFA with exactly one accepting state.
	Therefore, the constructed ADFA recognizes a finite language and it recognizes a non-empty language iff the target node is reachable from the starting node in the graph.
	Additionally, the number of letters of the constructed DFA is equal to the maximum outdegree in the underlying graph.
	Therefore, this construction witnesses the \complexityClassFont{NL}-hardness of $\emptyDFAfin{2}$.
	
	We have argued that $\emptyDFAfin{2}$ is in \complexityClassFont{NL} and is \complexityClassFont{NL}-hard. Thus, it is \complexityClassFont{NL}-complete.
	
	After having established the \complexityClassFont{NL}-completeness of $\emptyDFAfin{2}$, we can turn to the \complexityClassFont{NL}-hardness of $\primeDFAfin{2}$. We will \complexityClassFont{L}-reduce $\emptyDFAfin{2}$ to $\primeDFAfin{2}$, building on the idea used in \cite{DBLP:journals/iandc/KupfermanM15} to prove the \complexityClassFont{NL}-hardness of \problemFont{Prime-DFA}, and additionally employing our characterization of the compositionality of finite languages.
	
	Let $\mathcal{A} = (Q,\Sigma,q_0,\delta,F)$ be an input DFA for $\emptyDFAfin{2}$. We construct a DFA $\mathcal{A}' = (Q',\Sigma',q_0,\delta',F')$. We introduce four new states: $Q' = Q \cup \{p_0,p_1,p_2,p_-\}$. We set $\Sigma' = \Sigma$ if $|\Sigma| = 2$. Otherwise, we select a $\Sigma'$ with $\Sigma \subseteq \Sigma'$ and $|\Sigma'| = 2$. W.l.o.g. we assume $\Sigma' = \{a,b\}$. We define: $F' = F \cup \{p_2\}$. Finally, we define $\delta'$ for each $q \in Q', \sigma \in \Sigma$ as follows:
	\begin{align*}
		\delta'(q,\sigma) = \begin{cases}
			p_-						&\text{ if $q \in \{p_2,p_-\}$}\\
			p_2						&\text{ if $q = p_1$ and $\sigma = b$}\\
			p_-						&\text{ if $q = p_1$ and $\sigma = a$}\\
			p_1						&\text{ if $q = p_0$ and $\sigma = a$}\\
			p_-						&\text{ if $q = p_0$ and $\sigma = b$}\\
			p_0						&\text{ if $q \in F$}\\
			\delta(q,\sigma)		&\text{ if $q \notin F$ and $\sigma \in \Sigma$}\\
			p_-						&\text{ else, thus if $q \notin F$ and $\sigma \notin \Sigma$}
		\end{cases}.
	\end{align*}
	Note that the last case of this definition is only relevant if $\Sigma \subset \Sigma'$.
	
	The main idea of the construction is to plug the series of states $p_0,p_1,p_2,p_-$ behind every accepting state of $\mathcal{A}$. That is, for every original accepting state of $\mathcal{A}$ every transition leads to $p_0$. These transitions are the only transitions into $p_0$. Reading $ab$ when in $p_0$ the DFA $\mathcal{A}'$ advances over $p_1$ to $p_2$. The other transitions exiting $p_0$ and $p_1$ lead into the rejecting sink $p_-$. From $p_2$ every transition leads into the rejecting sink $p_-$. The other transitions of $\mathcal{A}$ are replicated in $\mathcal{A}'$.
	
	Obviously, the DFA $\mathcal{A}'$ recognizes a finite language as well. Indeed, it is easy to see that $\lang{\mathcal{A}'} = \{w, w \sigma a b \setDel \sigma \in \Sigma' \wedge w \in X\}$, where $X$ is the set of words recognized by $\mathcal{A}$ which have no real prefix which is recognized by $\mathcal{A}$ as well.
	
	Now we prove that $\mathcal{A}'$ is prime iff $\mathcal{A}$ recognizes the empty language.
	
	First, assume that $\mathcal{A}$ recognizes the empty language. Then $\lang{\mathcal{A}'} = \{w, w \sigma a b \setDel \sigma \in \Sigma' \wedge w \in X\} = \emptyset$ clearly holds. Thus, $\mathcal{A}'$ recognizes the empty language as well. Therefore, $\mathcal{A}'$ is prime.
	
	Second, assume that $\mathcal{A}$ does not recognize the empty language. Clearly, this implies that $\mathcal{A}'$ does not recognize the empty language either. Further, every longest word in $\lang{\mathcal{A}'}$ ends on $ab$. Therefore, no longest word in $\lang{\mathcal{A}'}$ consists of the repetition of the same letter. And finally, $\mathcal{A}'$ is not a safety DFA, since the rejecting state $p_0$ - and, for good measure, also the rejecting state $p_1$ - is reachable in $\mathcal{A}'$, from which the accepting state $p_2$ is reachable. With \cref{the:fl_characterization} this implies the compositionality of $\mathcal{A}'$.
	
	We have shown that $\mathcal{A}'$ is prime iff $\mathcal{A}$ recognizes the empty language. Since $\mathcal{A}'$ can clearly be constructed out of $\mathcal{A}$ in logarithmic space, we have established an \complexityClassFont{L}-reduction from the \complexityClassFont{NL}-complete problem $\emptyDFAfin{2}$ to $\primeDFAfin{2}$. Thus, we have shown the \complexityClassFont{NL}-hardness of $\problemFont{2Prime-DFA}_\text{fin}$. The proof of \cref{lem:fl_PrimeDFAFinNLHard} is complete.
\end{proof}

We have argued that $\primeDFAfin{2}$ is \complexityClassFont{NL}-hard. Clearly, this implies the \complexityClassFont{NL}-hardness of $\primeDFAfin{}$ as well.

With \cref{lem:fl_PrimeDFAFinInNL,lem:fl_PrimeDFAFinNLHard} we have shown that $\primeDFAfin{}$ as well as $\primeDFAfin{2}$ are in \complexityClassFont{NL} and are \complexityClassFont{NL}-hard. This immediately implies the \complexityClassFont{NL}-completeness of $\primeDFAfin{}$ and $\primeDFAfin{2}$. This proves \cref{the:fl_primeDFAFinComplexity}. We are done with the proofs for \cref{sec:fl_complexity}.

\section{Proofs for \texorpdfstring{\cref{sec:fl_differentNotionsOfCompositionality}}{Section \ref{sec:fl_differentNotionsOfCompositionality}}}
We finish the proofs for our results concerning finite languages by proving the theorems in \cref{sec:fl_differentNotionsOfCompositionality}, in which finite languages are analyzed under different notions of compositionality.

We begin by proving:
\theFlCupDNFCharacterization*
\begin{proof}[Proof of \cref{the:fl_cupDNFCharacterization}]
	Consider a minimal ADFA $\mathcal{A} = (Q,\Sigma,q_I,\delta,F)$ recognizing a non-empty language. Let $n \in \natNum$ be the length of the longest word in $\lang{\mathcal{A}}$.
	
	We begin by proving (i). 
	
	First, assume that $\mathcal{A}$ is not linear. This implies $\ind{\mathcal{A}} > n+2$. We reuse the DFA $\mathcal{A}_w$ introduced in the proof of \cref{cla:fl_characterization} (\ref{cla_ass:fl_characterization_non-linear}) and pictured in \cref{subfig:fl_A_w}, which is the minimal DFA recognizing the language $\{w\}$ for a word $w \in \Sigma^*$. Clearly, we have $\lang{\mathcal{A}} = \bigcup_{w \in \lang{\mathcal{A}}} \lang{\mathcal{A}_w}$. And since $\ind{\mathcal{A}} > n+2 \geq \size{\mathcal{A}_w}$ for each $w \in \lang{\mathcal{A}}$, this already implies the $\cup$-compositionality of $\mathcal{A}$.
	
	Second, assume that $\mathcal{A}$ is linear. This implies $\ind{\mathcal{A}} = n+2$. 
	Let $w \in \Sigma^*$ be a word with $w \in \lang{\mathcal{A}},|w| = n$.
	Let $\mathcal{B} = (S,\Sigma,s_I,\eta,G)$ be a minimal DFA with $\lang{\mathcal{B}} \subseteq \lang{\mathcal{A}}$ and $\size{\mathcal{B}} < \ind{\mathcal{A}}$.
	We prove $w \notin \lang{\mathcal{B}}$, which implies the $\cup$-primality of $\mathcal{A}$.
	
	We begin by showing that $\mathcal{B}$ has to possess a rejecting sink. Let $u \in \Sigma^*$ with $|u| > n$. Then $uu' \notin \lang{\mathcal{A}}$ holds for each $u' \in \Sigma^*$. Since $\lang{\mathcal{B}} \subseteq \lang{\mathcal{A}}$, this implies $\eta(s_0,uu') = \eta(\eta(s_0,u),u') \notin G$ for each $u' \in \Sigma^*$. Since $\mathcal{B}$ is minimal, this implies that $\eta(s_0,u)$ is a rejecting sink.
	
	With this result in hand, we prove $w \notin \lang{\mathcal{B}}$ by contradiction. Assume $w \in \lang{\mathcal{B}}$. Note that $\mathcal{B}$ has at most $n+1$ states and that one of these states is a rejecting sink. Since $w \in \lang{\mathcal{B}}$, this rejecting sink is not visited by $\mathcal{B}$ in its initial run on $w$. Therefore, only $n$ different states are visited in the initial run on $w$, which implies that one state is visited twice. Clearly, this implies that $\mathcal{B}$ recognizes an infinite language. This contradicts $\lang{\mathcal{B}} \subseteq \lang{\mathcal{A}}$. Our proof by contradiction of $w \notin \lang{\mathcal{B}}$ is done.
	
	We have shown that there exists no DFA with strictly less than $\ind{\mathcal{A}}$ states that recognizes a subset of $\lang{\mathcal{A}}$ and which accepts $w$. Therefore, $\mathcal{A}$ is $\cup$-prime.
	
	We have proven that $\mathcal{A}$ is $\cup$-prime iff $\mathcal{A}$ is linear. We are done with (i).
	
	Next, we consider (ii).
	
	First, we prove the DNF-compositionality of $\mathcal{A}$ if $\mathcal{A}$ is not linear or if there exists no $\sigma \in \Sigma$ with $\sigma^n \in \lang{\mathcal{A}}$.
	
	If $\mathcal{A}$ is not linear then, as we have shown in (i), $\mathcal{A}$ is $\cup$-composite. This implies that $\mathcal{A}$ is DNF-composite. Therefore, we only have to consider the case where $\mathcal{A}$ is linear and where no $\sigma \in \Sigma$ with $\sigma^n \in \lang{\mathcal{A}}$ exists.
	
	Let $w \in \lang{\mathcal{A}}$. 
	
	If $|w|<n$ we have $\size{\mathcal{A}_w} = |w|+2 < n+2 = \ind{\mathcal{A}}$. Therefore, for each such word we can simply use the DFA $\mathcal{A}_w$ in the $\cup$-decomposition. 
	
	If $|w|=n$ then we can utilize the idea outlined in \cite[Example 3.2]{DBLP:journals/iandc/KupfermanM15} to build two DFAs $\mathcal{A}_w^1,\mathcal{A}_w^2$ with $\lang{\mathcal{A}_w^1} \cap \lang{\mathcal{A}_w^2} = \{w\}$ and $\size{\mathcal{A}_w^1},\size{\mathcal{A}_w^2} < n+2$, since per requirement there are at least two different letters in $w$.
	
	Let $\mathcal{A}_w^1$ be the minimal DFA with $\lang{\mathcal{A}_w^1} = \{w\}^*$. Let $\sigma \in \Sigma$ such that $|w|_{\sigma} > 0$. Let $\mathcal{A}_w^2$ be the minimal DFA with $\lang{\mathcal{A}_w^2} = \{u \in \Sigma^* \setDel |u|_{\sigma} = |w|_{\sigma}\}$. Obviously, we have $\size{\mathcal{A}_w^1} = |w|+1 = n+1 < n+2 = \ind{\mathcal{A}}$ and $\size{\mathcal{A}_w^2} = |w|_{\sigma}+2 < n+2 = \ind{\mathcal{A}}$. It is also clear that $\lang{\mathcal{A}_w^1} \cap \lang{\mathcal{A}_w^2} = \{w\}$.
	
	Now we can prove DNF-compositionality of $\mathcal{A}$. Define $L' = \{w \in L \setDel |w|<n\}$. Then $\lang{\mathcal{A}} = \bigcup_{w \in L'} \lang{\mathcal{A}_w} \cup \bigcup_{w \in \lang{\mathcal{A}} \setminus L'}(\lang{\mathcal{A}_w^1} \cap \lang{\mathcal{A}_w^2})$ obviously holds.
	This completes the DNF-compositionality proof of $\mathcal{A}$ if $\mathcal{A}$ is linear and there exists no $\sigma \in \Sigma$ with $\sigma^n \in \lang{\mathcal{A}}$.
	
	In summary, we have shown that $\mathcal{A}$ is DNF-composite if $\mathcal{A}$ is not linear or if there exists no $\sigma \in \Sigma$ with $\sigma^n \in \lang{\mathcal{A}}$.
	
	Second, we show that $\mathcal{A}$ is DNF-prime if $\mathcal{A}$ is linear and there exists a $\sigma \in \Sigma$ with $\sigma^n \in \lang{\mathcal{A}}$.
	
	Assume that $\mathcal{A}$ is as required. Let $\sigma \in \Sigma$ with $\sigma^n \in \lang{\mathcal{A}}$. Let $s \in \natNumGeq{1}$ and $t_1,\dots,t_s \in \natNumGeq{1}$ such that there exist DFAs $\mathcal{A}_{1,1},\dots,\mathcal{A}_{1,t_1},\dots,\mathcal{A}_{s,1},\dots,\mathcal{A}_{s,t_s}$ with $\lang{\mathcal{A}} = \bigcup_{i=1}^{s}\bigcap_{j=1}^{t_i}\lang{\mathcal{A}_{i,j}}$. Then there exists an $i \in \{1,\dots,s\}$ with $\sigma^n \in \bigcap_{j=1}^{t_i}\lang{\mathcal{A}_{i,j}}$. Now note that with \cref{lem:fl_sigmaN} there is a $j \in \{1,\dots,t_i\}$ with $|\mathcal{A}_{i,j}| \geq n+2$, since otherwise we would have $\sigma^{n+(n+1)!} \in \bigcap_{j=1}^{t_i}\lang{\mathcal{A}_{i,j}}$, which would be a contradiction to $\lang{\mathcal{A}} = \bigcup_{i=1}^{s}\bigcap_{j=1}^{t_i}\lang{\mathcal{A}_{i,j}}$. Therefore, $\mathcal{A}$ is not $(\ind{\mathcal{A}}-1)$-DNF-decomposable. Thus, $\mathcal{A}$ is DNF-prime.
	
	We have shown that $\mathcal{A}$ is DNF-prime if $\mathcal{A}$ is linear and there exists a $\sigma \in \Sigma$ with $\sigma^n \in \lang{\mathcal{A}}$.
	
	In conclusion, we have shown that $\mathcal{A}$ is DNF-prime iff $\mathcal{A}$ is linear and there exists a $\sigma \in \Sigma$ with $\sigma^n \in \lang{\mathcal{A}}$. We are done with (ii).
	
	The proof of \cref{the:fl_cupDNFCharacterization} is complete.
\end{proof}

Finally, we prove the last remaining result of \cref{sec:fl_differentNotionsOfCompositionality}.
\theFlCapCupVsDNF*
\begin{proof}[Proof of \cref{the:fl_capCupVsDNF}]
	We use \cref{the:fl_characterization,the:fl_cupDNFCharacterization} to construct a DFA recognizing a finite language that is DNF-composite but $\cap$- and $\cup$-prime. 
	
	\begin{figure}[t]
		\centering
		\begin{tikzpicture}[node distance=2.5cm]
		\node[state, initial, accepting] 					(q0) 	{$q_0$};
		\node[state, right of=q0, accepting] 				(q1) 	{$q_1$};
		\node[state, right of=q1, accepting] 				(q2) 	{$q_2$};
		\node[state, right of=q2, accepting] 				(q3) 	{$q_3$};
		\node[state, right of=q3]			 				(q4) 	{$q_4$};
		
		\draw	(q0)	edge[above]								node{$a_1,a_2$}		(q1);
		\draw	(q0)	edge[bend left=45, above]				node{$a_3$}			(q2);
		
		\draw	(q1)	edge[above]								node{$a_2,a_3$}	(q2);
		\draw	(q1)	edge[bend right, below]					node{$a_1$}		(q4);
		
		\draw	(q2)	edge[above]								node{$a_3$}		(q3);
		\draw	(q2)	edge[bend left=45, above]				node{$a_1,a_2$}	(q4);
		
		\draw	(q3)	edge[above]								node{$\Sigma$}	(q4);
		
		\draw	(q4)	edge[loop right]						node{$\Sigma$}	(q4);
		\end{tikzpicture}
		\caption{DFA recognizing a finite language that is DNF-composite, but $\cap$- and $\cup$-prime.}
		\label{fig:fl_DNFCompositeCapPrimeCupPrime}
	\end{figure}
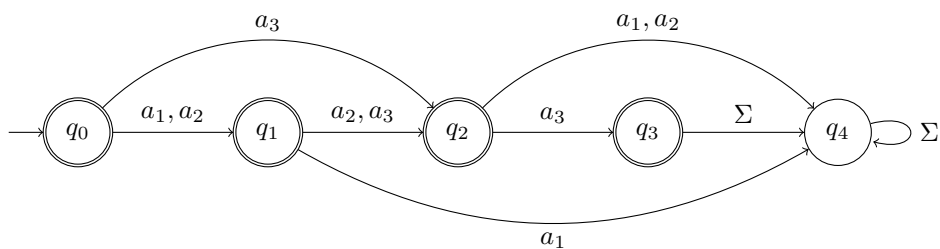
	Consider the ADFA $\mathcal{A} = (Q,\Sigma,q_0,\delta,F)$ pictured in \cref{fig:fl_DNFCompositeCapPrimeCupPrime}. Clearly, $\mathcal{A}$ is minimal and recognizes a non-empty language. Additionally, $\mathcal{A}$ is linear, there exists no $\sigma \in \Sigma$ with $\sigma^3 \in \lang{\mathcal{A}}$, and $\mathcal{A}$ is a safety DFA. Finally, $\mathcal{A}$ does not have the CEP, which is witnessed by the word $a_1a_2a_3 \in \lang{\mathcal{A}}$, since $\delta(q_0,a_2a_3), \delta(q_0,a_1a_3), \delta(q_0,a_3) = q_2 \notin \{q_3,q_4\}$ holds.
	
	Using \cref{the:fl_characterization,the:fl_cupDNFCharacterization}, this immediately implies that $\mathcal{A}$ is DNF-composite, but $\cap$- and $\cup$-prime. The same holds for the language $\lang{\mathcal{A}}$. We are done.
\end{proof}

This ends our discussion of the proofs for \cref{sec:fl_differentNotionsOfCompositionality}.
\section{Proofs for \texorpdfstring{\cref{sec:2DFAMinimalAndSPrimeDFA}}{Section \ref{sec:2DFAMinimalAndSPrimeDFA}}}
\label{sec:2DFAMinimal_proofs}
We provide proofs for the \complexityClassFont{NL}-completeness of \minimalDFA{2} formalized in \cref{the:2DFAMinimalNLComplete} and for the complexity boundaries of \sPrimeDFA{} and \sPrimeDFA{k} and of \primeDFA{} and \primeDFA{k} formalized in \cref{the:SPrimeDFAComplexity,the:PrimeDFAComplexity}.

We begin by proving:
\theTwoDFAMinimalNLComplete*
\begin{proof}[Proof of \cref{the:2DFAMinimalNLComplete}]
	The \complexityClassFont{NL}-completeness of \minimalDFA{} and its restrictions \minimalDFA{k} for $k \in \natNumGeq{3}$ is folklore. This immediately implies that \minimalDFA{2} is in \complexityClassFont{NL}. Thus, we only have to prove the \complexityClassFont{NL}-hardness of \minimalDFA{2}. To do so, we will \complexityClassFont{L}-reduce the \complexityClassFont{NL}-complete problem \problemFont{2STCON} to \minimalDFA{2} using the construction outlined in \cref{fig:2STCONto2DFAMinimalReduction}.
	
	Let $(G,s,t)$ be an input for the problem \problemFont{2STCON}. That is, $G = (V,E)$ is a directed graph with a maximum outdegree of two, and $s,t \in V$ are nodes of the graph. We construct a DFA $\mathcal{A}$ with at most two letters that is minimal iff $t$ is reachable from $s$ in $G$.
	
	If $s = t$ then $t$ is trivially reachable from $s$. In this case, we can construct an arbitrary minimal DFA with at most two letters. For example, we can construct the minimal DFA for the empty language. This case is done.
	
	From here on, we assume $s \neq t$. Further, w.l.o.g. we assume $V = \{0,\dots,n-1\}$ and $s=0,t=n-1$.
	
	Let $\mathcal{A}' = (Q',\Sigma,0,\delta',F')$ be the DFA constructed out of $(G,s,t)$ in the usual manner. That is, the nodes are translated to states, so $Q' = V'$, and the edges are translated to transitions. Further, the starting node $s=0$ is translated to the initial state, and the target node $t=n-1$ is translated to the only accepting state, so $F' = \{n-1\}$. If any node has not enough outgoing edges, self-loops are added. We will base the construction of $\mathcal{A}$ on $\mathcal{A}'$. 
	
	From here on, we use $0,1$ as the two letters of the alphabet, so $\Sigma = \{0,1\}$.
	
	We are ready to start with the construction of $\mathcal{A} = (Q,\Sigma,p_0,\delta,F)$. 
	We introduce the new states $p_0,\dots,p_{n-1}$ and $q_0,\dots,q_{n-1}$ and $q_0'$. Further, for each $i \in Q' = V = \{0,\dots,n-1\}$ we introduce the states $i_0',i_1',i_0,i_1$.
	We refer to the states $i,i_0',i_1',i_0,i_1$ for $i \in V$ as $v$-states. We refer to the states $p_0,\dots,p_{n-1}$ as $p$-states. And we refer to the states $q_0,\dots,q_{n-1}$ and $q_0'$ as $q$-states. We define $Q_p = \{p_0,\dots,p_{n-1}\}, Q_q = \{q_0,\dots,q_{n-1}\} \cup \{q_0'\}, Q_v = \{i,i_0',i_1',i_0,i_1 \setDel i \in V\}$.
	We say that states $p_i,q_i,i,i_0',i_1',i_0,i_1$ for an $i \in V$ are on layer $i$.
	We set $p_0$ as the initial state. Further, we define $F = \{n-1\}$.
	We do not introduce any additional letters and use the alphabet $\Sigma = \{0,1\}$ for $\mathcal{A}$.
	
	Finally, we define the transition function $\delta$ in the following way:
	\begin{itemize}
		\item For every $i \in V$ and $\sigma \in \Sigma$ define:
		\begin{align*}
			\delta(p_i,\sigma) = \begin{cases}
				p_{i+1} &\text{ if $i < n-1$ and $\sigma = 0$}\\
				p_{n-1} &\text{ if $i = n-1$ and $\sigma = 0$}\\
				i &\text{ else, thus if $\sigma = 1$}
			\end{cases}.
		\end{align*}
	
		\item For every $q \in \{q_0,\dots,q_{n-1}\} \cup \{q_0'\}$ and $\sigma \in \Sigma$ define:
		\begin{align*}
			\delta(q,\sigma) = \begin{cases}
				q_{0}'	&\text{ if $q=q_0$ and $\sigma = 0$}\\
				0		&\text{ if $q=q_0$ and $\sigma = 1$}\\
				q_{1}	&\text{ if $q=q_0'$ and $\sigma = 0$}\\
				q_{i+1} &\text{ if $q=q_i$ with $i \in \{1,\dots,n-2\}$ and $\sigma = 0$}\\
				q_0		&\text{ if $q=q_{n-1}$ and $\sigma = 0$}\\
				q		&\text{ else, thus if $q \neq q_0$ and $\sigma = 1$}
			\end{cases}.
		\end{align*}
	
		\item For every $i \in V$ and $\sigma \in \Sigma$ define: $\delta(i,\sigma) = i_\sigma'$.
		
		\item For every $i \in V$ and $j,\sigma \in \Sigma$ define:
		\begin{align*}
			\delta(i_j',\sigma) &= \begin{cases}
				i_j'	&\text{ if $\sigma = 0$}\\
				i_j		&\text{ else, thus if $\sigma = 1$}
			\end{cases}\\
			\delta(i_j,\sigma) &= \begin{cases}
				\delta'(i,\sigma)	&\text{ if $\sigma = j$}\\
				q_i					&\text{ else, thus if $\sigma \neq j$}
			\end{cases}.
		\end{align*} 
	\end{itemize}
	\cref{fig:2STCONto2DFAMinimalReduction} displays the DFA $\mathcal{A}$.
	
	We need to show that $\mathcal{A}$ is minimal iff $n-1$ is reachable from $0$ in $G$.
	
	First, we consider the case that $n-1$ is reachable from $0$ in $G$. We show that in this case the DFA $\mathcal{A}$ is minimal.
	
	If $n-1$ is reachable from $0$ in $G$ there obviously exists a word $w = \sigma_1\dots\sigma_k \in \Sigma^k$ such that $\delta'(0,w) = n-1$, where $k \in \natNumGeq{1}$. Using $w$ we now construct a word $w_+$ with $\delta(0,w) = n-1$, proving that $n-1$ is reachable from $0$ in $\mathcal{A}$. To do this, for each $l \in \{1,\dots,k\}$ we need to add the word $1\sigma_l$ after each letter $\sigma_l$. This word bridges the states $i_\sigma,i_\sigma'$. Thus, we have $w_+ = \sigma_1(1\sigma_1)\dots\sigma_k(1\sigma_k)$. It can be easily verified that $\delta(0,w_+) = n-1$.
	
	The word $w_+$ thus witnesses that $n-1$ is reachable from $0$ in $\mathcal{A}$. From here on, let $w_+$ be a word of minimal length with $\delta(0,w_+)$. The minimal length of $w_+$ clearly implies that in the run of $\mathcal{A}$ on $w_+$ starting in $0$ there are only $v$-states.
	
	We start our minimality proof of $\mathcal{A}$ by arguing that each state of $\mathcal{A}$ is reachable. This is obvious, since from $p_0$ every state $p_i$ with $i \in V$ can be reached by reading $0^i$. From $p_i$, every other state located on layer $i$ is reachable. Therefore, every state of $\mathcal{A}$ is reachable.
	
	We continue our minimality proof of $\mathcal{A}$ by showing that $\mathcal{A}$ possesses no equivalent states. Let $a,b \in Q$ with $a \neq b$. We have to prove $\lang{\mathcal{A}^a} \neq \lang{\mathcal{A}^b}$. To do this, we use a lengthy case distinction. Note that in this case distinction we will often trace cases back to different cases. But we will only trace back cases to already handled cases, thereby avoiding any circular reasoning.
	\begin{description}
		\item[Case 1: \normalfont{$a,b \in Q_q$.}] By reading the letter $0$ the DFA $\mathcal{A}$ can circle through the $q$-states. Therefore, there exists a $k \in \{0,\dots,n\}$ such that $\delta(a,0^k) = q_0$ and $b' = \delta(b,0^k) \neq q_0$, where $b' \in Q_q$.
		Then we clearly have $\delta(a,0^k1w_+) = \delta(0,w_+) = n-1$ and $\delta(b,0^k1w_+) = \delta(b',w_+) \neq n-1$, since the $q$-states can only be left into state $0$ and $w_+$ is a word of minimal length with $\delta(0,w_+) = n-1$.
		
		The word $0^k1w_+$ witnesses the inequivalence of $a$ and $b$. We are done with Case 1.
		
		\item[Case 2: \normalfont{$a,b \in Q_p$.}] Let $i,j \in V, i \neq j$ with $a = p_i, b = p_j$. Then we have $\delta(p_i,1(011)) = q_i$ and $\delta(p_j,1(011)) = q_j$.
		
		Case 2 can be traced back to Case 1.
		
		\item[Case 3: \normalfont{$a \in Q_p \Leftrightarrow b \notin Q_p$.}] W.l.o.g. let $a \in Q_p$. Let $i \in V$ with $a = p_i$. Then $\delta(a,0^{(n-1)-i}) = p_{n-1}$ holds. Let $b' = \delta(b,0^{(n-1)-i})$. If $\delta(b',1) \neq n-1$ we are done with Case 1 with witness $0^{(n-1)-i}1$. Therefore, we assume $\delta(b',1) = n-1$.
		
		Since $b \notin Q_p$, we have $b' \notin Q_p$. Since $0 = s \neq t = n-1$, we additionally have $\delta(q_0,1) = 0 \neq n-1$. Therefore, we have $b' \neq q_0$ and thus $b' \notin Q_q$. Thus, we have $b' \in Q_v$. With $\delta(b',1) = n-1$ this implies $b' = j_1$ for a $j \in V$ with $\delta'(j,1) = n-1$. But then $\delta(j_1,01) = \delta(q_j,1) \in \{q_j,0\}$ holds, while at the same time we have $\delta(p_{n-1},01) = \delta(p_{n-1},1) = n-1$. Then we are done with Case 1 with witness $0^{(n-1)-i}01$.
		
		We have shown the inequivalence of $a$ and $b$. We are done with Case 3.
		
		\item[Case 4: \normalfont{$(a \in Q_q \wedge b \in Q_v) \vee (a \in Q_v \wedge b \in Q_q)$.}] W.l.o.g. let $a \in Q_q$.
		\begin{description}
			\item[Case 4.1: \normalfont{$b \notin \{i_1,i_1' \setDel i \in V\}$.}] Then there exists an $r \in \{0,1,2\}$ with $\delta(b,0^r) = j_0'$ for a $j \in V$. Let $a' = \delta(a,0^r)$. Note that $a' \in Q_q$.
			
			Then there exists an $s \in \{0,\dots,n\}$ such that $\delta(a',0^s) = q_0'$. Additionally, we have $\delta(j_0',0^s) = j_0'$. Then we have $\delta(q_0',1) = q_0'$ and $\delta(j_0',1) = q_j \neq q_0'$.
			
			In summary, we have $\delta(a,0^r0^s1) = q_0'$ and $\delta(b,0^r0^s1) = q_j$ for a $j \in V$.
			
			Case 4.1 can be traced back to Case 1.
			
			\item[Case 4.2: \normalfont{$b \in \{i_1 \setDel i \in V\} \wedge a = q_0$.}] Then we have $\delta(a,0) = \delta(q_0,0) = q_0'$ and $\delta(b,0) = \delta(i_1,0) = q_i$ for an $i \in V$.
			
			Case 4.2 can be traced back to Case 1.
			
			\item[Case 4.3: \normalfont{$b \in \{i_1 \setDel i \in V\} \wedge a \neq q_0$.}] Then we have $\delta(a,1) = a$ and $\delta(b,1) = j$ for a $j \in V$. 
			
			Case 4.3 can be traced back to Case 4.1.
			
			\item[Case 4.4: \normalfont{$b \in \{i_1' \setDel i \in V\} \wedge a \neq q_0$.}] Then we have $\delta(a,1) = a$ and $\delta(b,1) = i_1$ for an $i \in V$.
			
			Case 4.4 can be traced back to Case 4.3.
			
			\item[Case 4.5: \normalfont{$b \in \{i_1' \setDel i \in V\} \wedge a = q_0$.}] Then we have $\delta(a,0) = q_0'$ and $\delta(b,0) = b$.
			
			Case 4.5 can be traced back to Case 4.4.
		\end{description}
		
		\item[Case 5: \normalfont{$a,b \in Q_v$.}] Let $i,j \in V$ with $a \in \{i,i_0',i_0,i_1',i_1\}$ and $b \in \{j,j_0',j_0,j_1',j_1\}$.
		
		For any $x \in Q$ we define the $q$-distance $\qdistOp{x}$ of $x$ as the length of the shortest word $w \in \Sigma^*$ such that $\delta(x,w) \in Q_q$. For a $k \in V$ the state $k$ then has a $q$-distance of three, states $k_0'$ and $k_1'$ have a $q$-distance of two, and states $k_0$ and $k_1$ have a $q$-distance of one.
		\begin{description}
			\item[Case 5.1: \normalfont{$\qdistOp{a} \neq \qdistOp{b}$.}] W.l.o.g. let $\qdistOp{a} < \qdistOp{b}$. Then there exists a word $w \in \Sigma^*$ such that $\delta(a,w) \in Q_q$ and $\delta(b,w) \in Q_v$.
			
			Case 5.1 can be traced back to Case 4.
			
			\item[Case 5.2: \normalfont{$\qdistOp{a} = \qdistOp{b} \wedge a \in \{i_0',i_0\} \wedge b \in \{j_1',j_1\}$.}] Let $r = \qdistOp{a}$. Clearly, we have $r \in \{1,2\}$. Additionally, we have $\delta(a,1^{r-1}) = i_0, \delta(a,1^r) = q_i$ and $\delta(b,1^{r-1}) = j_1, \delta(b,1^r) = \delta'(j,1) \in V \subseteq Q_v$.
			
			Case 5.2 can be traced back to Case 4.
			
			\item[Case 5.3: \normalfont{$\qdistOp{a} = \qdistOp{b} \wedge a \in \{i_1',i_1\} \wedge b \in \{j_0',j_0\}$.}] Case 5.3 is analogous to Case 5.2.
			
			\item[Case 5.4: \normalfont{$\qdistOp{a} = \qdistOp{b} \wedge a \in \{i_0',i_0\} \wedge b \in \{j_0',j_0\}$.}] Clearly, this implies $i \neq j$. Let $r = \qdistOp{a}$. Then $\delta(a,1^r) = q_i, \delta(b,1^r) = q_j$ holds.
			
			Case 5.4 can be traced back to Case 1.
			
			\item[Case 5.5: \normalfont{$\qdistOp{a} = \qdistOp{b} \wedge a \in \{i_1',i_1\} \wedge b \in \{j_1',j_1\}$.}] Case 5.5 is analogous to Case 5.4.
			
			\item[Case 5.6: \normalfont{$\qdistOp{a} = \qdistOp{b} \wedge a = i \wedge b = j$.}] Clearly, this implies $i \neq j$. Then $\delta(a,011) = q_i, \delta(b,011) = q_j$ holds.
			
			Case 5.6 can be traced back to Case 1.
		\end{description}
	\end{description}
	With Cases 1-5 the states $a$ and $b$ are not equivalent.
	
	In conclusion, every state of $\mathcal{A}$ is reachable and $\mathcal{A}$ does not possess two states that are not identical but equivalent. Therefore, the DFA $\mathcal{A}$ is minimal.
	
	Thus, we have shown that $\mathcal{A}$ is minimal if $n-1$ is reachable from $0$ in $G$.
	
	Second, we consider the case that $n-1$ is not reachable from $0$ in $G$. We show that in this case the DFA $\mathcal{A}$ is not minimal.
	
	If $n-1$ is not reachable from $0$ in $G$ then clearly the state $n-1$ is not reachable from the state $0$ in $\mathcal{A}$, since the $p$-states cannot be reentered once they were left and the $q$-states can only be left to state $0$. Additionally, this means that $n-1$ is unreachable from every $q$-state. Since $n-1$ is the only accepting state of $\mathcal{A}$, this implies that at least the state $0$ and the $q$-states can be replaced by a rejecting sink. Since there are a strictly positive number of $q$-states, the DFA constructed in this manner is strictly smaller than $\mathcal{A}$ but recognizes the same language as $\mathcal{A}$. Therefore, $\mathcal{A}$ is not minimal.
	
	We have shown that $\mathcal{A}$ is not minimal if $n-1$ is unreachable from $0$ in $G$.
	
	In conclusion, we have shown that $\mathcal{A}$ is minimal iff $n-1$ is reachable from $0$ in $G$. Since $\mathcal{A}$ can clearly be constructed in logarithmic space, we have specified an \complexityClassFont{L}-reduction of \problemFont{2STCON} to \minimalDFA{2}. Thus, the problem \minimalDFA{2} is \complexityClassFont{NL}-hard.
	
	Since \minimalDFA{2} is in \complexityClassFont{NL} and is \complexityClassFont{NL}-hard, it is \complexityClassFont{NL}-complete. We are done.
\end{proof}

Next, we prove:
\theSPrimeDFAComplexity*
\begin{proof}[Proof of \cref{the:SPrimeDFAComplexity}]
	Before we turn to the proof, we introduce some notation. Let $\mid \subseteq \mathbb{Z} \times \mathbb{Z}$ be the usual divisibility relation. That is, for $a,b \in \mathbb{Z}$ it holds that $a \mid b$ iff there exists a $k \in \mathbb{Z}$ with $ka = b$.
	
	We begin by arguing that \sPrimeDFA{} is in \complexityClassFont{ExpSpace}. This follows directly from the proof of \cite[Theorem 2.4]{DBLP:journals/iandc/KupfermanM15}, which states that \primeDFA{} is in \complexityClassFont{ExpSpace}. Note that we can adapt this proof with the minor modification that we now need to consider every DFA with less states than the given DFA instead of every DFA with less states than the index of the given DFA. This modification is necessary, since the notion of S-primality uses the size instead of the index of the given DFA. The remainder of the proof can remain unaltered. Thus, it follows trivially from \cite[Theorem 2.4]{DBLP:journals/iandc/KupfermanM15} that \sPrimeDFA{} is in \complexityClassFont{ExpSpace}. This implies that the restrictions \sPrimeDFA{k} for $k \in \natNumGeq{2}$ are in \complexityClassFont{ExpSpace} as well.
	
	Now we consider the lower complexity boundary. We will show that \sPrimeDFA{2} is \complexityClassFont{NL}-hard. This immediately implies the \complexityClassFont{NL}-hardness of \sPrimeDFA{} and \sPrimeDFA{k} for $k \in \natNumGeq{2}$.
	
	To establish the \complexityClassFont{NL}-hardness of \sPrimeDFA{2}, we will \complexityClassFont{L}-reduce the \complexityClassFont{NL}-complete problem \problemFont{2STCON} to \sPrimeDFA{2}. To do this, we will adapt the construction used in \cref{the:2DFAMinimalNLComplete}.
	
	Let $(G,s,t)$ with $G = (V,E)$ be an input for \problemFont{2STCON}. We construct a DFA $\hat{\mathcal{A}} = (\hat{Q},\Sigma,p_0,\hat{\delta},\hat{F})$ with $\Sigma = \{0,1\}$ that is S-prime iff $t$ is reachable from $s$ in $G$.
	
	If $s = t$ then $t$ is trivially reachable from $s$ in $G$ and we can construct an arbitrary S-prime DFA with $\Sigma = \{0,1\}$, for example the minimal DFA recognizing the empty language. This case is done.
	
	From here on, we assume $s \neq t$. Analogous to the proof of \cref{the:2DFAMinimalNLComplete} we further assume w.l.o.g. $V = \{0,\dots,n-1\}$ and $s=0, t=n-1$. Additionally, let $\mathcal{A} = (Q,\Sigma,p_0,\delta,F)$ be the DFA constructed out of $(G,s,t)$ in the proof of \cref{the:2DFAMinimalNLComplete}, which is displayed in \cref{fig:2STCONto2DFAMinimalReduction}. We construct $\hat{\mathcal{A}}$ by modifying $\mathcal{A}$.
	
	We expand the set of states. Let $\myUnderbar{Q} = \{\myUnderbar{x} \setDel x \in Q \setminus \{p_0\}\}$. Then we define $\hat{Q} = Q \cup \myUnderbar{Q} \cup \{z_+\}$. Further, set $p_0$ as the initial state and define $\hat{F} = \{z_+\}$. We keep $\Sigma$ as the alphabet. Finally, for every $x \in \hat{Q}$ and $\sigma \in \Sigma$ we define:
	\begin{align*}
		\hat{\delta}(x,\sigma) = \begin{cases}
			z_+			&\text{ if $x = z_+$}\\
			z_+			&\text{ if $x = \myUnderbar{n-1}$ and $\sigma = 1$}\\
			p_0			&\text{ if $x \in \myUnderbar{Q} \setminus \{\myUnderbar{n-1}\}$ and $\sigma = 1$}\\
			y			&\text{ if $x \in \myUnderbar{Q}$ and $\sigma = 0$, where $y \in Q$ with $\myUnderbar{y} = x$}\\
			\myUnderbar{\delta(x,\sigma)} &\text{ else, thus if $x \in Q$}
		\end{cases}.
	\end{align*}

	From here on, we use the following notation: For a word $w = \sigma_1\dots\sigma_n \in \Sigma^n$ with $n \in \natNumGeq{2}$, it is $f(w) = \sigma_10\sigma_20\dots\sigma_{n-1}0\sigma_n$. For $\sigma \in \Sigma$, it is $f(\sigma) = \sigma$. For the empty word $\varepsilon$, it is $f(\varepsilon) = \varepsilon$.
	
	We make a couple of observations about $\hat{\mathcal{A}}$.
	
	First, note that every state of $\hat{\mathcal{A}}$ is reachable. 
	We have $\hat{\delta}(p_0,f(0^i)) = \myUnderbar{p_i}$ for every $i \in V$. From $\myUnderbar{p_i}$ every other state of layer $i$ is reachable. Since in this way the state $\myUnderbar{n-1}$ is reachable, the state $z_+$ is reachable as well. Therefore, every state of $\hat{\mathcal{A}}$ is reachable.
	
	Second, note that $\hat{\mathcal{A}}$ is a co-safety DFA, since $z_+$ is the only accepting state and $z_+$ is a sink.
	
	Third, note that every two states in $\hat{Q} \setminus \{z_+\}$ are reachable from one another. We have $\hat{\delta}(x,1) = p_0$ for every $x \in \myUnderbar{Q} \setminus \{\myUnderbar{n-1}\}$. We further have $\hat{\delta}(\myUnderbar{n-1},001) = p_0$. And finally, we have $\hat{\delta}(x,0) = \myUnderbar{\delta(x,0)}$ for every $x \in Q$. Therefore, the state $p_0$ is reachable from every state in $\hat{Q} \setminus \{z_+\}$. Since, as just argued, every state of $\hat{\mathcal{A}}$ is reachable, this immediately implies that every two states in $\hat{Q} \setminus \{z_+\}$ are reachable from one another.
	
	The second and third point imply that, in the terminology of \cite{DBLP:journals/iandc/KupfermanM15}, the DFA $\hat{\mathcal{A}}$ is a \defHighlight{simple} co-safety DFA. That is, a co-safety DFA that consists of the accepting sink and a second component in which every two states are reachable from one another. With \cite[Theorem 5.5]{DBLP:journals/iandc/KupfermanM15} this implies that $\hat{\mathcal{A}}$ is prime. Note that here we refer to the original notion of primality, not S-primality.
	
	Here, it is important to note that a simple co-safety DFA is S-prime iff it is minimal. This is easy to see. If a DFA, be it a simple co-safety DFA or not, is not minimal, then it is not S-prime. Therefore, a non-minimal simple co-safety DFA is not S-prime.
	Further, if a DFA is minimal, then it is S-prime iff it is prime. Since every simple co-safety DFA is prime, this implies that every minimal simple co-safety DFA is S-prime.
	Therefore, a simple co-safety DFA is S-prime iff it is minimal.
	
	Finally, we point out that until the accepting sink is reached, the DFA $\hat{\mathcal{A}}$ alternates between states in $Q$ and $\myUnderbar{Q}$. More precisely, we have:
	\begin{align*}
		\forall x \in Q. \forall \sigma \in \Sigma. \hat{\delta}(x,\sigma) = \myUnderbar{\delta(x,\sigma)},
	\end{align*}
	and:
	\begin{align*}
		\forall \myUnderbar{x} \in \myUnderbar{Q}. \hat{\delta}(\myUnderbar{x},0) = x \wedge (\myUnderbar{x} \neq \myUnderbar{n-1} \Rightarrow \hat{\delta}(\myUnderbar{x},1) = p_0) \wedge (\myUnderbar{x} = \myUnderbar{n-1} \Rightarrow \hat{\delta}(\myUnderbar{x},1) = z_+).  
	\end{align*}

	This immediately implies:
	\begin{align*}
		\forall w \in \Sigma^+.\forall x \in Q. \hat{\delta}(x,f(w)) = \myUnderbar{\delta(x,w)} \wedge \hat{\delta}(x,f(w)0) = \delta(x,w),
	\end{align*}
	and:
	\begin{align*}
		\forall w \in \Sigma^+. \forall \myUnderbar{x} \in \myUnderbar{Q}. \hat{\delta}(\myUnderbar{x}, 0f(w)) = \myUnderbar{\delta(x,w)} \wedge \hat{\delta}(\myUnderbar{x}, 0f(w)0) = \delta(x,w).
	\end{align*}
	Thus, by adding the letter $0$ after every letter of a word the DFA $\hat{\mathcal{A}}$ can simulate the behavior of the DFA $\mathcal{A}$.
	
	Additionally, this implies:
	\begin{align*}
		&\forall n \in \natNumGeq{1}. \forall w = \sigma_1\dots\sigma_n \in \Sigma^n. \forall x \in Q.\\
		&\hspace{2cm} \hat{\delta}(x,\sigma_1\dots\sigma_{n-1}) \neq z_+ \\
		&\hspace{3.5cm} \Rightarrow\\
		&\hspace{2cm} ( \hspace{0.3cm}
		(\hat{\delta}(x,\sigma_1\dots\sigma_n) \in Q \Leftrightarrow (2 \mid n) \wedge (\sigma_n = 0 \vee \hat{\delta}(x,\sigma_1\dots\sigma_{n-1}) \neq \myUnderbar{n-1})) \\
		&\hspace{2cm} \wedge (\hat{\delta}(x,\sigma_1\dots\sigma_n) \in \myUnderbar{Q} \Leftrightarrow \neg(2 \mid n)) \\
		&\hspace{2cm} \wedge (\hat{\delta}(x,\sigma_1\dots\sigma_n) = z_+ \Leftrightarrow (2 \mid n) \wedge \sigma_n = 1 \wedge \hat{\delta}(x,\sigma_1\dots\sigma_{n-1}) = \myUnderbar{n-1})
		).
	\end{align*}
	Thus, if the run of $\hat{\mathcal{A}}$ on a word $w$ beginning in a state in $Q$ does not end in $z_+$ then it ends in a state in $Q$ if $|w|$ is even. Otherwise, that is, if $|w|$ is odd, such a run ends in a state in $\myUnderbar{Q}$. Further and in particular, if the run of $\hat{\mathcal{A}}$ on a word $w$ beginning in a state in $Q$ ends in $z_+$ then $z_+$ is entered for the first time after reading an even number of letters.
	
	Similar observations can be made for runs beginning in a state in $\myUnderbar{Q}$. In particular, if the run of $\hat{\mathcal{A}}$ on a word $w$ beginning in a state in $\myUnderbar{Q}$ ends in $z_+$ then $z_+$ is entered for the first time after reading an odd number of letters.
	
	Now we turn to the actual proof. Again, we have to show that $\hat{\mathcal{A}}$ is S-prime iff $t$ is reachable from $s$ in $G$. We use our observation that $\hat{\mathcal{A}}$ is S-prime iff it is minimal. 
	
	We begin by assuming that $t$ is reachable from $s$ in $G$. We show that $\hat{\mathcal{A}}$ is minimal, which implies its S-primality. We already argued that every state of $\hat{\mathcal{A}}$ is reachable. Therefore, we only have to prove that $\hat{\mathcal{A}}$ does not possess two states different from each other that are equivalent.
	
	Let $a,b \in \hat{Q}$ with $a \neq b$. If one of them is the accepting sink $z_+$ then the two states are trivially inequivalent. Therefore, we assume $a,b \neq z_+$.
	\begin{description}
		\item[Case 1: \normalfont{$a,b \in Q$.}] Since the state $n-1$ needs to be handled separately, we use a second case distinction.
		\begin{description}
			\item[Case 1.1: \normalfont{$a \neq n-1 \wedge b \neq n-1$.}] Since $\mathcal{A}$ is minimal, the states $a$ and $b$, which are in $Q$ and are therefore states of $\mathcal{A}$ as well, are inequivalent in $\mathcal{A}$. W.l.o.g. let $w \in \Sigma^*$ with $\delta(a,w) = n-1$ and $\delta(b,w) \neq n-1$. Since $a \neq n-1$, we have $w \neq \varepsilon$. Then with the above observation we have $\hat{\delta}(a,f(w)) = \myUnderbar{\delta(a,w)} = \myUnderbar{n-1}$ and $\hat{\delta}(b,f(w)) = \myUnderbar{\delta(b,w)} \neq  \myUnderbar{n-1}$.
			
			Thus, the word $f(w)1$ witnesses the inequivalence of $a$ and $b$. We are done with Case 1.1.
			
			\item[Case 1.2: \normalfont{$a = n-1 \vee b = n-1$.}] W.l.o.g. let $a = n-1$. Then $b \in Q$ with $b \neq n-1$. We have $\delta(n-1,1) = (n-1)_1'$. Let $b' = \delta(b,1)$. Obviously, we have $b' \neq (n-1)_1'$. 
			
			If $b' \neq n-1$ then with $\hat{\delta}(n-1,f(1)0) = (n-1)_1'$ and $\hat{\delta}(b,f(1)0) = b' \notin \{n-1,(n-1)_1'\}$ the Case 1.2 can be traced back to Case 1.1.
			
			If $b' = n-1$ then we have $\delta(n-1,11) = \delta((n-1)_1',1) = (n-1)_1$ and $\delta(b,11) = \delta(n-1,1) = (n-1)_1'$ and thus $\hat{\delta}(n-1,f(11)0) = (n-1)_1$ and $\hat{\delta}(b,f(11)0) = (n-1)_1'$. Then Case 1.2 can be traced back to Case 1.1.
			
			In summary, Case 1.2 can be traced back to Case 1.1.
		\end{description}
	
		\item[Case 2: \normalfont{$a,b \in \myUnderbar{Q}$.}] Then there exist $c,d \in Q$ with $\myUnderbar{c} = a$ and $\myUnderbar{d} = b$. Since $a \neq b$, we have $c \neq d$. Additionally, we have $\hat{\delta}(a,0) = c$ and $\hat{\delta}(b,0) = d$.
		
		Case 2 can be traced back to Case 1.
		
		\item[Case 3: \normalfont{$(a \in Q \wedge b \in \myUnderbar{Q}) \vee (a \in \myUnderbar{Q} \wedge b \in Q)$.}] W.l.o.g. let $a \in Q$ and $b \in \myUnderbar{Q}$. Note that the accepting sink $z_+$ is reachable from both $a$ and $b$, since, as explained above, the initial state $p_0$ is reachable from both $a$ and $b$ and the accepting sink is reachable from $p_0$. Let $w_a \in \Sigma^*$ be a word of minimal length with $\hat{\delta}(a,w_a) = z_+$. Let $w_b \in \Sigma^*$ be a word of minimal length with $\hat{\delta}(b,w_b) = z_+$. Note that, as explained above as well, the word $w_a$ is of even length, while the word $w_b$ is of odd length. This implies $|w_a| \neq |w_b|$ and therefore $|w_a| < |w_b|$ or $|w_a| > |w_b|$.
		
		The shorter one of the words $w_a,w_b$ witnesses the inequivalence of $a$ and $b$. We are done with Case 3.
	\end{description}
	With Cases 1-3 the two states $a$ and $b$ are not equivalent. Therefore, every state of $\hat{\mathcal{A}}$ is reachable and $\hat{\mathcal{A}}$ does not possess two states that are not identical but equivalent. Therefore, the DFA $\hat{\mathcal{A}}$ is minimal.
	
	As explained above, since $\hat{\mathcal{A}}$ is a simple co-safety DFA, the minimality of $\hat{\mathcal{A}}$ implies its S-primality. Therefore, we have shown that $\hat{\mathcal{A}}$ is S-prime if $t$ is reachable from $s$ in $G$.
	
	Next, we assume that $t$ is not reachable from $s$ in $G$. We have to show that $\hat{\mathcal{A}}$ is not S-prime. To do this, we show that $\hat{\mathcal{A}}$ is not minimal. Since every non-minimal DFA is trivially S-composite, showing this is sufficient to prove that $\hat{\mathcal{A}}$ is not S-prime.
	
	To prove that $\hat{\mathcal{A}}$ is not minimal, we prove that the states $0,q_0 \in Q \subseteq \hat{Q}$ are equivalent. That is, we prove: $\forall w \in \Sigma^*. \hat{\delta}(0,w) = z_+ \Leftrightarrow \hat{\delta}(q_0,w) = z_+$.
	
	Let $w \in \Sigma^*$.
	
	First, assume that $\hat{\delta}(0,w) = z_+$. We have to prove $\hat{\delta}(q_0,w) = z_+$. Since $0 \neq z_+$, we have $w \neq \varepsilon$. Additionally, since $t=n-1$ is unreachable from $s=0$ in $G$, the state $n-1$ is unreachable from the state $0$ in $\mathcal{A}$. 
	
	We prove that this implies the existence of $u,v \in \Sigma^*$ where $\neg(2 \mid |u|)$ and $w = u1v$, that is, that there is a $1$ in $w$ at an even position. Assume by contradiction that there is a $0$ in $w$ at every even position. Then there exists a $w' \in \Sigma^*$ with $f(w') = w$ if $|w|$ is odd and $f(w')0 = w$ if $|w|$ is even. 
	
	For $f(w') = w$, we arrive at a contradiction with: $z_+ = \hat{\delta}(0,w0) = \hat{\delta}(0,f(w')0) = \delta(0,w') \neq z_+$.
	For $f(w')0 = w$, we arrive at a contradiction with: $z_+ = \hat{\delta}(0,w) = \hat{\delta}(0,f(w')0) = \delta(0,w') \neq z_+$.
	
	Thus, we have shown by contradiction the existence of $u,v \in \Sigma^*$ where $\neg(2 \mid |u|)$ and $w = u1v$. Let $u$ be the shortest prefix of $w$ for which this holds. Then we clearly have $\hat{\delta}(0,u) \in \myUnderbar{Q}\setminus\{\myUnderbar{n-1}\}$ and $\hat{\delta}(q_0,u) \in \myUnderbar{Q}\setminus\{\myUnderbar{n-1}\}$. 
	Otherwise, we would have $\hat{\delta}(0,u) = \myUnderbar{n-1}$ or $\hat{\delta}(q_0,u) = \myUnderbar{n-1}$ with $u$ having letter $0$ at every even position. With the same argument as above this leads to a contradiction, since for $u' \in \Sigma^*$ with $f(u') = u$ we would have $\delta(0,u') = n-1$ or $\delta(q_0,u') = n-1$.
	
	With $\hat{\delta}(0,u) \in \myUnderbar{Q}\setminus\{\myUnderbar{n-1}\}$ and $\hat{\delta}(q_0,u) \in \myUnderbar{Q}\setminus\{\myUnderbar{n-1}\}$ we have $\hat{\delta}(0,u1) = p_0 = \hat{\delta}(q_0,u1)$ and thus $\hat{\delta}(q_0,w) = \hat{\delta}(q_0,u1v) = \hat{\delta}(0,u1v) = \hat{\delta}(0,w) = z_+$.
	
	We have shown that $\hat{\delta}(0,w) = z_+$ implies $\hat{\delta}(q_0,w) = z_+$.
	
	Second, we assume $\hat{\delta}(q_0,w) = z_+$. We have to prove $\hat{\delta}(0,w) = z_+$. Note that this case is symmetrical to the first cast, since above we did not use any specifics of state $0$ that do not hold for state $q_0$. In particular, we have $q_0 \neq z_+$ and thus $w \neq \varepsilon$, and the state $n-1$ is unreachable from $q_0$ in $\mathcal{A}$. Since this case is symmetrical to the first case, it follows that $\hat{\delta}(q_0,w) = z_+$ implies $\hat{\delta}(0,w) = z_+$.
	
	Then $\hat{\delta}(0,w) = z_+$ holds iff $\hat{\delta}(q_0,w) = z_+$. Therefore, the states $0$ and $q_0$ are equivalent. This implies that $\hat{\mathcal{A}}$ is not minimal.
	
	As outlined above, the non-minimality of $\hat{\mathcal{A}}$ implies the S-compositionality of $\hat{\mathcal{A}}$. Therefore, we have shown that $\hat{\mathcal{A}}$ is not S-prime if $t$ is not reachable from $s$ in $G$.
	
	In total, we have shown that $\hat{\mathcal{A}}$ is S-prime iff $t$ is reachable from $s$ in $G$. Since $\hat{\mathcal{A}}$ can clearly be constructed in logarithmic space, we have found an \complexityClassFont{L}-reduction from \problemFont{2STCON} to \sPrimeDFA{2}, thus proving the \complexityClassFont{NL}-hardness of \sPrimeDFA{2}. This trivially implies the \complexityClassFont{NL}-hardness of \sPrimeDFA{} and \sPrimeDFA{k} for $k \in \natNumGeq{2}$.
	
	We have shown that \sPrimeDFA{} and its restrictions \sPrimeDFA{k} for $k \in \natNumGeq{2}$ are in \complexityClassFont{ExpSpace} and are \complexityClassFont{NL}-hard. The proof of \cref{the:SPrimeDFAComplexity} is complete.
\end{proof}

Finally, we prove:
\thePrimeDFAComplexity*
\begin{proof}[Proof of \cref{the:PrimeDFAComplexity}]
	The problem \primeDFA{} is in \complexityClassFont{ExpSpace} with \cite[Theorem 2.4]{DBLP:journals/iandc/KupfermanM15}. This implies that the restrictions \primeDFA{k} for $k \in \natNumGeq{2}$ are in \complexityClassFont{ExpSpace} as well. Therefore, we only have to concern ourselves with the lower complexity boundary.
	
	We begin by introducing another problem, which we will use in the \complexityClassFont{L}-reduction to establish the lower boundary. 
	In the proof of \cref{lem:fl_PrimeDFAFinNLHard} we introduced the emptiness problem for DFAs, denoted with \emptyDFA{}, which is known to be \complexityClassFont{NL}-complete \cite{DBLP:journals/jcss/Jones75}. Now we introduce $\emptyDFAq{2}$, which denotes the restriction of \emptyDFA{} to DFAs with at most two letters which have exactly one accepting state, which is an accepting sink.
	
	First, note that, since \emptyDFA{} is in \complexityClassFont{NL}, the restriction $\emptyDFAq{2}$ is in \complexityClassFont{NL} as well. Further, the standard \complexityClassFont{L}-reduction of \problemFont{2STCON} to \emptyDFA{} employs a DFA with at most two letters and exactly one accepting state. This state can be made into an accepting sink while preserving the validity of the reduction. Therefore, the problem \problemFont{2STCON} can be \complexityClassFont{L}-reduced to $\emptyDFAq{2}$, which implies the \complexityClassFont{NL}-hardness of $\emptyDFAq{2}$. Thus, the restriction $\emptyDFAq{2}$ is \complexityClassFont{NL}-complete.
	
	Now we return to the problem \primeDFA{} and its restrictions \primeDFA{k} for $k \in \natNumGeq{2}$.
	
	The problem \primeDFA{} is \complexityClassFont{NL}-hard with \cite[Theorem 2.5]{DBLP:journals/iandc/KupfermanM15}. But since the DFA construction used for the \complexityClassFont{L}-reduction of \emptyDFA{} to \primeDFA{} introduces an additional letter, we cannot use it to prove the \complexityClassFont{NL}-hardness of \primeDFA{2}. Instead, we will give an \complexityClassFont{L}-reduction of $\emptyDFAq{2}$ to \primeDFA{2}.
	
	Let $\mathcal{A} = (Q,\Sigma,q_I,\delta,F)$ be an input for $\emptyDFAq{2}$. W.l.o.g. let $\Sigma = \{0,1\}$ and let $F = \{q_+\}$. That is, the state $q_+$ is the accepting sink of $\mathcal{A}$, which is the only accepting state of $\mathcal{A}$.
	
	We will construct a DFA $\hat{\mathcal{A}} = (\hat{Q},\Sigma,q_I,\hat{\delta},\hat{F})$ that is prime iff $\mathcal{A}$ recognizes the empty language.
	
	To do this, let $\mathcal{A}_6' = (Q_6',\Sigma,q_0',\delta_6',F_6')$ be the minimal DFA with $\lang{\mathcal{A}_6'} = \{w \in \Sigma^* \setDel |w|_1 \equiv 0 \mod 6\}$. Clearly, \cref{fig:A_6'} pictures the DFA $\mathcal{A}_6'$ correctly. It is equally clear that $\mathcal{A}_6'$ is composite, since $\lang{\mathcal{A}_6'} = \lang{\mathcal{A}_2'} \cap \lang{\mathcal{A}_3'}$, where $\mathcal{A}_2',\mathcal{A}_3'$ are analogous to $\mathcal{A}_6'$ but instead of modulo six they use modulo two and three, respectively.
	\begin{figure}[t]
		\centering
		\begin{tikzpicture}[node distance=2.5cm]
		\node[state, initial, accepting] 					(q0') 	{$q_0'$};
		\node[state, right of=q0'] 							(q1') 	{$q_1'$};
		\node[state, right of=q1'] 							(q2') 	{$q_2'$};
		\node[state, right of=q2'] 							(q3') 	{$q_3'$};
		\node[state, right of=q3']			 				(q4') 	{$q_4'$};
		\node[state, right of=q4']			 				(q5') 	{$q_5'$};
		
		\draw	(q0')	edge[loop above]						node{$0$}		(q0');
		\draw	(q0')	edge[above]								node{$1$}		(q1');
		
		\draw	(q1')	edge[loop above]						node{$0$}		(q1');
		\draw	(q1')	edge[above]								node{$1$}		(q2');
		
		\draw	(q2')	edge[loop above]						node{$0$}		(q2');
		\draw	(q2')	edge[above]								node{$1$}		(q3');
		
		\draw	(q3')	edge[loop above]						node{$0$}		(q3');
		\draw	(q3')	edge[above]								node{$1$}		(q4');
		
		\draw	(q4')	edge[loop above]						node{$0$}		(q4');
		\draw	(q4')	edge[above]								node{$1$}		(q5');
		
		\draw	(q5')	edge[loop above]						node{$0$}		(q5');
		\draw	(q5')	edge[bend left=15, below]					node{$1$}		(q0');
		\end{tikzpicture}
		\caption{DFA $\mathcal{A}_6'$ with $\lang{\mathcal{A}_6'} = \{w \in \Sigma^* \setDel |w|_1 \equiv 0 \mod 6\}$.}
		\label{fig:A_6'}
	\end{figure}

	Now we construct the DFA $\hat{\mathcal{A}}$ out of $\mathcal{A}$ using $\mathcal{A}_6'$. Define $\hat{Q} = Q \cup Q_6'$. We retain $q_I$ as the initial state and set $q_0'$, the only accepting state of $\mathcal{A}_6'$, as the only accepting state of $\hat{\mathcal{A}}$. That is, $\hat{F} = \{q_0'\}$. We keep the alphabet $\Sigma = \{0,1\}$ unaltered. Finally, for each $q \in \hat{Q}$ and $\sigma \in \Sigma$ we define:
	\begin{align*}
		\hat{\delta}(q,\sigma)=\begin{cases}
			q_0'	&\text{ if $q = q_+$ and $\sigma=0$} \\
			q_+		&\text{ if $q = q_+$ and $\sigma=1$} \\
			\delta(q,\sigma) &\text{ if $q \in Q \setminus \{q_+\}$}\\
			\delta'(q,\sigma) &\text{ else, thus if $q \in Q'$}
		\end{cases}.
	\end{align*}

	We need to show that $\hat{\mathcal{A}}$ is prime iff $\mathcal{A}$ recognizes the empty language. 
	
	If $\mathcal{A}$ recognizes the empty language then the accepting sink $q_+$ is unreachable in $\mathcal{A}$. Obviously, this implies that $q_+$ is unreachable in $\hat{\mathcal{A}}$ as well. Since $q_0'$, the only accepting state in $\hat{\mathcal{A}}$, is only reachable from the initial state $q_I$ via $q_+$, this implies that $\hat{\mathcal{A}}$ recognizes the empty language. Thus, the DFA $\hat{\mathcal{A}}$ is prime. We have shown that the DFA $\hat{\mathcal{A}}$ is prime if $\mathcal{A}$ recognizes the empty language. This part of the equivalence is done.
	
	Next, we assume that $\mathcal{A}$ does not recognize the empty language. We have to show that $\hat{\mathcal{A}}$ is not prime.
	
	We begin by constructing the minimal DFA $\hat{\mathcal{A}}^!$ recognizing $\lang{\hat{\mathcal{A}}}$. Then we prove that $\hat{\mathcal{A}}^!$ is composite, which implies that $\hat{\mathcal{A}}$ is composite as well.
	
	Let $\mathcal{A}^! = (Q^!,\Sigma,q_I^!,\delta^!,F^!)$ be the minimal DFA recognizing $\lang{\mathcal{A}}$. Obviously, the DFA $\mathcal{A}^!$ has exactly one accepting state, which is an accepting sink. Let $q_+^!$ be this accepting sink.
	Then construct $\hat{\mathcal{A}}^! = (\hat{Q}^!,\Sigma,q_I^!,\hat{\delta}^!,\hat{F}^!)$ out of $\mathcal{A}^!$ using $\mathcal{A}_6'$ analogously to the construction of $\hat{\mathcal{A}}$ out of $\mathcal{A}$ using $\mathcal{A}_6'$. That is, redirect the $0$-self-loop of state $q_+^!$ of $\mathcal{A}^!$ to state $q_0'$ of $\mathcal{A}_6'$, make $q_I^!$ the initial state, and make $q_0'$ the sole accepting state. Clearly, we have $\lang{\hat{\mathcal{A}}^!} = \lang{\hat{\mathcal{A}}}$ and every state in $\hat{\mathcal{A}}^!$ is reachable.
	
	Now we prove that $\hat{\mathcal{A}}^!$ is the minimal DFA recognizing $\lang{\hat{\mathcal{A}}}$. To do this, we only have to prove the minimality of $\hat{\mathcal{A}}^!$. Let $\mathcal{B} = (S,\Sigma,s_I,\eta,G)$ be a DFA with $\lang{\hat{\mathcal{A}}^!} \subseteq \lang{\mathcal{B}}$ and $\size{\mathcal{B}} < \size{\hat{\mathcal{A}}^!}$. We prove $\lang{\hat{\mathcal{A}}^!} \subset \lang{\mathcal{B}}$, which implies the minimality of $\hat{\mathcal{A}}^!$.
	
	Since every state in $\hat{\mathcal{A}}^!$ is reachable, there exist $w,w' \in \Sigma^*$ with $\hat{\delta}^!(q_I^!,w) \neq \hat{\delta}^!(q_I^!,w')$ and $\eta(s_I,w) = \eta(s_I,w')$. Let $q = \hat{\delta}^!(q_I^!,w), q' = \hat{\delta}^!(q_I^!,w')$.
	\begin{description}
		\item[Case 1: \normalfont{$q,q' \in Q^!$.}] Since $\mathcal{A}^!$ is minimal, we have $\lang{{\mathcal{A}^!}^{q}} \neq \lang{{\mathcal{A}^!}^{q'}}$. W.l.o.g. let $\lang{{\mathcal{A}^!}^{q}} \not\subseteq \lang{{\mathcal{A}^!}^{q'}}$ and let $u \in \lang{{\mathcal{A}^!}^{q}} \setminus \lang{{\mathcal{A}^!}^{q'}}$. Then we have $wu \in \lang{\mathcal{A}^!}, w'u \notin \lang{\mathcal{A}^!}$ and therefore $wu0 \in \lang{\hat{\mathcal{A}}^!},w'u0 \notin \lang{\hat{\mathcal{A}}^!}$. But we also have $\eta(s_I,w'u0) = \eta(\eta(s_I,w'),u0) = \eta(\eta(s_I,w),u0) = \eta(s_I,wu0)$. With $wu0 \in \lang{\hat{\mathcal{A}}^!} \subseteq \lang{\mathcal{B}}$ this implies $w'u0 \in \lang{\mathcal{B}}$.
		
		With witness $w'u0$ we are done with Case 1.
		
		\item[Case 2: \normalfont{$q,q' \in Q_6'$.}] Let $i,j \in \{0,\dots,5\}, i \neq j$ with $q = q_i', q' = q_j'$. Then we have $\hat{\delta}^!(q_i',1^{6-i}) = q_0'$ and $\hat{\delta}^!(q_j',1^{6-i}) \neq q_0'$. Therefore, we have $w1^{6-i} \in \lang{\hat{\mathcal{A}}^!}, w'1^{6-i} \notin \lang{\hat{\mathcal{A}}^!}$. But we also have $\eta(s_I,w'1^{6-i}) = \eta(\eta(s_I,w'),1^{6-i}) = \eta(\eta(s_I,w),1^{6-i}) = \eta(s_I,w1^{6-i})$. With $w1^{6-i} \in \lang{\hat{\mathcal{A}}^!} \subseteq \lang{\mathcal{B}}$ this implies $w'1^{6-i} \in \lang{\mathcal{B}}$.
		
		With witness $w'1^{6-i}$ we are done with Case 2.
		
		\item[Case 3: \normalfont{$q \in Q^! \Leftrightarrow q' \in Q_6'$.}] W.l.o.g. let $q \in Q_6', q' \in Q^!$. Let $i \in \{0,\dots,5\}$ with $q = q_i'$. Then $\hat{\delta}^!(q_i',1^{6-i}) = q_0'$ holds. Since the only transition connecting the states in $Q^!$ with the states in $Q_6'$ is the $0$-Transition from $q_+^!$ to $q_0'$, we have $\hat{\delta}^!(q',1^{6-i}) \in Q^!$. Therefore, we have $w1^{6-i} \in \lang{\hat{\mathcal{A}}^!}, w'1^{6-i} \notin \lang{\hat{\mathcal{A}}^!}$. But analogously to Case 2 we also have $w'1^{6-i} \in \lang{\mathcal{B}}$.
		
		With witness $w'1^{6-i}$ we are done with Case 3.
	\end{description}
	With Cases 1-3 we have $\lang{\hat{\mathcal{A}}^!} \subset \lang{\mathcal{B}}$, which implies the minimality of $\hat{\mathcal{A}}^!$.
	
	We have shown that $\hat{\mathcal{A}}^!$ is the minimal DFA recognizing $\lang{\hat{\mathcal{A}}}$. Now we prove the compositionality of $\hat{\mathcal{A}}^!$, which implies the compositionality of $\hat{\mathcal{A}}$ as well.
	
	The compositionality of $\hat{\mathcal{A}}^!$ is easy to see. As outlined above, the DFA $\mathcal{A}_6'$ is composite with $\lang{\mathcal{A}_6'} = \lang{\mathcal{A}_2'} \cap \lang{\mathcal{A}_3'}$. We can then construct the DFAs $\hat{\mathcal{A}}_2^!$ and $\hat{\mathcal{A}}_3^!$ out of $\mathcal{A}^!$ using $\mathcal{A}_2'$ and $\mathcal{A}_3'$ respectively in the same way we constructed $\hat{\mathcal{A}}^!$ out of $\mathcal{A}^!$ using $\mathcal{A}_6'$. We obviously have $\lang{\hat{\mathcal{A}}^!} = \lang{\hat{\mathcal{A}}_2^!} \cap \lang{\hat{\mathcal{A}}_3^!}$ and $\size{\hat{\mathcal{A}}_2^!} = \size{\mathcal{A}^!} + \size{\mathcal{A}_2'} = \size{\mathcal{A}^!} + 2 < \size{\mathcal{A}^!} + 6 = \size{\mathcal{A}^!} + \size{\mathcal{A}_6'} = \ind{\hat{\mathcal{A}}^!}$ as well as $\size{\hat{\mathcal{A}}_3^!} = \size{\mathcal{A}^!} + \size{\mathcal{A}_3'} = \size{\mathcal{A}^!} + 3 < \size{\mathcal{A}^!} + 6 = \size{\mathcal{A}^!} + \size{\mathcal{A}_6'} = \ind{\hat{\mathcal{A}}^!}$. Thus, the DFA $\hat{\mathcal{A}}^!$ is composite, which implies the compositionality of $\hat{\mathcal{A}}$.
	
	We have shown that $\hat{\mathcal{A}}$ is composite if $\mathcal{A}$ does not recognize the empty language.
	
	In conclusion, we have shown that $\hat{\mathcal{A}}$ is prime iff $\mathcal{A}$ recognizes the empty language. Since $\hat{\mathcal{A}}$ can obviously be constructed in logarithmic space, we have found an \complexityClassFont{L}-reduction of $\emptyDFAq{2}$ to \primeDFA{2}. Thus, we have shown the \complexityClassFont{NL}-hardness of \primeDFA{2}. This obviously implies the \complexityClassFont{NL}-hardness of \primeDFA{} and its restrictions \primeDFA{k} for $k \in \natNumGeq{2}$.
	
	We have shown that \primeDFA{} and its restrictions \primeDFA{k} for $k \in \natNumGeq{2}$ are in \complexityClassFont{ExpSpace} and are \complexityClassFont{NL}-hard. The proof of \cref{the:PrimeDFAComplexity} is complete.
\end{proof}

This ends our discussion of the proofs for \cref{sec:2DFAMinimalAndSPrimeDFA}.
\end{document}